\newtheorem{theorem}{Theorem}[section]
\newtheorem{lemma}[theorem]{Lemma}
\newtheorem{example}[theorem]{Example}
\newtheorem{definition}[theorem]{Definition}
\newtheorem{corollary}[theorem]{Corollary}
\newtheorem{proposition}[theorem]{Proposition}
\newtheorem{remark}[theorem]{Remark}
\newcolumntype{C}[1]{>{\centering\arraybackslash}p{#1}}
\newcolumntype{L}[1]{>{\arraybackslash}p{#1}}
\newcommand\CoAuthorMark{\footnotemark[\arabic{footnote}]} % get the current value
\newcommand{\redbf}[1]{\textcolor{red}{\textbf{#1}}}
\renewcommand{\L}{\mathbb{L}}
\renewcommand{\emph}{\textbf}
\newcommand{\Prop}{\mathsf{Prop}}
\newcommand{\cnomm}{\mathbf{m}}
\newcommand{\marginnote}[1]{\marginpar{\raggedright\tiny{#1}}} 
\renewcommand{\L}{\mathcal{L}}
\newcommand{\Ag}{\mathsf{Ag}}
\newcommand{\val}[1]{[\![{#1}]\!]}
\newcommand{\descr}[1]{(\![{#1}]\!)}
\renewcommand{\phi}{\varphi}
\begin{document}
\title{Rough concepts}
\author[1]{Willem Conradie\footnote{The research of the first author was supported by a startup grant of the Faculty of Science, University of the Witwatersrand.}} 
\author[2]{Sabine Frittella\footnote{The second and fourth authors were partially funded by the grant PHC VAN GOGH 2019, project n$^\circ$42555PE and by the grant ANR JCJC 2019, project PRELAP (ANR-19-CE48-0006).}} 
\author[3]{Krishna Manoorkar} 
\author[2]{Sajad Nazari\protect\CoAuthorMark}
\author[4,5]{Alessandra Palmigiano\footnote{The research of the fifth and sixth author has been made possible by the NWO Vidi grant 016.138.314, the NWO Aspasia grant 015.008.054, and a Delft Technology Fellowship awarded in 2013.}} 
\author[4]{Apostolos Tzimoulis\protect\CoAuthorMark}
\author[6,7]{Nachoem M.~Wijnberg}
\affil[1]{School of Mathematics, University of the Witwatersrand, Johannesburg, South Africa}
\affil[2]{INSA Centre Val de Loire, Univ. Orl\'eans, LIFO EA 4022, Bourges, France}
\affil[3]{Technion, Haifa, Israel}
\affil[4]{School of Business and Economics, Vrije Universiteit, Amsterdam, The Netherlands}
\affil[5]{Department of Mathematics and Applied Mathematics, University of Johannesburg, South Africa}
\affil[6]{Faculty of Economics and Business, University of Amsterdam, The Netherlands}
\affil[7]{College of Business and Economics, University of Johannesburg, South Africa}

%\thanks{The research of the fifth and sixth author has been made possible by the NWO Vidi grant 016.138.314, the NWO Aspasia grant 015.008.054, and a Delft Technology Fellowship awarded in 2013.} 

\maketitle

\begin{abstract}
The present paper proposes a novel way to unify Rough Set Theory and Formal Concept Analysis. Our  method  stems from results and insights developed in the algebraic theory of modal  logic, and  is based on the idea that Pawlak's original approximation spaces can be seen as  special instances of enriched formal contexts, i.e.~relational structures based on formal contexts from Formal Concept Analysis.\\
{\em Keywords:} Rough set theory, formal concept analysis, modal logic, modal algebras, rough concept analysis.  
\end{abstract}
\tableofcontents

\section{Introduction}
\label{sec:intro}
Rough Set Theory (RST) \cite{pawlak1998rough} and Formal Concept Analysis (FCA) \cite{ganter2012formal} are very influential foundational theories in information science, and the issues of comparing \cite{yao2004comparative,lai2009concept}, combining \cite{shyng2010integration,tripathy2013framework,KANG2013}, and unifying \cite{kent1996rough,wolski2005formal,ciucci2014structure} them have received considerable attention in the literature (see  \cite{kang2013rough} for a very comprehensive overview of the literature), also very recently \cite{CATTANEO2016,YAO2016, shao2018connections,yu2018characteristics, kent2018soft, formica2018integrating,benitez2019unifying,hu2019structured,ma2019min}.
%In \cite{ciucci2014structure} connections between RST and FCA are formulated in terms of a `cube of oppositions' generalizing the Aristotelian square of oppositions.
% In \cite{wolski2005formal} Although we know today quite a lot about both theories and even about their relationships [3, 4, 15, 17], there is serious lack of results on logics reflecting their formal connexions. To our best knowledge it is the first strictly logical result concerning relationships between FCA and RST.

% In \cite{tripathy2013framework}, RST and FCA are used in two successive processes such as pre process and post process to mine suitable rules and to explore the relationship among the attributes. In pre process we use rough set to mine suitable rules, whereas in post process we use formal concept analysis to explore better knowledge and most important characteristics affecting the decision making.
% substantial literature compared them and proposed ways to unify them.
%In 2002, D\"untsch and Gediga\marginnote{ G. Gediga, I. D\"untsch, Modal-style operators in qualitative data analysis, in: Proceedings of the 2002 IEEE International Conference in Data Mining, 2002, pp. 155Â162} introduced the notion of property oriented concepts (or, concepts based on rough set theory) making use of the covariant Galois connection  [8] instead of the contravariant one. Intro and recap on extant literature, Yao...Kent \marginnote{Rough Concept Analysis Robert E. Kent}
The present paper proposes a novel way to unify RST and FCA. Our  method  stems from results and insights developed in the algebraic theory of modal  logic, %and algebra, %and specifically in  {\em unified correspondence theory} \cite{CoGhPa14,CoPa:non-dist}, a theory which, in its best known incarnation,   makes it possible to systematically characterize the validity of certain axioms of classical modal logic in terms of first-order conditions holding on their corresponding  classes of Kripke frames. %generalize Sahlqvist correspondence theory from classical normal modal logic to normal modal logics the propositional base of which is the logic of general lattices.
 %The proposed unification 
and  is based on the idea that Pawlak's original approximation spaces can be seen as  special instances of {\em enriched formal contexts} (cf.~Definition \ref{def:enriched formal context}, see \cite{conradie2016categories,TarkPaper}), i.e.~relational structures based on formal contexts from FCA. Mathematically, this is realized in an embedding of the class $\mathsf{AS}$ of approximation spaces  into the class $\mathsf{EFC}$ of enriched formal contexts which---building on Drew Moshier's  category-theoretic perspective and results on formal contexts \cite{moshier2016relational}---makes the following diagram commute:
\begin{center}
\begin{tikzpicture}
\draw (5, 0) node {{$\mathsf{AS}$}};
\draw (0, 0) node {{$\mathsf{S5}$-$\mathsf{BAO}^+$}};
\draw (5, 3) node {{$\mathsf{EFC}$}};
\draw (0, 3) node {{$\mathsf{CML}$}};
% \draw[<-, thick] (0.2,0.3) to[out=30, in=150] (4.8,0.3);
%  \draw[->, thick] (4.8,-0.3) to[out=-150, in=-30] (0.2,-0.3);
%   \draw[<-, thick] (0.2,3.3) to[out=30, in=150] (4.8,3.3);
%  \draw[->, thick] (4.8,2.7) to[out=-150, in=-30] (0.2,2.7);
\draw[<-, thick] (0.8,0) -- (4.2,0);
\draw[<-, thick] (0.8,3) -- (4.2,3);
  \draw [thick, right hook->] (0, 0.3) -- (0, 2.7);
   \draw [thick, right hook->, dashed] (5, 0.3) -- (5, 2.7);
\end{tikzpicture}
\end{center}
In the above diagram, the lowermost horizontal arrow  assigns each approximation space $\mathbb{X} = (S, R)$ (seen as a Kripke frame for the modal logic S5) to its associated {\em complex algebra} $\mathbb{X}^+ = (\mathcal{P}(S), \langle R\rangle, [R])$, which belongs in the class $\mathsf{S5}$-$\mathsf{BAO}^+$ of perfect  S5 modal algebras (aka  perfect S5 Boolean Algebras with Operators).  The uppermost horizontal arrow of the diagram above assigns each enriched formal context $\mathbb{F} = (\mathbb{P}, R_{\Diamond}, R_{\Box})$ (cf.~Definition \ref{def:enriched formal context}) to its associated {\em complex algebra} $\mathbb{F}^+ = (\mathbb{P}^+, \langle R_{\Diamond}\rangle, [R_{\Box}])$, which belongs in the class $\mathsf{CML}$ of complete modal lattices, and the leftmost vertical arrow is the natural embedding of perfect  S5 modal algebras into complete modal lattices.

This mathematical environment helps to highlight the division of roles between the different relations in enriched formal contexts and the different functions they perform: while the incidence relations in formal contexts is used to generate the conceptual hierarchy in the form of a concept lattice, the additional relations generate the modal operators approximating concepts in the given conceptual hierarchy. Clarifying this division of roles  contributes to gain a better understanding of the relationship between formal contexts and approximation spaces, and via these structures, to also gain a better understanding of  how FCA and RST can be integrated.
Specifically, this mathematical environment also provides the background and motivation for the introduction of {\em conceptual (co-)approximation spaces} (cf.~Definitions \ref{def:conceptual approx space} and \ref{def:conceptual co-approx space}) as suitable structures supporting {\em rough concepts} in the same way in which approximation spaces support rough sets, i.e.,  via suitable relations.

%Considerations and insights stemming from logic motivate our proposed approach to this unification. Indeed, the classical modal logic S5 is naturally interpreted on approximation spaces, and likewise, the lattice-based counterpart of S5 is naturally interpreted on conceptual approximation spaces.

As a consequence of this generalization, we obtain a novel and different set of possible  interpretations for the language of the lattice-based normal modal logic discussed in \cite{conradie2016categories, TarkPaper}: just as S5 is motivated both as an epistemic logic and as the logic of rough sets, the logic introduced in \cite{TarkPaper} can be understood both as an epistemic logic of categories and as the logic of rough concepts.

\paragraph{Structure of the paper.} In Section \ref{sec:prelim}, we collect  preliminaries on approximation spaces with their associated complex algebras and modal logic, and enriched formal contexts, also with their associated complex algebras and modal logic.  In Section \ref{sec:embedding}, we partly recall and partly develop the background theory motivating the embedding of approximation spaces into a suitable subclass %\marginnote{do we actually need to talk about morphisms or can we get away with doing this on objects? Possibly it is enough to mention that in fact all these maps extend to functors}
of enriched formal contexts, and show that this embedding makes the diagram above commute. %commutes with the duality between approximation spaces and complete S5-modal algebras, and between the extended adjunction between  enriched formal concepts and complete lattice-based modal algebras.
On the basis of these facts, in Section \ref{ssec:Conceptual approximation spaces} we introduce  conceptual (co-)approximation spaces, and prove that these structures are logically captured by a certain modal axiomatization. In Section \ref{sec:examples}, we discuss how conceptual (co-)approximation spaces can be used to model a wide variety of situations which, in their turn, allow for different interpretations of the modal operators arising from them. In Section \ref{sec:conceptual rough algebras}, we introduce varieties of lattice-based modal algebras that capture abstract versions of the complex algebras of conceptual (co-)approximation spaces, and which provide algebraic semantics to the modal logics of conceptual (co-)approximation spaces. In Section \ref{sec:applications}, we apply the insights and constructions developed in the previous sections to extend and generalize three different (and independently developed) logical frameworks which aim at address and account for vagueness, gradedness, and uncertainty. We present conclusions and directions for further research in Section \ref{sec:Conclusions}.

\section{Preliminaries}
 \label{sec:prelim}
 Throughout the paper, the superscript $(\cdot)^c$ denotes the relative complement of the subset of a given set. In particular, for any binary relation $R\subseteq S\times S$, we let $R^c\subseteq S\times S$ be defined by  $(s, s')\in R^c$ iff $(s, s')\notin R$. For any such $R$ and any $Z\subseteq S$, we also let $R[Z]: = \{s\in S\mid (z, s)\in R \mbox{ for some } z\in Z\}$ and $R^{-1}[Z]: = \{s\in S\mid (s, z)\in R \mbox{ for some } z\in Z\}$. As usual, we write $R[z]$ and $R^{-1}[z]$ instead of $R[\{z\}]$ and $R^{-1}[\{z\}]$, respectively. The relation $R$ can be associated with the following
  {\em semantic modal operators}: % $[R], \langle R\rangle,  [ R\rangle, \langle R]$ arising from $R$ are defined as follows:
  for any $Z\subseteq S$,
 \begin{equation}\label{eq:semantic diamond}
 \langle R\rangle Z: = R^{-1}[Z] = \{s\in S\mid (s, z)\in R  \mbox{ for some } z\in Z\},\end{equation}
  \begin{equation}\label{eq:semantic box}[R]Z: = (R^{-1}[Z^c])^c = \{s\in S\mid \mbox{for any } z, \mbox{ if } (s, z)\in R \mbox{ then } z\in Z\},\end{equation}
 and also with the following ones:
   \begin{equation}\label{eq:semantic rtriangle} [ R\rangle Z: = (R^{-1}[Z])^c = \{s\in S\mid \mbox{for any } z, \mbox{ if } (s, z)\in R \mbox{ then } z\notin Z)\},\end{equation}
    \begin{equation}\label{eq:semantic ltriangle}\langle R] Z: = R^{-1}[Z^c] =  \{s \in S\mid (s, z)\in R  \mbox{ for some } z\notin Z\}. \end{equation}
  As is well known, in the context of approximation spaces, where $R$ is an equivalence relation encoding  indistinguishability, the sets $[R]Z, \langle R\rangle Z, [ R\rangle Z$, and $\langle R] Z$ collect the elements of $S$ that are {\em definitely} in $Z$, {\em possibly} in $Z$, {\em definitely not} in $Z$, and {\em possibly not} in $Z$. All these operations are interdefinable using relative complementation, just like their syntactic counterparts are interdefinable using Boolean negation. However, since in the setting of formal contexts this will no longer be the case, we find it useful to introduce this notation already in the classical setting.
\subsection{Approximation spaces, their complex algebras and modal logic}
\label{ssec: approximation spaces}
 {\em Approximation spaces} are structures $\mathbb{X} = (S, R)$ such that $S$
is a  set, and $R\subseteq S\times S$ is an equivalence relation.\footnote{\label{footnote:kripke frames}Approximation spaces form a subclass of {\em Kripke frames}, which are structures $\mathbb{X} = (S, R)$ such that $S$ is a  set and $R\subseteq S\times S$ is an arbitrary relation.} For any such $\mathbb{X}$ and any $Z\subseteq S$, the {\em upper} and {\em lower approximations} of $Z$ are respectively defined as follows:
\[\overline{Z}: = \bigcup\{R[z]\mid z\in Z\} \quad \mbox{ and }\quad \underline{Z}: = \bigcup\{R[z]\mid z\in Z\mbox{ and } R[z]\subseteq Z\}. \]
A {\em rough set} of $\mathbb{X}$ is a pair $(\underline{Z},\overline{Z})$ for any  $Z\subseteq S$ (cf.~\cite{banerjee1996rough}). Since approximation spaces coincide with Kripke frames for the modal logic S5, notions and insights from the semantic theory of modal logic have been  imported to rough set theory (cf.~\cite{orlowska1994rough,banerjee1996rough}, \cite[pgs 1-20]{orlowska2013incomplete}). In particular,
the {\em complex algebra} of an approximation space $\mathbb{X}$ (and more in general of a Kripke frame) is the Boolean algebra with operator
 \[\mathbb{X}^+: = (\mathcal{P}(S), \cap,\cup, (\cdot)^c,  S, \varnothing, \langle R\rangle, [R]),\]
 where $\langle R\rangle$ and $[R]$  are defined as in \eqref{eq:semantic diamond} and \eqref{eq:semantic box}. Moreover, since $R$ is an equivalence relation, it is not difficult to verify that  \[\overline{Z} = \langle R\rangle Z  \quad \mbox{ and }\quad \underline{Z} = [R] Z,\] with $\langle R\rangle$ and $[R]$  validating the axioms of the classical normal modal logic S5, which is the axiomatic extension of the basic normal modal logic $\mathsf{K}$ with the following axioms, respectively corresponding to reflexivity, transitivity and symmetry.
\[ \Box\phi \rightarrow \phi \quad\quad  \Box \phi \rightarrow \Box\Box\phi \quad\quad  \phi \rightarrow \Box\Diamond\phi.\]
Hence, $\mathbb{X}^+$ is an S5-algebra.
%
%As approximation spaces are the Kripke frames for  the normal modal logic S5, this logic  completely axiomatizes rough set theory.
Given a language  for S5 over a set $\Prop$ of proposition variables,   a {\em model} is a tuple $\mathbb{M} = (\mathbb{X}, V)$  where $\mathbb{X} = (S, R)$ is a Kripke frame, and $V:\Prop\to \mathbb{X}^+$ on $\mathbb{X}$ is a  valuation.  The satisfaction $\mathbb{M}, w\Vdash \phi$ of  any formula $\phi$ at states $w$ in  $\mathbb{M}$, and its extension $\val{\phi}: = \{w\in S\mid \mathbb{M}, w\Vdash \phi\}$ are defined recursively as follows:
\begin{center}
\begin{tabular}{llll}
$M, w \Vdash p$                    & iff & $w \in V(p)$&$\val{p} = V(p)$\\
$M, w \Vdash \phi \lor \psi$    & iff & $ w \in \val{\phi} $ or $w \in\val{\psi}$& $\val{\phi \lor \psi} = \val{\phi} \cup \val{\psi}$\\
$M, w \Vdash \phi \land \psi$ & iff & $ w \in \val{\phi} $ and $w \in \val{\psi}$&$\val{\phi \land \psi} = \val{\phi} \cap \val{\psi}$\\
$M , w \Vdash \neg\phi$ & iff & $ w \notin \val{\phi}$&$\val{\neg\phi} = \val{\phi}^c$\\
$M, w \Vdash \Box\phi$ & iff & $wR u$ implies $u  \in \val\phi$ &$\val{\Box\phi }=  [R]\val{\phi}$\\
$M, w \Vdash \Diamond\phi$ & iff & $wRu$ for some $u \in \val{\phi}$ &$\val{\Diamond\phi} = \langle R\rangle\val{\phi}$\\
%$M, w \Vdash {\rhd}\phi$ & iff & $wR u$ implies $u  \notin \val\phi$ &$\val{{\rhd}\phi }=  [R\rangle\val{\phi}$\\
%$M, w \Vdash {\lhd}\phi$ & iff & $wRu$ for some $u \notin \val{\phi}$ &$\val{{\lhd}\phi} = \langle R]\val{\phi}$\\
\end{tabular}
\end{center}
Finally, as to the interpretation of sequents:
\begin{center}
\begin{tabular}{llll}
$\mathbb{M}\models \phi\vdash \psi$ & iff & for all $w \in W$, $\mbox{if } \mathbb{M}, w \Vdash \phi, \mbox{ then } \mathbb{M}, w \Vdash \psi$&\\
\end{tabular}
\end{center}
A sequent $\phi\vdash \psi$ is {\em valid} on a Kripke frame $\mathbb{X}$ (in symbols: $\mathbb{X}\models \phi\vdash \psi$) if $\mathbb{M}\models \phi\vdash \psi$  for every model $\mathbb{M}$ based on $\mathbb{X}$.
%\marginnote{add validity, sequents, interpretations}
%\[\mathbb{X}\Vdash \phi\vdash \psi\quad \mbox{ iff }\quad \mathbb{X}^+\models \phi\vdash \psi\]

\subsection{Rough algebras}
\label{ssec:rough algebras}
In this subsection, we report on the definitions of several classes of algebras, collectively referred to as  `rough algebras', which have been introduced and studied as abstract versions of approximation spaces (cf.~e.g.~\cite{banerjee1996rough,iwinski1987algebraic,comer1995perfect,saha2014algebraic,saha2016algebraic}).

\begin{definition}\label{def:dm} \cite{saha2014algebraic}
An algebra $\mathbb{T} = (\mathbb{L}, \neg, I)$ is a {\em topological quasi-Boolean algebra} (tqBa) if $\mathbb{L} = (L,\land, \lor, \top, \bot)$ is a  bounded distributive lattice, and for all $a, b \in \mathrm{L}$,
\begin{center}
\begin{tabular}{ll}
T1.  $\neg \neg a = a$         & T2. $\neg (a \lor b) = \neg a \land \neg b$\\
T3. $I(a \land b)= Ia \land Ib$& T4. $IIa = Ia$\\
T5. $Ia \leq a$                & T6. $I \top = \top$\\
\end{tabular}
\end{center}
A $\mathrm{tqBa}$ $\mathbb{T}$ as above is a {\em topological quasi-Boolean algebra 5} (tqBa5) if for all $a\in L$,
\begin{itemize}
\item[T8.] $CIa = Ia$,
where $Ca := \neg I\neg a$.
\end{itemize}
A $\mathrm{tqBa5}$ $\mathbb{T}$ is an {\em intermediate algebra of type 1} (IA1) iff for all $a\in L$,
\begin{itemize}
\item[T9.] $Ia \lor \neg Ia = \top$.
\end{itemize}
A $\mathrm{tqBa5}$ $\mathbb{T}$ is an {\em intermediate algebra of type 2} (IA2) iff for all $a\in L$,
\begin{itemize}
\item[T10.] $Ia \lor Ib = I(a \lor b)$.
\end{itemize}
A $\mathrm{tqBa5}$ $\mathbb{T}$ is an {\em intermediate algebra of type 3} (IA3) iff for all $a\in L$,
\begin{itemize}
\item[T11.] $Ia \leq Ib$  and $Ca \leq Cb$ imply $a \leq b$.
\end{itemize}
A {\em pre-rough algebra} is a $\mathrm{tqBa5}$ $\mathbb{T}$ which satisfies T9, T10 and T11.
A {\em  rough algebra}  is a complete and completely distributive pre-rough algebra.
\end{definition}

\subsection{Formal contexts and their concept lattices}
\label{ssec:prelim FCA}
This subsection elaborates and expands  on \cite[Appendix]{TarkPaper} and \cite{conradie2016categories}.
For any relation $T\subseteq U\times V$, and any $U'\subseteq U$  and $V'\subseteq V$, let
\begin{equation}\label{eq:def;round brackets}T^{(1)}[U']:=\{v \in V \mid \forall u(u\in U'\Rightarrow uTv) \}  \quad\quad T^{(0)}[V']:=\{u \in U \mid \forall v(v\in V'\Rightarrow uTv) \}.\end{equation}
Well known properties of this construction (cf.~\cite[Sections 7.22-7.29]{davey2002introduction}) are stated in the following lemma.
 \begin{lemma}\label{lemma: basic}
\begin{enumerate}
\item $X_1\subseteq X_2\subseteq U$ implies $T^{(1)}[X_2]\subseteq T^{(1)}[X_1]$, and $Y_1\subseteq Y_2\subseteq V$ implies $T^{(0)}[Y_2]\subseteq T^{(0)}[Y_1]$.
\item $U'\subseteq T^{(0)}[V']$ iff  $V'\subseteq T^{(1)}[U']$.
 \item $U'\subseteq T^{(0)}[T^{(1)}[U']]$ and $V'\subseteq T^{(1)}[T^{(0)}[V']]$.
 \item $T^{(1)}[U'] = T^{(1)}[T^{(0)}[T^{(1)}[U']]]$ and $T^{(0)}[V'] = T^{(0)}[T^{(1)}[T^{(0)}[V']]]$.
 \item $T^{(0)}[\bigcup\mathcal{V}] = \bigcap_{V'\in \mathcal{V}}T^{(0)}[V']$ and $T^{(1)}[\bigcup\mathcal{U}] = \bigcap_{U'\in \mathcal{U}}T^{(1)}[U']$.
\end{enumerate}
 \end{lemma}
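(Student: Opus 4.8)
The plan is to recognize the pair of maps $(T^{(1)}, T^{(0)})$ as an antitone Galois connection between the powersets $(\mathcal{P}(U), \subseteq)$ and $(\mathcal{P}(V), \subseteq)$, and to observe that items 1--5 are exactly the standard order-theoretic consequences of this fact. I would prove items 1, 2, and 5 directly by unfolding the defining quantifiers in \eqref{eq:def;round brackets}, and then obtain items 3 and 4 abstractly from items 1 and 2, without any further reference to $T$ itself.

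For item 1, I would take $v \in T^{(1)}[X_2]$, which by definition means $uTv$ for every $u \in X_2$; since $X_1 \subseteq X_2$, this gives $uTv$ for every $u \in X_1$, i.e.\ $v \in T^{(1)}[X_1]$, and the argument for the $T^{(0)}$ half is symmetric. For item 2, I would unfold both inclusions: $U' \subseteq T^{(0)}[V']$ says that $uTv$ for all $u \in U'$ and all $v \in V'$, and $V' \subseteq T^{(1)}[U']$ says literally the same thing with the two universal quantifiers reordered, so the two are equivalent. For item 5, I would compute $u \in T^{(0)}[\bigcup \mathcal{V}]$ iff $uTv$ for every $v \in \bigcup \mathcal{V}$ iff $uTv$ for every $V' \in \mathcal{V}$ and every $v \in V'$ iff $u \in T^{(0)}[V']$ for every $V' \in \mathcal{V}$, which is membership in $\bigcap_{V' \in \mathcal{V}} T^{(0)}[V']$; the $T^{(1)}$ half is symmetric.

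Items 3 and 4 I would then derive formally. Instantiating item 2 with $V' := T^{(1)}[U']$, the right-hand inclusion $T^{(1)}[U'] \subseteq T^{(1)}[U']$ holds trivially, so the left-hand inclusion yields $U' \subseteq T^{(0)}[T^{(1)}[U']]$; the companion inclusion $V' \subseteq T^{(1)}[T^{(0)}[V']]$ follows by instantiating item 2 with $U' := T^{(0)}[V']$. For item 4, applying the antitone map $T^{(1)}$ (item 1) to $U' \subseteq T^{(0)}[T^{(1)}[U']]$ gives $T^{(1)}[T^{(0)}[T^{(1)}[U']]] \subseteq T^{(1)}[U']$, while item 3 applied to $T^{(1)}[U']$ in the role of $V'$ gives the reverse inclusion $T^{(1)}[U'] \subseteq T^{(1)}[T^{(0)}[T^{(1)}[U']]]$; the two together give the first equality, and the second is the symmetric argument.

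Since every step is either a direct unfolding of the defining quantifiers or the routine ``$fgf = f$'' manipulation valid in any Galois connection, there is no genuine obstacle here. The only thing to watch is the direction of the inclusions: both $T^{(1)}$ and $T^{(0)}$ are antitone, so each application reverses $\subseteq$, and it is easy to slip a containment the wrong way. The cleanest presentation keeps items 1, 2, and 5 as the definitional core and treats 3 and 4 as purely formal corollaries of 1 and 2.
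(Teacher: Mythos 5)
Your proof is correct: items 1, 2, and 5 follow by direct unfolding of the quantifiers in the definition, and items 3 and 4 are the standard formal consequences of the Galois connection established in item 2, with all inclusion directions handled properly. The paper itself gives no proof of this lemma, citing it as well known from Davey--Priestley (Sections 7.22--7.29), and your argument is precisely the standard one found there.
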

 In what follows, we fix two sets $A$ and $X$, and use $a, b$ (resp.~$x, y$) for elements of $A$ (resp.~$X$), and $B, C, A_j$ (resp.~$Y, W, X_j$) for subsets of $A$ (resp.~of $X$).
 {\em Formal contexts}, or {\em polarities},  are structures $\mathbb{P} = (A, X, I)$ such that $A$ and $X$ are sets, and $I\subseteq A\times X$ is a binary relation. Intuitively, formal contexts can be understood as abstract representations of databases \cite{ganter2012formal}, so that  $A$ represents a collection of {\em objects}, $X$ as a collection of {\em features}, and for any object $a$ and feature $x$, the tuple $(a, x)$ belongs to $I$ exactly when object $a$ has feature $x$.

 As is well known, for every formal context $\mathbb{P} = (A, X, I)$, the pair of maps \[(\cdot)^\uparrow: \mathcal{P}(A)\to \mathcal{P}(X)\quad \mbox{ and } \quad(\cdot)^\downarrow: \mathcal{P}(X)\to \mathcal{P}(A),\]
respectively defined by the assignments $B^\uparrow: = I^{(1)}[B]$ and $Y^\downarrow: = I^{(0)}[Y]$,  form a Galois connection (cf.~Lemma \ref{lemma: basic}(2)), and hence induce the closure operators $(\cdot)^{\uparrow\downarrow}$ and $(\cdot)^{\downarrow\uparrow}$ on $\mathcal{P}(A)$ and on $\mathcal{P}(X)$ respectively.\footnote{When $B=\{a\}$ (resp.\ $Y=\{x\}$) we write $a^{\uparrow\downarrow}$ for $\{a\}^{\uparrow\downarrow}$ (resp.~$x^{\downarrow\uparrow}$ for $\{x\}^{\downarrow\uparrow}$).} Moreover, the fixed points of these closure operators (sometimes referred to as {\em Galois-stable} sets) form complete sub-$\bigcap$-semilattices of $\mathcal{P}(A)$ and $\mathcal{P}(X)$ (and hence complete lattices) respectively, which are dually isomorphic to each other via the restrictions of the maps $(\cdot)^{\uparrow}$ and $(\cdot)^{\downarrow}$ (cf.~Lemma \ref{lemma: basic}.(3)). This motivates the following definition:
% \marginnote{recommended to make make lemma 1.2 and 1.3 reference clearer. May be not needed? Seems clear to me.} 
\begin{definition}
For every formal context $\mathbb{P} = (A, X, I)$, a {\em formal concept} of $\mathbb{P}$ is a pair $c = (B, Y)$ such that $B\subseteq A$, $Y\subseteq X$, and $B^{\uparrow} = Y$ and $Y^{\downarrow} = B$.  The set $B$ is the {\em extension} of  $c$, which we will sometimes denote $\val{c}$, and $Y$ is the {\em intension} of $c$, sometimes denoted $\descr{c}$. Let $L(\mathbb{P})$ denote the set of the formal concepts of $\mathbb{P}$. Then the {\em concept lattice} of $\mathbb{P}$ is the complete lattice  \[\mathbb{P}^+: = (L(\mathbb{P}), \bigwedge, \bigvee),\] where for every $\mathcal{X}\subseteq L(\mathbb{P})$, \[\bigwedge \mathcal{X}: = (\bigcap_{c\in \mathcal{X}} \val{c}, (\bigcap_{c\in \mathcal{X}} \val{c})^{\uparrow})\quad \mbox{ and }\quad \bigvee \mathcal{X}: = ((\bigcap_{c\in \mathcal{X}} \descr{c})^{\downarrow}, \bigcap_{c\in \mathcal{X}} \descr{c}). \]
Then clearly, $\top^{\mathbb{P}^+}: = \bigwedge\varnothing = (A, A^{\uparrow})$ and $\bot^{\mathbb{P}^+}: = \bigvee\varnothing = (X^{\downarrow}, X)$, and the partial order underlying this lattice structure is defined as follows: for any $c, d\in L(\mathbb{P})$, \[c\leq d\quad \mbox{ iff }\quad \val{c}\subseteq \val{d} \quad \mbox{ iff }\quad \descr{d}\subseteq \descr{c}.\]
\end{definition}
\begin{theorem} \label{thm:Birkhoff} (Birkhoff's theorem, main theorem of FCA) Any complete lattice $\mathbb{L}$ is isomorphic to the concept lattice $\mathbb{P}^+$ of some formal context $\mathbb{P}$.
\end{theorem}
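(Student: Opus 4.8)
The plan is to exhibit, for a given complete lattice $\mathbb{L} = (L, \leq)$, a specific formal context whose concept lattice is isomorphic to $\mathbb{L}$. The canonical choice is $\mathbb{P} := (L, L, \leq)$, i.e.~take both the object set $A$ and the feature set $X$ to be the underlying set $L$, and let the incidence relation be the lattice order: $(a, x) \in I$ iff $a \leq x$. First I would unwind the two polarity maps for this context. Directly from \eqref{eq:def;round brackets}, for any $B \subseteq L$ the set $\up{B}$ consists precisely of the upper bounds of $B$, and for any $Y \subseteq L$ the set $\down{Y}$ consists precisely of the lower bounds of $Y$.

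The crucial step, and the only place where completeness of $\mathbb{L}$ is used, is to identify the Galois-stable subsets of $A = L$. For any $B \subseteq L$, completeness guarantees that $\bigvee B$ exists, and the meet of all upper bounds of $B$ equals $\bigvee B$; hence $\ud{B} = \down{(\up{B})} = \{c \in L \mid c \leq \bigvee B\}$ is the principal downset generated by $\bigvee B$. Thus the Galois-stable subsets of $L$ are exactly the principal downsets ${\downarrow} b := \{c \in L \mid c \leq b\}$, one for each $b \in L$, and a symmetric computation shows the stable subsets on the feature side are the principal upsets ${\uparrow} b := \{c \in L \mid b \leq c\}$. In particular, for each $b \in L$ the pair $({\downarrow} b, {\uparrow} b)$ satisfies $\up{({\downarrow} b)} = {\uparrow} b$ and $\down{({\uparrow} b)} = {\downarrow} b$, so it is a formal concept of $\mathbb{P}$.

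I would then define the candidate isomorphism $\phi : L \to L(\mathbb{P})$ by $\phi(b) := ({\downarrow} b, {\uparrow} b)$ and verify it is a bijection that preserves and reflects the order. Injectivity is immediate since $b$ is the greatest element of ${\downarrow} b$, so ${\downarrow} b$ determines $b$. Surjectivity follows from the previous paragraph: every formal concept $(B, Y)$ has Galois-stable extension $B = \ud{B} = {\downarrow}(\bigvee B)$, whence $(B, Y) = \phi(\bigvee B)$. Finally, for the order, $b \leq b'$ holds iff ${\downarrow} b \subseteq {\downarrow} b'$, which by the definition of the order on $L(\mathbb{P})$ (via inclusion of extensions, $\val{c} \subseteq \val{d}$) is exactly $\phi(b) \leq \phi(b')$ in $\mathbb{P}^+$. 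Thus $\phi$ is an order isomorphism, and since an order isomorphism between complete lattices automatically preserves arbitrary joins and meets, $\phi$ is an isomorphism of complete lattices.

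I do not expect a serious obstacle: the content lies entirely in choosing the right context and in the one computation $\bigwedge \up{B} = \bigvee B$, which is where completeness is indispensable. If desired, one can double-check that $\phi$ directly intertwines the lattice operations of $\mathbb{P}^+$ given in the definition above --- e.g.~$\bigcap_i {\downarrow} b_i = {\downarrow}(\bigwedge_i b_i)$ yields $\bigwedge_i \phi(b_i) = \phi(\bigwedge_i b_i)$ --- but this is subsumed by the order-isomorphism argument and serves only as a sanity check.
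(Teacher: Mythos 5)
Your proof is correct and takes exactly the same route as the paper: the paper's proof consists solely of exhibiting the context $\mathbb{P} := (L, L, \leq)$ and asserting $\mathbb{L} \cong \mathbb{P}^+$, leaving all verification to the reader. Your identification of the Galois-stable sets as the principal downsets ${\downarrow}(\bigvee B)$ and the resulting order isomorphism $b \mapsto ({\downarrow} b, {\uparrow} b)$ is precisely the omitted detail, carried out correctly.
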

\begin{proof}
If $\mathbb{L} = (L, \leq)$ is a complete lattice seen as a poset, then $\mathbb{L}\cong\mathbb{P}^+$ e.g.~for $\mathbb{P}: = (L, L, \leq)$.
\end{proof}
\begin{comment}
%For any formal context $\mathbb{P}=(A,X,I)$, we sometimes use $B^{\uparrow}$ for $I^{(1)}[B]$, and $Y^{\downarrow}$ for $I^{(0)}[Y]$, and  say that $B$ (resp.\ $Y$) is {\em Galois-stable} if $B = B^{\uparrow\downarrow}$ (resp.\ $Y = Y^{\downarrow\uparrow}$). When $B=\{a\}$ (resp.\ $Y=\{x\}$) we write $a^{\uparrow\downarrow}$ for $\{a\}^{\uparrow\downarrow}$ (resp.~$x^{\downarrow\uparrow}$ for $\{x\}^{\downarrow\uparrow}$). Galois-stable sets are the projections  of some maximal rectangle (formal concept) of $\mathbb{P}$.
The following lemma collects more well known facts (cf.~\cite[Sections 7.22-7.29]{davey2002introduction}; for a sketch of proof, see \cite[Lemma 2]{TarkPaper}):
\begin{lemma}
\label{lemma: basic stable}
\begin{enumerate}
\item $B^{\uparrow}$ and $Y^{\downarrow}$ are Galois-stable.
\item $B=\bigcup_{a\in B}a^{\uparrow\downarrow}$ and $Y=\bigcup_{y\in Y}y^{\downarrow\uparrow}$ for any Galois-stable sets $B$ and $Y$.
\item Galois-stable sets are closed under arbitrary intersections.
\end{enumerate}
\end{lemma}
\end{comment}

\subsection{Enriched formal contexts and their complex algebras}
\label{ssec:enriched formal contexts}
In \cite{TarkPaper}, structures similar to the following are introduced as generalizations of Kripke frames.  
\begin{definition} \label{def:enriched formal context}
%\redfootnote{Reviewer: p6, Def.3. Moreover, an explicit characterization of this condition, and an intuitive explanation of it would be interesting to understand the motivations behind the enriched formal concept notion. The definition should be further motivated from a non-technical point of view. A picture might help.}
An {\em enriched formal context} is a tuple
	\[\mathbb{F} = (\mathbb{P}, R_\Box, R_\Diamond)\]
	such that $\mathbb{P} = (A, X, I)$ is a formal context, and $R_\Box\subseteq A\times X$ and $R_\Diamond \subseteq X\times A$  are $I$-{\em compatible} relations, that is, \label{def:I-compatible rel} for all $x\in X$ and $a\in A$ the sets $R_{\Box}^{(0)}[x]$ (resp.~$R_{\Diamond}^{(0)}[a]$) % $R^{\downarrow}[x^{\downarrow\uparrow}]$
	and $R_{\Box}^{(1)}[a]$ (resp.~$R_{\Diamond}^{(1)}[x]$)  %$R^{\uparrow}[a^{\uparrow\downarrow}]$
	are Galois-stable relative to (the incidence relation $I$ of) formal context $\mathbb{P}$:
	\[(R_{\Box}^{(0)}[x])^{\uparrow\downarrow} = R_{\Box}^{(0)}[x]\quad (R_{\Box}^{(1)}[a])^{\downarrow\uparrow}  = R_{\Box}^{(1)}[a] \quad (R_{\Diamond}^{(0)}[a])^{\downarrow\uparrow}  = R_{\Diamond}^{(0)}[a]\quad (R_{\Diamond}^{(1)}[x])^{\uparrow\downarrow} = R_{\Diamond}^{(1)}[x].\]
	The {\em complex algebra} of $\mathbb{F}$ is
	\[\mathbb{F}^+ = (\mathbb{P}^+, [R_\Box], \langle R_\Diamond\rangle),\]
	where $\mathbb{P}^+$ is the concept lattice of $\mathbb{P}$, and $[R_\Box]$ and $\langle R_\Diamond\rangle$ are unary operations on $\mathbb{P}^+$ defined as follows: for every $c \in \mathbb{P}^+$,
	\[[R_\Box]c: = (R_{\Box}^{(0)}[\descr{c}], (R_{\Box}^{(0)}[\descr{c}])^{\uparrow}) \quad \mbox{ and }\quad \langle R_\Diamond\rangle c: = ((R_{\Diamond}^{(0)}[\val{c}])^{\downarrow}, R_{\Diamond}^{(0)}[\val{c}]).\]
\end{definition}

Several possible interpretations of the modal operators defined above are discussed below in Section \ref{sec:logics} and also in Sections \ref{sec:examples} and \ref{sec:applications}. Intuitively, the $I$-compatibility condition is intended to guarantee that the modal operators associated with $R_\Box$ and $R_{\Diamond}$---as well as their adjoints, introduced below---are well-defined. However, as we will see in more detail below, $I$-compatibility is sufficient but not necessary for the required well-definedness. We note that the well-definedness of these operators is a second-order condition, whereas the strictly stronger $I$-compatibility is first-order. We prefer to take this elementary and more  restricted class as our basic environment because, as we will see, the basic normal lattice-based modal logic introduced in the next subsection is complete w.r.t.~enriched formal contexts, and working with a first-order definable class of structures gives access to a host of powerful results (more on this in the last paragraph of Section \ref{sec:logics}). The following lemma provides equivalent reformulations of  $I$-compatibility and refines    \cite[Lemmas 3 and 4]{TarkPaper}.

\begin{lemma}\label{equivalents of I-compatible}
\begin{enumerate}
\item The following are equivalent for every formal context $\mathbb{P} = (A, X, I)$ and every relation $R\subseteq A\times X$:
	\begin{enumerate}
		\item [(i)] $R^{(0)}[x]$ is Galois-stable for every $x\in X$;
		
		\item [(ii)]  $R^{(0)} [Y]$ is Galois-stable for every $Y\subseteq X$;
		\item [(iii)] $R^{(1)}[B]=R^{(1)}[B^{\uparrow\downarrow}]$ for every  $B\subseteq A$.
	\end{enumerate}
\item The following are equivalent for every formal context $\mathbb{P} = (A, X, I)$ and every relation $R\subseteq A\times X$:

	\begin{enumerate}
		\item [(i)] $R^{(1)}[a]$ is Galois-stable for every $a\in A$;
		
		\item [(ii)]  $R^{(1)} [B]$ is Galois-stable, for every $B\subseteq A$;
		\item[(iii)] $R^{(0)}[Y]=R^{(0)}[Y^{\downarrow\uparrow}]$ for every $Y\subseteq X$.
	\end{enumerate}
\end{enumerate}
\end{lemma}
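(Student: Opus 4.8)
The plan is to prove part 1 as a cycle of implications $(i)\Rightarrow(ii)\Rightarrow(iii)\Rightarrow(i)$, and to obtain part 2 by the order-dual argument, interchanging the roles of $A$ and $X$, of $R^{(0)}$ and $R^{(1)}$, and of the closure operators $(\cdot)^{\uparrow\downarrow}$ and $(\cdot)^{\downarrow\uparrow}$; since every property of Lemma \ref{lemma: basic} that I use is symmetric under this interchange, part 2 requires no separate computation. Throughout I will exploit that $R^{(1)}$ and $R^{(0)}$ form an antitone Galois connection exactly as $I^{(1)}=(\cdot)^{\uparrow}$ and $I^{(0)}=(\cdot)^{\downarrow}$ do, so that all five items of Lemma \ref{lemma: basic} apply verbatim to $R$ as well as to $I$.

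For $(i)\Rightarrow(ii)$: by Lemma \ref{lemma: basic}(5) applied to $R$ we have $R^{(0)}[Y]=\bigcap_{x\in Y}R^{(0)}[x]$, so $R^{(0)}[Y]$ is an intersection of sets that are Galois-stable by hypothesis; I then invoke the elementary fact that the fixed points of the closure operator $(\cdot)^{\uparrow\downarrow}$ are closed under arbitrary intersection (a routine consequence of this closure being monotone and extensive, i.e.\ of Lemma \ref{lemma: basic}(1) and (3) applied to $I$), concluding that $R^{(0)}[Y]$ is Galois-stable. The converse implication $(ii)\Rightarrow(i)$, not needed for the cycle, is just the case $Y=\{x\}$.

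The core of the argument is the pair $(ii)\Rightarrow(iii)$ and $(iii)\Rightarrow(i)$, and this is where I expect the main obstacle: one must juggle the two distinct Galois connections living on the same powersets—the $I$-connection producing $(\cdot)^{\uparrow\downarrow}$ and the $R$-connection producing $R^{(0)},R^{(1)}$—and keep straight which adjunction, monotonicity, or idempotency fact is invoked at each step. For $(ii)\Rightarrow(iii)$, the inclusion $R^{(1)}[B^{\uparrow\downarrow}]\subseteq R^{(1)}[B]$ is free from antitonicity (Lemma \ref{lemma: basic}(1)) since $B\subseteq B^{\uparrow\downarrow}$. For the reverse inclusion I put $Y:=R^{(1)}[B]$; then $B\subseteq R^{(0)}[R^{(1)}[B]]=R^{(0)}[Y]$ by Lemma \ref{lemma: basic}(3), and $R^{(0)}[Y]$ is Galois-stable by $(ii)$, so minimality of the closure forces $B^{\uparrow\downarrow}\subseteq R^{(0)}[Y]$; applying the antitone $R^{(1)}$ and collapsing the triple composite via Lemma \ref{lemma: basic}(4) yields $R^{(1)}[B]=R^{(1)}[R^{(0)}[R^{(1)}[B]]]\subseteq R^{(1)}[B^{\uparrow\downarrow}]$, giving $(iii)$.

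For $(iii)\Rightarrow(i)$: extensivity of the closure gives $R^{(0)}[x]\subseteq(R^{(0)}[x])^{\uparrow\downarrow}$, so only the reverse inclusion needs proof. Writing $B:=R^{(0)}[x]$, Lemma \ref{lemma: basic}(3) gives $x\in R^{(1)}[R^{(0)}[x]]=R^{(1)}[B]$, whence $x\in R^{(1)}[B^{\uparrow\downarrow}]$ by $(iii)$; the adjunction Lemma \ref{lemma: basic}(2) then turns this into $B^{\uparrow\downarrow}\subseteq R^{(0)}[x]$, i.e.\ $(R^{(0)}[x])^{\uparrow\downarrow}\subseteq R^{(0)}[x]$, as required. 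I anticipate no difficulty beyond bookkeeping once the cycle is set up; the only genuine subtlety is remembering that the items of Lemma \ref{lemma: basic} are applied to $R$ in the adjunction and idempotency steps (2)--(4) and to $I$ in the monotonicity and closure steps.
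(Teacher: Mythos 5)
Your proof is correct and follows essentially the same route as the paper's: the $(i)\Rightarrow(ii)$ and $(iii)\Rightarrow(i)$ steps are identical, and your middle step only reorganizes the paper's $(i)\Rightarrow(iii)$ argument into $(ii)\Rightarrow(iii)$, replacing the pointwise adjunction $x\in R^{(1)}[B]\Leftrightarrow B\subseteq R^{(0)}[x]$ with the triple-composite identity $R^{(1)}[R^{(0)}[R^{(1)}[B]]]=R^{(1)}[B]$ --- the same Galois-connection bookkeeping in a slightly different order. Your observation that Lemma \ref{lemma: basic} applies verbatim to $R$ as well as to $I$, and that part 2 is the mirror image, matches the paper's (tacit) reasoning.
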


\begin{proof} We only prove item 1, the proof of item 2 being similar. For $(i)\Rightarrow (ii)$, since $Y = \bigcup_{x\in Y}\{x\}$, Lemma \ref{lemma: basic}.5 implies that $R^{(0)} [Y] = R^{(0)} [\bigcup_{x\in Y}\{x\}] = \bigcap_{x\in Y}R^{(0)}[x]$ which is Galois-stable by {\em (i)} and the fact that Galois-stable sets are closed under arbitrary intersections.
The converse direction is immediate.
	
	$(i)\Rightarrow (iii)$. Since $(\cdot)^{\uparrow\downarrow}$ is a closure operator, $B\subseteq B^{\uparrow\downarrow}$. Hence, Lemma \ref{lemma: basic}.1 implies that $R_{\Box}^{(1)}[B^{\uparrow\downarrow}]\subseteq R_{\Box}^{(1)}[B]$. For the converse inclusion, let $x\in R_{\Box}^{(1)}[B]$. By Lemma \ref{lemma: basic}.2, this is equivalent to $B\subseteq R_{\Box}^{(0)}[x]$. Since  $R_{\Box}^{(0)}[x]$ is Galois-stable by assumption, this implies that $B^{\uparrow\downarrow}\subseteq R_{\Box}^{(0)}[x]$, i.e., again by Lemma \ref{lemma: basic}.2, $x\in R_{\Box}^{(1)}[B^{\uparrow\downarrow}]$. This shows that $R_{\Box}^{(1)}[B]\subseteq R_{\Box}^{(1)}[B^{\uparrow\downarrow}]$, as required.

	$(iii)\Rightarrow (i)$. Let $x\in X$. It is enough to show that $(R_{\Box}^{(0)}[x])^{\uparrow\downarrow}\subseteq R_{\Box}^{(0)}[x]$.  By Lemma \ref{lemma: basic}.2, $R_\Box^{(0)}[x]\subseteq R_\Box^{(0)}[x]$ is equivalent to $x\in R_\Box^{(1)}[R_\Box^{(0)}[x]]$. By assumption, $R_{\Box}^{(1)}[R_\Box^{(0)}[x]]=R_\Box^{(1)}[(R_\Box^{(0)}[x])^{\uparrow\downarrow}]$, hence $x\in R_\Box^{(1)}[(R_\Box^{(0)}[x])^{\uparrow\downarrow}]$. Again by Lemma \ref{lemma: basic}.2, this is equivalent to $(R_{\Box}^{(0)}[x])^{\uparrow\downarrow}\subseteq R_{\Box}^{(0)}[x]$, as required.	
\end{proof}
For any enriched formal context $\mathbb{F}$, let  $R_{\Diamondblack}\subseteq X\times A$ be defined by $x R_{\Diamondblack} a$ iff $aR_{\Box} x$, and $R_{\blacksquare}\subseteq A\times X$ by $a R_{\blacksquare} x$ iff $xR_{\Diamond} a$. Hence, for every $B\subseteq A$ and $Y\subseteq X$,
\begin{equation}
\label{eq:zero is 1 is zero}
R_{\Diamondblack}^{(0)}[B] = R_{\Box}^{(1)}[B] \quad R_{\Diamondblack}^{(1)}[Y] = R_{\Box}^{(0)}[Y] \quad R_{\blacksquare}^{(0)}[Y] = R_{\Diamond}^{(1)}[Y] \quad R_{\blacksquare}^{(1)}[B] = R_{\Diamond}^{(0)}[B].
\end{equation}
By Lemma \ref{equivalents of I-compatible}, the $I$-compatibility of $R_{\Box}$ and $ R_{\Diamond}$ guarantees that   $R_{\Box}^{(0)}[Y]$ and $R_{\blacksquare}^{(0)}[Y]$ are Galois-stable sets for every $Y\subseteq X$, and that $R_{\Diamond}^{(0)}[B]$ and $R_{\Diamondblack}^{(0)}[B]$ are Galois-stable sets for every  $B\subseteq A$, and hence implies that the maps $[R_\Box],\langle R_{\Diamond}\rangle, [R_\blacksquare],\langle R_{\Diamondblack}\rangle: \mathbb{P}^+\to \mathbb{P}^+$ are well-defined.  However, $I$-compatibility is strictly stronger than the maps $[R_\Box],\langle R_{\Diamond}\rangle, [R_\blacksquare],\langle R_{\Diamondblack}\rangle$ being well-defined, as  is shown in the following example. %Indeed, $[R_{\Box}]$ (resp. $\langle R_{\Diamond}\rangle$) are well-defined if and only if for Galois-stable subsets of $Y\subseteq X$  ($B\subseteq A$), $R_{\Box}^{(0)}[Y]$ ( $R_{\Diamond}^{(0)}[B]$) are Galois-stable, while by Lemma \ref{equivalents of I-compatible}, we need all of them to be Galois-stable.

\begin{example}
Let $\mathbb{P} = (A, X, I)$ with  $A =\{a,b,c\}$, $X = \{x, y, z\}$ and $I=\{(a,y),(a,z)\}$ be a formal context.

 Let $R=\{(a,y),(a,z),(b,x)\}$.
\begin{center}
\begin{tikzpicture}
\draw[very thick] (0, 1) -- (-1, 0) -- (1, 1);
\draw[thick, red, dashed] (0, 1) -- (-1, 0) -- (1, 1);
\draw[thick, red, dashed] (0, 0) -- (-1, 1);
 \filldraw[black] (-1, 1) circle (2 pt);
  \filldraw[black] (0, 1) circle (2 pt);
  \filldraw[black] (1, 1) circle (2 pt);
 \filldraw[black] (-1, 0) circle (2 pt);
  \filldraw[black] (0, 0) circle (2 pt);
  \filldraw[black] (1, 0) circle (2 pt);
  \draw (-1, -0.3) node {$a$};
   \draw (0, -0.3) node {$b$};
    \draw (1, -0.3) node {$c$};
     \draw (-1, 1.3) node {$x$};
   \draw (0, 1.3) node {$y$};
    \draw (1, 1.3) node {$z$};
    \draw (0, -0.8) node {$\mathbb{P}$};

    \draw[very thick] (3, -0.5) -- (3, 0.5) -- (3, 1.5);
 \filldraw[black] (3, -0.5) circle (2 pt);
  \filldraw[black] (3, 0.5) circle (2 pt);
  \filldraw[black] (3, 1.5) circle (2 pt);
    \draw (3, -0.8) node {$\mathbb{P}^{+}$};
    \draw (4, 1.5) node {$(A, \varnothing)$};
    \draw (4, 0.5) node {$(\{a\}, \{y, z\})$};
       \draw (4, -0.5) node {$(\varnothing, X)$};
 \end{tikzpicture}
 \end{center}

The concept lattice of $\mathbb{P}$ is represented in the picture above. It is easy to verify that $R^{(0)}[X] = \varnothing$, $R^{(0)}[\{y, z\}] =\{a\}$  and  $R^{(0)}[\varnothing] = A$, hence $[R]: \mathbb{P}^{+}\to \mathbb{P}^{+}$ is well defined and is the identity on $\mathbb{P}^{+}$. Likewise, $R^{(1)}[A] = \varnothing$, $R^{(1)}[\{a\}] =\{y, z\}$  and  $R^{(1)}[\varnothing] = X$, hence $\langle R^{-1}\rangle: \mathbb{P}^{+}\to \mathbb{P}^{+}$ is well defined and is the identity on $\mathbb{P}^{+}$.  However, $R$ is not $I$-compatible. Indeed, $R^{(0)}[x] = \{b\}$ and $R^{(1)}[b] = \{x\}$ are not Galois-stable.
\end{example}

\begin{lemma}
\label{lem:3}
	For any  enriched formal context $\mathbb{F} = (A, X, I, R_{\Box}, R_{\Diamond})$, the algebra $\mathbb{F}^+ = (\mathbb{P}^+, [R_{\Box}], \langle R_{\Diamond}\rangle)$ is a complete  normal lattice expansion such that $[R_\Box]$ is completely meet-preserving and $\langle R_\Diamond\rangle$ is completely join-preserving.
\end{lemma}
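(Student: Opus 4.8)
The plan is to leverage the fact, already in place from the concept-lattice construction (Theorem \ref{thm:Birkhoff}), that $\mathbb{P}^+$ is a complete lattice, together with the observation recorded just before the Example that $I$-compatibility guarantees $[R_\Box]$ and $\langle R_\Diamond\rangle$ are well-defined endomaps of $\mathbb{P}^+$. Normality is then subsumed by the two preservation claims (a unary completely meet-preserving map preserves $\top$ and finite meets, dually for join), so the only real content is complete meet-preservation of $[R_\Box]$ and complete join-preservation of $\langle R_\Diamond\rangle$. I would prove each by a direct chain of equalities, using that a formal concept is determined by its extension (equivalently, its intension), that the extension of a meet $\bigwedge\mathcal{X}$ in $\mathbb{P}^+$ is $\bigcap_{c\in\mathcal{X}}\val{c}$, and that the intension of a join $\bigvee\mathcal{X}$ is $\bigcap_{c\in\mathcal{X}}\descr{c}$.

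For $[R_\Box]$, fix $\mathcal{X}\subseteq L(\mathbb{P})$; since concepts are determined by extensions it suffices to show $R_{\Box}^{(0)}[\descr{\bigwedge\mathcal{X}}] = \bigcap_{c\in\mathcal{X}}R_{\Box}^{(0)}[\descr{c}]$. First I would record $\descr{\bigwedge\mathcal{X}} = (\bigcup_{c\in\mathcal{X}}\descr{c})^{\downarrow\uparrow}$, which follows by writing $\bigcap_{c}\val{c} = \bigcap_c\descr{c}^{\downarrow} = (\bigcup_c\descr{c})^{\downarrow}$ via Lemma \ref{lemma: basic}.5 and then applying $(\cdot)^{\uparrow}$. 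Next, the reformulation of $I$-compatibility in Lemma \ref{equivalents of I-compatible}.2 (item (i)$\Rightarrow$(iii), available because $(R_{\Box}^{(1)}[a])^{\downarrow\uparrow} = R_{\Box}^{(1)}[a]$) gives $R_{\Box}^{(0)}[(\bigcup_c\descr{c})^{\downarrow\uparrow}] = R_{\Box}^{(0)}[\bigcup_c\descr{c}]$, and a final application of Lemma \ref{lemma: basic}.5 turns the right-hand side into $\bigcap_c R_{\Box}^{(0)}[\descr{c}]$, which is $\val{\bigwedge_{c}[R_\Box]c}$. Reading all unions and intersections as possibly empty subsumes the case $\mathcal{X}=\varnothing$ (i.e.\ $[R_\Box]\top=\top$).

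The argument for $\langle R_\Diamond\rangle$ is order-dual, now comparing intensions: it suffices to show $R_{\Diamond}^{(0)}[\val{\bigvee\mathcal{X}}] = \bigcap_{c\in\mathcal{X}}R_{\Diamond}^{(0)}[\val{c}]$, where $\val{\bigvee\mathcal{X}} = (\bigcup_{c}\val{c})^{\uparrow\downarrow}$. The one point needing care is that $R_\Diamond\subseteq X\times A$, so Lemma \ref{equivalents of I-compatible} does not apply to it directly; I would route through the adjoint relation $R_\blacksquare\subseteq A\times X$ using the identities \eqref{eq:zero is 1 is zero}, namely $R_{\Diamond}^{(0)}[B] = R_{\blacksquare}^{(1)}[B]$ and $R_{\Diamond}^{(1)}[x] = R_{\blacksquare}^{(0)}[x]$. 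Since $R_{\blacksquare}^{(0)}[x] = R_{\Diamond}^{(1)}[x]$ is Galois-stable by $I$-compatibility of $R_\Diamond$, Lemma \ref{equivalents of I-compatible}.1 ((i)$\Rightarrow$(iii)) gives $R_{\blacksquare}^{(1)}[B] = R_{\blacksquare}^{(1)}[B^{\uparrow\downarrow}]$, i.e.\ $R_{\Diamond}^{(0)}[B] = R_{\Diamond}^{(0)}[B^{\uparrow\downarrow}]$; applying this to $B=\bigcup_c\val{c}$ and then Lemma \ref{lemma: basic}.5 yields the claim.

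I expect the only genuinely delicate aspect to be bookkeeping rather than any deep obstacle: keeping straight which closure operator ($(\cdot)^{\uparrow\downarrow}$ on $\mathcal{P}(A)$ versus $(\cdot)^{\downarrow\uparrow}$ on $\mathcal{P}(X)$) is active, and selecting the correct item of Lemma \ref{equivalents of I-compatible} for each operator — for $[R_\Box]$ the version keyed to Galois-stability of $R_{\Box}^{(1)}[a]$, and for $\langle R_\Diamond\rangle$ the version keyed, through $R_\blacksquare$, to Galois-stability of $R_{\Diamond}^{(1)}[x]$. Once the right instances are lined up, each preservation law reduces to a short chain of equalities with no further difficulty.
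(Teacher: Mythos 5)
Your proof is correct, but it takes a genuinely different route from the paper. The paper does not compute meets and joins at all: it first notes (as you do) that $I$-compatibility makes the four maps $[R_\Box],\langle R_{\Diamond}\rangle, [R_\blacksquare],\langle R_{\Diamondblack}\rangle$ well defined, and then verifies the two adjunctions $\langle R_{\Diamondblack}\rangle \dashv [R_{\Box}]$ and $\langle R_\Diamond\rangle \dashv [R_\blacksquare]$ by a four-line chain of equivalences ($\langle R_{\Diamondblack}\rangle c\le d$ iff $\descr{d}\subseteq R_{\Diamondblack}^{(0)}[\val{c}] = R_{\Box}^{(1)}[\val{c}]$ iff $\val{c}\subseteq R_{\Box}^{(0)}[\descr{d}]$ iff $c\le [R_{\Box}]d$, using \eqref{eq:zero is 1 is zero} and Lemma \ref{lemma: basic}.2), after which complete meet- and join-preservation follow from the general fact that right adjoints between complete lattices preserve arbitrary meets and left adjoints arbitrary joins (\cite[Proposition 7.31]{davey2002introduction}). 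Your argument instead verifies the preservation laws by direct computation on extensions and intensions, with Lemma \ref{equivalents of I-compatible}(iii) absorbing the Galois closure and Lemma \ref{lemma: basic}.5 distributing $(\cdot)^{(0)}$ over unions; your handling of the typing issue for $R_\Diamond$ via $R_\blacksquare$ and \eqref{eq:zero is 1 is zero} is exactly right, and the empty-family case covers normality. The trade-off is that the paper's route is shorter and simultaneously exhibits the adjoint operators $\langle R_{\Diamondblack}\rangle$ and $[R_\blacksquare]$, which are reused later (e.g.\ in the ALBA computations of Proposition \ref{lemma:correspondences}), whereas your route is more elementary and self-contained, showing concretely where each hypothesis of $I$-compatibility is consumed.
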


\begin{proof}
As discussed above, the $I$-compatibility of $R_{\Box}$ and $R_{\Diamond}$ guarantees that the maps $[R_\Box],\langle R_{\Diamond}\rangle, [R_\blacksquare],\langle R_{\Diamondblack}\rangle: \mathbb{P}^+\to \mathbb{P}^+$ are well defined. Since $\mathbb{P}^{+}$ is a complete lattice, by \cite[Proposition 7.31]{davey2002introduction}, to show that $[R_\Box]$ is completely meet-preserving and $\langle R_\Diamond\rangle$ is completely join-preserving, it is enough to show that  $\langle R_{\Diamondblack}\rangle$ is the left adjoint of $[R_{\Box}]$ and that $[R_\blacksquare]$ is the right adjoint of $\langle R_\Diamond\rangle$. For any $c,d\in \mathbb{P}^{+}$,
	\begin{center}
		\begin{tabular}{r c l l}
			$\langle R_{\Diamondblack}\rangle c\le d $ &
			iff & $ \descr{d} \subseteq R_{\Diamondblack}^{(0)}[\val{c}] $ & ordering of concepts\\
			& iff & $ \descr{d} \subseteq R_{\Box}^{(1)}[\val{c}] $   & \eqref{eq:zero is 1 is zero}  \\
&iff& $ \val{c} \subseteq R_{\Box}^{(0)}[\descr{d}] $& Lemma \ref{lemma: basic}.2\\
			&iff& $ c\le [R_{\Box}]d .$& ordering of concepts
			
		\end{tabular}
	\end{center}
Likewise, one shows that $[R_\blacksquare]$ is the right adjoint of $\langle R_\Diamond\rangle$.
\end{proof}
The converse of the lemma above also holds, in the form of the following representation theorem of which we give only the sketch of a proof, referring to \cite{Gabbay-paper}
%\marginnote{complete reference}
 for further details.

\begin{theorem} \label{thm:expandedBirkhoff} (Expanded Birkhoff theorem) Any complete modal algebra\footnote{A {\em complete modal algebra} is a complete normal lattice expansion $\mathbb{A} = (\mathbb{L}, \Box, \Diamond)$ such that $\mathbb{L}$ is a complete lattice, $\Box$ is completely meet-preserving and $\Diamond$ is completely join-preserving.} $\mathbb{A} = (\mathbb{L}, \Box, \Diamond)$ is isomorphic to the complex algebra $\mathbb{F}^+$ of some enriched formal context $\mathbb{F}$.
\end{theorem}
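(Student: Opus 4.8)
The plan is to imitate the proof of Birkhoff's theorem (Theorem~\ref{thm:Birkhoff}) and to enrich the witnessing context with two relations read off from the modal operations. First I would take the formal context $\mathbb{P} := (L, L, \leq)$, so that by Theorem~\ref{thm:Birkhoff} the map $e\colon u \mapsto ({\downarrow}u, {\uparrow}u)$ is a lattice isomorphism $\mathbb{L} \to \mathbb{P}^+$, where ${\downarrow}u = \{a \in L \mid a \leq u\}$ is the extension and ${\uparrow}u = \{x \in L \mid u \leq x\}$ the intension of the concept $e(u)$. Since $\mathbb{L}$ is complete, the Galois-stable subsets of $A = L$ are exactly the principal down-sets ${\downarrow}u$, and those of $X = L$ exactly the principal up-sets ${\uparrow}u$. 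I would then define
\[ R_\Box := \{(a,x) \in L\times L \mid a \leq \Box x\} \quad\text{and}\quad R_\Diamond := \{(x,a) \in L\times L \mid \Diamond a \leq x\}, \]
the idea being that $R_\Box^{(0)}$ and $R_\Diamond^{(0)}$ should compute $\Box$ and $\Diamond$ on intensions and extensions respectively.

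The first thing to check is that $\mathbb{F} := (\mathbb{P}, R_\Box, R_\Diamond)$ is a genuine enriched formal context, i.e.\ that $R_\Box$ and $R_\Diamond$ are $I$-compatible. Two of the four stability conditions are immediate from the observation above, since $R_\Box^{(0)}[x] = \{a \mid a \leq \Box x\} = {\downarrow}\Box x$ and $R_\Diamond^{(0)}[a] = \{x \mid \Diamond a \leq x\} = {\uparrow}\Diamond a$ are principal, hence Galois-stable. For the other two I would invoke the adjunctions that come for free from the hypotheses: since $\Box$ is completely meet-preserving and $\Diamond$ completely join-preserving, by \cite[Proposition 7.31]{davey2002introduction} $\Box$ has a left adjoint $g$ (so $a \leq \Box x$ iff $g(a) \leq x$) and $\Diamond$ has a right adjoint $h$ (so $\Diamond a \leq x$ iff $a \leq h(x)$). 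Then $R_\Box^{(1)}[a] = \{x \mid a \leq \Box x\} = {\uparrow}g(a)$ and $R_\Diamond^{(1)}[x] = \{a \mid \Diamond a \leq x\} = {\downarrow}h(x)$ are again principal, and hence Galois-stable, completing the verification of $I$-compatibility.

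It remains to show that $e$ is an isomorphism of complex algebras; as it is already a lattice isomorphism by Theorem~\ref{thm:Birkhoff}, I only need $e(\Box u) = [R_\Box]e(u)$ and $e(\Diamond u) = \langle R_\Diamond\rangle e(u)$ for every $u \in L$. Since a concept is determined by either of its two coordinates, it suffices to match one coordinate in each case, and this is exactly where completeness of the preservation properties does the real work. Unwinding Definition~\ref{def:enriched formal context}, the extension of $[R_\Box]e(u)$ is $R_\Box^{(0)}[{\uparrow}u] = \{a \mid a \leq \Box x \text{ for all } x \geq u\}$; I would rewrite this as $\{a \mid a \leq \bigwedge_{x \geq u}\Box x\}$ and use complete meet-preservation together with $\bigwedge{\uparrow}u = u$ to get $\bigwedge_{x\geq u}\Box x = \Box u$, so the extension is ${\downarrow}\Box u$ and thus $[R_\Box]e(u) = e(\Box u)$. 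Dually, the intension of $\langle R_\Diamond\rangle e(u)$ is $R_\Diamond^{(0)}[{\downarrow}u] = \{x \mid \bigvee_{a\leq u}\Diamond a \leq x\}$, which by complete join-preservation and $\bigvee{\downarrow}u = u$ equals ${\uparrow}\Diamond u$, giving $\langle R_\Diamond\rangle e(u) = e(\Diamond u)$. The main obstacle is precisely these two commutation identities: everything hinges on pulling $\Box$ out of the meet over ${\uparrow}u$ and $\Diamond$ out of the join over ${\downarrow}u$, which is where the \emph{complete} meet/join-preservation hypotheses become indispensable rather than mere monotonicity; the $I$-compatibility step, by contrast, is routine once the adjoints are in hand.
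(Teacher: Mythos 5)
Your proposal is correct and follows essentially the same route as the paper, which only sketches the argument: the paper likewise takes $\mathbb{P} := (L, L, \leq)$ and defines $R_\Box$ and $R_\Diamond$ by $a \leq \Box b$ and $\Diamond a \leq b$ respectively, leaving the $I$-compatibility check and the two commutation identities implicit, which you supply in full (correctly observing that Galois-stable sets are exactly the principal down-/up-sets and that the missing stability conditions follow from the adjoints guaranteed by complete meet-/join-preservation). The only caveat is the orientation of $R_\Diamond \subseteq X \times A$: your reading $x R_\Diamond a$ iff $\Diamond a \leq x$ is the one that makes $\langle R_\Diamond\rangle$ compute $\Diamond$ (and matches the paper's own usage in Section 6), so your version is the intended one.
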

\begin{proof}
If $\mathbb{L} = (L, \leq)$ is the complete lattice underlying $\mathbb{A}$ seen as a poset, then $\mathbb{A}\cong\mathbb{F}^+$ where $\mathbb{F}: = (\mathbb{P}, R_{\Box}, R_{\Diamond})$ is e.g.~such that $\mathbb{P}: = (L, L, \leq)$, and for all $a, b\in L$,\[aR_{\Box} b\; \mbox{ iff } \; a\leq \Box b\quad\quad aR_{\Diamond} b\; \mbox{ iff }\; \Diamond a\leq  b.\]
\end{proof}

\subsection{Basic modal logic of concepts and its enriched formal context semantics}
\label{sec:logics}
In \cite{conradie2016categories}, following the general methodology for interpreting normal lattice-based logics on polarity-based (i.e.~formal-context-based) relational structures discussed in \cite{CoPa:non-dist}, a lattice-based normal modal logic is introduced as the basic epistemic logic of categories, together with its semantics based on a restricted class of formal contexts. The restrictions were lifted in \cite{TarkPaper}. In what follows we report on a variation of this logic, and the semantics proposed in \cite{TarkPaper}. Intuitively, this logic is the counterpart, in the framework of formal concepts, of the basic normal modal logic $\mathbf{K}$, and as such, it constitutes the basic logical framework in which the logic of conceptual approximation spaces embeds.
\paragraph{Basic logic and informal understanding.} Let $\Prop$ be a (countable or finite) set of atomic propositions. The language $\L$ of the {\em basic modal logic of formal concepts} is
\[ \varphi := \bot \mid \top \mid p \mid  \varphi \wedge \varphi \mid \varphi \vee \varphi \mid \Box \varphi \mid  \Diamond\phi\] %\mid {\rhd}\phi \mid {\lhd}\phi \]%
where $p\in \Prop$. Clearly, the logical signature of this language matches the algebraic signature of the complex algebra of any enriched formal context $\mathbb{F} = (\mathbb{P}, R_\Box, R_\Diamond)$ (cf.~Definition \ref{def:enriched formal context}). Hence,  formulas in this language can be interpreted as formal concepts of $\mathbb{F}$. If the formal context $\mathbb{P}$ on which $\mathbb{F}$ is based is regarded as the abstract representation of a database, atomic propositions $p\in \Prop$ can be understood as {\em atomic labels} (or {\em names}) for concepts, appropriate to the nature of the database. For instance, if the database consists of music albums and their features (e.g.~names of performers, types of musical instruments, number of bits per minute etc), then the atomic propositions can stand for names of music genres (e.g.~jazz, rock, rap); likewise, if the database consists of movies and their features (e.g.~names of directors or performers, duration, presence of special effects, presence of costumes, presence of shooting scenes, etc), then the atomic propositions can stand for movie genres (e.g.~western, drama, horror); if the database consists of goods on sale in a supermarket and their features (e.g.~capacity of packages, presence of additives, presence of organic certification, etc) then the conceptual labels can stand for supermarket categories (e.g.~detergents, dairies, spices); if the database consists of the patients in a hospital and their symptoms (e.g.~fever, jaundice, vertigos, etc), then the atomic propositions can stand for diseases (e.g.~pneumonia, hepatitis, diabetes). Compound formulas $\phi\wedge \psi$ and $\phi\vee\psi$ respectively denote the greatest common subconcept and the smallest  common superconcept of $\phi$ and $\psi$. In \cite{conradie2016categories, TarkPaper}, modal operators are given an epistemic interpretation, so that, for a given agent $i\in \Ag$, the formula $\Box_i\phi$ was understood as ``the category  $\phi$, according to agent $i$''. In the present paper we will propose different interpretations of the modal operators in connection  to their  `lower' and `upper approximation' roles (cf.~Section \ref{sec:examples}, see also Section \ref{sec:applications}). %We discuss many different interpretations of enriched formal contexts in the section (cf.~Section \ref{sec:applications}). 
		The {\em basic}, or {\em minimal normal} $\mathcal{L}$-{\em logic} is a set $\mathbf{L}$ of sequents $\phi\vdash\psi$ (which intuitively read ``$\phi$ is a subconcept of $\psi$'') with $\phi,\psi\in\mathcal{L}$, containing the following axioms:
		\begin{itemize}
			\item Sequents for propositional connectives:
			\begin{align*}
				&p\vdash p, && \bot\vdash p, && p\vdash \top, & &  &\\
				&p\vdash p\vee q, && q\vdash p\vee q, && p\wedge q\vdash p, && p\wedge q\vdash q, &
			\end{align*}
			\item Sequents for modal operators:
			\begin{align*}
		&\top\vdash \Box \top &&
                \Box p\wedge \Box q \vdash \Box ( p\wedge q)\\
                &\Diamond \bot\vdash \bot &&
                \Diamond ( p \vee q)\vdash \Diamond p \vee \Diamond  q  \\
              %  &\top\vdash {\rhd} \bot &&
              %  {\rhd}p \land {\rhd}q \vdash {\rhd }(p\vee q )\\
              %   & {\lhd}\top\vdash \bot &&
              %    {\lhd} (p \land q) \vdash {\lhd } p \vee {\lhd} q \\
			\end{align*}
			
		\end{itemize}
		and closed under the following inference rules:
		\begin{displaymath}
			\frac{\phi\vdash \chi\quad \chi\vdash \psi}{\phi\vdash \psi}
			\quad\quad
			\frac{\phi\vdash \psi}{\phi\left(\chi/p\right)\vdash\psi\left(\chi/p\right)}
			\quad\quad
			\frac{\chi\vdash\phi\quad \chi\vdash\psi}{\chi\vdash \phi\wedge\psi}
			\quad\quad
			\frac{\phi\vdash\chi\quad \psi\vdash\chi}{\phi\vee\psi\vdash\chi}
			\quad\quad
		\end{displaymath}
		\begin{displaymath}
			\frac{\phi\vdash\psi}{\Box \phi\vdash \Box \psi}
\quad\quad
\frac{\phi\vdash\psi}{\Diamond \phi\vdash \Diamond \psi}
%\quad\quad
%\frac{\phi\vdash\psi}{{\rhd} \psi\vdash {\rhd} \phi}
%\quad\quad
%\frac{\phi\vdash\psi}{{\lhd} \psi\vdash {\lhd} \phi}
		\end{displaymath}
%Thus, the modal fragment of $\mathbf{L}$ incorporates the viewpoints of individual agents into the syllogistic reasoning supported by the propositional fragment of $\mathbf{L}$.
By an {\em $\mathcal{L}$-logic} we understand any  extension of $\mathbf{L}$  with $\mathcal{L}$-axioms $\phi\vdash\psi$.
The reader can refer to \cite{TarkPaper} for more details about $\mathcal{L}$-logic such as the proof of the soundness and  completeness of the logic w.r.t enriched formal contexts.
	
\paragraph{Interpretation in enriched formal contexts.}
For any enriched formal context $\mathbb{F} = (\mathbb{P}, R_{\Box}, R_{\Diamond})$, a {\em valuation} on $\mathbb{F}$ is a map $V:\Prop\to \mathbb{P}^+$. For every conceptual label $p\in \Prop$, we let  $\val{p}: = \val{V(p)}$ (resp.~$\descr{p}: = \descr{V(p)}$) denote the extension (resp.~the intension) of the interpretation of $p$ under $V$.  The elements of $\val{p}$ are the {\em members} of concept $p$ under  $V$; the elements of $\descr{p}$ {\em describe}  concept $p$ under $V$. Any valuation $V$ on $\mathbb{F}$ extends homomorphically to an interpretation map of $\mathcal{L}$-formulas defined as follows:
\begin{center}
\begin{tabular}{r c l}
$V(p)$ & $ = $ & $(\val{p}, \descr{p})$\\
$V(\top)$ & $ = $ & $(A, A^{\uparrow})$ \\
 $V(\bot)$ & $ = $ & $(X^{\downarrow}, X)$\\
$V(\phi\wedge\psi)$ & $ = $ & $(\val{\phi}\cap \val{\psi}, (\val{\phi}\cap \val{\psi})^{\uparrow})$\\
$V(\phi\vee\psi)$ & $ = $ & $((\descr{\phi}\cap \descr{\psi})^{\downarrow}, \descr{\phi}\cap \descr{\psi})$\\
$V(\Box\phi)$ & $ = $ & $(R_{\Box}^{(0)}[\descr{\phi}], (R_{\Box}^{(0)}[\descr{\phi}])^{\uparrow})$\\
$V(\Diamond\phi)$ & $ = $ & $((R_{\Diamond}^{(0)}[\val{\phi}])^\downarrow, R_{\Diamond}^{(0)}[\val{\phi}])$.\\
%$V({\rhd}_i\phi)$ & $ = $ & ($R_i^{(0)}[\val{\phi}], (R_i^{(0)}[\val{\phi}])^\uparrow)$\\
%$V({\lhd}_i\phi)$ & $ = $ & $((R_i^{(0)}[\descr{\phi}])^\downarrow, R_i^{(0)}[\descr{\phi}])$\\
\end{tabular}
\end{center}
A {\em model} is a tuple $\mathbb{M} = (\mathbb{F}, V)$ where $\mathbb{F} = (\mathbb{P}, R_{\Box}, R_{\Diamond})$ is an enriched formal context and $V$ is a  valuation on $\mathbb{F}$.  For every $\phi\in \mathcal{L}$, we write:
\begin{center}
\begin{tabular}{llll}
$\mathbb{M}, a \Vdash \phi$ & iff & $a\in \val{\phi}_{\mathbb{M}}$  & \\
$\mathbb{M}, x \succ \phi$ & iff & $x\in \descr{\phi}_{\mathbb{M}}$  &\\
\end{tabular}
\end{center}
and we read $\mathbb{M}, a \Vdash \phi$ as ``$a$ is a member of category $\phi$'', and $\mathbb{M}, x \succ \phi$ as ``$x$ describes category $\phi$''.
%The interpretation of atomic propositions can be extended to propositional $\mathcal{L}$-formulas as follows:
Spelling out the definition above, we can equivalently rewrite it in the following recursive form:
\begin{center}
\begin{tabular}{llll}
$\mathbb{M}, a \Vdash p$ & iff & $a\in \val{p}_{\mathbb{M}}$ \\
$\mathbb{M}, x \succ p$ & iff & $x\in \descr{p}_{\mathbb{M}}$ \\
$\mathbb{M}, a \Vdash\top$ &  & always \\
$\mathbb{M}, x \succ \top$ & iff &   $a I x$ for all $a\in A$\\
$\mathbb{M}, x \succ  \bot$ &  & always \\
$\mathbb{M}, a \Vdash \bot $ & iff & $a I x$ for all $x\in X$\\
$\mathbb{M}, a \Vdash \phi\wedge \psi$ & iff & $\mathbb{M}, a \Vdash \phi$ and $\mathbb{M}, a \Vdash  \psi$ & \\
$\mathbb{M}, x \succ \phi\wedge \psi$ & iff & for all $a\in A$, if $\mathbb{M}, a \Vdash \phi\wedge \psi$, then $a I x$\\
$\mathbb{M}, x \succ \phi\vee \psi$ & iff &  $\mathbb{M}, x \succ \phi$ and $\mathbb{M}, x \succ  \psi$ &\\
$\mathbb{M}, a \Vdash \phi\vee \psi$ & iff & for all $x\in X$, if $\mathbb{M}, x \succ \phi\vee \psi$, then $a I x$  & \\
\end{tabular}
\end{center}
Hence, in each model, $\top$ is interpreted as the concept generated by the set $A$ of all objects, i.e.~the widest concept and hence the one with the laxest (possibly empty) description;  $\bot$ is interpreted  as the category generated by the set $X$ of all features, i.e.~the smallest  category and hence the one with the most restrictive description and possibly empty extension; $\phi\wedge\psi$ is interpreted  as the semantic category determined by the intersection of the extensions of $\phi$ and $\psi$ (hence, the description of $\phi\wedge\psi$ certainly includes $\descr{\phi}\cup\descr{\psi}$ but is possibly larger). Likewise,  $\phi\vee\psi$ is interpreted  as the semantic category determined by the intersection of the intensions of $\phi$ and $\psi$ (hence,   $\val{\phi}\cup\val{\psi}\subseteq \val{\phi\vee\psi}$ but this inclusion is typically strict).
As to the interpretation of modal formulas:
\begin{center}
\begin{tabular}{llll}
$\mathbb{M}, a \Vdash \Box\phi$ & iff & for all $x\in X$, if $\mathbb{M}, x \succ \phi$, then $a R_\Box x$& \\
$\mathbb{M}, x \succ \Box\phi$ & iff & for all $a\in A$, if $\mathbb{M}, a \Vdash \Box\phi$, then $a I x$.\\
$\mathbb{M}, a \Vdash \Diamond\phi$ & iff & for all $x\in X$, if $\mathbb{M}, x \succ \Diamond\phi$, then $a I x$   &\\
$\mathbb{M}, x \succ \Diamond\phi$ & iff &  for all $a\in A$, if $\mathbb{M}, a \Vdash \phi$, then $x R_\Diamond a$. \\
\end{tabular}
\end{center}
Thus, in each model,  $\Box\phi$ is interpreted as the concept whose members are those objects  which are $R_{\Box}$-related to  every feature in the description of $\phi$, and $\Diamond\phi$ is interpreted as the category described by those features  which are $R_{\Diamond}$-related to  every member of $\phi$. Finally, as to the interpretation of sequents:
\begin{center}
\begin{tabular}{llll}
$\mathbb{M}\models \phi\vdash \psi$ & iff & for all $a \in A$, $\mbox{if } \mathbb{M}, a \Vdash \phi, \mbox{ then } \mathbb{M}, a \Vdash \psi$&\\
                                    & iff & for all $x \in X$, $\mbox{if } \mathbb{M}, x \succ \psi, \mbox{ then } \mathbb{M}, x \succ \phi$.&
\end{tabular}
\end{center}

%For every enriched formal context $\mathbb{F} = (\mathbb{P}, R_{\Box}, R_{\Diamond})$, 

A sequent $\phi\vdash \psi$ is {\em valid} on an enriched formal context $\mathbb{F}$ (in symbols: $\mathbb{F}\models \phi\vdash \psi$) if $\mathbb{M}\models \phi\vdash \psi$  for every model $\mathbb{M}$ based on $\mathbb{F}$. 

\paragraph{Properties. } The basic normal lattice-based logic $\mathbf{L}$ pertains to the class of {\em normal LE-logics} \cite{CoPa:non-dist} (i.e.~logics algebraically captured by varieties of normal lattice expansions), for each of which, relational semantic structures based on formal contexts have been introduced (of which enriched formal contexts are an instance) and several results 
%\marginnote{complete references}
(e.g.~soundness, completeness, %discrete dual adjunction/duality \cite{Gabbay-paper}, 
Sahlqvist-type correspondence and canonicity \cite{CoPa:non-dist}, semantic cut elimination \cite{LE-paper}, a Goldblatt-Thomason theorem \cite{GT-paper}) have been obtained in generality and uniformity. These results immediately apply to $\mathbf{L}$ and to a wide class of axiomatic extensions and modal expansions of $\mathbf{L}$ which includes all those defined by the axioms mentioned in the present paper (see Section \ref{ssec: extensions and expansions}). As we will see (cf.~Proposition \ref{lemma:correspondences} and discussion around it), the instantiation of Sahlqvist correspondence theory for LE-logics to the modal language of $\mathbf{L}$ will be key to the development of the present theory, and will provide the main technical justification of the defining conditions of conceptual approximation spaces and their refinements (cf.~Definition \ref{def:conceptual approx space}).
%As discussed in \cite[Section 2]{CoPa:non-dist}, the basic logic of formal concepts is complete w.r.t.~its associated class of enriched formal contexts defined as in Definition \ref{def:enriched formal context}.
\medskip

%In conclusion, from a purely logical and algebraic perspective, it is clear that, since its propositional base is the logic of general lattices, the basic normal logic of formal concepts is a negation-free and implication-free fragment of  basic classical normal modal logic. Hence, the class of algebras for the latter is a proper subclass of the class of algebras for the former.
%\redfootnote{rewritting.
In conclusion, from a purely logical and algebraic perspective, it is clear that, since its propositional base is the logic of general (i.e.~possibly non-distributive) lattices, the basic normal logic of formal concepts is more general, i.e.~weaker, than the basic classical normal modal logic, and hence the class of algebras for the latter is a proper subclass of the class of algebras for the former. In Section \ref{sec:embedding}, we will show that the class of relational models of the latter can also be embedded in the class of relational models of the former so as to preserve the  natural embedding of the corresponding classes of algebras, and make the diagram discussed in Section \ref{sec:intro} commute.
%\redfootnote{review. p10, 4th paragraph. Since the language of Prop is a restriction of that classical modal logic (it is negation and implication free!), the statement that the logic is weaker than the basic classical normal modal logic (and the one concerning the algebras) seems not fair. Please change accordingly.}

\subsection{Axiomatic extensions and modal expansions}
\label{ssec: extensions and expansions}
\paragraph{Axiomatic extensions. } The basic normal modal logic of formal concepts and its semantics based on enriched formal contexts provide the background environment for the systematic study of several well known modal principles such as $\Box p\vdash p$, $p\vdash \Diamond p$, $\Diamond\Diamond p\vdash p$, $\Box p \vdash \Box\Box p$ $\Diamond\Box p\vdash \Box\Diamond p$. The theory of unified correspondence \cite{CoGhPa14,ConPalSou,CoPa:non-dist,conradie2016constructive,CCPZ}    guarantees that the validity on any given enriched formal context of each of these and other modal principles is equivalent to a first-order condition being true of the given enriched formal context, just in the same way in which the validity of e.g.~$\Box p\vdash p$ on a given Kripke frame corresponds to that Kripke frame being reflexive, and so on. In the present paper, we will restrict our attention to certain  modal principles which are relevant to the development of the theory of rough concepts, and we will address their study from two different perspectives; the first one concerns  {\em lifting} the first order condition which classically corresponds to the given modal principle from Kripke frames to enriched formal context. The second perspective concerns the autonomous interpretation of the first order correspondents of the modal principles on enriched formal contexts. More on this in the next sections.

\paragraph{Expanding with negative modalities. } In some situations (cf.~Sections \ref{ssec:modal expansion kent} and \ref{ssec:examples triangles},  see also discussions in \cite[Section 5]{kwuida2004dicomplemented}), %\marginnote{we need to treat at least one example in which we need negative modalities} 
it can be useful to work with a more expressive language such as the following one:
\[ \varphi := \bot \mid \top \mid p \mid  \varphi \wedge \varphi \mid \varphi \vee \varphi \mid \Box \varphi \mid  \Diamond\phi \mid {\rhd}\phi \mid {\lhd}\phi, \]%
the additional connectives ${\rhd}$ and ${\lhd}$ are characterized by the following axioms and rules:
\begin{align*}
                &\top\vdash {\rhd} \bot &&
                {\rhd}p \land {\rhd}q \vdash {\rhd }(p\vee q )\\
                 & {\lhd}\top\vdash \bot &&
                  {\lhd} (p \land q) \vdash {\lhd } p \vee {\lhd} q \\
			\end{align*}
		\begin{displaymath}
\frac{\phi\vdash\psi}{{\rhd} \psi\vdash {\rhd} \phi}
\quad\quad
\frac{\phi\vdash\psi}{{\lhd} \psi\vdash {\lhd} \phi}
		\end{displaymath}
By the general theory \cite{CoPa:non-dist,LE-paper,GT-paper},  {\em enriched formal contexts} for this expanded language are tuples $\mathbb{F} = (\mathbb{P}, R_\Box, R_\Diamond, R_{\rhd}, R_{\lhd})$
	such that $(\mathbb{P}, R_\Box, R_\Diamond)$ is as in Definition \ref{def:enriched formal context}, and $R_{\rhd}\subseteq A\times A$ and $R_{\lhd} \subseteq X\times X$  are $I$-{\em compatible} relations, that is,   $R_{\rhd}^{(0)}[b]$ (resp.~$R_{\lhd}^{(0)}[y]$)
	and $R_{\rhd}^{(1)}[a]$ (resp.~$R_{\lhd}^{(1)}[x]$)
	are Galois-stable for all $x, y\in X$ and $a, b\in A$.
	The {\em complex algebra} of such an $\mathbb{F}$ is
	\[\mathbb{F}^+ = (\mathbb{P}^+, [R_\Box], \langle R_\Diamond\rangle, [R_{\rhd}\rangle, \langle R_{\lhd}]),\]
	where $(\mathbb{P}^+, [R_\Box], \langle R_\Diamond\rangle)$ is as in Definition \ref{def:enriched formal context}, and $[R_{\rhd}\rangle$ and
$\langle R_{\lhd}]$ are unary operations on $\mathbb{P}^+$ defined as follows: for every $c \in \mathbb{P}^+$,
	\[[R_{\rhd}\rangle c: = (R_{\rhd}^{(0)}[\val{c}], (R_{\rhd}^{(0)}[\val{c}])^{\uparrow}) \quad \mbox{ and }\quad \langle R_{\lhd}] c: = ((R_{\lhd}^{(0)}[\descr{c}])^{\downarrow}, R_{\lhd}^{(0)}[\descr{c}]).\]

Valuations and models for this expanded language are defined analogously as indicated above, and for each such model $\mathbb{M}$,
\begin{center}
\begin{tabular}{llll}
$\mathbb{M}, a \Vdash {\rhd}\phi$ & iff & for all $b\in A$, if $\mathbb{M}, b \Vdash \phi$, then $a R_{\rhd} b$& \\
$\mathbb{M}, x \succ {\rhd}\phi$ & iff & for all $a\in A$, if $\mathbb{M}, a \Vdash {\rhd}\phi$, then $a I x$.\\
$\mathbb{M}, a \Vdash {\lhd}\phi$ & iff & for all $x\in X$, if $\mathbb{M}, x \succ {\lhd}\phi$, then $a I x$   &\\
$\mathbb{M}, x \succ {\lhd}\phi$ & iff &  for all $y\in X$, if $\mathbb{M}, y \succ \phi$, then $x R_{\lhd} y$. \\
\end{tabular}
\end{center}
Thus, in each model,  ${\rhd}\phi$ is interpreted as the concept whose members are those objects  which are $R_{\rhd}$-related to  every member of $\phi$, and ${\lhd}\phi$ is interpreted as the concept described by those features  which are $R_{\lhd}$-related to  every feature describing $\phi$.
For every enriched formal context $\mathbb{F} = (\mathbb{P}, R_{\Box}, R_{\Diamond}, R_{\rhd}, R_{\lhd})$, any valuation $V$ on $\mathbb{F}$ extends to an interpretation map of formulas defined as above for formulas in the $\mathcal{L}$-fragment, and for ${\rhd}$- and ${\lhd}$-formulas is defined as follows:
\begin{center}
\begin{tabular}{r c l}
$V({\rhd}\phi)$ & $ = $ & ($R_{\rhd}^{(0)}[\val{\phi}], (R_{\rhd}^{(0)}[\val{\phi}])^\uparrow)$\\
$V({\lhd}\phi)$ & $ = $ & $((R_{\lhd}^{(0)}[\descr{\phi}])^\downarrow, R_{\lhd}^{(0)}[\descr{\phi}])$\\
\end{tabular}
\end{center}

\section{Approximation spaces as enriched formal contexts}
\label{sec:embedding}

In the present section, we introduce the building blocks of the methodology which will lead to the definition of conceptual approximation spaces in the next section. Specifically, we discuss how to represent any given approximation space $\mathbb{X}$ as an enriched formal context $\mathbb{F}_\mathbb{X}$  so that the complex algebra $\mathbb{F}^+_\mathbb{X}$ coincides with $\mathbb{X}^+$, i.e., so that the diagram of Section \ref{sec:intro} commutes. The requirement that this representation preserves the algebra arising from each approximation space guarantees that the representation preserves the logical properties of $\mathbb{X}$ that can be expressed in the language of Section \ref{sec:logics}. This preservation is key to guaranteeing that the conceptual approximation spaces appropriately {\em restrict} to standard approximation spaces. However, and more interestingly, at the end of the present section (cf.~Proposition \ref{prop:lifting of properties} and ensuing discussion) we will see that the specific way in which  approximation spaces are represented as enriched formal contexts also provides a way to {\em generalize}  key properties from approximation spaces to enriched formal contexts; precisely this generalization guarantees the required preservation of their logically salient content, and will yield the definition of conceptual approximation spaces in Section \ref{ssec:Conceptual approximation spaces}.

We start by introducing {\em typed versions} of sets and relations. Based on this, we will represent sets as formal contexts. We will then move to representing Kripke frames as enriched formal contexts. Finally, we will discuss how logically salient properties of Kripke frames are lifted along this representation.
\subsection{Lifting and typing relations}
\label{ssec:lifting:relation}
Throughout this section, for every set $S$, we let $\Delta_S: = \{(s, s)\mid s\in S\}$, and we typically drop the subscript when it does not cause ambiguities. Hence we write e.g.~$\Delta^c = \{(s, s')\mid s, s'\in S\mbox{ and }s\neq s'\}$. 
We let $S_A$ and $S_X$ be  copies of $S$, representing the two domains of the polarity associated with $S$. For every $P\subseteq S$, we let $P_A\subseteq S_A$ and $P_X\subseteq S_X$ denote the corresponding copies of $P$ in $S_A$ and $S_X$, respectively. Then $P^c_X$ (resp.~$P^c_A$) stands both for $(P^c)_X$ (resp.~$(P^c)_A$) and $(P_X)^c$ (resp.~$(P_A)^c$). 

In what follows, we will rely on the notation introduced above to define {\em typed versions} of  relations, sets and Kripke frames, so as to also obtain typed versions of conditions such as reflexivity and transitivity. Working with such typed versions is not essential for the sake of justifying that the notion of conceptual approximation spaces {\em restricts} appropriately to that of approximation spaces; however, as we will discuss at the end of the present  section and in the next section, the typed versions are key for the more important {\em generalization} purpose of the representation.

For the sake of a more manageable notation, we will use $a$ and $b$ (resp.~$x$ and $y$)  to indicate both elements of $A$ (resp.~$X$) and their corresponding elements in $S_A$  (resp.~$S_X$), relying on the types of the relations for disambiguation.
%\redbf{The reason we introduce the two copies $S_A$ and $S_X$ of $S$, because in the lifting construction they both play a very different role, they represent objects of different types and, in fact, we forget that they are copies from the same set.}
%\redfootnote{review. Subsections 3.1,3.2,3.3 (and partially 3.4). The introduction of two different names for copies of S ($S_A$, $S_X$ ), and then, the same splitting for what concerns relations in S $\times$ S, seems to be unnecessary.}

The next definition introduces notation for the four ways in which a given binary relation on a set can be lifted to its typed counterparts. The four typed versions will give rise to corresponding modal operators which, unlike in the classical setting, are not interdefinable in the setting of enriched formal contexts.

\begin{definition}
\label{def:liftings relations}
 For every $R\subseteq S\times S$, we let
\begin{enumerate}
\item $I_{R}\subseteq S_A\times S_X$ such that $a I_{R} x$ iff $a Rx$;
\item $J_{R}\subseteq S_X\times S_A$ such that $x J_{R} a$ iff $x Ra$;
\item $H_{R}\subseteq S_A\times S_A$ such that $a H_{R} b$ iff $a Rb$;
\item $K_{R}\subseteq S_X\times S_X$ such that $xK_{R} y$ iff $x Ry$.
\end{enumerate}
\end{definition}
%The next lemma immediately follows from the definitions above.
\begin{lemma}
\label{lemma:liftings and converses}
For every $R\subseteq S\times S$,
\begin{equation}
\label{eq: converses and liftings}
(J_R)^{-1} = I_{R^{-1}}\quad (I_R)^{-1} = J_{R^{-1}}\quad (H_R)^{-1} = H_{R^{-1}}\quad (K_R)^{-1} = K_{R^{-1}}.
\end{equation}
\end{lemma}
\begin{proof}
As to the second identity, $x (I_R)^{-1} a$ iff $aI_{R} x$ iff $aRx$ iff $x R^{-1} a$ iff $x J_{R^{-1}} a$. The remaining ones are proved similarly.
\end{proof}
Notice that the notation $(\cdot)^{-1}$ applies less well to the setting of typed relations than to the untyped setting. Indeed, if $R\in \mathcal{P}(S\times S)$, then $R^{-1} = \{(t, s)\mid (s, t)\in R\}\in \mathcal{P}(S\times S)$, and hence $(\cdot)^{-1}$ defines an operation on $\mathcal{P}(S\times S)$. However, as the lemma above shows, this is not always so in the typed setting, where $(\cdot)^{-1}$ may transform a relation into one of another type (contrast this with the typed version of relational composition we discuss in Section \ref{ssec:lifting properties}).

\subsection{Sets as formal contexts}
\begin{definition}
\label{def:lifting-of-a-set}
For any set $S$, we let $\mathbb{P}_S: = (S_A, S_X, I_{\Delta^c})$.
\end{definition}
\begin{proposition}
\label{lemma: from sets to polarities}
If $S$ is a set, then $\mathbb{P}^+_S \cong \mathcal{P}(S)$.
\end{proposition}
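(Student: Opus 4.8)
The plan is to show that the concept lattice $\mathbb{P}^+_S$ of the formal context $\mathbb{P}_S = (S_A, S_X, I_{\Delta^c})$ is isomorphic to the powerset Boolean algebra $\mathcal{P}(S)$, regarded as a complete lattice. The key observation is that the incidence relation here is $I_{\Delta^c}$, which by Definition \ref{def:liftings relations} satisfies $a I_{\Delta^c} x$ iff $a (\Delta^c) x$ iff $a \neq x$ (identifying elements of $S_A$ and $S_X$ with their originals in $S$). So an object $a$ and a feature $x$ are incident exactly when they are \emph{distinct} as elements of $S$. The first thing I would do is compute the polarity maps $(\cdot)^\uparrow$ and $(\cdot)^\downarrow$ explicitly under this incidence. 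For $B\subseteq S_A$, we have $B^\uparrow = I_{\Delta^c}^{(1)}[B] = \{x \mid \forall a\,(a\in B \Rightarrow a\neq x)\} = (B_X)^c$, i.e.\ the complement in $S_X$ of the copy of $B$; symmetrically $Y^\downarrow = (Y_A)^c$ for $Y\subseteq S_X$.

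Next I would exhibit the isomorphism. Using the computation above, for any $B\subseteq S_A$ the closure is $B^{\uparrow\downarrow} = ((B_X)^c)^\downarrow = ((B_X)^c)^c_A = B$, so \emph{every} subset of $S_A$ is Galois-stable; likewise every subset of $S_X$ is stable. Hence the formal concepts of $\mathbb{P}_S$ are exactly the pairs $(B, (B_X)^c)$ as $B$ ranges over $\mathcal{P}(S_A)\cong \mathcal{P}(S)$, and the map sending a concept $c = (\val{c}, \descr{c})$ to its extension $\val{c}$ is a bijection from $L(\mathbb{P}_S)$ onto $\mathcal{P}(S)$. The plan is then to check this bijection is an order-isomorphism: by the definition of the order on $\mathbb{P}^+_S$, we have $c\leq d$ iff $\val{c}\subseteq\val{d}$, which is precisely the inclusion order on $\mathcal{P}(S)$. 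Since both structures are complete lattices and the map respects the order in both directions, it is a complete lattice isomorphism, giving $\mathbb{P}^+_S\cong\mathcal{P}(S)$.

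The crux of the argument — and the only place where the choice of incidence $\Delta^c$ (rather than, say, $\Delta$ or some arbitrary relation) really matters — is the computation that $B^{\uparrow\downarrow}=B$ for all $B$, i.e.\ that the Galois closure is trivial. This is what forces the concept lattice to be the full powerset rather than some proper sub-$\bigcap$-semilattice of it, and it is exactly the fact that distinctness is a \emph{symmetric, irreflexive} relation whose ``non-neighbourhoods'' are singletons that makes every set stable. I would present this as a short direct verification rather than invoking Birkhoff's theorem (Theorem \ref{thm:Birkhoff}), since Birkhoff only guarantees that \emph{some} context represents $\mathcal{P}(S)$, whereas here we must verify that this \emph{specific} context $\mathbb{P}_S$ does so; the explicit formula $B^\uparrow = (B_X)^c$ is the efficient way to see it. I anticipate no serious obstacle beyond keeping the typing bookkeeping (the $(\cdot)_A$, $(\cdot)_X$ decorations and the two readings of complementation noted in the text) straight, since the underlying set-theoretic content is elementary.
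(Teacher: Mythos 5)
Your proof is correct and follows essentially the same route as the paper's: both hinge on the explicit computation that under $I_{\Delta^c}$ the polarity maps are relative complementation (your $B^\uparrow=(B_X)^c$, $Y^\downarrow=(Y_A)^c$ is exactly the paper's verification that $(P^c_X)^\downarrow=P_A$), so that every subset is Galois-stable and the concepts are precisely the pairs $(P_A,P^c_X)$. If anything, your version is slightly more complete, since you check both directions of the Galois connection and spell out the order-isomorphism that the paper leaves as ``straightforward and omitted.''
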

\begin{proof}
Consider the map $h:\mathcal{P}(S)\to \mathbb{P}^+_S$ defined by the assignment  $P\mapsto (P_A, P^c_X)$. To show that this map is well defined it is enough to show that   $P_A = (P^c_X)^{\downarrow}$. Indeed,
 \begin{center}
\begin {tabular}{lll}
$ (P^c_X)^{\downarrow}$& =&$ \{a\in S_A\mid \forall x [x \in P^c_X \Rightarrow a I_{\Delta^c} x]\}$\\
&=&  $ \{a\in S_A\mid \forall x [x \notin P_X \Rightarrow a\neq x]\}$\\
&=&  $ \{a\in S_A\mid \forall x [a= x  \Rightarrow x \in P_X]\}$\\
&=& $P_A$.\\
\end{tabular}
\end{center}
Verifying that $h$ is a Boolean algebra isomorphism is straightforward and omitted.
\end{proof}

%Therefore we have the duality picture as  \marginnote{add picture}
%duality picture
%$$
%\begin{tikzpicture}[node/.style={circle, draw, fill=black}, scale=1]
%\node (X) at (-1.5, 1.5) {$X$};
%\node (FA) at (-1.5, -1.5) {powerset algebra};
%\node (FC) at (1.5, 1.5) {$(X, X, \neq)$};
%\node (CL) at (1.5, -1.5) {concept lattice};
%\draw (->) (X) to node (FA);
%\draw (->) (Fc) to node (CL);
%\draw (->) (X) to node (Fc);
%\draw (->) (CL) to node (FA);
%\draw (->) (FA) to node (CL);
%\end {tikzpicture}
%$$
%\\
The next example shows that the requirement of the preservation of the complex algebra is not met by associating $S$ with the seemingly more obvious formal context $(S_A, S_X, I_{\Delta})$.
\begin{example}
Let $S = \{a, b, c\}$ and $\mathbb{Q}: = (S_A, S_X, I_{\Delta})$. Then $\mathcal{P}(S)$ and $\mathbb{Q}^+$ are represented by the Hasse diagrams below, and are clearly non-isomorphic.
\begin{center}
\begin{tikzpicture}

\draw[very thick] (-1, 3) -- (-1, 4) --
	(0, 3) -- (1, 4) -- (1, 3) -- (0, 4) -- (-1, 3);
 \filldraw[black] (-1, 4) circle (2 pt);
	\filldraw[black] (1, 4) circle (2 pt);
	\filldraw[black] (0, 4) circle (2 pt);
	\filldraw[black] (-1, 3) circle (2 pt);
	\filldraw[black] (1, 3) circle (2 pt);
	\filldraw[black] (0, 3) circle (2 pt);
\draw (0, 2.7) node {$\mathbb{P}_S$};

\draw[very thick] (4, 3) -- (4, 4);
\draw[very thick] (5, 3) -- (5, 4);
\draw[very thick] (6, 3) -- (6, 4);
    \filldraw[black] (4, 4) circle (2 pt);
	\filldraw[black] (6, 4) circle (2 pt);
	\filldraw[black] (5, 4) circle (2 pt);
	\filldraw[black] (4, 3) circle (2 pt);
	\filldraw[black] (6, 3) circle (2 pt);
	\filldraw[black] (5, 3) circle (2 pt);
\draw (5, 2.7) node {$\mathbb{Q}$};

\draw[very thick] (-1, 0) -- (-1, 1) --
	(0, 0) -- (1, 1) -- (1, 0) -- (0, 1) -- (-1, 0);
	\draw[very thick] (0, 2) -- (-1, 1);
\draw[very thick] (0, 2) -- (0, 1);
\draw[very thick] (0, 2) -- (1, 1);
	\draw[very thick] (0, -1) -- (-1, 0);
\draw[very thick] (0, -1) -- (0, 0);
\draw[very thick] (0, -1) -- (1, 0);
	\filldraw[black] (0,-1) circle (2 pt);
	\filldraw[black] (0, 2) circle (2 pt);
    \filldraw[black] (-1, 1) circle (2 pt);
	\filldraw[black] (1, 1) circle (2 pt);
	\filldraw[black] (0, 1) circle (2 pt);
	\filldraw[black] (-1, 0) circle (2 pt);
	\filldraw[black] (1, 0) circle (2 pt);
	\filldraw[black] (0, 0) circle (2 pt);
\draw (0, -1.3) node {$\mathbb{P}_S^+\cong\mathcal{P}(S)$};

\draw[very thick] (5, -1) -- (5, 1) --
	(4, 0) -- (5, -1) -- (6, 0) -- (5, 1);
	
	\filldraw[black] (5,-1) circle (2 pt);
	\filldraw[black] (5, 1) circle (2 pt);
	\filldraw[black] (4, 0) circle (2 pt);
	\filldraw[black] (6, 0) circle (2 pt);
	\filldraw[black] (5, 0) circle (2 pt);
\draw (5, -1.3) node {$\mathbb{Q}^+$};
\end{tikzpicture}
\end{center}
%It is clear that corresponding full complex algebra (power set algebra) contains element $\{a, b\}$. As down-set for singleton will be the singleton itself and down-set for every other subset of $S$ will be the null set.
\end{example}
%\marginnote{find a better place for this paragraph}
%Suppose ${\rhd}$ (resp.~${\lhd}$ ) is personality of Boolean negation which turns joins (resp.~meets) into joins(resp.~meets). No that we can define relation $R_{\rhd}$ on $S_A$ such that $a R_{\rhd} b$ iff $a \leq {\rhd} b$. As in Boolean case ${\rhd} = \neg$.  we have $a R_{\rhd} b$ iff $a \leq \neg b$ iff $a \neq b$. Therefore, in this case we can consider $R_{\rhd} = \neq$. In the complex algebra the operation ${\rhd}$ is given by ${\rhd} P = \{a1 \mid \forall a2 ( a2\in P \rightarrow a1 R_{\rhd} a2\} = R_{\rhd}^{(1)}$. We can similarly show that  we can consider $R_{\lhd} = \neq$ and get similar result.
\begin{remark}
\label{remark:alternative liftings}
The construction of Definition \ref{def:lifting-of-a-set} is perhaps the simplest way of associating a polarity $\mathbb{P}_S$ with a set $S$ so that $\mathbb{P}^+_S \cong \mathcal{P}(S)$. However, there are others, which are equivalent to the one above precisely in the sense that Proposition \ref{lemma: from sets to polarities} holds for each of them. Hence, in the category of formal contexts, all these constructions will give rise to isomorphic formal contexts. We mention two more alternative constructions, since they will become relevant in Section \ref{ssec:manyval}. The first one consists in defining $\mathbb{P}_S$ as \[\mathbb{P}_S: = (S_A, \mathbf{2}\times S_X, I_{\Delta}),\]
where $\mathbf{2}$ is the two-element Boolean algebra, and  $I_{\Delta}$, represented as characteristic function $I_{\Delta}: S_A\times (\mathbf{2}\times S_X)\to \mathbf{2}$, is defined by the assignment $(a, (\alpha, x) )\mapsto \Delta(a, x)\to \alpha$, i.e.~$I_\Delta(a, (\alpha, x) )$ iff $\alpha = 1$, or $\alpha = 0$ and $a\neq x$. Since $\mathbf{2}\times S_X\cong S_{X_0}\uplus S_{X_1}$,
consider the map $h:\mathcal{P}(S)\to \mathbb{P}^+_S$ defined by the assignment  $P\mapsto (P_A, P^c_{X_0}\uplus S_{X_1})$. To show that this map is well defined it is enough to show that   $P_A = (P^c_{X_0}\uplus S_{X_1})^{\downarrow}$. Indeed,
 \begin{center}
\begin {tabular}{lll}
$ (P^c_{X_0}\uplus S_{X_1})^{\downarrow}$& =&$ \{a\in S_A\mid \forall (\alpha, x) [(\alpha, x) \in P^c_{X_0}\uplus S_{X_1} \Rightarrow  I_{\Delta}(a, (\alpha, x))]\}$\\
&=&  $ \{a\in S_A\mid \forall x [x \notin P_X \Rightarrow a\neq x]\}$\\
&=&  $ \{a\in S_A\mid \forall x [a= x  \Rightarrow x \in P_X]\}$\\
&=& $P_A$.\\
\end{tabular}
\end{center}
The second construction consists in defining $\mathbb{P}_S$ as \[\mathbb{P}_S: = (S_A, \mathbf{2}^{S_X}, I_{\Delta}),\]
where   $I_{\Delta}$, represented as characteristic function $I_{\Delta}: S_A\times \mathbf{2}^{S_X}\to \mathbf{2}$, is defined by the assignment $(a, f)\mapsto f(a)\to 0$, i.e.~$I_\Delta(a, f)$ iff $a\notin f$ iff $\forall x(f(x)\Rightarrow a\neq x)$ iff $\forall x(\Delta (a, x)\Rightarrow f(x) = 0)$. Since $\mathbf{2}^{S_X}\cong \mathcal{P}(S_X)$,
consider the map $h:\mathcal{P}(S)\to \mathbb{P}^+_S$ defined by the assignment  $P\mapsto (P_A, \{ Q_X\mid Q\subseteq P^c\})$. To show that this map is well defined it is enough to show that   $P_A = \{ Q_X\mid Q\subseteq P^c\}^{\downarrow}$. Indeed,
 \begin{center}
\begin {tabular}{lll}
$ \{ Q_X\mid Q\subseteq P^c\}^{\downarrow}$& =&$ \{a\in S_A\mid \forall Q [Q \subseteq P^c \Rightarrow  I_{\Delta}(a, Q_X)]\}$\\
&=&  $ \{a\in S_A\mid \forall Q [Q \cap P = \varnothing \Rightarrow a\notin Q]\}$\\
&=&  $ \{a\in S_A\mid \forall Q [a\in Q  \Rightarrow Q \cap P\neq \varnothing]\}$\\
&=& $P_A$.\\
\end{tabular}
\end{center}

\end{remark}

\subsection{Kripke frames as enriched formal contexts}
\label{ssec:Kripke frames}
In the present subsection, we extend the construction of the previous subsection  from sets to Kripke frames.
  For any Kripke frame  $\mathbb{X}= (S,R)$,  we let $\mathbb{F}_{\mathbb{X}}: = (\mathbb{P}_{S}, I_{R^c}, J_{R^c})$ where  $\mathbb{P}_{S} = (S_A, S_X, I_{\Delta^c})$ is defined as in the  previous subsection. Since the concept lattice of $\mathbb{P}_S$ is isomorphic to $\mathcal{P}(S)$, the relations  $I_{R^c}\subseteq S_A\times S_X$ and  $J_{R^c}\subseteq S_X\times S_A$ are trivially $I_{\Delta^c}$-compatible, hence $\mathbb{F}_{\mathbb{X}}$ is an enriched formal context (cf.~Definition \ref{def:enriched formal context}).

Recall that the complex algebra of $\mathbb{X}$ is the Boolean algebra with operator $\mathbb{X}^+ = (\mathcal{P}(S), [R],\langle R\rangle)$ %$[R\rangle$, $\langle R]$) on  $\mathcal{P}(S)$
(cf.~Section \ref{ssec: approximation spaces}). The next proposition verifies that the embedding of Kripke frames into enriched formal contexts defined by the assignment  $\mathbb{X}\mapsto \mathbb{F}_{\mathbb{X}}$ makes the diagram of Section \ref{sec:intro} commute.  %defined by the assignment $P\mapsto R^{-1}[P]$ (resp.~$P\mapsto (R^{-1}[P^c])^c$, $P\mapsto (R^{-1}[P])^c$, $P\mapsto R^{-1}[P^c]$ ).

\begin{proposition}
\label{prop:from Kripke frames to enriched polarities}
 If $\mathbb{X}$ is a Kripke frame, then $\mathbb{F}_{\mathbb{X}}^+ \cong \mathbb{X}^+$.
\end{proposition}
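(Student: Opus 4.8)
The plan is to show that the map $h \colon \mathcal{P}(S) \to \mathbb{P}^+_S$, $P \mapsto (P_A, P^c_X)$, which Proposition \ref{lemma: from sets to polarities} already establishes as a (complete) lattice isomorphism between $\mathcal{P}(S)$ and $\mathbb{P}^+_S$, is in fact an isomorphism of the full complex algebras $\mathbb{X}^+ = (\mathcal{P}(S), [R], \langle R\rangle)$ and $\mathbb{F}^+_\mathbb{X} = (\mathbb{P}^+_S, [R_\Box], \langle R_\Diamond\rangle)$, where $R_\Box = I_{R^c}$ and $R_\Diamond = J_{R^c}$. Since $h$ is known to preserve the lattice structure, it only remains to verify that it intertwines the two pairs of modal operators, i.e.\ that $h([R]P) = [R_\Box]h(P)$ and $h(\langle R\rangle P) = \langle R_\Diamond\rangle h(P)$ for every $P \subseteq S$. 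Well-definedness of $[R_\Box]$ and $\langle R_\Diamond\rangle$ is not at issue, since the $I_{\Delta^c}$-compatibility of $I_{R^c}$ and $J_{R^c}$ was noted just before the statement. Throughout I will use that a formal concept of $\mathbb{P}_S$ is determined by its extension, and equally by its intension.

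For the box, I would compute the extension of $[R_\Box]h(P)$. By Definition \ref{def:enriched formal context} this extension is $R_\Box^{(0)}[\descr{h(P)}] = R_\Box^{(0)}[P^c_X]$, and unfolding $R_\Box = I_{R^c}$ together with the definition of $(\cdot)^{(0)}$ gives
\[
R_\Box^{(0)}[P^c_X] = \{a \in S_A \mid \forall x\, (x \notin P_X \Rightarrow (a, x) \notin R)\} = \{a \in S_A \mid \forall x\, ((a, x) \in R \Rightarrow x \in P_X)\},
\]
the second equality being the contrapositive. By \eqref{eq:semantic box} this set is exactly $([R]P)_A$, which is the extension of $h([R]P)$; hence the two concepts coincide.

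Dually, for the diamond I would compute the intension of $\langle R_\Diamond\rangle h(P)$, which by Definition \ref{def:enriched formal context} equals $R_\Diamond^{(0)}[\val{h(P)}] = R_\Diamond^{(0)}[P_A]$. Unfolding $R_\Diamond = J_{R^c}$ yields
\[
R_\Diamond^{(0)}[P_A] = \{x \in S_X \mid \forall a\, (a \in P_A \Rightarrow (x, a) \notin R)\},
\]
and negating the existential membership condition of \eqref{eq:semantic diamond} identifies this set with $(\langle R\rangle P)^c_X$, which is precisely the intension of $h(\langle R\rangle P)$. As concepts are determined by their intensions, the two concepts agree again, completing the verification.

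The computations are routine; the single point that requires genuine care, and which I expect to be the only real obstacle, is the bookkeeping of the two layers of complementation together with the typing of the relations. The relative complement $R^c$ hidden in $I_{R^c}$ and $J_{R^c}$, the complemented second coordinate $P^c_X$ built into $h$, and the relative complements carried by the classical operators \eqref{eq:semantic box} and \eqref{eq:semantic diamond} must all cancel in exactly the right way; keeping the orientations straight ($I_{R^c}$ on $S_A\times S_X$ versus $J_{R^c}$ on $S_X\times S_A$) and matching the box to the \emph{extension} while the diamond matches the \emph{intension} is what makes the double negations collapse back to the classical modal operators.
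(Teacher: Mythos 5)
Your proposal is correct and follows essentially the same route as the paper: both rely on Proposition \ref{lemma: from sets to polarities} to identify the underlying lattices via $h$, and then verify by direct unfolding that the lifted relations $I_{R^c}$ and $J_{R^c}$ induce the classical operators $[R]$ and $\langle R\rangle$. The only (immaterial) difference is that for the diamond you match intensions directly, whereas the paper computes the extension $(J_{R^c}^{(0)}[P_A])^{\downarrow}$ and applies the Galois map once more; your bookkeeping of the complementations is accurate in both cases.
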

\begin{proof}
By Proposition \ref{lemma: from sets to polarities}, the complete lattice underlying  $\mathbb{F}_{\mathbb{X}}^+$ is $\mathcal{P}(S)$, so it is enough to show that for every $P\subseteq S$,
\[(\langle R\rangle P)_A = (J_{R^c}^{(0)}[P_A])^{\downarrow}. \]
\begin{center}
\begin{tabular}{lll}
$(J_{R^c}^{(0)}[P_A])^{\downarrow}$ & = & $\{x\in S_X \mid \forall a [a \in P_A \Rightarrow x J_{R^c} a]\}^\downarrow$\\
& = & $(\{x\in S \mid \forall a [a \in P \Rightarrow x R^c a]\}^c)_A$\\
&=&$(\{x\in S\mid  \exists a [a \in P \ \&\  x R a]\})_A$\\
&=&$(\langle R\rangle P)_A$.\\
\end{tabular}
\end{center}

%Though we get this set theoretic equality  you may notice as $R^c$ is relation on $S_X \times S_A$ the set corresponding to operation induced by $\Diamondblack$ must appear in $S_A$ that is in usual sense it may appear that $m_R(P)$ should appear in $S_X$ instead of $S_A$. But on careful analysis we can see that the set  $R^{c(1)} (P) = m_R(P)^c$ actually corresponds to set of formulas satisfying $\Diamondblack \alpha$ if P  is set of formula that satisfies  $\alpha$. This is due to the fact that satisfaction and cosatisfaction relations are in fact inverses of each other. Therefore, in upper layer $m_R(P)^c$ corresponds to operation induced by  $\Diamondblack$ on the complex algebra.
Similarly, one can  show that

\[([ R] P)_A = I_{R^c}^{(0)}[P_A^{\uparrow}] \quad([ R\rangle P)_A = H_{R^c}^{(0)}[P_A]
\quad(\langle R] P)_A = (K_{R^c}^{(0)}[P_A^\uparrow])^\downarrow.\]
 \end{proof}
%
%l_R(P) = \val{\Box (P, I^\uparrow P)}$ where $l_R(P) = \{x \in X \mid \forall p (x R p \rightarrow p \in P)\} $ and perform similar analysis.\\
Analogously to what we have observed in the previous subsection,  representing $R$ by means of  e.g.~the relation $I_{R}$ rather than $I_{R^c}$ would fail to preserve the complex algebras, as shown by the following example. % taking the more obvious choice for representation of relation which is taking $R$ instead of $R^c$ is enriched formal context fails to preserve duality.

\begin{example} 
Consider the Kripke frame $\mathbb{X}= (S,\Delta)$ and the enriched formal context $\mathbb{F} = (S_A, S_X, I_{\Delta^c}, H_{\Delta})$.\footnote{Notice that, for any polarity $\mathbb{P} = (A, X, I)$, if $\Delta_{A}^c$ (resp.\ $\Delta_{X}^c$) is $I$-compatible, then every subset of $A$ (resp.\ $X$) is Galois-stable, and hence the concept lattice $\mathbb{P}^+$ is a Boolean algebra.} By Proposition \ref{lemma: from sets to polarities}, $\mathbb{X}^+$ and $\mathbb{F}^+$ are both based on $\mathcal{P}(S)$. Moreover,  the operations $[\Delta\rangle$ and $\langle\Delta]$ on $\mathbb{X}^+$ (i.e.~the impossibility and skepticism operators on $\mathcal{P}(S)$ arising from $\Delta$)  coincide with the Boolean negation. However, none of the operations $[I_{\Delta}]$, $\langle J_{\Delta}\rangle$, $[H_{\Delta}\rangle$, $\langle K_{\Delta}]$ is the Boolean negation on $\mathcal{P}(S)$. %respectively arising from $H_{\Delta}\subseteq S_A\times S_A$ nor $\langle K_{\Delta}]$ arising from $K_{\Delta}\subseteq S_X\times S_X$   is not the Boolean negation.

Indeed,  if $P$ is a subset of $S$ and $h: \mathcal{P}(S)\to \mathbb{P}_S^+$ is defined as in the proof of Proposition \ref{lemma: from sets to polarities}, then the extension of $ [I_{\Delta}] h(P)$ is $S_A$ if $P=S$, is $P_A^c$ if $P$ is a coatom (i.e., the relative complement of a singleton), and is $\varnothing$ in any other case;
the {\em extension} of $ \langle J_{\Delta}\rangle h(P)$ is $\varnothing$ if $P=\varnothing$, is $P_A^c$ if $P$ is a singleton, and is $S_A$ in any other case;
the extension of $ [H_{\Delta}\rangle h(P)$ is $S_A$ if $P=\varnothing$, is $P_A$ if $P$ is a singleton, and is $\varnothing$ in any other case; the extension of $ \langle K_{\Delta}] h(P)$ is $\varnothing$ if $P=S$, is $P_A$ if $P$ is a coatom, and is $S_A$ in any other case.
\end{example}

\begin{remark}
The two alternative lifting constructions discussed in Remark \ref{remark:alternative liftings} can be expanded  as follows, so as to accommodate the lifting of Kripke frames: as to the first one, for any Kripke frame $\mathbb{X}= (S,R)$,  we let $\mathbb{F}_{\mathbb{X}}: = (\mathbb{P}_{S}, I_{R}, J_{R})$ where $\mathbb{P}_S: = (S_A, \mathbf{2}\times S_X, I_{\Delta})$, and $I_R: S_A\times  (\mathbf{2}\times S_X)\to \mathbf{2}$ is defined as $(a, (\alpha, x))\mapsto R(a, x)\to \alpha$, and $J_R:   (\mathbf{2}\times S_X)\times S_A\to \mathbf{2}$ is defined as $((\alpha, x), a)\mapsto R(x, a)\to \alpha$. As to the second one, for any Kripke frame $\mathbb{X}= (S,R)$,  we let $\mathbb{F}_{\mathbb{X}}: = (\mathbb{P}_{S}, I_{R}, J_{R})$ where $\mathbb{P}_S: = (S_A, \mathbf{2}^{S_X}, I_{\Delta})$, and $I_R: S_A\times  \mathbf{2}^{S_X}\to \mathbf{2}$ is defined as $(a, Q)\mapsto \forall x(R(a, x)\Rightarrow x\notin Q)$, and $J_R:   \mathbf{2}^{S_X}\times S_A\to \mathbf{2}$ is defined as $(Q, a)\mapsto \forall x(R(x, a)\Rightarrow x\notin Q)$.

\begin{center}
\begin{tabular}{llc ll}
& $(J_{R}^{(0)}[P_A])^{\downarrow}$ &$\, $& & $(J_{R}^{(0)}[P_A])^{\downarrow}$\\
  = & $(\{(\alpha, x) \mid \forall a [a \in P_A \Rightarrow (\alpha, x) J_{R} a]\})^\downarrow$ &&    = & $(\{Q\in \mathbf{2}^{S_X} \mid \forall a [a \in P_A \Rightarrow Q J_{R} a]\})^\downarrow$\\
 = & $(\{(\alpha, x) \mid \alpha = 1 \text{ or } x\notin R^{-1}[P]\})^\downarrow$ &&  = & $(\{Q \mid \forall a [a \in P \Rightarrow \forall y(R(y, a)\Rightarrow y\notin Q)]\})^\downarrow$\\
=&$\{a\in S_A\mid  \exists a [a \in P \ \&\  x R a]\}$ && =&$(\{Q \mid \forall y [y \in R^{-1}[P] \Rightarrow  y\notin Q]\})^\downarrow$\\
=&$(\langle R\rangle P)_A$.&& =&$(\{Q \mid Q\subseteq (R^{-1}[P])^c ]\})^\downarrow$\\
&&& =&$\{a \mid \forall Q[Q\subseteq (R^{-1}[P])^c \Rightarrow a\notin Q]\}$\\
&&& =&$(R^{-1}[P] )_A$\\
&&& =&$(\langle R\rangle P)_A$.\\
\end{tabular}
\end{center}

\end{remark}

\subsection{Lifting properties of relations}
\label{ssec:lifting properties}
Based on the construction introduced in Section \ref{ssec:Kripke frames}, in the present subsection we discuss how properties of accessibility relations of Kripke frames can be characterized as properties of their  corresponding liftings.
%
%This precisely instantiates to the classical enriched formal contexts the more general first order condition which corresponds to the factivity axiom $\Box\phi\vdash \phi$ on enriched formal contexts (cf.~\cite[???]{conradie2016categories}).
%\marginnote{In case of symmetry you can forget the information and use single relation with adjunction in that case symmetry always holds. You can have condition with two relations for boxes and diamonds without inter definability but in this case symmetry condition leads to interaction condition on previously relations. In order to have symmetry in completely compatible way you need to define similarity relations on the objects and features set and inter define box and diamond via triangles}
Towards this goal, let us recall that the usual {\em composition} of $R, T\subseteq S\times S$ is  $R\circ T: = \{(u, w)\mid  (u, v)\in R \mbox{ and } (v, w)\in T\mbox{ for some } v\in S\}$. The following definition slightly modifies those in \cite{ciucci2014structure, moshier2016relational}. As will be clear from Lemma \ref{lemma:lifting composition}, this definition can be understood as the `typed counterpart' of the usual composition of relations in the setting of formal contexts. In Section \ref{sec:examples}, we will also discuss how this composition plays out in concrete contexts.  %\marginnote{shall we be more specific?}
\begin{definition}
\label{def:relational composition}
%\marginnote{add the simple proof of the equivalence}
	For any formal context $\mathbb{P}= (A,X,I)$,
	\begin{enumerate}
		\item for  all relations $R, T\subseteq X\times A$, the  {\em $I$-composition} $R\, ;_{I}T\subseteq X \times A$  is such that, for any $a \in A$ and $x\in X$,
		\[
		(R\, ;_IT)^{(0)} [a] = R^{(0)}[I^{(0)}[T^{(0)} [a]]], \quad \text{ i.e. } \quad x (R\, ;_IT) a \quad \text{ iff } \quad x \in R^{(0)}[I^{(0)}[T^{(0)} [a]];   
		\]
		\item for  all relations $R, T\subseteq A\times X$, the {\em $I$-composition} $R\, ;_I T\subseteq A \times X$  is such that, for any $a\in A$ and  $x \in X$,
		\[
		(R\, ;_I T)^{(0)} [x] = R^{(0)}[I^{(1)}[T^{(0)} [x]]], \quad \text{ i.e. } \quad a (R\, ;_I T) x \quad \text{ iff } \quad a \in R^{(0)}[I^{(1)}[T^{(0)} [x]]].
		\]
	\end{enumerate}
	When the context is fixed and clear, we will simplify notation and write e.g.~$R\, ; T$ in stead of $R\, ;_I T$. In these cases, we will also refer to $I$-composition as `composition'.
\end{definition}
Notice that $I$-composition preserves the types of the input relations, and hence defines a binary operation on each algebra of  relations of the same type. Under the assumption of $I$-compatibility, equivalent, alternative formulations of the $I$-composition of relations are available, as the following lemma shows. 
%\redfootnote{review. p14, Definition 6. The authors should try to obtain a clearer definition of R;T, since it is a crucial notion for the definition of CASs. At least, they should dedicate some lines to an intuitive explanation of this notion (again, a picture might help), and why is it understood as a generalization of the usual composition (because even if X = A, R;T depends on a third relation I, which does not happen with R◦T). I believe a notation like eg. R;I T might be more appropriated.}

\begin{lemma}\label{lem:equiv:I:coposition}
	If $\mathbb{P}= (A,X,I)$ is a formal context, then   
	\begin{enumerate}
		\item for any $I$-compatible relations $R, T\subseteq X\times A$ and any $a \in A$ and $x\in X$,
		\[x (R\, ; T) a \quad \text{ iff }  \quad a \in T^{(1)}[I^{(1)}[R^{(1)} [x]]];   \]
		\item for  all $I$-compatible relations $R, T\subseteq A\times X$, and any $a\in A$ and  $x \in X$,
		\[ a (R\, ; T) x  \quad \text{ iff }  \quad x \in T^{(1)}[I^{(0)}[R^{(1)} [a]].  \]
	\end{enumerate}
\end{lemma}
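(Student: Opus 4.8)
The plan is to prove each biconditional by a short chain of equivalences: I will rewrite the membership statement on each side as an inclusion between two Galois-stable sets, using the adjunction of Lemma \ref{lemma: basic}.2, and then bridge the two resulting inclusions by a single observation about closed sets of the polarity. I will carry this out for item 1, since item 2 is entirely dual.

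First I would unwind the left-hand side using Definition \ref{def:relational composition}.1 together with $I^{(0)}[\cdot] = (\cdot)^{\downarrow}$, and then apply the adjunction $\{x\} \subseteq R^{(0)}[B] \iff B \subseteq R^{(1)}[x]$ (Lemma \ref{lemma: basic}.2 for $R\subseteq X\times A$):
\[ x(R\, ; T)a \iff x \in R^{(0)}[I^{(0)}[T^{(0)}[a]]] \iff (T^{(0)}[a])^{\downarrow} \subseteq R^{(1)}[x]. \]
Dually, unwinding the right-hand side using $I^{(1)}[\cdot] = (\cdot)^{\uparrow}$ and the adjunction $\{a\}\subseteq T^{(1)}[C]\iff C\subseteq T^{(0)}[a]$ yields
\[ a \in T^{(1)}[I^{(1)}[R^{(1)}[x]]] \iff (R^{(1)}[x])^{\uparrow} \subseteq T^{(0)}[a]. \]

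The key step is then to show that these two inclusions are equivalent, i.e.\ that for $P := T^{(0)}[a] \subseteq X$ and $Q := R^{(1)}[x] \subseteq A$ one has $P^{\downarrow} \subseteq Q \iff Q^{\uparrow} \subseteq P$. This is exactly where $I$-compatibility is used: by $I$-compatibility of $T$ the set $P = T^{(0)}[a]$ is Galois-stable, so $P^{\downarrow\uparrow} = P$, while by $I$-compatibility of $R$ the set $Q = R^{(1)}[x]$ is Galois-stable, so $Q^{\uparrow\downarrow} = Q$ (cf.~Definition \ref{def:enriched formal context} for relations of type $X\times A$). Granting this, the forward direction follows by applying the antitone map $(\cdot)^{\uparrow}$ (Lemma \ref{lemma: basic}.1) to $P^{\downarrow} \subseteq Q$, which gives $Q^{\uparrow} \subseteq P^{\downarrow\uparrow} = P$; the backward direction follows symmetrically by applying $(\cdot)^{\downarrow}$ to $Q^{\uparrow} \subseteq P$, which gives $P^{\downarrow} \subseteq Q^{\uparrow\downarrow} = Q$. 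Concatenating the three displayed equivalences proves item 1.

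For item 2 the argument is the mirror image: one invokes Definition \ref{def:relational composition}.2 (so $I^{(1)}$ now appears where $I^{(0)}$ did, and conversely), the relations have type $A\times X$, and the auxiliary equivalence becomes $P^{\uparrow} \subseteq Q \iff Q^{\downarrow} \subseteq P$ for $P = T^{(0)}[x]$ and $Q = R^{(1)}[a]$, both Galois-stable by $I$-compatibility. The only genuine content—and the step I expect to demand the most care—is keeping the types straight, so that each use of the adjunction and of Galois-stability is applied on the correct side ($\mathcal{P}(A)$ versus $\mathcal{P}(X)$); the underlying algebra is merely antitonicity plus idempotency of the closure operators. It is worth noting that neither direction of the auxiliary equivalence can be obtained without the stability of exactly one of $P, Q$, which is precisely why the hypothesis requires both $R$ and $T$ to be $I$-compatible.
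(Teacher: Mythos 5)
Your proof is correct and takes essentially the same route as the paper's: both reduce the two membership statements to inclusions via the Galois adjunction of Lemma \ref{lemma: basic}.2 and then bridge them using the Galois-stability of $T^{(0)}[a]$ and $R^{(1)}[x]$ supplied by $I$-compatibility. Your packaging of the middle step as the symmetric equivalence $P^{\downarrow}\subseteq Q \iff Q^{\uparrow}\subseteq P$ for stable $P$ and $Q$ is a mild streamlining of the paper's detour through $x^{\downarrow\uparrow}$ and Lemma \ref{equivalents of I-compatible}, but the underlying argument is the same.
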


\begin{proof}
	The proofs of the two items are similar, so we will only prove item 1. By Definition \ref{def:relational composition}, $x (R\, ; T) a$ iff  $x \in R^{(0)}[I^{(0)}[T^{(0)} [a]]$. By Lemma \ref{equivalents of I-compatible}(ii) and the $I$-compatibility of $R$, the set $R^{(0)}[I^{(0)}[T^{(0)} [a]]$ is Galois-stable, so $x \in R^{(0)}[I^{(0)}[T^{(0)} [a]]$ iff $x^{\downarrow \uparrow} \subseteq R^{(0)}[I^{(0)}[T^{(0)} [a]]$  . By Lemma \ref{lemma: basic}(ii) the latter is the case iff $I^{(0)}[T^{(0)} [a]] \subseteq R^{(1)}[x^{\downarrow \uparrow}]$, which, by Lemma \ref{equivalents of I-compatible} (2(iii)) is the case iff $I^{(0)}[T^{(0)} [a]] \subseteq R^{(1)}[x]$. By Lemma \ref{lemma: basic}(i), this implies that $I^{(1)}[R^{(1)}[x]] \subseteq  I^{(1)}[I^{(0)}[T^{(0)} [a]]]$, which is equivalent to  $I^{(1)}[R^{(1)}[x]] \subseteq T^{(0)} [a]$ (since $T^{(0)} [a]$ is Galois stable). Once again applying Lemmas \ref{lemma: basic}(ii) and \ref{equivalents of I-compatible}(ii) we find that this is equivalent to $a \in T^{(1)}[I^{(1)}[R^{(1)}[x]]]$. The argument for the converse is symmetric. 
\end{proof}

\begin{lemma}
\label{lem:composition:1}
	If $R, T\subseteq X\times A$ (resp. $R, T\subseteq A\times X$) and $R$ is  $I$-compatible, then so is $R\, ;T$.
\end{lemma}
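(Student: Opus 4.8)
The plan is to check, directly against Definition \ref{def:enriched formal context} (read off for relations of type $X\times A$, i.e.\ in the role played there by $R_\Diamond$), that $R\,;T$ satisfies both halves of $I$-compatibility: that $(R\,;T)^{(0)}[a]$ is Galois-stable for every $a\in A$, and that $(R\,;T)^{(1)}[x]$ is Galois-stable for every $x\in X$. I would dispatch these two conditions separately, as each is governed by a different one of the equivalences in Lemma \ref{equivalents of I-compatible} (used in the symmetric form obtained by interchanging the roles of $A$ and $X$).

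The stability of $(R\,;T)^{(0)}[a]$ is the easy half. By Definition \ref{def:relational composition}, $(R\,;T)^{(0)}[a]=R^{(0)}[I^{(0)}[T^{(0)}[a]]]$, and the inner expression $I^{(0)}[T^{(0)}[a]]=(T^{(0)}[a])^{\downarrow}$ is simply some subset of $A$. Since $R$ is $I$-compatible, the symmetric form of Lemma \ref{equivalents of I-compatible}(1) guarantees that $R^{(0)}[B]$ is Galois-stable for \emph{every} $B\subseteq A$; instantiating this at $B:=(T^{(0)}[a])^{\downarrow}$ immediately gives that $(R\,;T)^{(0)}[a]$ is Galois-stable.

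The stability of $(R\,;T)^{(1)}[x]$ is the crux, and here I would avoid computing with the defining expression $R^{(0)}[I^{(0)}[T^{(0)}[a]]]$, whose dependence on $a$ does not interact transparently with the closure $(\cdot)^{\uparrow\downarrow}$ on $A$. Instead I would pass to the adjoint-style reformulation provided by Lemma \ref{lem:equiv:I:coposition}(1), which rewrites $x(R\,;T)a$ as $a\in T^{(1)}[I^{(1)}[R^{(1)}[x]]]$, so that $(R\,;T)^{(1)}[x]=T^{(1)}[(R^{(1)}[x])^{\uparrow}]$. The argument $(R^{(1)}[x])^{\uparrow}=I^{(1)}[R^{(1)}[x]]$ is a subset of $X$, so the symmetric form of Lemma \ref{equivalents of I-compatible}(2) yields that $T^{(1)}[Y]$ is Galois-stable for every $Y\subseteq X$, and in particular $(R\,;T)^{(1)}[x]$ is Galois-stable.

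I expect this last step to be the main obstacle: it is exactly the point where the raw, first-order description of $I$-composition has to be traded for its adjoint description, and it is the reformulation of Lemma \ref{lem:equiv:I:coposition}---resting on the Galois-stability of $R^{(1)}[x]$ and of the sets $T^{(1)}[Y]$---that makes the closure behaviour of $(R\,;T)^{(1)}[x]$ legible, whereas a direct closure computation from the defining formula does not go through. The companion case $R,T\subseteq A\times X$ would then follow by the same two-step scheme, using the dual clauses (the $R_\Box$-clauses) of Definition \ref{def:enriched formal context} and the matching halves of Lemmas \ref{equivalents of I-compatible} and \ref{lem:equiv:I:coposition}.
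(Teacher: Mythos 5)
Your first half coincides with the paper's explicit argument: the paper likewise writes $(R\,;T)^{(0)}[a]=R^{(0)}[I^{(0)}[T^{(0)}[a]]]$ and concludes stability from Lemma \ref{equivalents of I-compatible}(1) applied to $R$ alone. The trouble is the second half, and you have correctly identified it as the crux. Lemma \ref{lem:equiv:I:coposition}(1), which you use to rewrite $(R\,;T)^{(1)}[x]$ as $T^{(1)}[I^{(1)}[R^{(1)}[x]]]$, is stated and proved only for the case in which \emph{both} $R$ and $T$ are $I$-compatible, and its proof uses the Galois-stability of $T^{(0)}[a]$ in an essential way. Likewise, your final step---that $T^{(1)}[Y]$ is Galois-stable for every $Y\subseteq X$---is, by Lemma \ref{equivalents of I-compatible}, \emph{equivalent} to one half of the $I$-compatibility of $T$, which is not among the hypotheses of Lemma \ref{lem:composition:1}. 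Both ingredients of your argument for $(R\,;T)^{(1)}[x]$ therefore silently import an assumption on $T$ that the statement does not grant you.

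This is not a presentational defect that a cleverer argument could repair: with only $R$ assumed $I$-compatible, the claim about $(R\,;T)^{(1)}[x]$ is false. Take $A=\{a,b,c\}$, $X=\{x,y\}$ and $I=\{(a,x),(b,x),(b,y),(c,y)\}$, so that the Galois-stable subsets of $A$ are exactly $\{b\}$, $\{a,b\}$, $\{b,c\}$ and $A$, while every subset of $X$ is stable. Let $R:=I^{-1}\subseteq X\times A$, which is easily checked to be $I$-compatible, and let $T:=\{(x,a)\}$. Then $T^{(0)}[a]=\{x\}$ and $T^{(0)}[b]=T^{(0)}[c]=\varnothing$, whence $(R\,;T)^{(0)}[a]=R^{(0)}[\{a,b\}]=\{x\}$ and $(R\,;T)^{(0)}[b]=(R\,;T)^{(0)}[c]=R^{(0)}[A]=\varnothing$; that is, $R\,;T=\{(x,a)\}$ and $(R\,;T)^{(1)}[x]=\{a\}$, which is not Galois-stable. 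To be fair to you, the paper's own proof establishes only the $(\cdot)^{(0)}$ half explicitly and dismisses the $(\cdot)^{(1)}$ half with ``by a similar argument''; by making that half explicit, your attempt exposes that the lemma needs the $I$-compatibility of $T$ as an additional hypothesis (a hypothesis which does hold in every place the paper subsequently applies the lemma).
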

\begin{proof}
	We only prove the statement for  $R, T\subseteq X\times A$. Let $a\in A$. By definition,  $(R\, ;T)^{(0)} [a] = R^{(0)}[I^{(0)}[T^{(0)} [a]]]$; since  $R$ is  $I$-compatible, by Lemma \ref{equivalents of I-compatible}.1, $R^{(0)}[I^{(0)}[T^{(0)} [a]]]$ is Galois-stable. By a similar argument one shows that $(R\, ;T)^{(1)} [x]$ is Galois-stable for any $x\in X$.
\end{proof}
The following lemma is a  variant of \cite[Lemmas 6 and 7]{ciucci2014structure}.

\begin{lemma}\label{lemma:comp4}
\begin{enumerate}
\item	If  $R,T\subseteq A\times X$ are $I$-compatible, then  for any $B\subseteq A$ and $Y\subseteq X$, \[(R;T)^{(0)}[Y]=R^{(0)}[I^{(1)}[T^{(0)}[Y]]] \quad (R;T)^{(1)}[B]=R^{(1)}[I^{(0)}[T^{(1)}[B]]].\]
    \item	If  $R,T\subseteq X\times A$ are $I$-compatible, then  for any $B\subseteq A$ and $Y\subseteq X$, \[(R;T)^{(1)}[Y]=R^{(1)}[I^{(1)}[T^{(1)}[Y]]] \quad (R;T)^{(0)}[B]=R^{(0)}[I^{(0)}[T^{(0)}[B]]].\]
        \end{enumerate}
\end{lemma}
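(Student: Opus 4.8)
The plan is to deduce each of the four set-indexed identities from the corresponding singleton statement — that is, from the defining equations of Definition \ref{def:relational composition} (equivalently, from the pointwise descriptions of Lemma \ref{lem:equiv:I:coposition}) — by a chain of biconditionals driven by the Galois connection of Lemma \ref{lemma: basic}.2 and by the cancellation of double closures $(\cdot)^{\uparrow\downarrow}$ and $(\cdot)^{\downarrow\uparrow}$ that $I$-compatibility makes available through Lemma \ref{equivalents of I-compatible}. The one genuine engine behind every step is a ``flipping'' rule: if $B\subseteq A$ is $\uparrow\downarrow$-stable and $Y\subseteq X$ is $\downarrow\uparrow$-stable, then $Y^{\downarrow}\subseteq B$ iff $B^{\uparrow}\subseteq Y$, which is immediate from antitonicity of $(\cdot)^{\uparrow},(\cdot)^{\downarrow}$ together with stability. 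Throughout, $I$-compatibility of the composite relation is guaranteed by Lemma \ref{lem:composition:1}, so all the closures involved may indeed be cancelled.

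For the first identity of item 1 (with $R,T\subseteq A\times X$) I would fix $a\in A$ and run
\[
a\in(R;T)^{(0)}[Y] \iff Y\subseteq T^{(1)}[I^{(0)}[R^{(1)}[a]]] \iff I^{(0)}[R^{(1)}[a]]\subseteq T^{(0)}[Y] \iff I^{(1)}[T^{(0)}[Y]]\subseteq R^{(1)}[a],
\]
where the first step is Lemma \ref{lem:equiv:I:coposition}.2 applied to every $x\in Y$, the second is the Galois connection, and the third is the flipping rule using that $R^{(1)}[a]$ is $\downarrow\uparrow$-stable and $T^{(0)}[Y]$ is $\uparrow\downarrow$-stable (both by $I$-compatibility). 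The last condition says precisely $a\in R^{(0)}[I^{(1)}[T^{(0)}[Y]]]$. The $(0)$-identity of item 2 (with $R,T\subseteq X\times A$) is obtained by the mirror-image argument, replacing the cited items by their counterparts for relations of type $X\times A$ (Definition \ref{def:relational composition}.1 and Lemma \ref{lem:equiv:I:coposition}.1) and interchanging the roles of $(\cdot)^{\uparrow}$ and $(\cdot)^{\downarrow}$.

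The two remaining ($(1)$-)identities I would attack by the same template, and this is where I expect the only real subtlety. Fixing $x\in X$ for item 1 and unfolding the definition,
\[
x\in(R;T)^{(1)}[B] \iff B\subseteq R^{(0)}[I^{(1)}[T^{(0)}[x]]] \iff I^{(1)}[T^{(0)}[x]]\subseteq R^{(1)}[B] \iff I^{(0)}[R^{(1)}[B]]\subseteq T^{(0)}[x],
\]
and a last use of the Galois connection turns the right-hand inclusion into $x\in T^{(1)}[I^{(0)}[R^{(1)}[B]]]$. Thus the computation naturally yields $T^{(1)}[I^{(0)}[R^{(1)}[B]]]$, with $R$ and $T$ appearing in the \emph{opposite} inner/outer order to the one in the $(0)$-identity; the analogous computation for item 2 produces $T^{(1)}[I^{(1)}[R^{(1)}[Y]]]$. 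This order reversal is exactly what one expects, since passing from $(\cdot)^{(0)}$ to $(\cdot)^{(1)}$ transposes the composite relation and a transpose reverses the order of the factors (compare $(U\circ V)^{-1}=V^{-1}\circ U^{-1}$). I would therefore scrutinise the order of $R$ and $T$ in the two $(1)$-identities with particular care: the derivation from the pointwise definition and from Lemma \ref{lem:equiv:I:coposition} appears to interchange them relative to the statement, so that the identities one actually proves read $(R;T)^{(1)}[B]=T^{(1)}[I^{(0)}[R^{(1)}[B]]]$ and $(R;T)^{(1)}[Y]=T^{(1)}[I^{(1)}[R^{(1)}[Y]]]$.
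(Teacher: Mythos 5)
Your argument for the two $(0)$-identities is correct, but it travels a genuinely different road from the paper's. The paper proves only the first identity of item 1, by writing $Y=\bigcup_{x\in Y}\{x\}$, pushing $(\cdot)^{(0)}$ through the union via Lemma \ref{lemma: basic}.5, and then using the Galois-stability of each $T^{(0)}[x]$ together with Lemma \ref{equivalents of I-compatible} to collapse the resulting $I^{(0)}[I^{(1)}[\cdot]]$ closure around a union; it is a chain of set equalities that reduces the general case to the singleton case of Definition \ref{def:relational composition}. You instead fix a point and run an adjunction chain (Lemma \ref{lem:equiv:I:coposition}, Lemma \ref{lemma: basic}.2, and your flipping rule, which is sound given the stability of $R^{(1)}[a]$ and $T^{(0)}[Y]$ supplied by $I$-compatibility). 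Both work; the paper's version makes the reduction to singletons explicit, while yours makes visible exactly where each of the two $I$-compatibility hypotheses is consumed.

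Your suspicion about the $(1)$-identities is not a quirk of your derivation: the statement as printed is wrong, and your corrected forms $(R;T)^{(1)}[B]=T^{(1)}[I^{(0)}[R^{(1)}[B]]]$ and $(R;T)^{(1)}[Y]=T^{(1)}[I^{(1)}[R^{(1)}[Y]]]$ are the true ones. Already at singletons, Lemma \ref{lem:equiv:I:coposition}.2 gives $(R;T)^{(1)}[a]=T^{(1)}[I^{(0)}[R^{(1)}[a]]]$ with $T$ outermost, and this order must persist for arbitrary $B$. One can see the printed version cannot hold in general by specializing to liftings: by Lemma \ref{lemma:lifting composition}, $I_{R_0^c}\,;I_{T_0^c}=I_{(R_0\circ T_0)^c}$, and a direct computation on $\mathbb{P}_S$ shows $(I_{(R_0\circ T_0)^c})^{(1)}[P_A]=(((R_0\circ T_0)[P])^c)_X$, whereas the right-hand side of the printed identity computes $(((T_0\circ R_0)[P])^c)_X$; taking $R_0=\{(1,2)\}$, $T_0=\{(2,3)\}$, $P=\{1\}$ separates the two. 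Algebraically this is exactly your transpose remark: the first identity says $[R;T]=[R]\circ[T]$, so its left adjoint satisfies $\langle (R;T)_{\Diamondblack}\rangle=\langle T_{\Diamondblack}\rangle\circ\langle R_{\Diamondblack}\rangle$, which forces the reversal. The paper's proof, by establishing only the first identity and asserting the rest are ``proved similarly,'' does not detect this; your version should be adopted for the $(1)$-cases.
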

\begin{proof} We only prove the first identity of item 1, the remaining identities  being proved similarly.
	\begin{center}
		\begin{tabular}{r c l l}
			$R^{(0)}[I^{(1)}[T^{(0)}[Y]]]$ &
			= & $R^{(0)}[I^{(1)}[T^{(0)}[\bigcup_{x\in Y}\{x\}]]]$ \\
			& = &  $R^{(0)}[I^{(1)}[\bigcap_{x\in Y}T^{(0)}[x]]]$ & Lemma \ref{lemma: basic}.5  \\
			& = &  $R^{(0)}[I^{(1)}[\bigcap_{x\in Y}I^{(0)}[I^{(1)}[T^{(0)}[x]]]]]$ & $T^{(0)}[x]$  Galois-stable  \\
			
			& = &  $R^{(0)}[I^{(1)}[I^{(0)}[\bigcup_{x\in Y}I^{(1)}[T^{(0)}[x]]]]]$ & Lemma \ref{lemma: basic}.5\\
			& = &  $R^{(0)}[\bigcup_{x\in Y}I^{(1)}[T^{(0)}[x]]] $ & Lemma \ref{equivalents of I-compatible} \\
			& = & $\bigcap_{x\in Y} R^{(0)}[I^{(1)}[T^{(0)}[x]]]$ & Lemma \ref{lemma: basic}.5\\
			& = & $\bigcap_{x\in Y} (R;T)^{(0)}[x]$ & Definition of $R;T$ \\
			& = & $ (R;T)^{(0)}[\bigcup_{x\in Y}\{x\}]$ & Lemma \ref{lemma: basic}.5 \\
			& = & $ (R;T)^{(0)}[Y].$
		\end{tabular}
	\end{center}
\end{proof}

\begin{lemma}
\label{lem:composition:3}
	\begin{enumerate}
		\item If $R, T, U\subseteq X\times A$ are  $I$-compatible, then   $(R\, ;T)\, ;U = R\, ;(T\, ; U)$.
		\item If $R, T, U\subseteq A\times X$ are  $I$-compatible, then   $(R\, ;T)\, ;U = R\, ;(T\, ; U)$.
	\end{enumerate}
\end{lemma}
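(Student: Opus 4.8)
The plan is to prove item 2 in detail; the argument for item 1 is entirely symmetric, using the second identity of Lemma \ref{lemma:comp4} (for relations of type $X\times A$) in place of the first, and evaluating on elements $a\in A$ via $(\cdot)^{(0)}[a]$ rather than on elements $x\in X$. Since $I$-composition preserves types, all of $R;T$, $T;U$, $(R;T);U$ and $R;(T;U)$ are again of type $A\times X$, and two relations $P,Q\subseteq A\times X$ coincide iff $P^{(0)}[x]=Q^{(0)}[x]$ for every $x\in X$. Hence it suffices to show $((R;T);U)^{(0)}[x]=(R;(T;U))^{(0)}[x]$ for all $x\in X$.

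First I would record the bookkeeping fact that, by Lemma \ref{lem:composition:1}, both $R;T$ and $T;U$ are $I$-compatible (since $R$ and $T$ respectively are). This is what licenses the subsequent applications of Lemma \ref{lemma:comp4} to these \emph{composite} relations, and it is the only place where the $I$-compatibility hypotheses of the statement are genuinely used.

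Next I would unfold both sides. For the left-hand side, Definition \ref{def:relational composition}(2) gives $((R;T);U)^{(0)}[x]=(R;T)^{(0)}[I^{(1)}[U^{(0)}[x]]]$, where $(R;T)^{(0)}$ is now applied to the \emph{subset} $Y:=I^{(1)}[U^{(0)}[x]]$ of $X$ rather than to a singleton. Applying the first identity of Lemma \ref{lemma:comp4}(1) to this $Y$ yields $(R;T)^{(0)}[Y]=R^{(0)}[I^{(1)}[T^{(0)}[Y]]]$, so that $((R;T);U)^{(0)}[x]=R^{(0)}[I^{(1)}[T^{(0)}[I^{(1)}[U^{(0)}[x]]]]]$. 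For the right-hand side, Definition \ref{def:relational composition}(2) gives both $(T;U)^{(0)}[x]=T^{(0)}[I^{(1)}[U^{(0)}[x]]]$ and $(R;(T;U))^{(0)}[x]=R^{(0)}[I^{(1)}[(T;U)^{(0)}[x]]]$; substituting the former into the latter produces the identical iterated expression $R^{(0)}[I^{(1)}[T^{(0)}[I^{(1)}[U^{(0)}[x]]]]]$. The two sides therefore agree for every $x$, which proves item 2.

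I do not expect a substantive obstacle: once Lemma \ref{lemma:comp4} is available the argument is a bracket-matching computation. The one point deserving care is that forming the outer composition forces us to evaluate $(R;T)^{(0)}$ on a \emph{set}, namely $I^{(1)}[U^{(0)}[x]]$, rather than on a single feature, so the singleton-indexed formula of Definition \ref{def:relational composition} is not directly sufficient. This is exactly why the set-indexed version in Lemma \ref{lemma:comp4} is invoked, and why its hypotheses necessitate the preliminary appeal to Lemma \ref{lem:composition:1}.
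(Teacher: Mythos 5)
Your proof is correct and follows essentially the same route as the paper's: both sides are reduced, via Definition \ref{def:relational composition} and the set-indexed identity $(R;T)^{(0)}[Y]=R^{(0)}[I^{(1)}[T^{(0)}[Y]]]$ of Lemma \ref{lemma:comp4}, to the common expression $R^{(0)}[I^{(1)}[T^{(0)}[I^{(1)}[U^{(0)}[x]]]]]$. One small inaccuracy: the appeal to Lemma \ref{lem:composition:1} is not actually needed, since Lemma \ref{lemma:comp4} is applied to the pair $(R,T)$ (whose $I$-compatibility is already a hypothesis), not to the composite relation $R;T$ itself.
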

\begin{proof}
	We only show item 2. For every $x\in X$, %\marginnote{check this proof. In Lemma 7 we were using Lemma 6, but here we seem not to need it (???)}
	\begin{center}
		\begin{tabular}{r c l}
			$(R\, ;(T\, ; U))^{(0)}[x]$ &$=$& $R^{(0)}[I^{(1)}[(T\, ; U)^{(0)} [x]]]$\\
			&$=$& $R^{(0)}[I^{(1)}[T^{(0)}[I^{(1)}[ U^{(0)} [x]]]]]$\\
			&$=$& $(R\, ;T)^{(0)}[I^{(1)}[ U^{(0)} [x]]]$\\
			&$=$& $((R\, ;T)\, ; U)^{(0)} [x].$\\
		\end{tabular}
	\end{center}
\end{proof}
\begin{lemma}
\label{lemma:lifting composition}
	For all $R, T\subseteq S\times S$,
	\begin{enumerate}
		\item $I_{(R\circ T)^c} = I_{R^c}\, ; I_{T^c}.$
		\item $J_{(R\circ T)^c} = J_{R^c}\, ; J_{T^c}.$
	\end{enumerate}
\end{lemma}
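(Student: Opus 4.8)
The plan is to prove each identity by showing that the two typed relations on each side relate exactly the same pairs, unwinding both sides to a single first-order condition on $R$ and $T$ over $S$. The two items are mirror images of each other under swapping the roles of $S_A$ and $S_X$ (and using the matching clause of Definition \ref{def:relational composition}), so I would carry out item 1 in full and then indicate that item 2 follows by the dual computation. Throughout I would use the fact, recorded in the proof of Proposition \ref{lemma: from sets to polarities} and the footnote on Boolean concept lattices, that in $\mathbb{P}_S = (S_A, S_X, I_{\Delta^c})$ every subset of $S_A$ and of $S_X$ is Galois-stable; hence $I$-compatibility holds automatically for all the lifted relations and I need not invoke the $I$-compatible reformulations of Lemma \ref{lem:equiv:I:coposition} (the $(\cdot)^{(0)}$-form of the composition is well defined for all relations anyway).

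For item 1, both $I_{(R\circ T)^c}$ and $I_{R^c}\, ; I_{T^c}$ are relations in $S_A\times S_X$, so it suffices to fix $a\in S_A$ and $x\in S_X$ and compare the membership conditions $a\in I_{(R\circ T)^c}^{(0)}[x]$ and $a\in(I_{R^c}\, ; I_{T^c})^{(0)}[x]$. The left-hand side unwinds immediately, via Definition \ref{def:liftings relations}, to $\neg\, a(R\circ T)x$, i.e. to $\forall v\,(aRv\Rightarrow \neg\, vTx)$. For the right-hand side I would expand the $I$-composition by Definition \ref{def:relational composition}.2 as $I_{R^c}^{(0)}[I_{\Delta^c}^{(1)}[I_{T^c}^{(0)}[x]]]$ and evaluate it from the inside out: $I_{T^c}^{(0)}[x]=\{a'\mid \neg\, a'Tx\}$, which is $(M_x)_A^c$ for $M_x:=\{s\in S\mid sTx\}$; applying $I_{\Delta^c}^{(1)}[\cdot]$ turns this back into $(M_x)_X$, using the basic identity $I_{\Delta^c}^{(1)}[P_A]=P^c_X$ implicit in the proof of Proposition \ref{lemma: from sets to polarities}; finally $I_{R^c}^{(0)}[(M_x)_X]=\{a'\mid \forall v\,(vTx\Rightarrow \neg\, a'Rv)\}$. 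A single contraposition identifies this with the condition obtained for the left-hand side, completing item 1.

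The crucial bookkeeping point, and the step I expect to be the main (if modest) obstacle, is to track carefully which copy ($S_A$ or $S_X$) each intermediate set inhabits and to see that the complements cancel correctly: the complement coming from $R^c$ and $T^c$ must be kept separate from the two applications of $\Delta^c$, whose net effect under $I_{\Delta^c}^{(1)}[\cdot]$ (resp.\ $I_{\Delta^c}^{(0)}[\cdot]$) is precisely to send $P_A\mapsto P^c_X$ (resp.\ $P_X\mapsto P^c_A$). Matching the negated existential in $(R\circ T)^c$ against the nested universal quantifiers produced by the $(\cdot)^{(0)}$ and $(\cdot)^{(1)}$ operators requires exactly one contraposition; getting these two features right is essentially the whole content of the argument.

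Item 2 is proved by the mirror-image computation: $J_{(R\circ T)^c}$ and $J_{R^c}\, ; J_{T^c}$ are relations in $S_X\times S_A$, so I would fix $x\in S_X$ and $a\in S_A$ and compare $x\in J_{(R\circ T)^c}^{(0)}[a]$ with $x\in(J_{R^c}\, ; J_{T^c})^{(0)}[a]$, now expanding the composition by Definition \ref{def:relational composition}.1 as $J_{R^c}^{(0)}[I_{\Delta^c}^{(0)}[J_{T^c}^{(0)}[a]]]$ and using the companion identity $I_{\Delta^c}^{(0)}[P_X]=P^c_A$. The same inside-out evaluation, with $N_a:=\{s\in S\mid sTa\}$ in place of $M_x$, and a single contraposition yield $\neg\, x(R\circ T)a$, which is exactly $x J_{(R\circ T)^c} a$, as required.
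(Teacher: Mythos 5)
Your proposal is correct and follows essentially the same route as the paper: both unwind the $I$-composition via Definition \ref{def:relational composition} and the identity $I_{\Delta^c}^{(1)}[P_A]=P^c_X$ (resp.\ $I_{\Delta^c}^{(0)}[P_X]=P^c_A$), reduce everything to a first-order condition over $S$, and match the two sides with a single contraposition. The only differences are cosmetic: the paper works out item 2 and declares item 1 similar (you do the reverse), and it transforms one side into the other in a single chain rather than evaluating both sides separately and comparing.
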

\begin{proof}
	We only prove item 2, the proof of item 1 being similar.  For  any $a \in S_A$,
	\begin{center}
		\begin{tabular}{lll}
			$J^{(0)}_{(R \circ T)^c}[a]$ & = & $( \{x \in S \mid \forall b [x R b \Rightarrow b T^c a ]\})_X$\\
			&=&$ (\{x \in S \mid \forall b[ b T a \Rightarrow x R^c b ]\})_X$\\
			&=&$ (\{x \in S \mid \forall b[  b\in  T^{(0)}[a] \Rightarrow x R^c b ]\})_X$\\
			&=&$ \{x \in S_X \mid \forall b[  b\in  (T^{(0)}[a])_A \Rightarrow x J_{R^c} b ]\}$\\
			%&=&$ (\{x \in S \mid \forall b[  b\in  T^{(0)}[a] \Rightarrow x R^c b ]\})_X$\\
			%&=&$ (\{x \in S \mid \forall a,b[  b T a \Rightarrow x R^c b ]\})_X$\\
			%&=& $({R^c}^{(0)}[(T^{(0)}[a])_A])_X$\\
			&=& $J_{R^c}^{(0)}[(T^{(0)}[a])_A]$\\
			&=& $J_{R^c}^{(0)}[(\{x\in S\mid x Ta\})_A]$\\
			&=& $J_{R^c}^{(0)}[\{x\in S_X\mid x T^ca\}^{\downarrow}]$\\
			&=& $J_{R^c}^{(0)}[ I^{(0)}[J_{T^c}^{(0)}[a]]]$.
		\end{tabular}
	\end{center}
\end{proof}

A relation $R\subseteq S\times S$ is {\em sub-delta} if $R = \{(z, z)\mid z\in Z\}$ for some $Z\subseteq S$, and is {\em dense} if $\forall s\forall t [sRt \Rightarrow \exists u(sR u\ \&\ uRt)]$.
% We can dually give proof in case we lift relations as subsets of $S_A \times S_X$.
\begin{proposition} %\marginnote{change the notation $J_{R^c}^{-1}$ to J_{{R^c}^{-1}}}
	\label{prop:lifting of properties}
	For any Kripke frame $\mathbb{X} = (S, R)$,
	\begin{enumerate}
		\item $R$ is reflexive iff $I_{R^c}\subseteq I_{\Delta^c}$ iff $J_{R^c}^{-1}\subseteq I_{\Delta^c}$.
		\item $R$ is transitive iff $I_{R^c} \subseteq I_{R^c}\, ; I_{R^c}$ iff $J_{R^c} \subseteq J_{R^c}\, ; J_{R^c}$.
		\item $R$ is symmetric iff $I_{R^c} =  J_{R^c}^{-1}$ iff $J_{R^c} = I_{R^c}^{-1}$.
\item $R$ is sub-delta iff $I_{\Delta^c}\subseteq I_{R^c}$ iff $ I_{\Delta^c}\subseteq J_{R^c}^{-1}$.
\item $R$ is dense iff $I_{R^c}\, ; I_{R^c}\subseteq I_{R^c} $ iff $J_{R^c}\, ; J_{R^c}\subseteq J_{R^c}$.
	\end{enumerate}
\end{proposition}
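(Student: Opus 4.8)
The plan is to reduce all five biconditionals to elementary inclusions between relations on $S$ by isolating a single order-reversing correspondence and then invoking the composition and converse lemmas already established. The key preliminary step I would record is this: for any $R, T\subseteq S\times S$ we have $I_{R^c}\subseteq I_{T^c}$ iff $T\subseteq R$, and likewise $J_{R^c}\subseteq J_{T^c}$ iff $T\subseteq R$. This is immediate by unfolding the definitions, since $a\,I_{R^c}\,x$ holds iff $(a,x)\notin R$; thus $I_{R^c}\subseteq I_{T^c}$ asserts exactly that $(a,x)\notin R$ implies $(a,x)\notin T$, which contraposes to $T\subseteq R$ (the computation for $J$ is identical). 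In particular $R\mapsto I_{R^c}$ and $R\mapsto J_{R^c}$ are order-reversing and hence injective, so $I_{R^c}=I_{T^c}$ iff $R=T$, and similarly for $J$. Every subsequent item is then a one-line specialization of this remark.

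For items 1 and 4 I would specialize to the diagonal. Reflexivity is the condition $\Delta\subseteq R$, which by the order-reversing remark is equivalent to $I_{R^c}\subseteq I_{\Delta^c}$; and since $(J_{R^c})^{-1}=I_{(R^{-1})^c}$ by Lemma \ref{lemma:liftings and converses} (using $(R^c)^{-1}=(R^{-1})^c$), the clause $J_{R^c}^{-1}\subseteq I_{\Delta^c}$ becomes $\Delta\subseteq R^{-1}$, i.e.\ (as $\Delta=\Delta^{-1}$) reflexivity again. Item 4 is the dual statement: $R$ being sub-delta means precisely $R\subseteq\Delta$, which order-reverses to $I_{\Delta^c}\subseteq I_{R^c}$, and the $J$-clause $I_{\Delta^c}\subseteq J_{R^c}^{-1}=I_{(R^{-1})^c}$ becomes $R^{-1}\subseteq\Delta$, equivalently $R\subseteq\Delta$.

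For item 3 I would again route through Lemma \ref{lemma:liftings and converses}: from $J_{R^c}^{-1}=I_{(R^{-1})^c}$ the identity $I_{R^c}=J_{R^c}^{-1}$ reads $I_{R^c}=I_{(R^{-1})^c}$, which by injectivity is $R=R^{-1}$, i.e.\ symmetry; the clause $J_{R^c}=I_{R^c}^{-1}$ is handled symmetrically via $(I_{R^c})^{-1}=J_{(R^{-1})^c}$. For items 2 and 5 the extra ingredient is Lemma \ref{lemma:lifting composition}, which yields $I_{R^c}\,;I_{R^c}=I_{(R\circ R)^c}$ and $J_{R^c}\,;J_{R^c}=J_{(R\circ R)^c}$. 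Then $I_{R^c}\subseteq I_{R^c}\,;I_{R^c}=I_{(R\circ R)^c}$ order-reverses to $R\circ R\subseteq R$, which is transitivity (item 2), while $I_{R^c}\,;I_{R^c}=I_{(R\circ R)^c}\subseteq I_{R^c}$ order-reverses to $R\subseteq R\circ R$, which is exactly density (item 5); the two $J$-clauses follow verbatim, since both the order-reversing remark and the composition lemma hold for $J$.

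I do not expect a substantive obstacle here: once the order-reversing bijection is isolated, each item is a routine specialization. The only points demanding care are bookkeeping ones, namely translating the $J_{R^c}^{-1}$ clauses correctly through $(R^c)^{-1}=(R^{-1})^c$ and the converse identities of Lemma \ref{lemma:liftings and converses}, and matching the direction of inclusion to transitivity ($R\circ R\subseteq R$) versus density ($R\subseteq R\circ R$) so that the order-reversal lands on the intended side.
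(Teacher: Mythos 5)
Your proposal is correct and follows essentially the same route as the paper's proof: contrapose each condition to an inclusion between complements, use that the typing maps $R\mapsto I_{R^c}$, $R\mapsto J_{R^c}$ preserve and reflect inclusions, and invoke Lemma \ref{lemma:liftings and converses} for the converse clauses and Lemma \ref{lemma:lifting composition} for items 2 and 5. The only (cosmetic) difference is that you isolate the order-reversing biconditional $I_{R^c}\subseteq I_{T^c}$ iff $T\subseteq R$ as an explicit preliminary and route the $J_{R^c}^{-1}$ clauses through $I_{(R^{-1})^c}$, where the paper instead passes through $J_{R^c}\subseteq J_{\Delta^c}$ and the symmetry of $\Delta^c$.
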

\begin{proof}
	1. The reflexivity of $R$ is encoded in the inclusion $\Delta\subseteq R$, which can be equivalently rewritten as $R^c\subseteq \Delta^c$, which hence lifts as $I_{R^c}\subseteq I_{\Delta^c}$ and $J_{R^c}\subseteq J_{\Delta^c}$, with the latter inclusion being equivalent to $J_{R^c}^{-1}\subseteq J_{\Delta^c}^{-1} = I_{(\Delta^c)^{-1}} = I_{\Delta^c}$ (cf.~Lemma \ref{lemma:liftings and converses}), given  that $\Delta^c$ is symmetric.
	
	2. The transitivity of $R$ is encoded in the inclusion $R \circ R \subseteq R$, which can be equivalently rewritten as $R^c\subseteq (R \circ R)^c$, which hence lifts as $I_{R^c}\subseteq I_{(R \circ R)^c}$, and by Lemma \ref{lemma:lifting composition} can be equivalently rewritten as $I_{R^c} \subseteq I_{R^c}\, ; I_{R^c}$. The second inclusion is proved similarly.
	
	3. The symmetry of $R$ is encoded in the identity $R  = R^{-1}$, which can be equivalently rewritten as ${R^c}^{-1} = R^c$, which hence lifts as
	$I_{R^c} = I_{{R^c}^{-1}} = J_{R^c}^{-1}$ or equivalently as $J_{R^c} = I_{R^c}^{-1}$.

4. $R$ is sub-delta iff $R\subseteq \Delta$, which can be equivalently rewritten as $\Delta^c \subseteq R^c $, which hence lifts as $ I_{\Delta^c}\subseteq I_{R^c}$ and $J_{\Delta^c}\subseteq J_{R^c}$, with the latter inclusion being equivalent to $I_{\Delta^c} = I_{(\Delta^c)^{-1}} =     J_{\Delta^c}^{-1}\subseteq J_{R^c}^{-1}$, given  that $\Delta^c$ is symmetric.

5. The denseness of $R$ is encoded in the inclusion $R\subseteq R \circ R$, which can be equivalently rewritten as $(R \circ R)^c\subseteq R^c$, which hence lifts as $I_{(R \circ R)^c}\subseteq I_{R^c}$, and by Lemma \ref{lemma:lifting composition} can be equivalently rewritten as $I_{R^c}\, ; I_{R^c}\subseteq I_{R^c}$. The second inclusion is proved similarly.
\end{proof}
The proposition above characterizes well known {\em untyped} properties of binary relations on a given set in terms of {\em typed} properties of their liftings. As discussed at the beginning of the present section, this characterization provides the basis for {\em generalizing}  the typed versions of these lifted properties to {\em arbitrary} formal contexts, e.g.~by adopting the following terminology:
\begin{definition}
\label{def:terminology}
For any polarity $\mathbb{P} = (A, X, I)$,  a relation $R\subseteq A\times X$ (resp.~$T\subseteq X\times A$) is
\begin{center}
\begin{tabular}{r c lll}
reflexive & iff & $R\subseteq I$& (resp.~ iff & $T^{-1}\subseteq I$)\\
transitive & iff & $R\subseteq R\, ;\, R$& (resp.~ iff &  $T\subseteq T\, ;\, T$)\\
subdelta &iff & $I\subseteq R$& (resp.~ iff &  $I\subseteq T^{-1}$)\\
dense &iff & $ R\, ;\, R \subseteq R$& (resp.~ iff &  $T\, ;\, T\subseteq T$).\\
\end{tabular}
\end{center}
\end{definition}

\section{Conceptual (co-)approximation spaces}
\label{ssec:Conceptual approximation spaces}
In the present section we discuss the main contribution of this paper, namely the definition of conceptual (co-)approximation spaces. This definition is both a {\em generalization} and a {\em modularization} of Pawlak's approximation spaces, in the spirit of contributions such as \cite{yao1996-Sahlqvist} and \cite{vakarelov1991model}, aimed at modelling indiscernibility with relations which are not necessarily equivalence relations, and at exploring the possibility of modelling the interior and closure of sets via different relations.
% \marginnote{here we need to cite \cite{yao1996-Sahlqvist}, since they also explore indiscernibility as weaker than equivalence relations; moreover in this same paper, they mention that in \cite{vakarelov1991model} Vakarelov  uses different relations for box and diamond}

Proposition \ref{prop:lifting of properties} suggests a way for identifying, among the enriched formal contexts, the subclass of those which properly generalize approximation spaces.
%\subsection{Conceptual (co-)approximation spaces}
%
Recall that for any enriched formal context $\mathbb{F} = (\mathbb{P}, R_{\Box}, R_{\Diamond})$, the relations $R_{\Diamondblack}$ and $R_\blacksquare$ are defined as in Section \ref{ssec:enriched formal contexts}.
%Throughout this section, for any formal context $\mathbb{P} = (A, X, I)$ and any $I$-compatible relations $R_\Box \subseteq A \times X$ and $R_\Diamond \subseteq X \times A$, we let $R_\blacksquare\subseteq A\times X$ denote the converse of $R_\Diamond$ and $R_{\Diamondblack}\subseteq X\times A$ the converse of $R_\Box$.
\begin{definition}
\label{def:conceptual approx space}
A {\em conceptual approximation space} is an enriched formal context $\mathbb{F} = (\mathbb{P}, R_\Box, R_\Diamond)$ verifying the following condition:
 \begin{equation}
 \label{eq: box less than diamond} R_{\Box};R_{\blacksquare} \subseteq I.
 \end{equation}
Such an $\mathbb{F}$ is {\em reflexive} if $R_\Box\subseteq I$ and $R_\blacksquare\subseteq I$, is {\em symmetric} if $R_{\Diamond}  = R_{\Diamondblack}$ or equivalently if $R_{\blacksquare}  = R_{\Box}$, and  is {\em transitive} if $R_{\Box}\subseteq R_{\Box}\, ; R_{\Box}$ and $R_{\Diamond}\subseteq R_{\Diamond}\, ; R_{\Diamond}$ (cf.~Definition \ref{def:terminology}).
\end{definition}
The definition above identifies subclasses of enriched formal contexts defined by first-order conditions which characterize the required behaviour of the modal operators arising from  these structures; specifically, that the modal operators $[R_{\Box}]$ and $\langle R_{\Diamond}\rangle$ associated with a reflexive, symmetric and transitive conceptual approximation space are an interior and a closure operator on the lattice of concepts respectively, and are hence suitable to serve as lower and upper approximations of concepts  (cf.~Proposition \ref{lemma:correspondences}).
The following proposition accounts for the fact that, when {\em restricted} to formal contexts that correspond to sets, the notion of a reflexive, symmetric and transitive conceptual approximation space (cf.~Definition \ref{def:conceptual approx space}) exactly captures the classical notion of an approximation space.  %the notion of reflexive, symmetric and transitive conceptual approximation space {\em restricts} appropriately to standard approximation spaces. 
This proposition is an immediate consequence of Propositions \ref{prop:lifting of properties} and \ref{prop:from Kripke frames to enriched polarities}.
\begin{proposition}
If $\mathbb{X} = (S, R)$ is an approximation space, then $\mathbb{F}_{\mathbb{X}} = (\mathbb{P}_S, I_{R^c}, J_{R^c})$ is a reflexive, symmetric and transitive conceptual approximation space such that $\mathbb{F}_{\mathbb{X}}^+ \cong \mathbb{X}^+$.
\end{proposition}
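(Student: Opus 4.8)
The plan is to verify the three adjectives (reflexive, symmetric, transitive) together with the defining inequality of a conceptual approximation space by translating the standard properties of the equivalence relation $R$ through Proposition \ref{prop:lifting of properties}, and then to read off the isomorphism directly from Proposition \ref{prop:from Kripke frames to enriched polarities}. Throughout, I would keep in mind that $\mathbb{F}_{\mathbb{X}} = (\mathbb{P}_S, I_{R^c}, J_{R^c})$ means $R_\Box = I_{R^c}$, $R_\Diamond = J_{R^c}$, and that the incidence relation of $\mathbb{P}_S$ is $I = I_{\Delta^c}$.

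First I would unravel the auxiliary relations $R_{\Diamondblack}$ and $R_{\blacksquare}$. Unfolding their definitions from Section \ref{ssec:enriched formal contexts} and applying Lemma \ref{lemma:liftings and converses}, I compute $R_{\Diamondblack} = (I_{R^c})^{-1} = J_{(R^{-1})^c}$ and $R_{\blacksquare} = (J_{R^c})^{-1} = I_{(R^{-1})^c}$. Since $R$ is an equivalence relation it is symmetric, so $R^{-1} = R$ and hence $R_{\Diamondblack} = J_{R^c} = R_\Diamond$ and $R_{\blacksquare} = I_{R^c} = R_\Box$, which is exactly the symmetry condition of Definition \ref{def:conceptual approx space} (equivalently, this is Proposition \ref{prop:lifting of properties}(3)).

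Next I would obtain reflexivity and transitivity as direct lookups in Proposition \ref{prop:lifting of properties}. The reflexivity of $R$ gives $R_\Box = I_{R^c} \subseteq I_{\Delta^c} = I$ by item (1), and since $R_{\blacksquare} = I_{R^c}$ this also yields $R_{\blacksquare} \subseteq I$, so $\mathbb{F}_{\mathbb{X}}$ is reflexive. The transitivity of $R$ gives $R_\Box = I_{R^c} \subseteq I_{R^c}\,; I_{R^c} = R_\Box\,; R_\Box$ and $R_\Diamond = J_{R^c} \subseteq J_{R^c}\,; J_{R^c} = R_\Diamond\,; R_\Diamond$ by item (2), so $\mathbb{F}_{\mathbb{X}}$ is transitive.

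The one step that needs more than a direct lookup is the defining inequality $R_\Box\,; R_{\blacksquare} \subseteq I$, since density is not among the named adjectives. The observation to exploit is that reflexivity already forces density: if $sRt$ then $sRs$ and $sRt$ witness $(s,t) \in R \circ R$, so every equivalence relation is dense. Hence Proposition \ref{prop:lifting of properties}(5) yields $I_{R^c}\,; I_{R^c} \subseteq I_{R^c}$, and combining this with the reflexivity inclusion $I_{R^c} \subseteq I_{\Delta^c}$ from item (1) and the identity $R_{\blacksquare} = I_{R^c}$ established above gives $R_\Box\,; R_{\blacksquare} = I_{R^c}\,; I_{R^c} \subseteq I_{\Delta^c} = I$, confirming that $\mathbb{F}_{\mathbb{X}}$ is a conceptual approximation space. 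Finally, since approximation spaces form a subclass of Kripke frames, the isomorphism $\mathbb{F}_{\mathbb{X}}^+ \cong \mathbb{X}^+$ is precisely Proposition \ref{prop:from Kripke frames to enriched polarities}. The only genuine obstacle is recognizing that the defining inequality must be sourced from the density clause (via reflexivity) rather than from any of the explicitly named reflexive, symmetric, or transitive conditions.
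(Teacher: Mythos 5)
Your proof is correct, and it follows the same basic strategy as the paper, whose own proof is just the one-line remark that the proposition is an immediate consequence of Propositions \ref{prop:lifting of properties} and \ref{prop:from Kripke frames to enriched polarities}; your write-up is essentially that remark with the lookups made explicit. Your computation $R_{\Diamondblack} = J_{(R^{-1})^c} = J_{R^c} = R_\Diamond$ and $R_{\blacksquare} = I_{(R^{-1})^c} = I_{R^c} = R_\Box$ via Lemma \ref{lemma:liftings and converses} and the symmetry of $R$ is exactly right, and reflexivity and transitivity then read off from items (1) and (2) of Proposition \ref{prop:lifting of properties} as you say.

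The one place where you diverge from the paper's intended route is the defining inequality $R_\Box\,;R_{\blacksquare}\subseteq I$. You obtain it by observing that every reflexive relation is dense, invoking Proposition \ref{prop:lifting of properties}(5) to get $I_{R^c}\,;I_{R^c}\subseteq I_{R^c}$, and then composing with $I_{R^c}\subseteq I_{\Delta^c}$. That argument is sound (and self-contained within Proposition \ref{prop:lifting of properties}), but your closing claim that the inequality \emph{must} be sourced from the density clause is overstated: the paper's Corollary \ref{cor: sahlqvist consequences}(1) derives $R_\Box\,;R_{\blacksquare}\subseteq I$ directly from $R_\Box\subseteq I$ and $R_{\blacksquare}\subseteq I$ alone, i.e.\ from reflexivity of the conceptual approximation space, by passing through the modal axioms $\Box\phi\vdash\phi$ and $\phi\vdash\Diamond\phi$ of Proposition \ref{lemma:correspondences}. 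So the paper gets the defining condition for free from reflexivity without any appeal to density; your route buys a purely relational argument that avoids the correspondence-theoretic machinery, at the cost of the extra (easy) observation that equivalence relations are dense. Either way the result stands, and the remainder of your argument, including citing Proposition \ref{prop:from Kripke frames to enriched polarities} for $\mathbb{F}_{\mathbb{X}}^+\cong\mathbb{X}^+$, matches the paper exactly.
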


%The proposition above states that, when restricted to formal contexts that correspond to sets, the notion of reflexive, symmetric and transitive conceptual approximation space (cf.~Definition \ref{def:conceptual approx space}) exactly captures the classical notion of approximation spaces. 
The following proposition shows that, from the perspective of logic, Definition \ref{def:conceptual approx space} is the appropriate {\em generalization} of the notion of approximation space, since it  characterizes in a modular way the axioms of the lattice-based counterpart of the modal logic of classical approximation spaces. In particular, by item 1, condition \eqref{eq: box less than diamond} characterizes the minimal condition for $\Diamond$ and $\Box$ to convey the upper and lower approximations of given concepts; moreover, items 2 and 4 (resp.~3 and 5) exactly characterize the conditions under which the semantic $\Box$ (resp.~$\Diamond$) is an interior (resp.~closure) operator.
\begin{proposition}
\label{lemma:correspondences}
For any enriched formal context $\mathbb{F} = (\mathbb{P}, R_\Box, R_\Diamond)$:
\begin{enumerate}
\item  $\mathbb{F}\models \Box\phi\vdash \Diamond\phi\quad $ iff $\quad R_{\Box}\, ;R_{\blacksquare} \subseteq I$.
\item $\mathbb{F}\models \Box\phi\vdash \phi\quad $ iff $\quad R_\Box\subseteq I$.
\item $\mathbb{F}\models \phi\vdash \Diamond\phi\quad $ iff $\quad R_\blacksquare\subseteq I$.
\item $\mathbb{F}\models \Box\phi\vdash \Box\Box\phi\quad $ iff $\quad R_{\Box}\subseteq R_{\Box}\, ; R_{\Box}$.
\item $\mathbb{F}\models \Diamond\Diamond\phi\vdash \Diamond\phi\quad $ iff $\quad R_{\Diamond}\subseteq R_{\Diamond}\, ; R_{\Diamond}$.
\item $\mathbb{F}\models \phi\vdash \Box\Diamond\phi\quad $ iff $\quad R_{\Diamond} \subseteq R_{\Diamondblack}$.
\item $\mathbb{F}\models \Diamond\Box\phi\vdash \phi\quad $ iff $\quad  R_{\Diamondblack} \subseteq R_{\Diamond}$.
\item $\mathbb{F}\models \phi\vdash \Box \phi\quad $ iff $\quad I \subseteq R_\Box$.
\item $\mathbb{F}\models \Diamond\phi\vdash \phi\quad $ iff $\quad I \subseteq R_\blacksquare$.
\item $\mathbb{F}\models \Box\Box\phi\vdash \Box\phi\quad $ iff $\quad R_{\Box}\, ; R_{\Box}\subseteq R_{\Box}$.
\item $\mathbb{F}\models \Diamond\phi\vdash \Diamond \Diamond\phi\quad $ iff $\quad R_{\Diamond}\, ; R_{\Diamond}\subseteq R_{\Diamond}$.
\item  $\mathbb{F}\models \Diamond\phi\vdash \Box\phi\quad $ iff $I\subseteq R_{\blacksquare};R_{\Box}$.
\end{enumerate}
\end{proposition}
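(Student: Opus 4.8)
The plan is to establish the biconditional through a chain of equivalences that first reduces the validity of the sequent to an order-theoretic inequality, then transports that inequality across the adjunctions of Lemma~\ref{lem:3}, and finally reads off the pure relational inclusion by testing on object concepts. Since $p$ may be interpreted as an arbitrary concept, $\mathbb{F}\models\Diamond\phi\vdash\Box\phi$ holds exactly when $\langle R_\Diamond\rangle c\le[R_\Box]c$ for every $c\in\mathbb{P}^+$, so it suffices to characterise this operator inequality.

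For the first move I would use that $[R_\blacksquare]$ is the right adjoint of $\langle R_\Diamond\rangle$ (Lemma~\ref{lem:3}): for each fixed $c$, treating $[R_\Box]c$ as a fixed element, the adjunction turns $\langle R_\Diamond\rangle c\le[R_\Box]c$ into the equivalent inequality $c\le[R_\blacksquare][R_\Box]c$. This is a genuine biconditional at each $c$, so the universally quantified versions coincide, and the task becomes characterising when the composite operator $[R_\blacksquare][R_\Box]$ dominates the identity.

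The key computational step is to recognise $[R_\blacksquare][R_\Box]$ as a single box operator, namely $[R_\blacksquare;R_\Box]$. Unwinding the definitions, the extension of $[R_\blacksquare][R_\Box]c$ is $R_\blacksquare^{(0)}[(R_\Box^{(0)}[\descr{c}])^{\uparrow}]$; since $(\cdot)^{\uparrow}=I^{(1)}[\cdot]$, this is exactly $(R_\blacksquare;R_\Box)^{(0)}[\descr{c}]$ by Lemma~\ref{lemma:comp4}.1, i.e.\ the extension of $[R_\blacksquare;R_\Box]c$. Here I must first record that $R_\blacksquare$ is $I$-compatible (immediate from the $I$-compatibility of $R_\Diamond$ via \eqref{eq:zero is 1 is zero}), so that, $R_\Box$ being $I$-compatible by assumption, Lemma~\ref{lemma:comp4}.1 applies; by Lemma~\ref{lem:composition:1}, $R_\blacksquare;R_\Box$ is then itself $I$-compatible and $[R_\blacksquare;R_\Box]$ is well defined. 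Thus the chain so far reads: $\mathbb{F}\models\Diamond\phi\vdash\Box\phi$ iff $c\le[R_\blacksquare;R_\Box]c$ for all $c$.

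It remains to prove the general fact that, for any $I$-compatible $T\subseteq A\times X$, one has $c\le[T]c$ for every concept $c$ if and only if $I\subseteq T$, and then to apply it with $T=R_\blacksquare;R_\Box$. For the converse direction, monotonicity of $T^{(0)}[\cdot]$ in the relation $T$ together with $\val{c}=I^{(0)}[\descr{c}]$ gives $\val{c}\subseteq T^{(0)}[\descr{c}]=\val{[T]c}$. For the forward direction I would instantiate at the object concepts $a^{\uparrow\downarrow}=(a^{\uparrow\downarrow},a^{\uparrow})$: the membership $a\in\val{[T]a^{\uparrow\downarrow}}=T^{(0)}[a^{\uparrow}]$ unwinds to $aIx\Rightarrow aTx$ for all $x$, and letting $a$ range over $A$ yields $I\subseteq T$. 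I expect the third step to be the main obstacle, since it is where the bookkeeping of $I$-compatibility hypotheses matters and where the composition identity must be matched against Lemma~\ref{lemma:comp4}.1 in precisely the right direction; the remaining steps are formal. Finally, I note that this statement is the order-dual companion of item~1, and the same template---adjunction, composition identity, object-concept instantiation---governs both.
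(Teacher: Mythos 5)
Your argument for item 12 is correct, but it follows a genuinely different route from the paper's. The paper (Appendix \ref{sec:correspondence}) runs the ALBA reduction---first approximation with nominals $j$ and co-nominals $m$, Ackermann's Lemma, adjunction---to arrive at the pure inequality $\forall m\,[m\leq \blacksquare\Box m]$, instantiates $m$ at the meet-generators $(x^{\downarrow},x^{\downarrow\uparrow})$, strips the Galois closure $x^{\downarrow\uparrow}$ using Lemma \ref{equivalents of I-compatible} and the $I$-compatibility of $R_\Box$, and then recognises the resulting condition $I^{(0)}[x]\subseteq R_{\blacksquare}^{(0)}[I^{(1)}[R_{\Box}^{(0)}[x]]]$ as $I\subseteq R_\blacksquare;R_\Box$ directly from the pointwise definition of $;$. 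You instead stay inside the complex algebra: you use the adjunction $\langle R_\Diamond\rangle\dashv[R_\blacksquare]$ from Lemma \ref{lem:3}, collapse the composite $[R_\blacksquare][R_\Box]$ into the single box $[R_\blacksquare;R_\Box]$ via Lemma \ref{lemma:comp4} (which the paper's proof of this item never invokes, since it only ever evaluates the composition at singletons), and read off the relational inclusion by testing on the join-generators $(a^{\uparrow\downarrow},a^{\uparrow})$ rather than the meet-generators. Your bookkeeping of $I$-compatibility ($R_\blacksquare$ inherits it from $R_\Diamond$ via \eqref{eq:zero is 1 is zero}; $R_\blacksquare;R_\Box$ inherits it via Lemma \ref{lem:composition:1}) is exactly right, and your auxiliary fact---$c\leq[T]c$ for all $c$ iff $I\subseteq T$, for $I$-compatible $T\subseteq A\times X$---is a nice reusable piece that also yields item 8 at no extra cost. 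What your approach buys is a self-contained order-theoretic argument and a transparent role for the composition lemma; what the paper's buys is uniformity, since the same ALBA template mechanically dispatches all twelve items and connects to the general Sahlqvist machinery it cites. The one caveat: the statement comprises twelve equivalences and you only discharge the last one, asserting that the same template governs the rest. That assertion is essentially right---items 1--5 and 8--11 reduce by the same adjunction-plus-composition-plus-generator pattern, and items 6--7 need only adjunction and generator instantiation---but as written those eleven items remain to be worked out.
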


\begin{proof}
Notice that the modal principles in all items of the lemma are Sahlqvist (cf.~\cite[Definition 3.5]{CoPa:non-dist}). Hence, they all have first order correspondents, {\em both} on Kripke frames {\em and} on enriched formal contexts, which can be effectively computed e.g.~by running the algorithm ALBA (cf.~\cite[Section 4]{CoPa:non-dist}) on each of them. In the remainder of this proof, we will use ALBA to compute these first-order correspondents on enriched formal contexts. In what follows, the variables $j$ are interpreted as elements of the set $J := \{(a^{\uparrow\downarrow}, a^{\uparrow})\mid a\in A\}$ which completely join-generates $\mathbb{F}^{+}$, and the variables $m$ as elements of $M: = \{(x^{\downarrow}, x^{\downarrow\uparrow}) \mid x\in X\}$ which completely meet-generates $\mathbb{F}^{+}$.  %We refer to this join- generation and meet-generation property as $\vee.\wedge.g.p$.

\noindent 1.		
\begin{center}
			\begin{tabular}{r l l l}
				&$\forall p$  [$\Box p \leq \Diamond p $]\\
                iff& $\forall p \forall j \forall m  [(j\le \Box p \ \&\ \Diamond p\le m )\Rightarrow j\le m]$
				& first approximation\\
				iff& $\forall p \forall j \forall m  [( j\le\Box p \ \&\  p\le \blacksquare m )\Rightarrow j\le m]$
				& adjunction \\
				iff& $ \forall j \forall m  [ j\le \Box \blacksquare m \Rightarrow j\le m]$
				& Ackermann's Lemma \\
				iff& $ \forall m  [\Box\blacksquare m\le m].$
				& $J$ c.~join-generates $\mathbb{F}^{+}$\\
			\end{tabular}
		\end{center}
		
		Translating the universally quantified algebraic inequality above into its concrete representation in $\mathbb{F}^+$ requires using  the interpretation of $m$ as ranging in $M$ and the definition of $[R_{\Box}]$ and $[R_\blacksquare]$, as follows:
		
		\begin{center}
			\begin{tabular}{r l l l}
				&$\forall x\in X$~~~~ $R_{\Box}^{(0)}[I^{(1)}[R_{\blacksquare}^{(0)}[x^{\downarrow \uparrow}]]]\subseteq x^{\downarrow}$\\
	iff&$\forall x\in X$~~~~ $R_{\Box}^{(0)}[I^{(1)}[R_{\blacksquare}^{(0)}[x]]]\subseteq I^{(0)}[x]$ & Lemma \ref{equivalents of I-compatible} since $R_{\blacksquare}$ is $I$-compatible\\
iff& $R_{\Box};R_{\blacksquare}\subseteq I$.& By definition
			\end{tabular}
		\end{center}
The proofs of the remaining items are collected in Appendix \ref{sec:correspondence}. 		
		
\end{proof}
The following are immediate consequences of the proposition above.
\begin{corollary}
\label{cor: sahlqvist consequences}
For any enriched formal context $\mathbb{F} = (\mathbb{P}, R_\Box, R_\Diamond)$,
\begin{enumerate}
\item If $R_\Box\subseteq I$ and $R_\Diamond\subseteq I$ then $R_{\Box};R_{\blacksquare}\subseteq I$, hence $\mathbb{F}$ is a conceptual approximation space.
\item If $I\subseteq R_\Box$ and $I\subseteq R_\Diamond$ then $I\subseteq R_{\blacksquare};R_{\Box}$, hence $\mathbb{F}$ is a conceptual co-approximation space (cf.~Definition \ref{def:conceptual co-approx space} below).
\item If  $I\subseteq R_\Box$ then $R_\Box\subseteq R_\Box ; R_\Box$.
\item If  $I\subseteq R_\Diamond$ then $R_\Diamond\subseteq R_\Diamond ; R_\Diamond$.
\item If  $R_\Box\subseteq I$ then $R_\Box ; R_\Box\subseteq R_\Box$.
\item If  $R_\Diamond\subseteq I$ then $R_\Diamond ; R_\Diamond\subseteq R_\Diamond$.
\end{enumerate}
\end{corollary}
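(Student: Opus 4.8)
The plan is to read each item of the corollary through the dictionary supplied by Proposition \ref{lemma:correspondences}, which equates every relational inclusion occurring here with the validity on $\mathbb{F}$ of a corresponding modal sequent. Since the basic logic $\mathbf{L}$ is sound with respect to enriched formal contexts, its inference rules preserve validity on each fixed $\mathbb{F}$; hence it suffices, for each item, to derive the sequent associated with the conclusion from the sequent(s) associated with the hypotheses using those rules, and then translate back. Throughout I read the reflexivity-type hypotheses on the diamond side in their well-typed form: ``$R_\Diamond\subseteq I$'' abbreviates $R_\blacksquare\subseteq I$ and ``$I\subseteq R_\Diamond$'' abbreviates $I\subseteq R_\blacksquare$, matching items 3 and 9 of Proposition \ref{lemma:correspondences}; this is forced, since $R_\Diamond\subseteq X\times A$ cannot literally be compared with $I\subseteq A\times X$ and must be handled via its companion $R_\blacksquare$.

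For item 1, the hypotheses $R_\Box\subseteq I$ and $R_\blacksquare\subseteq I$ give, by items 2 and 3 of Proposition \ref{lemma:correspondences}, the validities $\mathbb{F}\models\Box\phi\vdash\phi$ and $\mathbb{F}\models\phi\vdash\Diamond\phi$; one application of the cut rule $\frac{\phi\vdash\chi\quad\chi\vdash\psi}{\phi\vdash\psi}$ yields $\mathbb{F}\models\Box\phi\vdash\Diamond\phi$, which by item 1 of the proposition is exactly $R_\Box;R_\blacksquare\subseteq I$, i.e.\ $\mathbb{F}$ is a conceptual approximation space. Item 2 is dual: from $I\subseteq R_\Box$ and $I\subseteq R_\blacksquare$ we obtain $\mathbb{F}\models\phi\vdash\Box\phi$ and $\mathbb{F}\models\Diamond\phi\vdash\phi$ (items 8 and 9), and cut gives $\mathbb{F}\models\Diamond\phi\vdash\Box\phi$, which by item 12 is the defining condition $I\subseteq R_\blacksquare;R_\Box$ of a conceptual co-approximation space.

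Items 3--6 each follow from a single hypothesis by one use of a monotonicity rule rather than cut. For item 3, $I\subseteq R_\Box$ gives $\mathbb{F}\models\phi\vdash\Box\phi$ (item 8); applying $\frac{\phi\vdash\psi}{\Box\phi\vdash\Box\psi}$ to this sequent produces $\mathbb{F}\models\Box\phi\vdash\Box\Box\phi$, i.e.\ $R_\Box\subseteq R_\Box;R_\Box$ by item 4. Symmetrically, item 4 uses $I\subseteq R_\blacksquare$ to get $\mathbb{F}\models\Diamond\phi\vdash\phi$ (item 9) and then the $\Diamond$-monotonicity rule to reach $\mathbb{F}\models\Diamond\Diamond\phi\vdash\Diamond\phi$, i.e.\ $R_\Diamond\subseteq R_\Diamond;R_\Diamond$ (item 5). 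For item 5, $R_\Box\subseteq I$ gives $\mathbb{F}\models\Box\phi\vdash\phi$ (item 2), and $\Box$-monotonicity yields $\mathbb{F}\models\Box\Box\phi\vdash\Box\phi$, i.e.\ $R_\Box;R_\Box\subseteq R_\Box$ (item 10); for item 6, $R_\blacksquare\subseteq I$ gives $\mathbb{F}\models\phi\vdash\Diamond\phi$ (item 3), and $\Diamond$-monotonicity yields $\mathbb{F}\models\Diamond\phi\vdash\Diamond\Diamond\phi$, i.e.\ $R_\Diamond;R_\Diamond\subseteq R_\Diamond$ (item 11).

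There is no genuinely hard step here: the content is entirely carried by Proposition \ref{lemma:correspondences}, and the remaining work is the one-line derivations in $\mathbf{L}$. The only points demanding care are bookkeeping---pairing each inclusion with the correct item of the proposition and the correct direction of the sequent---and the typing convention noted above. One could alternatively verify each inclusion directly at the level of relations, using the formula for $I$-composition in Definition \ref{def:relational composition} together with the (anti)tonicity facts of Lemma \ref{lemma: basic}, but this is more laborious and obscures the uniform pattern, so I would present the logic-based argument.
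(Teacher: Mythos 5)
Your proof is correct and follows essentially the same route as the paper's: each item is obtained by translating the hypotheses into valid sequents via Proposition \ref{lemma:correspondences}, deriving the target sequent by a single cut (items 1--2) or monotonicity step (items 3--6), and translating back. Your explicit handling of the typing convention (reading $R_\Diamond\subseteq I$ as $R_\blacksquare\subseteq I$) is a sensible clarification that the paper leaves implicit, and your citation of item 8 for $\phi\vdash\Box\phi$ is the correct one where the paper's proof misprints it as item 7.
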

\begin{proof}
1. If $R_\Box\subseteq I$ and $R_\Diamond\subseteq I$, then  by items  2 and 3 of Proposition \ref{lemma:correspondences}, $\mathbb{F}\models \Box \phi\vdash \phi$ and $\mathbb{F}\models  \phi\vdash \Diamond \phi$, hence $\mathbb{F}\models \Box \phi\vdash \Diamond \phi$, which, by item 1 of Proposition \ref{lemma:correspondences}, is equivalent to $R_{\Box};R_{\blacksquare}\subseteq I$, as required. The proof of item 2 is similar.

3. If  $I\subseteq R_\Box$ then  by item 7 of Proposition \ref{lemma:correspondences}, $\mathbb{F}\models \phi\vdash \Box \phi$, hence $\mathbb{F}\models \Box \phi\vdash \Box \Box \phi$, which, by item 4 of Proposition \ref{lemma:correspondences}, is equivalent to $R_\Box\subseteq R_\Box ; R_\Box$, as required. The proof of the remaining items are similar.
\end{proof}
In the classical setting, the `sub-delta' condition (cf.~Section \ref{ssec:lifting properties}, discussion before Proposition \ref{prop:lifting of properties}) identifies a very restricted class of relations, namely those corresponding to tests, which are not much mentioned in the literature of modal logic besides the context of PDL. Consequently, their corresponding modal axiomatic principles $p\vdash \Box p$ and $\Diamond p \vdash p$ do not commonly occur in the literature either. However, as stated in Proposition \ref{lemma:correspondences}(8) and (9), in the present setting these principles characterize a natural and meaningful class of relations (which we refer to as `sub-delta', by extension), which, as noticed in \cite{ICLA2019}, has already cropped up in the literature on rough formal concept analysis \cite{kent1996rough}, and is potentially useful for applications (see Subsections \ref{ex:conceptual co-approx space} and \ref{ex:hospital} below).
\begin{definition}
\label{def:conceptual co-approx space}
A {\em conceptual co-approximation space} is an enriched formal context $\mathbb{F} = (\mathbb{P}, R_\Box, R_\Diamond)$ verifying the following condition:
 \begin{equation}
 \label{eq: diamond less than box} I \subseteq R_{\Box};R_{\blacksquare} .
 \end{equation}
Such an $\mathbb{F}$ is {\em sub-delta} if $I\subseteq R_\Box$ and $I\subseteq R_\blacksquare$, is {\em symmetric} if $R_{\Diamond}  = R_{\Diamondblack}$ or equivalently if $R_{\blacksquare}  = R_{\Box}$, and  is {\em dense} if $ R_{\Box}\, ; R_{\Box}\subseteq R_{\Box}$ and $ R_{\Diamond}\, ; R_{\Diamond}\subseteq R_{\Diamond}$.
\end{definition}
Recalling that the operators $\Box$ and $\Diamond$ are monotone (cf.~\ref{sec:logics}), by items 8--11 of Proposition \ref{lemma:correspondences}, sub-delta and dense conceptual co-approximation spaces are exactly those such that $[R_{\Box}]$ is a {\em closure} operator and $\langle R_{\Diamond}\rangle$ is an {\em interior} operator. Hence, these operations are suitable to provide the upper and lower approximations of concepts, but in the reverse roles than those of Definition \ref{def:conceptual approx space}. Proposition \ref{lemma:correspondences}.12 characterizes condition \eqref{eq: diamond less than box} of the definition above precisely in terms of these reverse roles.
\begin{center}
\begin{tabular}{|rcc||ccl|}
\hline
    \multicolumn{3}{|c}{conceptual approximation spaces} & \multicolumn{3}{c|}{conceptual co-approximation spaces} \\
    \hline
    reflexive & $\Box \phi\vdash \phi$ &$ \phi\vdash \Diamond \phi$ & $\phi\vdash \Box  \phi$ &$ \Diamond\phi\vdash  \phi$ & sub-delta \\
    \hline
     transitive & $\Box \phi\vdash \Box\Box\phi$ &$ \Diamond\Diamond\phi\vdash \Diamond \phi$ & $\Box\Box\phi\vdash \Box  \phi$ &$ \Diamond\phi \vdash \Diamond\Diamond\ \phi$ & dense \\
     \hline
\end{tabular}
\end{center} 

\section{Examples of  conceptual (co-)approximation spaces}
\label{sec:examples}
In this section, we illustrate, by way of examples, how conceptual (co-)approximation spaces can be used to model a variety of situations, and how, under different interpretations, the modal logic of formal concepts can be used to capture each of these situations.
\subsection{Reflexive and symmetric conceptual approximation space}
\label{ex:conceptual approx space}
Consider a database, represented as a formal context $\mathbb{P} = (A, X, I)$. As usual, the intuitive understanding of $aIx$ is ``object $a$ has feature $x$''. Consider an $I$-compatible relation $R\subseteq A\times X$, intuitively understood as ``there is {\em evidence} that object $a$ has feature $x$'', or ``object $a$ {\em demonstrably} has feature $x$''. Under this intuitive understanding, it is reasonable to assume that $R\subseteq I$. Then, letting $R_{\Box}: = R$ and $R_{\Diamond}: = R^{-1}$, for any concept $c\in \mathbb{P}^+$,
\[\val{[R_{\Box}]c} =R^{(0)}[\descr{c}] = \{a\in A\mid \forall x(x\in \descr{c}\Rightarrow aRx)\}.\] That is, the members of $[R_{\Box}]c$  are exactly those objects that {\em demonstrably} have all the features in the description of $c$, and hence $[R_{\Box}]c$ can be understood as the category of the `certified members' of $c$. The assumption that $R\subseteq I$ implies that $\val{[R_{\Box}] c} = R^{(0)}[\descr{c}] \subseteq  I^{(0)}[\descr{c}]= \val{c}$, hence $[R_{\Box}]c$ is a sub-concept of $c$. Moreover, \[\descr{\langle R_{\Diamond}\rangle c}  = R^{(1)}[\val{c}] = \{x\in X\mid \forall a(a\in \val{c}\Rightarrow aRx)\}.\] That is, $\langle R_{\Diamond}\rangle c$ is the concept described by the set of features that each member of $c$ {\em demonstrably} has, and hence $\langle R_{\Diamond}\rangle c$ can be understood as the category of the `potential members' of $c$. %which is a subset of the description of $c$. Intuitively, members of $\langle R_{\Diamond}\rangle c$ are the possible members of $c$.
Indeed, if $a\in A$ lacks some feature that each member of $c$ {\em demonstrably} has, %is such that $(a, x)\notin I$ for some $x\in \descr{\langle R_{\Diamond}\rangle c}$,
then it is impossible for $a$ to be a member of $c$.  The assumption that $R\subseteq I$ implies that $\descr{\langle R_{\Diamond}\rangle c} =  R^{(1)}[\val{c}] \subseteq I^{(1)}[\val{c}] = \descr{c} $, and hence $\langle R_{\Diamond}\rangle c$ is a super-concept of $c$.
Hence, $[R_{\Box}]c$ and $\langle R_{\Diamond}\rangle c$ can be taken as the lower and upper approximations of the concept $c$, respectively. Notice that, while in approximation spaces  the  relation $R$ relates indiscernible states, and thus encodes the information we do {\em not} have, in the present setting  we take the opposite perspective and let $R$ encode our (possibly partial) knowledge or information. This perspective is consistent with the fact that, when an approximation space $\mathbb{X}$ is lifted to $\mathbb{F}_{\mathbb{X}}$, the  indiscernibility relation $R$ is encoded into the liftings of $R^c$. By construction, the enriched formal context $\mathbb{F} = (\mathbb{P}, R_\Box, R_\Diamond)$, with $R_{\Box}$ and $R_{\Diamond}$ as above, is a symmetric and reflexive, hence dense (cf.~Corollary \ref{cor: sahlqvist consequences}.3 and .4) conceptual approximation space. %The assumption that $R\subseteq I$ implies that $\mathbb{F}$ is also reflexive. \marginnote{can we say something also about transitivity?}
\subsection{Reflexivity and transitivity used normatively}
\label{ex:normative}
Besides having a {\em descriptive} function, conditions such as reflexivity and transitivity can also be used {\em normatively}. To illustrate this point, let $\mathbb{P} = (A, X, I)$ represent the information of an undergraduate course, where $A$ is the set of definitions/notions the course is about, $X$ is the set of properties, facts and results that can be attributed to these notions,  and $aIx$ iff  ``notion $a$ has property $x$''. Concepts arising from this representation are clusters of notions, intensionally described by Galois-stable sets of properties. Let  $R\subseteq A\times X$ be an $I$-compatible relation representing the amount of information about this course that a student remembers. Then $R$ being {\em reflexive} corresponds to the requirement that the student's recollections be correct, and $R$ being {\em transitive} corresponds to the requirement that if the student remembers that notion $a$ has property $x$, then the student should also remember that notion $a$ has every (other) property which is shared by all the notions which the student remembers to have property $x$. So, under this interpretation, transitivity captures a desirable requirement that well organized and rational memories ought to satisfy (but might not satisfy), and hence in certain contexts, conditions such as  reflexivity and transitivity can be used e.g.~to discriminate between the `good' and the `bad' approximations of $I$. By items 2--5 of Proposition \ref{lemma:correspondences},  these `good' approximations can be identified not only by first-order conditions on conceptual approximation spaces, but also by means of `modal axioms', i.e.~sequents in the language of the modal logic of concepts (cf.~Section \ref{sec:logics}).
\subsection{Relativizations}
\label{ssec:relativization}
A natural large class of reflexive and transitive conceptual approximation spaces consists of those enriched formal contexts $\mathbb{F} = (A, X, I, R)$ such that $R = I\cap (B\times Y)$ for some $B\subseteq A$ and $Y\subseteq X$, in which case we say that $R$ {\em relativizes} $I$ to $B\subseteq A$ and $Y\subseteq X$.\footnote{When $Y = X$ then we say that $R$ {\em relativizes} $I$ to $B\subseteq A$, and when $B = A$ then we say that $R$ {\em relativizes} $I$ to $Y\subseteq X$.} Recall that a polarity is {\em separated} if for all $a, b\in A$ if $a\neq b$ then $(a, x)\in I$ and $(b, x)\notin I$ for some $x\in X$, and for all $x, y\in X$ if $x\neq y$ then $(a, x)\in I$ and $(a, y)\notin I$ for some $a\in A$. In separated polarities, $\varnothing = X^{\downarrow} = A^{\uparrow}$ is Galois-stable.
\begin{lemma}
\label{lemma: relativization}
For any separated polarity $\mathbb{P} = (A, X, I)$ and any $B\subseteq A$ and $Y\subseteq X$, the structure $\mathbb{F} = (A, X, I, R_\Box, R_\Diamond)$ such that $R_{\Box} =R = I\cap (B\times Y)$ and $R_{\Diamond} = R^{-1}$ is a reflexive, symmetric and transitive conceptual approximation space.
\end{lemma}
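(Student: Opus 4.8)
The plan is to check the conditions of Definition \ref{def:conceptual approx space} one at a time, exploiting that here $R_\Box = R$ and $R_\Diamond = R^{-1}$ with $R = I\cap(B\times Y)$.

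I would start with the two cheapest conditions. Recalling from Section \ref{ssec:enriched formal contexts} that $a R_{\blacksquare} x$ iff $x R_\Diamond a$ and $x R_{\Diamondblack} a$ iff $a R_\Box x$, the choice $R_\Diamond = R^{-1}$ gives $R_{\blacksquare} = R = R_\Box$ and $R_{\Diamondblack} = R^{-1} = R_\Diamond$; the first of these is exactly \emph{symmetry}. \emph{Reflexivity} is just as direct: $R = I\cap(B\times Y)\subseteq I$ yields both $R_\Box\subseteq I$ and $R_{\blacksquare}\subseteq I$. Before invoking the correspondence results, however, I must make sure that $\mathbb{F}$ is a bona fide enriched formal context, i.e.\ that $R$ is $I$-compatible. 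By Lemma \ref{equivalents of I-compatible} this reduces to checking that $R^{(0)}[x]$ and $R^{(1)}[a]$ are Galois-stable for all $x\in X$, $a\in A$; unwinding definitions, $R^{(0)}[x] = B\cap x^{\downarrow}$ for $x\in Y$ (and $\varnothing$ otherwise) and dually $R^{(1)}[a] = Y\cap a^{\uparrow}$ for $a\in B$ (and $\varnothing$ otherwise). This is the point at which separatedness of $\mathbb{P}$ is used: it makes $\varnothing$ Galois-stable and is meant to guarantee that these intersections of $B$ (resp.\ $Y$) with a stable set are again stable.

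Granting $I$-compatibility, the defining inclusion \eqref{eq: box less than diamond} of a conceptual approximation space, $R_\Box;R_{\blacksquare}\subseteq I$, need not be verified by hand: it is an immediate consequence of reflexivity via Corollary \ref{cor: sahlqvist consequences}.1. In the same vein, reflexivity delivers \emph{density} $R_\Box;R_\Box\subseteq R_\Box$ and $R_\Diamond;R_\Diamond\subseteq R_\Diamond$ for free, through Corollary \ref{cor: sahlqvist consequences}.5 and .6.

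The step I expect to be the main obstacle is \emph{transitivity}, i.e.\ the inclusions $R_\Box\subseteq R_\Box;R_\Box$ and $R_\Diamond\subseteq R_\Diamond;R_\Diamond$ of Definition \ref{def:terminology}. I stress that, unlike the conditions above, transitivity does \emph{not} follow from reflexivity — reflexivity only yields the reverse inclusion, density — so it calls for a hands-on computation of the $I$-composition. Using Definition \ref{def:relational composition} and Lemma \ref{lemma:comp4}, for $x\in Y$ one has $(R_\Box;R_\Box)^{(0)}[x] = R^{(0)}[I^{(1)}[R^{(0)}[x]]] = R^{(0)}[(B\cap x^{\downarrow})^{\uparrow}]$, and tracing an element $a\in R^{(0)}[x]$ through this expression reduces the goal $R_\Box\subseteq R_\Box;R_\Box$ to the closure-containment $(B\cap x^{\downarrow})^{\uparrow}\subseteq Y$ for every $x\in Y$ (with the symmetric reduction $(Y\cap a^{\uparrow})^{\downarrow}\subseteq B$ for $R_\Diamond$). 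This containment is the delicate heart of the statement: it is automatic when $(B,Y)$ is one-sided or closed — e.g.\ $Y = X$, or $B = A$ (where separatedness forces $x^{\downarrow\uparrow} = \{x\}$), or $(B,Y)$ a formal concept — but for \emph{arbitrary} $B$ and $Y$ the common features of $B\cap x^{\downarrow}$ can lie outside $Y$. I would therefore concentrate my effort here, checking carefully whether separatedness alone suffices to force $(B\cap x^{\downarrow})^{\uparrow}\subseteq Y$ in full generality, or whether transitivity must be replaced by density (which does hold for all $B,Y$) unless $(B,Y)$ is further constrained.
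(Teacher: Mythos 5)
Your outline tracks the paper's proof step for step (first $I$-compatibility, then reflexivity and symmetry directly from $R\subseteq I$ and $R_\Diamond = R^{-1}$, then transitivity via the $I$-composition), and your appeal to Corollary \ref{cor: sahlqvist consequences} for condition \eqref{eq: box less than diamond} and for density is sound. But the two steps you leave open --- stability of $B\cap x^{\downarrow}$ and the containment $(B\cap x^{\downarrow})^{\uparrow}\subseteq Y$ --- are genuine gaps, and they cannot be closed, because the lemma is false for arbitrary $B$ and $Y$. Where you correctly compute $R^{(0)}[x] = B\cap x^{\downarrow}$ for $x\in Y$, the paper's proof instead asserts $R^{(0)}[x] = I^{(0)}[x]$, an identity that holds only under the hidden hypothesis $x^{\downarrow}\subseteq B$ for all $x\in Y$; likewise its transitivity step $I^{(1)}[R^{(0)}[x]] = R^{(1)}[R^{(0)}[x]]\subseteq Y$ silently needs $a^{\uparrow}\subseteq Y$ for all $a\in B$. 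In effect the paper proves only the one-sided cases you flagged as unproblematic, and your refusal to assume that identity is exactly right. Note also that your hedge on compatibility (``separatedness \ldots is meant to guarantee that these intersections \ldots are again stable'') is not correct: separatedness makes every singleton and $\varnothing$ stable (it says precisely that $a^{\uparrow\downarrow}=\{a\}$ and $x^{\downarrow\uparrow}=\{x\}$), but not intersections with an arbitrary $B$.

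Both failures are witnessed by the separated polarity with $A = \{a_1,a_2,a_3,a_4\}$, $X = \{x_0,x_1,x_2,x_3\}$, $a_1^{\uparrow} = \{x_0,x_1\}$, $a_2^{\uparrow} = \{x_0,x_2\}$, $a_3^{\uparrow} = \{x_0,x_3\}$, $a_4^{\uparrow} = \{x_1,x_2,x_3\}$ (all object intents, and all feature extents $x_0^{\downarrow} = \{a_1,a_2,a_3\}$, $x_1^{\downarrow} = \{a_1,a_4\}$, $x_2^{\downarrow} = \{a_2,a_4\}$, $x_3^{\downarrow} = \{a_3,a_4\}$, are pairwise $\subseteq$-incomparable). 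First, for $B = \{a_1,a_2\}$ and $Y = X$ one gets $R^{(0)}[x_0] = \{a_1,a_2\}$, whose Galois closure is $x_0^{\downarrow} = \{a_1,a_2,a_3\}$; so $R$ is not $I$-compatible and $\mathbb{F}$ is not even an enriched formal context (this is the compatibility gap, and it arises even though $Y=X$ makes your transitivity obstruction vacuous). Second, for $B = \{a_1\}$ and $Y = \{x_0\}$ the relation $R = \{(a_1,x_0)\}$ \emph{is} $I$-compatible (only singletons and $\varnothing$ occur as sections), so $\mathbb{F}$ is a reflexive, symmetric conceptual approximation space, yet transitivity fails: $R^{(0)}[x_0] = \{a_1\}$, $I^{(1)}[\{a_1\}] = \{x_0,x_1\}$, hence $(R\,;R)^{(0)}[x_0] = R^{(0)}[\{x_0,x_1\}] = \varnothing$, i.e.\ $R\nsubseteq R\,;R$ --- exactly your obstruction $(B\cap x^{\downarrow})^{\uparrow}\nsubseteq Y$ realized. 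So the honest outcome of your analysis is not a completed proof but a corrected statement: the conclusion holds under additional hypotheses of the kind you list ($B=A$; or $Y=X$ together with $x^{\downarrow}\subseteq B$ for all $x\in Y$; or $(B,Y)$ a formal concept), and the paper's proof is valid only under those hypotheses as well.
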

\begin{proof}
For any $x\in X$, either $R^{(0)}[x] = I^{(0)}[x]$ if $x\in Y$, or  $R^{(0)}[x] = \varnothing$ if $x\notin Y$. In both cases, $R^{(0)}[x]$ is Galois-stable. Likewise, one shows that $R^{(1)}[a]$ is Galois-stable for any $a\in A$, which completes the proof that $\mathbb{F}$ is an enriched formal context. By construction, $\mathbb{F}$ is symmetric; moreover $R\subseteq I$, hence $\mathbb{F}$ is also reflexive. To show that $\mathbb{F}$ is transitive, it is enough to show that $R^{(0)}[x]\subseteq R^{(0)}[I^{(1)}[R^{(0)}[x]]]$ for every $x\in X$. We proceed by cases. If $x\notin Y$, then $R^{(0)}[x] = \varnothing$ and hence the inclusion holds.  If $x\in Y$, then $R^{(0)}[x] = I^{(0)}[x]\subseteq B$,  hence $I^{(1)}[R^{(0)}[x]] = R^{(1)}[R^{(0)}[x]]\subseteq Y$, hence $R^{(0)}[x]\subseteq R^{(0)}[R^{(1)}[R^{(0)}[x]]] =R^{(0)}[I^{(1)}[R^{(0)}[x]]]$, as required.
\end{proof}

Enriched formal contexts $\mathbb{F} = (A, X, I, R)$ such that $R = I\cap (B\times Y)$ for some $B\subseteq A$ and $Y\subseteq X$ naturally arise in connection with databases endowed with  {\em designated} sets of objects and/or features. Concrete examples of these abound: for instance, in databases of market products, subsets of features that are {\em relevant} for the decision-making of certain classes of consumers can naturally be modelled as designated subsets; in databases of words (as objects) and Twitter `tweets' (as features), {\em hashtags} can be grouped together so as to form designated subsets of objects; in  longitudinal datasets, objects and features first appearing after a certain point in {\em time} can form designated sets.      Under this intuitive understanding,  letting $R_{\Box}: = R$ and $R_{\Diamond}: = R^{-1}$, for any concept $c\in \mathbb{P}^+$,
\[\val{[R_{\Box}]c} =R^{(0)}[\descr{c}] = \{a\in A\mid \forall x(x\in \descr{c}\Rightarrow aRx)\}.\] That is, $[R_{\Box}]c$ is the empty category if $\descr{c}\nsubseteq Y$, and if  $\descr{c}\subseteq Y$, then the members  $[R_{\Box}]c$  are exactly those {\em designated} objects that  have all the features in the description of $c$, and hence $[R_{\Box}]c$ can be understood as the {\em designated restriction} of $c$.  Moreover, \[\descr{\langle R_{\Diamond}\rangle c}  = R^{(1)}[\val{c}] = \{x\in X\mid \forall a(a\in \val{c}\Rightarrow aRx)\}.\] That is, $\langle R_{\Diamond}\rangle c$ is the universal category if $\val{c}\nsubseteq B$, and if $\val{c}\subseteq B$, then $\langle R_{\Diamond}\rangle c$ is the concept described by the set of {\em designated} features that are common to all members of $c$, and hence, $\langle R_{\Diamond}\rangle c$ can be understood as the {\em designated enlargement} of $c$.

\subsection{Sub-delta and symmetric conceptual co-approximation space}
\label{ex:conceptual co-approx space}
Text databases can be modelled as formal contexts $\mathbb{P} = (A, X, I)$ such that $A$ is a set of documents,  $X$ is a set of words, and $aIx$ is understood as ``document $a$ has word $x$ as its keyword''. Formal concepts arising from such a database can be understood as {\em themes} or {\em topics}, intentionally described by Galois-stable sets of words. Consider an $I$-compatible relation $R\subseteq A\times X$, intuitively understood as $aRx$ iff ``document $a$ has word  $x$ or one of its synonyms as its keyword''. This understanding makes it reasonable to assume that $I\subseteq R$.
Then, letting $R_{\Box}: = R$ and $R_{\Diamond}: = R^{-1}$, for any concept (theme) $c\in \mathbb{P}^+$,
\[\val{[R_{\Box}]c} = R^{(0)}[\descr{c}] = \{a\in A\mid \forall x(x\in \descr{c}\Rightarrow aRx)\}.\] That is, the members of $[R_{\Box}]c$  are those documents the keywords of which include all the words describing topic $c$ or their synonyms. The assumption that $I \subseteq R$ implies that $\val{c} = I^{(0)}[\descr{c}]\subseteq R^{(0)}[\descr{c}] = \val{[R_{\Box}] c}$, hence $[R_{\Box}]c$ is a super-concept of $c$. Moreover, \[\descr{\langle R_{\Diamond}\rangle c}  = R^{(1)}[\val{c}] = \{x\in X\mid \forall a(a\in \val{c}\Rightarrow aRx)\}.\] That is, $\langle R_{\Diamond}\rangle c$ is the theme described by the set of common keywords  of all documents in  $c$ and their synonyms.   The assumption that $I\subseteq R$ implies that $\descr{c} = I^{(1)}[\val{c}]\subseteq R^{(1)}[\val{c}] = \descr{\langle R_{\Diamond}\rangle c}$, and hence $\langle R_{\Diamond}\rangle c$ is a sub-concept of $c$.
By construction, the enriched formal context $\mathbb{F} = (\mathbb{P}, R_\Box, R_\Diamond)$, with $R_{\Box}$ and $R_{\Diamond}$ as above
is a symmetric and sub-delta, hence transitive (cf.~Corollary \ref{cor: sahlqvist consequences}.5 and .6) conceptual co-approximation space.

\subsection{Sub-delta and denseness used normatively}
\label{ex:hospital}
To illustrate how sub-delta and denseness can also be used {\em normatively}, let $\mathbb{P} = (A, X, I)$ represent  a hospital, where $A$ is the set of patients, $X$ is the set of symptoms, and $aIx$ iff  ``patient $a$ has symptom $x$''. Concepts arising from this representation are {\em syndromes}, intensionally described by Galois-stable sets of symptoms. Let  $R\subseteq A\times X$ be an $I$-compatible relation the intuitive interpretation of which is $aRx$ iff ``according to the doctor, patient $a$ has symptom $x$'' or `` the doctor attributes symptom $x$ to patient $a$''. Under this interpretation, $R$ being {\em sub-delta} corresponds to the requirement that if a patient has a symptom, then the  doctor correctly attributes this symptom to the patient.  The relation $R$ being {\em dense} corresponds to the requirement that for any given patient $a$ and symptom $x$, if, according to the doctor,  $a$ has all the symptoms shared by all patients to whom he/she attributes symptom $x$, then  the doctor will also attribute $x$ to $a$.  So, under this interpretation, the denseness condition corresponds to a principled and grounded way of making attributions  {\em by default}. In conclusion, the first-order conditions of being sub-delta and dense capture  desirable requirements that doctors' judgments ought to have, and hence, as discussed in previous examples, in certain situations they can be useful to discriminate between  `good' and  `bad' approximations of $I$. By items 8--11 of Proposition \ref{lemma:correspondences},  the `good' approximations identified by these conditions can be  identified also in the language of the modal logic of concepts (cf.~Section \ref{sec:logics}).
\subsection{Conceptual bi-approximation spaces}
\label{ssec:bi-approx sp} 
%\marginnote{This needs to be further edited; also connection with Kent}
%Via complementation, $\mathbb{F} = (A, X, I, R)$ is  sub-delta iff   $\mathbb{F}^c = (A, X, I^c, R^c)$ is reflexive, and $\mathbb{F} = (A, X, I, R)$ is dense iff   $\mathbb{F}^c = (A, X, I^c, R^c)$ is transitive. \marginnote{I think so, but we need to check this} Hence, in a sense, one set of properties (sub-delta, denseness) can be encoded via the other (reflexivity, transitivity).However, we believe that the notion of co-approximation spaces deserves to be investigated in its own right, for the following reasons: firstly, because as discussed in the examples above, sub-delta and denseness capture properties and situations which are intuitive and interesting in their own right. Secondly, because under certain interpretations of $I$,  the  $I$-concepts make intuitive sense, but the $I^c$-concepts might very well not; hence having both approximation and co-approximation spaces rather than only approximation spaces might provide a more flexible handle to investigate a wider array of situations by connecting them with their most direct and intuitive representations. Thirdly,
In the previous subsections, we have discussed situations which are naturally formalized either by conceptual approximation spaces or by conceptual co-approximation spaces. However,
in certain cases we might want to work with enriched formal contexts $(A, X, I, R, S)$ such that the $I$-compatible relations $R$ and $S$ give rise to a co-approximation space and to an approximation space respectively. We will refer to these structures as {\em conceptual bi-approximation spaces}. %and we might want to develop the theory of their interaction. %; it would not be possible to develop such a theory on the basis of only one of the two notions.
Examples of conceptual bi-approximation spaces naturally arise in connection with Kent's rough formal contexts \cite{kent1996rough}, as is discussed in what follows (a more detailed presentation can be found in \cite[Section 3]{ICLA2019}).

{\em Rough formal contexts}, introduced by Kent in \cite{kent1996rough} as structures synthesizing approximation spaces and formal contexts, are tuples $\mathbb{G} = (\mathbb{P}, E)$ such that $\mathbb{P} = (A, X, I)$ is a polarity, and $E\subseteq A\times A$ is an equivalence relation. For every $a\in A$ we let $(a)_E: = \{b\in A\mid aEb\}$.
The relation $E$ induces two  relations $R, S\subseteq A\times I$  approximating $I$, defined as follows: for every $a\in A$ and $x\in X$,
\begin{equation}\label{eq:lax approx}
aRx \, \mbox{ iff }\, bIx \mbox{ for some } b\in (a)_E;
\quad\quad\quad\quad
aSx \, \mbox{ iff }\, bIx \mbox{ for all } b\in (a)_E.
%S^{(0)}[x]: = \{a\in A\mid x\in I^{(1)[(a)_E]}\} = \{a\in A\mid \forall b(aEb\Rightarrow bIx)\}= \{a\in A\mid (a)_E \subseteq I^{(0)}[x]\}.
\end{equation}
By definition,  $R$ and $S$ are $E$-{\em definable} (i.e.~$R^{(0)}[x] = \bigcup_{aRx}(a)_E$ and $S^{(0)}[x]= \bigcup_{aSx}(a)_E$ for any $x\in X$), and $E$ being reflexive immediately implies that $S\subseteq I$ and $I\subseteq R$.
Intuitively,  $R$ can be understood as the {\em lax} version of $I$ determined by $E$, and $S$ as its {\em strict} version determined by $E$.
Let us assume that $R$ and $S$ are $I$-compatible. This assumption does not imply that
$E$ is $I$-compatible: indeed, let $\mathbb{G} = (\mathbb{P}, \Delta)$ for any polarity $\mathbb{P}$ such that not all singleton sets of objects are Galois-stable. Hence, $E = \Delta$ is not $I$-compatible, but  $E = \Delta$ implies that $R = S = I$, hence $R$ and $S$ are $I$-compatible.
However, under the assumption that  $E$ is also $I$-compatible\footnote{In \cite{ICLA2019}, rough formal contexts $\mathbb{G}$ as above such that $R$, $S$ and $E$ are $I$-compatible are referred to as {\em amenable} rough formal contexts, cf.~Definition 4 therein.}, the following inclusions hold (cf.~\cite[Lemma 3]{ICLA2019})
\[R;R\subseteq R \quad \mbox{ and }\quad S\subseteq S; S.\]
Hence, rough formal contexts $\mathbb{G} = (\mathbb{P}, E)$ such that $E$, $R$ and $S$ are $I$-compatible give naturally rise to conceptual bi-approximation spaces the  `approximation reduct' of which is reflexive and transitive, and `co-approximation reduct' of which is  sub-delta and dense.

\subsection{Expanded modal language of Kent's rough formal contexts}
\label{ssec:modal expansion kent}
%\label{ssec:kent and adding negations}\marginnote{discuss negations in FCA and Kwuida Léonard \cite{duntsch2017discrete}, see gmail}
%As discussed in the previous subsection, rough formal contexts such that $E$, $R$ and $S$ are $I$-compatible are natural examples of enriched formal contexts in which several relations interact with each other.
Related to what was discussed in the previous subsection and in Section \ref{ssec: extensions and expansions}, in \cite{ICLA2019}, a modal logic is introduced for rough formal contexts $\mathbb{G} = (\mathbb{P}, E)$ such that $E$, $R$ and $S$ are $I$-compatible. The modal signature of this logic is $\{\Box_s, \Diamond_s, \Box_\ell, \Diamond_\ell\}$, where $\Box_s $ and $\Diamond_s$  (resp.~$\Box_\ell$ and $\Diamond_\ell$) are interpreted using $S$ and $S^{-1}$ (resp.~$R$ and $R^{-1}$). This language can be expanded with negative modal connectives ${\rhd}$ and ${\blacktriangleright}$ interpreted on rough formal contexts via the completely join-reversing operators $[E\rangle$ and $[E^{-1}\rangle$ on $\mathbb{P}^+$ arising from $E\subseteq A\times A$ (cf.~Section \ref{ssec: extensions and expansions}), and  such that for all $c, d\in \mathbb{P}^+$,
\[c\leq [E\rangle d\quad \mbox{ iff }\quad d\leq [E^{-1}\rangle c. \]
Also for ${\rhd}$ and ${\blacktriangleright}$, correspondence-type results similar to those of Proposition \ref{lemma:correspondences} hold, such as the following:
\begin{proposition}
\label{lemma:correspondence right triangle}
For any enriched formal context $\mathbb{F} = (\mathbb{P}, R_{\rhd})$:
\begin{enumerate}
\item  $\mathbb{F}\models \phi\vdash {\rhd}{\rhd}\phi\quad $ iff $\quad R_{\rhd} = R_{\blacktriangleright}$.
\item  $\mathbb{F}\models \top\leq {\rhd}\top\quad $ iff $\quad R_{\rhd} = A\times A$.
\item $\mathbb{F}\models \phi\wedge {\rhd}\phi\vdash \bot \quad $ iff $\quad \forall a[a\in  R_{\rhd}^{(0)}[a]\ \Rightarrow \ a\in  X^{\downarrow}]$.
\item $\mathbb{F}\models {\rhd}\phi\wedge {\rhd}{\rhd}\phi\vdash \bot \quad $ iff $\quad \forall a[a\in  R_{\rhd}^{(0)}[a]\ \Rightarrow \ A\subseteq R_{\rhd}^{(0)}[a]]$.
\end{enumerate}
\end{proposition}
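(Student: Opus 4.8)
The plan is to prove all four correspondences by the route used for Proposition \ref{lemma:correspondences}: each of these principles falls within the Sahlqvist (more generally, inductive) class for which ALBA succeeds, so I would run ALBA to reduce the universally quantified algebraic inequality to a pure quasi-inequality over the completely join-generating set $J = \{(a^{\uparrow\downarrow}, a^{\uparrow}) \mid a \in A\}$ and the completely meet-generating set $M$, and then translate the result into a first-order condition on $\mathbb{F}$ using $\val{{\rhd}\phi} = R_{\rhd}^{(0)}[\val{\phi}]$ together with the $I$-compatibility of $R_{\rhd}\subseteq A\times A$ (Lemma \ref{equivalents of I-compatible}). The one new ingredient I would set up first is the adjunction for the negative modality: writing ${\blacktriangleright} = [R_{\blacktriangleright}\rangle$ with $R_{\blacktriangleright} = R_{\rhd}^{-1}$, a short computation with Lemma \ref{lemma: basic}.2 gives $c \le {\rhd}d$ iff $d \le {\blacktriangleright}c$, so that ${\rhd}$ and ${\blacktriangleright}$ form a Galois connection. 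This is exactly the adjunction ALBA needs to eliminate $p$ from a negative occurrence, and it also records $\val{{\blacktriangleright}(a^{\uparrow\downarrow}, a^{\uparrow})} = R_{\rhd}^{(1)}[a^{\uparrow\downarrow}] = R_{\rhd}^{(1)}[\{a\}]$ after collapsing the closure of the singleton via $I$-compatibility.

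Items 2 and 1 are the easy ones. For item 2, $\top \le {\rhd}\top$ unwinds directly to $A = \val{\top} \subseteq R_{\rhd}^{(0)}[A]$, and since $R_{\rhd}^{(0)}[A] = \{a \mid \forall b\ aR_{\rhd}b\}$ this says precisely $R_{\rhd} = A\times A$; no minimal valuation is needed. For item 1 I would either run ALBA (first approximation, then substitute the lower bound $j$ for $p$ in the monotone term ${\rhd}{\rhd}p$, reducing to $\forall j\,[j \le {\rhd}{\rhd}j]$) or argue semantically: the requirement $\val{c} \subseteq R_{\rhd}^{(0)}[R_{\rhd}^{(0)}[\val{c}]]$ for every concept $c$ is equivalent, via $R_{\rhd}^{(0)}[a^{\uparrow\downarrow}] = R_{\rhd}^{(0)}[\{a\}]$ and by letting $\val{c}$ range over the generators $a^{\uparrow\downarrow}$, to symmetry of $R_{\rhd}$, i.e. $R_{\rhd} = R_{\rhd}^{-1} = R_{\blacktriangleright}$. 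Here the closure-operator identity ${\rhd}{\blacktriangleright} \ge \mathrm{id}$ explains why symmetry, which forces ${\rhd}{\rhd} = {\rhd}{\blacktriangleright}$, is what is required.

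Items 3 and 4 are the $\vdash\bot$ principles and share a pattern: after first approximation an inequality $\alpha \le \bot$ becomes $\forall p \forall j\,[j \le \alpha \Rightarrow j \le \bot]$, where $j \le \bot$ translates into $a \in X^{\downarrow}$. For item 3, $\alpha = \phi \wedge {\rhd}\phi$, so the antecedent is $j \le \phi$ and $j \le {\rhd}\phi$; substituting the lower bound and using $I$-compatibility reduces the condition to ``every $R_{\rhd}$-reflexive object lies in $X^{\downarrow}$'', i.e. $\forall a\,[a \in R_{\rhd}^{(0)}[a] \Rightarrow a \in X^{\downarrow}]$. I can also see this directly: if $a \in \val{c} \cap R_{\rhd}^{(0)}[\val{c}]$ then $aR_{\rhd}a$, and conversely the generator $\val{c} = a^{\uparrow\downarrow}$ places every reflexive $a$ inside $\val{c} \cap R_{\rhd}^{(0)}[\val{c}]$, so the two conditions match.

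The main obstacle is item 4. Here $\alpha = {\rhd}\phi \wedge {\rhd}{\rhd}\phi$, and $p$ occurs with \emph{mixed} polarity, negatively in ${\rhd}\phi$ and positively in ${\rhd}{\rhd}\phi$; this is exactly what makes the adjunction of the first paragraph indispensable. From $j \le {\rhd}\phi$ I obtain the upper bound $p \le {\blacktriangleright}j$, which I substitute into the remaining positive occurrence by right-handed Ackermann, yielding the pure quasi-inequality $\forall j\,[\,j \le {\rhd}{\rhd}{\blacktriangleright}j \Rightarrow j \le \bot\,]$. The delicate, and genuinely hardest, step is then to translate this into a usable first-order condition: one must insert $\val{{\blacktriangleright}j} = R_{\rhd}^{(1)}[\{a\}]$, unfold $R_{\rhd}^{(0)}[R_{\rhd}^{(0)}[R_{\rhd}^{(1)}[\{a\}]]]$, and use $I$-compatibility to collapse closures of singletons. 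At this point I would have to verify carefully, by tracking the reflexive objects $a \in R_{\rhd}^{(0)}[a]$ and the fibres $R_{\rhd}^{(0)}[a]$, that the reduced condition agrees with the displayed one $\forall a\,[a \in R_{\rhd}^{(0)}[a] \Rightarrow A \subseteq R_{\rhd}^{(0)}[a]]$; this reflexive-point bookkeeping — rather than the ALBA run itself — is where I anticipate the real work and the greatest risk of a hidden subtlety, and I would cross-check it against small worked examples before committing to the final form of the condition.
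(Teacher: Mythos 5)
Your treatment of items 1--3 is essentially the paper's: the appendix also runs ALBA using the Galois connection $c\le{\rhd}d$ iff $d\le{\blacktriangleright}c$ and then collapses closures of singletons via $I$-compatibility; your reduction of item 1 to $\forall j[j\le{\rhd}{\rhd}j]$ is a harmless variant of the paper's $\forall i[{\blacktriangleright}i\le{\rhd}i]$ (both translate to symmetry of $R_{\rhd}$), and your direct unwinding of item 2 is fine.

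The gap is exactly where you feared it: in item 4 the deferred ``reflexive-point bookkeeping'' cannot be made to work, because the pure quasi-inequality you correctly derive, $\forall j[j\le{\rhd}{\rhd}{\blacktriangleright}j\Rightarrow j\le\bot]$, is \emph{not} equivalent to the displayed first-order condition. Concretely, take $\mathbb{F}=(\mathbb{P}_S,R_{\rhd})$ with $S=\{1,2\}$ and $R_{\rhd}=S_A\times S_A$: then $R_{\rhd}^{(0)}[a]=S_A$ for every $a$, so $R_{\rhd}$ is $I$-compatible and the displayed condition holds, but ${\rhd}c=\top$ for every concept $c$, so ${\rhd}\top\wedge{\rhd}{\rhd}\top=\top\not\le\bot$ and the stated axiom fails. (Indeed, any $\mathbb{F}$ validating item 2's $\top\vdash{\rhd}\top$ with $\top\neq\bot$ refutes item 4 as stated while trivially satisfying its first-order condition.) The reason your plan stalls is that the statement of item 4 does not match the paper's own proof: the appendix computation starts from $\forall p[\top\le{\rhd}(p\wedge{\rhd}p)]$, not from ${\rhd}p\wedge{\rhd}{\rhd}p\le\bot$ --- and these are genuinely different formulas, since ${\rhd}$ turns joins into meets, so ${\rhd}p\wedge{\rhd}{\rhd}p={\rhd}(p\vee{\rhd}p)$, not ${\rhd}(p\wedge{\rhd}p)$. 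For $\top\vdash{\rhd}(\phi\wedge{\rhd}\phi)$ the ALBA run is the same as your item 3 except for the consequent, reducing to $\forall j[j\le{\rhd}j\Rightarrow\top\le{\rhd}j]$, which does translate to $\forall a[a\in R_{\rhd}^{(0)}[a]\Rightarrow A\subseteq R_{\rhd}^{(0)}[a]]$. So to complete item 4 you should either prove the correspondence for $\top\vdash{\rhd}(\phi\wedge{\rhd}\phi)$, or keep the stated axiom and replace the right-hand side by the translation of your quasi-inequality, namely $\forall a[a\in R_{\rhd}^{(0)}[R_{\rhd}^{(0)}[R_{\rhd}^{(1)}[a]]]\Rightarrow a\in X^{\downarrow}]$.
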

The proof of the proposition above can be found in Appendix \ref{sec:correspondence}.
The lemma below shows that, on rough formal contexts such that $E$, $R$ and $S$ are $I$-compatible, the condition expressing the reflexivity (resp.~transitivity) of $E$ is equivalent to the condition expressing the  reflexivity (resp.~transitivity) of $S$. Since the latter conditions are modally definable as $\mathcal{L}_{\Box_s, \Diamond_s}$-inequalities, these results are preliminary to completely axiomatize the modal logic of rough formal contexts (more on this in Section \ref{sec:Conclusions}). %\marginnote{mention this in the conclusions, mention the Goldblatt-Thomason}

\begin{lemma}
\label{lemma:kent structures correpondence}
For any enriched formal context  $\mathbb{G} = (\mathbb{P}, R_{\rhd})$, let $S\subseteq A\times X$ denote the lower approximation of $I$ induced by $R_{\rhd}\subseteq A\times A$. If $S$ is $I$-compatible, then
\begin{enumerate}
\item $\Delta\subseteq R_{\rhd}\quad$ iff $\quad S\subseteq I$;
\item $R_{\rhd}\circ R_{\rhd}\subseteq R_{\rhd}\quad$ iff $\quad S\subseteq S\, ;\, S$.
\end{enumerate}
\end{lemma}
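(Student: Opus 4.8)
The plan is to translate both equivalences into statements about the operator $I^{(1)}[\,\cdot\,]$ applied to the successor sets of $R_{\rhd}$, and then to read off the first-order conditions on $R_{\rhd}$ by applying $I^{(0)}$ and collapsing the resulting Galois closures. The starting point is to unfold the definition of the lower approximation (the strict relation in \eqref{eq:lax approx} with $E$ replaced by $R_{\rhd}$): for every $a\in A$ one has $a S x$ iff $b I x$ for all $b\in R_{\rhd}[a]$, that is, $S^{(1)}[a] = I^{(1)}[R_{\rhd}[a]]$, where $R_{\rhd}[a] = R_{\rhd}^{(1)}[a]$. Two facts will be used repeatedly: since $\mathbb{G}$ is an enriched formal context, $R_{\rhd}$ is $I$-compatible, so each $R_{\rhd}[a]$ is Galois-stable, i.e.\ $(R_{\rhd}[a])^{\uparrow\downarrow} = R_{\rhd}[a]$; and, by hypothesis, $S$ is $I$-compatible, which is what makes the $I$-composition $S\, ;\, S$ amenable to the reformulation of Lemma \ref{lemma:comp4}.

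For item 1, the forward direction is immediate: if $\Delta\subseteq R_{\rhd}$ then $a\in R_{\rhd}[a]$ for every $a$, so antitonicity of $I^{(1)}$ (Lemma \ref{lemma: basic}.1) gives $S^{(1)}[a] = I^{(1)}[R_{\rhd}[a]]\subseteq I^{(1)}[a]$, i.e.\ $S\subseteq I$. For the converse, I would assume $S\subseteq I$, so $I^{(1)}[R_{\rhd}[a]]\subseteq I^{(1)}[a]$ for every $a$; applying the antitone map $I^{(0)}$ and using Lemma \ref{lemma: basic}.3 yields $a\in a^{\uparrow\downarrow}\subseteq (R_{\rhd}[a])^{\uparrow\downarrow}$, and the latter set equals $R_{\rhd}[a]$ by Galois-stability, whence $a R_{\rhd} a$. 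This is exactly where the $I$-compatibility of $R_{\rhd}$ is indispensable: without it the closure $(R_{\rhd}[a])^{\uparrow\downarrow}$ cannot be discarded and reflexivity need not follow.

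For item 2, I would first compute $S\, ;\, S$ at singletons. By item~1 of Lemma \ref{lemma:comp4} (legitimate since $S$ is $I$-compatible), $(S\, ;\, S)^{(1)}[a] = S^{(1)}[I^{(0)}[S^{(1)}[a]]]$. Now $I^{(0)}[S^{(1)}[a]] = I^{(0)}[I^{(1)}[R_{\rhd}[a]]] = (R_{\rhd}[a])^{\uparrow\downarrow} = R_{\rhd}[a]$, again by Galois-stability of $R_{\rhd}[a]$, and unfolding $S^{(1)}$ on this set (using Lemma \ref{lemma: basic}.5 to turn the pointwise intersection into $I^{(1)}$ of a union of images) gives $(S\, ;\, S)^{(1)}[a] = I^{(1)}[(R_{\rhd}\circ R_{\rhd})[a]]$. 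Hence $S\subseteq S\, ;\, S$ holds iff $I^{(1)}[R_{\rhd}[a]]\subseteq I^{(1)}[(R_{\rhd}\circ R_{\rhd})[a]]$ for all $a$. The forward direction then follows from $R_{\rhd}\circ R_{\rhd}\subseteq R_{\rhd}$ by antitonicity of $I^{(1)}$; and the converse follows by applying $I^{(0)}$ exactly as in item 1, using $(R_{\rhd}[a])^{\uparrow\downarrow}=R_{\rhd}[a]$ to obtain $(R_{\rhd}\circ R_{\rhd})[a]\subseteq R_{\rhd}[a]$ for every $a$, i.e.\ $R_{\rhd}\circ R_{\rhd}\subseteq R_{\rhd}$.

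The routine parts are the monotonicity and Galois bookkeeping; the step that deserves care, and the main obstacle, is the explicit simplification of $(S\, ;\, S)^{(1)}[a]$ to $I^{(1)}[(R_{\rhd}\circ R_{\rhd})[a]]$. It requires combining the reformulation of $I$-composition (Lemma \ref{lemma:comp4}, which needs $S$ to be $I$-compatible), the collapse of the closure on $R_{\rhd}[a]$ (which needs $R_{\rhd}$ to be $I$-compatible), and the identification of the iterated image $R_{\rhd}[R_{\rhd}[a]]$ with the ordinary relational composite $(R_{\rhd}\circ R_{\rhd})[a]$. Both $I$-compatibility assumptions enter essentially at this point, and keeping the two composition notions distinct --- ordinary composition $\circ$ on $A\times A$ versus the $I$-composition $\, ;\,$ on $A\times X$ --- is the one place where an error is easy to make.
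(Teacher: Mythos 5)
Your proof is correct, and it takes a genuinely different route from the paper's. The paper proves both items by passing through the modal operators on the concept lattice: each side of each equivalence is rewritten as an inequality between concepts generated by elements of $A$ and $X$, and the two sides are shown to reduce to a common modal condition --- $\forall j\,[j\leq {\rhd}j]$ for item 1 (using the adjunction-like equivalence $j\leq {\blacktriangleright}j$ iff $j\leq{\rhd}j$), and $\forall j\,[{\rhd}j\leq {\rhd}{\blacktriangleright}j]$ for item 2, the latter reached by two separate chains of equivalences, one starting from $R_{\rhd}\circ R_{\rhd}\subseteq R_{\rhd}$ and one from $S\subseteq S\,;\,S$. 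You instead stay entirely at the level of relations and the Galois connection: the identity $S^{(1)}[a]=I^{(1)}[R_{\rhd}[a]]$ and, more importantly, the identity $(S\,;\,S)^{(1)}[a]=I^{(1)}[(R_{\rhd}\circ R_{\rhd})[a]]$ (which is not isolated anywhere in the paper) reduce both items to a single pattern: compare two sets of the form $I^{(1)}[B]$, apply $I^{(0)}$, and collapse the resulting closure using $(R_{\rhd}[a])^{\uparrow\downarrow}=R_{\rhd}[a]$. Your use of the two compatibility hypotheses is exactly right: the $I$-compatibility of $R_{\rhd}$ (which holds because $\mathbb{G}$ is an enriched formal context) is what licenses discarding the closures, and the $I$-compatibility of $S$ is what licenses the reformulation of $S\,;\,S$ via Lemma \ref{lemma:comp4}; note also that in the converse direction of item 2 you only need $B\subseteq B^{\uparrow\downarrow}$ for $B=(R_{\rhd}\circ R_{\rhd})[a]$, not its stability, so nothing further is assumed. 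What the paper's approach buys is uniformity with Proposition \ref{lemma:correspondences} and the surrounding correspondence machinery (both first-order conditions are exhibited as correspondents of one and the same modal axiom, which matters for the axiomatization programme); what yours buys is a shorter, self-contained argument in which the role of each hypothesis is transparent.
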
		
The proof of the lemma above can be found in Appendix \ref{sec:correspondence}.

\subsection{Expanded modal language with negative modalities}
\label{ssec:examples triangles}

In the present section, we discuss a situation which can be modelled by polarity-based structures supporting the semantics of negative modalities, similarly to the structures discussed in the previous subsection. Let $\mathbb{P} = (A, X, I)$  where $A$ is the set of producers (companies), $X$ is the set of products, and $aIx$ iff  ``company $a$ produces $x$''. %Concepts arising from this representation are {\em multi-product},\marginnote{is this name okay? if so, we need to cite Hannan appropriately} intensionally described by Galois-stable sets of products. 
Then, if $a$ and $b$ are companies, $b$ is a {\em total competitor} of $a$ (notation: $a T b$)  iff $a^{\uparrow}\subseteq b^{\uparrow}$, and are {\em relative competitors} (notation: $a R b$) if $b^{\uparrow}\cap a^{\uparrow}\neq \varnothing$. By definition, $T$ is an equivalence relation, while $R$ is symmetric. When $T, R\subseteq A\times A$ are $I$-compatible relations, they give rise to the completely join-reversing operators $[T\rangle$ and $[R\rangle$ on $\mathbb{P}^+$ such that, for every  concept $c\in \mathbb{P}^+$,
\[\val{[T\rangle c} = T^{(0)}[\val{c}] = \{a\in A\mid \forall b(b\in \val{c}\Rightarrow aTb)\}\quad \mbox{ and } \quad \val{[R\rangle c} = R^{(0)}[\val{c}] = \{a\in A\mid \forall b(b\in \val{c}\Rightarrow aRb)\}.\] That is, the members of $[T\rangle c$ (resp.~$[R\rangle c$)  are those companies which are total (resp.~relative) competitors of each company in $c$. It is interesting to notice that, while in Kent's structures the relation $E\subseteq A\times A$ gives rise to the two relations $R, S\subseteq A\times X$ approximating $I$, in the present case it is $I\subseteq A\times X$ which gives rise to two relations $R, S\subseteq A\times A$.  

\section{Conceptual rough algebras}
\label{sec:conceptual rough algebras}
Proposition \ref{lemma:correspondences} provides us with a modular link between logical axioms and subclasses of enriched formal contexts. This link can be extended to classes of algebras. In the present section, we introduce  the classes of algebras that are the `conceptual' counterparts of the classes of rough algebras listed in Section \ref{ssec:rough algebras}, in the sense that they can be understood as abstract representations of  conceptual approximation and co-approximation spaces, in the same way in which rough algebras are to approximation spaces.  Algebras based on ideas developed in the present section have been introduced in \cite{ICLA2019} motivated by the algebraic and proof-theoretic development of the logic of Kent's rough formal contexts (cf.~Section \ref{ssec:modal expansion kent}). 
\begin{definition}
\label{def:conceptual rough algebra}
A {\em conceptual (co-)rough algebra} is a structure $\mathbb{A} = (\mathbb{L}, \Box, \Diamond)$ such that $\mathbb{L}$ is a complete lattice, and $\Box$ and $\Diamond$ are unary operations on $\mathbb{L}$ such that $\Box$ is completely meet-preserving, $\Diamond$ is completely join-preserving, and for any $a\in\mathbb{L}$,  \begin{equation}\label{eq:basic}\Box a\leq \Diamond a \quad(\mbox{resp. } \Diamond a\leq \Box  a).\end{equation}
A conceptual rough algebra $\mathbb{A}$ as above is
{\em reflexive} if for any $a\in\mathbb{L}$,  \begin{equation}\label{eq:reflexive}\Box a\leq  a\quad \mbox{ and }\quad a\leq \Diamond a,\end{equation}
and is  {\em transitive} if for any $a\in\mathbb{L}$,  \begin{equation}\label{eq:transitive}\Box a\leq  \Box\Box a\quad \mbox{ and }\quad \Diamond\Diamond a\leq \Diamond a.\end{equation}
A conceptual co-rough algebra $\mathbb{A}$ as above is
 {\em sub-delta} if for any $a\in\mathbb{L}$,  \begin{equation}\label{eq:sub-d} a\leq \Box a\quad \mbox{ and }\quad \Diamond a\leq  a,\end{equation}
and is {\em dense} if for any $a\in\mathbb{L}$,  \begin{equation}\label{eq:dense}\Box\Box a\leq  \Box a\quad \mbox{ and }\quad \Diamond a\leq \Diamond\Diamond a.\end{equation}
Finally, A conceptual (co-)rough algebra $\mathbb{A}$ as above is {\em symmetric} if for any $a\in\mathbb{L}$,  \begin{equation}\label{eq:symmetric} a\leq  \Box\Diamond a\quad \mbox{ and }\quad \Diamond\Box a\leq  a.\end{equation}

We let $\mathsf{RA}^+$ (resp.~$\mathsf{CA}^+$) denote the class of conceptual rough algebras (resp.~co-rough algebras). %(resp.~$\mathsf{RRA}^+$, $\mathsf{TRA}^+$, $\mathsf{SRA}^+$, $\mathsf{RTRA}^+$ etc) denote the class of conceptual rough algebras (resp.~reflexive conceptual rough algebras, transitive conceptual rough algebras, symmetric conceptual rough algebras, reflexive and transitive conceptual rough algebras, etc).
\end{definition}
%In what follows, we let $\mathsf{K}^+$ denote any of the classes of algebras defined above.
\begin{proposition}
If $\mathbb{A} =  (\mathbb{L}, \Box, \Diamond)\in \mathsf{RA}^+\cup \mathsf{CA}^+$, then $\mathbb{A}\cong \mathbb{F}^+$ for some conceptual (co-)approximation space $\mathbb{F}$ such that $\mathbb{A}$ validates any of the inequalities \eqref{eq:basic}-\eqref{eq:dense} iff $\mathbb{F}$ does.
\end{proposition}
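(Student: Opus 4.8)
The plan is to obtain $\mathbb{F}$ directly from the Expanded Birkhoff theorem (Theorem \ref{thm:expandedBirkhoff}) and then to read off every equivalence from the Sahlqvist correspondence results of Proposition \ref{lemma:correspondences}. First I would observe that any $\mathbb{A} = (\mathbb{L}, \Box, \Diamond) \in \mathsf{RA}^+ \cup \mathsf{CA}^+$ is, by Definition \ref{def:conceptual rough algebra}, a complete modal algebra: $\mathbb{L}$ is complete, $\Box$ is completely meet-preserving and $\Diamond$ is completely join-preserving. Hence Theorem \ref{thm:expandedBirkhoff} yields an enriched formal context $\mathbb{F} = (\mathbb{P}, R_\Box, R_\Diamond)$, with $\mathbb{P} := (L, L, \leq)$ and $R_\Box, R_\Diamond$ defined as there, such that $\mathbb{A} \cong \mathbb{F}^+$. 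This is the $\mathbb{F}$ we use throughout.

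The central bridge I would record next is that, for any sequent $\phi \vdash \psi$ in the language $\mathcal{L}$ of Section \ref{sec:logics}, validity on $\mathbb{F}$ coincides with the corresponding universally quantified algebraic inequality holding in $\mathbb{F}^+$: since a valuation on $\mathbb{F}$ is precisely an assignment of elements of $\mathbb{P}^+ = \mathbb{F}^+$ to the propositional variables, $\mathbb{F} \models \phi \vdash \psi$ says exactly that the term inequality $\phi \leq \psi$ holds for all values of its variables in $\mathbb{F}^+$. Transporting this along the isomorphism $\mathbb{A} \cong \mathbb{F}^+$, each of the defining inequalities \eqref{eq:basic}--\eqref{eq:dense} holds in $\mathbb{A}$ iff the matching sequent holds on $\mathbb{F}$. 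Concretely, \eqref{eq:basic} ($\Box a \leq \Diamond a$, resp.~$\Diamond a \leq \Box a$) corresponds to $\mathbb{F} \models \Box\phi \vdash \Diamond\phi$ (resp.~$\Diamond\phi \vdash \Box\phi$); the two conjuncts of \eqref{eq:reflexive} correspond to $\Box\phi \vdash \phi$ and $\phi \vdash \Diamond\phi$; those of \eqref{eq:transitive} to $\Box\phi \vdash \Box\Box\phi$ and $\Diamond\Diamond\phi \vdash \Diamond\phi$; those of \eqref{eq:sub-d} to $\phi \vdash \Box\phi$ and $\Diamond\phi \vdash \phi$; and those of \eqref{eq:dense} to $\Box\Box\phi \vdash \Box\phi$ and $\Diamond\phi \vdash \Diamond\Diamond\phi$.

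Finally I would invoke Proposition \ref{lemma:correspondences} to convert each of these sequents into the first-order condition on $\mathbb{F}$ defining the relevant refinement. Since $\mathbb{A} \in \mathsf{RA}^+$ satisfies \eqref{eq:basic}, item 1 gives $R_\Box; R_\blacksquare \subseteq I$, so $\mathbb{F}$ is a conceptual approximation space (cf.~Definition \ref{def:conceptual approx space}); dually, if $\mathbb{A} \in \mathsf{CA}^+$, item 12 gives $I \subseteq R_\blacksquare; R_\Box$, so $\mathbb{F}$ is a conceptual co-approximation space (cf.~Definition \ref{def:conceptual co-approx space}). The remaining equivalences then read off item by item: items 2 and 3 turn \eqref{eq:reflexive} into reflexivity of $\mathbb{F}$, items 4 and 5 turn \eqref{eq:transitive} into transitivity, items 8 and 9 turn \eqref{eq:sub-d} into the sub-delta condition, and items 10 and 11 turn \eqref{eq:dense} into denseness. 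Combined with the bridge above, this yields in each case that $\mathbb{A}$ validates the inequality iff $\mathbb{F}$ does, as required. The only point demanding care is this bridge: one must check cleanly that validity of a sequent on the relational structure $\mathbb{F}$ matches the universally quantified algebraic inequality in $\mathbb{F}^+$, and that the isomorphism supplied by Theorem \ref{thm:expandedBirkhoff} is an isomorphism of the full modal signature $(\mathbb{L}, \Box, \Diamond)$---so that $\Box$ and $\Diamond$ are genuinely identified with $[R_\Box]$ and $\langle R_\Diamond\rangle$---and not merely of the underlying lattices; everything else is a direct application of the correspondence dictionary already established in Proposition \ref{lemma:correspondences}.
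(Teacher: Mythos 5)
Your proposal is correct and follows essentially the same route as the paper: both obtain $\mathbb{F}$ from a Birkhoff-style representation of the complete modal algebra and then transfer the validity of the inequalities \eqref{eq:basic}--\eqref{eq:dense} along the isomorphism $\mathbb{A}\cong\mathbb{F}^+$, with Proposition \ref{lemma:correspondences} supplying the first-order conditions witnessing that $\mathbb{F}$ is a conceptual (co-)approximation space. The only cosmetic difference is that the paper re-derives the representation explicitly (starting from an arbitrary polarity with $\mathbb{L}\cong\mathbb{P}^+$, defining $R_\Box$ and $R_\Diamond$ via the join- and meet-generators, and verifying $I$-compatibility and $[R_\Box]\cong\Box$ by hand), whereas you delegate exactly this work to Theorem \ref{thm:expandedBirkhoff}; your explicit check that the isomorphism respects the full modal signature is precisely the content the paper verifies there.
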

\begin{proof}
%we let  $\mathbf{a} = (a^{\uparrow\downarrow}, a^{\uparrow})$ and $\mathbf{x} = (x^{\downarrow}, x^{\downarrow\uparrow})$ for any $a\in A$ and $x\in X$
By assumption, $\mathbb{L}$ is a complete lattice, hence $\mathbb{L}\cong\mathbb{P}^+$ for some polarity $\mathbb{P} = (A, X, I)$ (cf.~Theorem \ref{thm:Birkhoff}). For any $a\in A$ and $x\in X$, let $\mathbf{a}: = (a^{\uparrow\downarrow}, a^{\uparrow})\in \mathbb{P}^+\cong \mathbb{L}$ and $\mathbf{x}: = (x^{\downarrow}, x^{\downarrow\uparrow})\in \mathbb{P}^+\cong \mathbb{L}$. Let $\mathbb{F}: = (\mathbb{P}, R_{\Box}, R_{\Diamond})$, where $R_{\Box}\subseteq A\times X$ and $R_{\Diamond}\subseteq X\times A$ are defined as follows: for every $a\in A$ and $x\in X$,
\[R_{\Box}^{(0)}[x]: =\{b\in A\mid \mathbf{b}\leq \Box \mathbf{x}\} \cong \val{\Box \mathbf{x}}\quad \mbox{ and }\quad R_{\Diamond}^{(0)}[a]: =\{y\in X\mid \Diamond\mathbf{a}\leq \mathbf{y}\} \cong \descr{\Diamond\mathbf{a}}. \]
Then, recalling that $A$ join-generates $\mathbb{P}^+$ identified with $\mathbb{L}$, and  $X$ meet-generates $\mathbb{P}^+$ identified with $\mathbb{L}$, the definition above immediately implies that  $R_{\Box}$ and $R_{\Diamond}$ are $I$-compatible.
Moreover,  for every $x\in X$,
\[\val{[R_{\Box}]\mathbf{x}} = R_{\Box}^{(0)}[\descr{\mathbf{x}}] = R_{\Box}^{(0)}[x^{\downarrow\uparrow}] = R_{\Box}^{(0)}[x]\cong \val{\Box \mathbf{x}}, \]
which is enough to prove that $[R_\Box] \cong \Box$, since both operations are completely meet-preserving and $X$ meet-generates $\mathbb{P}^+$.
Analogously, one shows that $[R_\Diamond] \cong \Diamond$, which completes the proof of the first part of the statement. The second part of the statement  is an immediate consequence of the definition of satisfaction and validity of enriched formal contexts. %\marginnote{do you agree???} %Proposition \ref{lemma:correspondences}, while the `if'.
\end{proof}
The following definition introduces the abstract versions of the algebras of Definition \ref{def:conceptual rough algebra}.
\begin{definition}
\label{def:conceptual tqBa}
An {\em  abstract conceptual (co-)rough algebra} (acronyms {\em acra} and {\em accra}, respectively)  is a structure $\mathbb{A} = (\mathbb{L}, \Box, \Diamond)$ such that $\mathbb{L}$ is a bounded lattice, and $\Box$ and $\Diamond$ are unary operations on $\mathbb{L}$ such that $\Box$ is finitely meet-preserving, $\Diamond$ is finitely join-preserving, and for any $a\in\mathbb{L}$,  \[\Box a\leq \Diamond a\quad (\mbox{resp. } \Diamond a\leq \Box a).\]
An  $\mathbb{A}$ as above is {\em reflexive} if for any $a\in\mathbb{L}$,  \[\Box a\leq  a\quad \mbox{ and }\quad a\leq \Diamond a,\]
 is {\em transitive} if for any $a\in\mathbb{L}$,  \[\Box a\leq  \Box\Box a\quad \mbox{ and }\quad \Diamond\Diamond a\leq \Diamond a,\]
 is {\em symmetric} if for any $a\in\mathbb{L}$,  \[ a\leq  \Box\Diamond a\quad \mbox{ and }\quad \Diamond\Box a\leq  a,\]
 is {\em sub-delta} if for any $a\in\mathbb{L}$,  \[ a\leq \Box a\quad \mbox{ and }\quad \Diamond a\leq  a,\]
and  is {\em dense} if for any $a\in\mathbb{L}$,  \[\Box\Box a\leq  \Box a\quad \mbox{ and }\quad \Diamond a\leq \Diamond\Diamond a.\]
We let $\mathsf{RA}$ (resp.~$\mathsf{CA}$) %, $\mathsf{TRA}$, $\mathsf{SRA}$, $\mathsf{RTRA}$ etc)
denote the class of abstract conceptual (co-)rough algebras. % (resp.~reflexive acras, transitive acras, symmetric acras, reflexive and transitive acras, etc).
\end{definition}
The classes of algebras defined above
form varieties of lattice expansions (LEs) for which several duality-theoretic, universal algebraic and proof-theoretic results are available in generality and uniformity (cf.~discussion at the end of Section \ref{sec:logics}). In particular, the inequalities in the definition above are all Sahlqvist inequalities (cf.~\cite[Definition 3.5]{CoPa:non-dist}), and hence the varieties defined by them enjoy all the benefits of the general theory, such as canonicity, (discrete) duality, and representation. In particular, the expanded Birkhoff's representation theorem for complete modal algebras via enriched formal contexts (cf.~Theorem \ref{thm:expandedBirkhoff}) specializes to conceptual (co-)approximation spaces thanks to the fact that the relevant axioms are Sahlqvist and hence canonical (see \cite{Gabbay-paper} for an expanded treatment).
%\marginnote{complete reference}  
Notice that reflexive and transitive acras are the `nondistributive' (i.e.~general lattice-based, hence {\em conceptual}) counterparts of topological quasi Boolean algebras (tqBa) (cf.~\cite{banerjee1996rough}). In the next definition we introduce them, together with their mirror-image version (cf.~discussion in Section \ref{ssec:Conceptual approximation spaces}).
\begin{definition}
\label{def:tqba and co-tqba}
A {\em conceptual tqBa} is a reflexive and transitive acra $\mathbb{A}$. A {\em conceptual co-tqBa} is a sub-delta and dense accra $\mathbb{A}$.
\end{definition}
In the next definition we introduce the `nondistributive' (i.e.~general lattice-based) counterparts of the topological quasi Boolean algebras 5 (tqBa5), intermediate algebras of types 2 and 3 (IA2 and IA3), and pre-rough algebras (pra).

\begin{definition}
\label{def:tqba5 etc}
A conceptual tqBa (resp.~co-tqBa) $\mathbb{A}$ as above is a {\em conceptual tqBa5} (resp.~{\em co-tqBa5}) if for any $a\in\mathbb{L}$,  \begin{equation}\label{eq:tqba5}\Diamond\Box a\leq  \Box a\quad \mbox{ and }\quad \Diamond a\leq \Box\Diamond a.\end{equation}
A conceptual tqBa5 (resp.~co-tqBa5) $\mathbb{A}$ as above is a {\em conceptual IA2} (resp.~{\em co-IA2}) if for any $a, b\in\mathbb{L}$,  \begin{equation}\label{eq:IA2}\Box(a\vee b)\leq  \Box a\vee \Box b\quad \mbox{ and }\quad \Diamond a\wedge \Diamond b\leq \Diamond (a\wedge b),\end{equation}
and is  a {\em conceptual IA3} (resp.~{\em co-IA3}) if for any $a, b\in\mathbb{L}$,  \begin{equation}\label{eq:IA3} \Box a\leq  \Box b \mbox{ and } \Diamond a\leq \Diamond  b\mbox{ imply } a\leq b.\end{equation}
A {\em conceptual prerough algebra} (resp.~{\em conceptual co-prerough algebra}) is a conceptual tqBa (resp.~co-tqBa) $\mathbb{A}$ verifying \eqref{eq:tqba5}, \eqref{eq:IA2} and \eqref{eq:IA3}.
\end{definition}

\section{Applications}
\label{sec:applications}
In the present section, we apply suitably adapted versions of the methodology developed in the previous sections to generalize -- from predicates to concepts -- three very different and mutually independent semantic frameworks, respectively aimed at accounting for  vagueness (Section \ref{ssec:vague concepts}), gradedness (Section \ref{ssec:manyval}), and uncertainty (Section \ref{ssec:DS}).  
\label{sec:applications}
\subsection{Vague concepts}
\label{ssec:vague concepts}
The vague vs discrete nature of categories is a central issue in the foundations of categorization theory, since it concerns the limits of applicability of linguistic, perceptual, cognitive and informational categories. Vague concepts such as `red', `tall', `heap' or `house' admit borderline cases, namely cases for which we are uncertain as to whether the concept should apply or not. Closely related to this, vague predicates give rise to paradoxes such as the {\em sorites paradox} \cite{campbell1974sorites}, which in its best known formulation involves a heap of sand, from which grains are individually removed. Under the assumption that removing a single grain does not turn a heap into a non-heap, the paradox arises when considering what happens when the process is repeated enough times: is a single remaining grain still a heap? If not, when did it change from a heap to a non-heap? The assumption of the paradox can be formulated more abstractly as the following {\em tolerance principle} (cf.~\cite{cobreros}): if a  predicate $P$ applies to an object $a$, and $a$ and $b$ differ very little in respects relevant to the application of the predicate $P$, then $P$ also applies to $b$. In \cite{cobreros}, vague predicates are defined as those for which the tolerance principle holds, and a logical framework for the treatment of vague predicates is introduced which allows to validate the tolerance principle while preserving
modus ponens. % only at the expense of having a non-transitive notion of logical consequence. In our opinion, the loss of transitivity for logical consequence is less dramatic a cost than the loss of modus ponens in response to the sorites paradox.
 The logical framework of \cite{cobreros} hinges on the interplay among three notions of  truth: the tolerant, the classical and the strict. Below we briefly report on the main definitions\footnote{For simplicity of presentation, here we only report on the propositional fragment of the framework of \cite{cobreros}. Also, given that the setting of \cite{cobreros} is classical, only a minimal functionally complete set of propositional connectives are considered in \cite{cobreros}. However, for the sake of an easier comparison with the  setting of conceptual T-models which will  be introduced in what follows, we explicitly report the satisfaction clauses for the whole signature of classical propositional logic.}, and then we apply the insights developed in Section \ref{sec:embedding} to generalize this approach from vague predicates to vague concepts.

A (propositional) {\em T-model} over a set $\mathsf{AtProp}$ of proposition letters (cf.~\cite[Definition 4]{cobreros})%\marginnote{complete reference} 
is a structure $\mathbb{M} = (D, \{\sim_p\mid p\in \mathsf{AtProp}\}, V)$, such that $D$ is a nonempty set, ${\sim_p}\subseteq D\times D$ is  reflexive and symmetric for every $p\in \mathsf{AtProp}$, and $V: \mathsf{AtProp}\to \mathcal{P}(D)$ is a map. Propositional formulas $\phi$ are satisfied classically at states $a\in D$ on T-models $\mathbb{M}$ in the usual way
(in symbols: $\mathbb{M}, a\Vdash \phi$); in addition, the {\em strict} and {\em tolerant} satisfaction relations (in symbols: $\mathbb{M}, a\Vdash^s \phi$ and $\mathbb{M}, a\Vdash^t \phi$) are defined recursively as follows:
\begin{center}
\begin{tabular}{r c l}
$\mathbb{M}, a\Vdash^t p$ & iff & $\mathbb{M}, b\Vdash p$  for some $b\in D$ such that $a\sim_p b$;\\
$\mathbb{M}, a\Vdash^s p$ & iff & $\mathbb{M}, b\Vdash p$  for every $b\in D$ such that $a\sim_p b$;\\
$\mathbb{M}, a\Vdash^t \bot$ & iff & never;\\
$\mathbb{M}, a\Vdash^s \bot$ & iff & never;\\
$\mathbb{M}, a\Vdash^t \top$ & iff & always;\\
$\mathbb{M}, a\Vdash^s \top$ & iff & always;\\
$\mathbb{M}, a\Vdash^t \neg\phi$ & iff & $\mathbb{M}, a\not\Vdash^s \phi$;\\
$\mathbb{M}, a\Vdash^s \neg\phi$ & iff & $\mathbb{M}, a\not\Vdash^t \phi$;\\
$\mathbb{M}, a\Vdash^t \phi\wedge \psi$ & iff & $\mathbb{M}, a\Vdash^t \phi$ and $\mathbb{M}, a\Vdash^t \psi$;\\
$\mathbb{M}, a\Vdash^s \phi\wedge \psi$ & iff & $\mathbb{M}, a\Vdash^s \phi$ and $\mathbb{M}, a\Vdash^s \psi$.\\
$\mathbb{M}, a\Vdash^t \phi\vee \psi$ & iff & $\mathbb{M}, a\Vdash^t \phi$ or $\mathbb{M}, a\Vdash^t \psi$;\\
$\mathbb{M}, a\Vdash^s \phi\vee \psi$ & iff & $\mathbb{M}, a\Vdash^s \phi$ or $\mathbb{M}, a\Vdash^s \psi$.\\
\end{tabular}
\end{center}
%For every  $\phi$, $\mathbb{M}$ and $a$,
%\[\mathbb{M}, a \Vdash^t \phi\mbox{ iff } \mathbb{M}, a \Vdash^s \neg \phi\quad \mbox{ and }\quad  \mathbb{M}, a \Vdash^s \phi\mbox{ iff } \mathbb{M}, a \Vdash^t \neg \phi.\]
%Moreover,
Letting $V^t(\phi): = \{a\in D\mid \mathbb{M}, a \Vdash^t \phi\}$ and $V^s(\phi): = \{a\in D\mid \mathbb{M}, a \Vdash^s \phi\}$, the following inclusions readily follow from the definitions above:
\[V^s(\phi)\subseteq V(\phi)\subseteq V^t(\phi).\]
Notice that the first two clauses in the definition of $\Vdash^t$ and $\Vdash^s$ can be rewritten as follows (cf.~\cite{van2010vagueness}):
\begin{center}
\begin{tabular}{r c l}
$\mathbb{M}, a\Vdash^t p$ & iff & $\mathbb{M}, a\Vdash \langle \sim_p\rangle p$;\\
$\mathbb{M}, a\Vdash^s p$ & iff & $\mathbb{M}, a\Vdash [\sim_p]p$.\\
\end{tabular}
\end{center}
Clearly, T-models are generalizations of approximation spaces in which indiscernibility relations do no not need to be transitive\footnote{Dropping the transitivity requirement has been independently explored in the Rough Set Theory literature, see e.g.~\cite{yao1998interpretations}.}, and are specific for each predicate.
Hence, the insights and the lifting construction of Section \ref{ssec:Kripke frames} can be applied to T-models, analogously to what is done in Section \ref{ssec:Conceptual approximation spaces}. Doing so, we readily arrive at the following
\begin{definition}
\label{def:conceptual T-model}
A  {\em conceptual T-model} over a set $\mathsf{AtProp}$ of atomic category labels is a structure $\mathbb{M} = (\mathbb{P}, \{R_p\mid p\in \mathsf{AtProp}\}, V)$ such that $\mathbb{P} = (A, X, I)$ is a polarity, $R_p\subseteq A\times X$ is $I$-compatible and reflexive for every $p\in \mathsf{AtProp}$, and $V: \mathsf{AtProp}\to \mathbb{P}^+$ is a map.
\end{definition}
In each conceptual T-model $\mathbb{M}$, the standard membership and description relations for concept-terms $\phi$ of the propositional lattice language $\mathcal{L}$ over $\mathsf{AtProp}$ (in symbols: $\mathbb{M}, a\Vdash \phi$ and $\mathbb{M}, x\succ \phi$) are defined as indicated in Section \ref{sec:logics}. \begin{definition}
For any conceptual T-model $\mathbb{M}$, the following {\em strict} and {\em tolerant} membership and description relations:
\begin{center}
 \begin{tabular}{c c l}
 $\mathbb{M}, a\Vdash^s \phi$ & which reads: & object $a$ is definitely a member of category $\phi$\\

 $\mathbb{M}, x\succ^s \phi$ & which reads: & feature $x$ definitely describes category $\phi$\\
  $\mathbb{M}, a\Vdash^t \phi$  & which reads: & object $a$ is loosely a member of category $\phi$\\
  $\mathbb{M}, x\succ^t \phi$ & which reads: & feature $x$ loosely describes category $\phi$\\
  \end{tabular}
  \end{center}
   are defined recursively as follows:
\begin{center}
\begin{tabular}{r c lc l}
$\mathbb{M}, x\succ^t p$ & iff & $\mathbb{M}, x\succ \langle R_p\rangle p$ & iff & $ x\in R_p^{(1)}[\val{p}] = : \descr{p}^t$;\\
$\mathbb{M}, a\Vdash^t p$ & iff & $a\in \val{\langle R_p\rangle p}$ & iff & $a\in I^{(0)}[\descr{p}^t] = : \val{p}^t$;\\
$\mathbb{M}, x\succ^t \top$ & iff & $\mathbb{M}, x\succ \langle R_p\rangle \top$ & iff & $ x\in R_p^{(1)}[A] = : \descr{\top}^t$;\\
$\mathbb{M}, a\Vdash^t \top$ & iff & $a\in \val{\langle R_p\rangle \top}$ & iff & $a\in I^{(0)}[\descr{\top}^t] = : \val{\top}^t$;\\
$\mathbb{M}, a\Vdash^t \bot$ & iff & $\mathbb{M}, a\Vdash \langle R_p\rangle \bot = \bot$ & iff & $ a\in \val{\bot} = : \val{\bot}^t$;\\
$\mathbb{M}, x\succ^t \bot$ & iff & $x\in \descr{\bot} =: \descr{\bot}^t$;\\
$\mathbb{M}, a\Vdash^t \phi\wedge \psi$ & iff & $\mathbb{M}, a\Vdash^t \phi$ and $\mathbb{M}, a\Vdash^t \psi$ & iff & $a\in \val{\phi}^t\cap \val{\psi}^t $;\\
$\mathbb{M}, x\succ^t \phi\wedge \psi$ & iff & $x\in I^{(1)}[\val{\phi}^t\cap \val{\psi}^t]$;\\
$\mathbb{M}, x\succ^t \phi\vee \psi$ & iff & $\mathbb{M}, x\succ^t \phi$ and $\mathbb{M}, x\succ^t \psi$ & iff & $x\in \descr{\phi}^t\cap \descr{\psi}^t$;\\
$\mathbb{M}, a\Vdash^t \phi\vee \psi$ & iff & $a\in I^{(0)}[\descr{\phi}^t\cap \descr{\psi}^t]$;\\
\end{tabular}
\end{center}

\begin{center}
\begin{tabular}{r c lc l}
$\mathbb{M}, a\Vdash^s p$ & iff & $\mathbb{M}, a\Vdash [R_p] p$ & iff & $a\in R_p^{(0)}[\descr{p}] = : \val{p}^s$;\\
$\mathbb{M}, x\succ^s p$ & iff & $x\in \descr{[R_p] p}$ & iff & $x\in I^{(1)}[\val{p}^s] = : \descr{p}^s$;\\
$\mathbb{M}, a\Vdash^s \top$ & iff & $\mathbb{M}, a\Vdash [R_p] \top = \top$ &  & always;\\
$\mathbb{M}, x\succ^s \top$ & iff & $x\in I^{(1)}[\val{ \top}^s] = A^{\uparrow}$;\\
$\mathbb{M}, a\Vdash^s \bot$ & iff & $\mathbb{M}, a\Vdash [R_p] \bot$ & iff & $ a\in R_p^{(0)}[X] = : \val{\bot}^s$;\\
$\mathbb{M}, x\succ^s \bot$ & iff & $x\in \descr{[R_p] \bot}$ & iff & $x\in I^{(1)}[\val{\bot}^s] = : \descr{\bot}^s$;\\
$\mathbb{M}, a\Vdash^s \phi\wedge \psi$ & iff & $\mathbb{M}, a\Vdash^s \phi$ and $\mathbb{M}, a\Vdash^s \psi$  & iff & $a\in \val{\phi}^s\cap \val{\psi}^s$;\\
$\mathbb{M}, x\succ^s \phi\wedge \psi$ & iff & $x\in I^{(1)}[\val{\phi}^s\cap \val{\psi}^s]$;\\
$\mathbb{M}, x\succ^s \phi\vee \psi$ & iff & $\mathbb{M}, x\succ^s \phi$ and $\mathbb{M}, x\succ^s \psi$ & iff & $x\in \descr{\phi}^s\cap \descr{\psi}^s$;\\
$\mathbb{M}, a\Vdash^s \phi\vee \psi$ & iff & $a\in I^{(0)}[\descr{\phi}^s\cap \descr{\psi}^s]$.\\
\end{tabular}
\end{center}
\end{definition}
Hence, any conceptual T-model $\mathbb{M} = (\mathbb{P}, \{R_p\mid p\in \mathsf{AtProp}\}, V)$ induces the {\em tolerant} and {\em strict} interpretations  $V^t, V^s: \mathcal{L}\to \mathbb{P}^+$ of $\mathcal{L}$-terms, defined as follows:
\begin{center}
\begin{tabular}{r c l c r c l}
$V^t(p)$ & = & $((R_p^{(1)}[\val{p}])^\downarrow, R_p^{(1)}[\val{p}])$ &$\quad$& $V^s(p)$ & = & $(R_p^{(0)}[\descr{p}], (R_p^{(0)}[\descr{p}])^\uparrow)$\\
$V^t(\top)$ & = & $((R_p^{(1)}[A])^\downarrow, R_p^{(1)}[A])$ &$\quad$& $V^s(\top)$ & = & $(A, A^\uparrow)$\\
$V^t(\bot)$ & = & $(X^\downarrow, X)$ &$\quad$& $V^s(\bot)$ & = & $(R_p^{(0)}[X], (R_p^{(0)}[X])^\uparrow)$\\
$V^t(\phi\wedge \psi)$ & = & $(\val{\phi}^t\cap \val{\psi}^t , (\val{\phi}^t\cap \val{\psi}^t)^\uparrow)$ &$\quad$& $V^s(\phi\wedge \psi)$ & = & $(\val{\phi}^s\cap \val{\psi}^s , (\val{\phi}^s\cap \val{\psi}^s)^\uparrow)$\\
$V^t(\phi\vee \psi)$ & = & $((\descr{\phi}^t\cap \descr{\psi}^t)^\downarrow, \descr{\phi}^t\cap \descr{\psi}^t)$ &$\quad$& $V^s(\phi\vee \psi)$ & = & $((\descr{\phi}^s\cap \descr{\psi}^s)^\downarrow, \descr{\phi}^s\cap \descr{\psi}^s)$.
\end{tabular}
\end{center}
\begin{lemma}
\label{lemma:s less than c less than t}
For any conceptual T-model $\mathbb{M}$ and any $\mathcal{L}$-term $\phi$,
\[V^s(\phi)\leq V(\phi)\leq V^t(\phi).\]
\end{lemma}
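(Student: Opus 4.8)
The plan is to prove both inequalities simultaneously by induction on the structure of the $\mathcal{L}$-term $\phi$, exploiting the two dual characterizations of the order on concepts: for concepts $c,d$ we have $c\leq d$ iff $\val{c}\subseteq \val{d}$ iff $\descr{d}\subseteq \descr{c}$. The single fact driving every base case is the reflexivity of each $R_p$, which by Definition \ref{def:terminology} means $R_p\subseteq I$; from the definitions of $(\cdot)^{(0)}$ and $(\cdot)^{(1)}$ it then follows immediately that $R_p^{(0)}[Y]\subseteq I^{(0)}[Y]$ for every $Y\subseteq X$ and $R_p^{(1)}[B]\subseteq I^{(1)}[B]$ for every $B\subseteq A$ (if $a$ is $R_p$-related to every element of $Y$, then a fortiori $a$ is $I$-related to every element of $Y$, and dually for the other inclusion).

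\textbf{Base cases.} For an atomic label $p$, recall that $(\val{p},\descr{p})$ is a concept, so $\val{p}=\descr{p}^{\downarrow}=I^{(0)}[\descr{p}]$ and $\descr{p}=\val{p}^{\uparrow}=I^{(1)}[\val{p}]$. For $V^s(p)\leq V(p)$ I would compare extensions, using $\val{V^s(p)}=R_p^{(0)}[\descr{p}]\subseteq I^{(0)}[\descr{p}]=\val{p}$. For $V(p)\leq V^t(p)$ I would instead compare intensions, using $\descr{V^t(p)}=R_p^{(1)}[\val{p}]\subseteq I^{(1)}[\val{p}]=\descr{p}$. The cases $\phi=\top$ and $\phi=\bot$ are analogous: $V^s(\top)$ and $V(\top)$ both equal the top concept $(A,A^{\uparrow})$, while $\descr{V^t(\top)}=R_p^{(1)}[A]\subseteq A^{\uparrow}$ gives $V(\top)\leq V^t(\top)$; dually $V^t(\bot)$ and $V(\bot)$ both equal $(X^{\downarrow},X)$, while $\val{V^s(\bot)}=R_p^{(0)}[X]\subseteq X^{\downarrow}$ gives $V^s(\bot)\leq V(\bot)$.

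\textbf{Inductive steps.} Here the point is simply to choose, for each connective, the order characterization that lets the inductive hypothesis pass through a monotone operation. For $\phi\wedge\psi$, whose strict, classical and tolerant interpretations are all determined by the intersections of the extensions, I would use the extension characterization: the inductive hypotheses $\val{\phi}^s\subseteq\val{\phi}\subseteq\val{\phi}^t$ and $\val{\psi}^s\subseteq\val{\psi}\subseteq\val{\psi}^t$ yield $\val{\phi}^s\cap\val{\psi}^s\subseteq\val{\phi}\cap\val{\psi}\subseteq\val{\phi}^t\cap\val{\psi}^t$, which are exactly the two required inequalities. Dually, for $\phi\vee\psi$, whose interpretations are determined by intersections of the intensions, I would use the intension characterization: since $c\leq d$ unfolds to $\descr{d}\subseteq\descr{c}$, the inductive hypotheses now read $\descr{\phi}\subseteq\descr{\phi}^s$ and $\descr{\phi}^t\subseteq\descr{\phi}$ (and likewise for $\psi$), whence $\descr{\phi}\cap\descr{\psi}\subseteq\descr{\phi}^s\cap\descr{\psi}^s$ and $\descr{\phi}^t\cap\descr{\psi}^t\subseteq\descr{\phi}\cap\descr{\psi}$, giving the two inequalities.

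\textbf{Main obstacle.} The argument is essentially bookkeeping, and the only thing requiring genuine care is precisely this switch between the two dual representations of $\leq$: the strict interpretation $V^s$ enlarges intensions (equivalently, shrinks extensions) whereas the tolerant interpretation $V^t$ does the opposite, so one must consistently track extensions for $\wedge$ and intensions for $\vee$ to keep every inclusion pointing the right way. A secondary point worth flagging explicitly is the mild abuse in the clauses for $\top$ and $\bot$, where a relation $R_p$ appears without a matching propositional variable; there the claim holds for whichever $R_p$ is used, since reflexivity of every $R_p$ is all that the base-case inclusions actually require.
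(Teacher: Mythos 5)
Your proof is correct and follows essentially the same route as the paper's: induction on $\phi$, with the base cases resting on reflexivity $R_p\subseteq I$ (hence $R_p^{(0)}[Y]\subseteq I^{(0)}[Y]$ and $R_p^{(1)}[B]\subseteq I^{(1)}[B]$), comparing extensions for the strict inequality and intensions for the tolerant one. The only difference is that you spell out the inductive steps for $\wedge$ and $\vee$, which the paper dismisses as straightforward, and your remark about tracking extensions versus intensions is exactly the right bookkeeping point.
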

\begin{proof}
By induction on $\phi$. As to the base cases,  $R_p\subseteq I$ for every $p\in \mathsf{AtProp}$ implies that $\val{p}^s = R_p^{(0)}[\descr{p}]\subseteq I^{(0)}[\descr{p}] = \val{p}$ and $\descr{p}^t = R_p^{(1)}[\val{p}]\subseteq I^{(1)}[\val{p}] = \descr{p}$, and moreover, $\val{\bot}^s = R_p^{(0)}[X]\subseteq I^{(0)}[X] = \val{\bot}$ and $\descr{\top}^t = R_p^{(1)}[A]\subseteq I^{(1)}[A] = \descr{\top}$. The induction steps are straightforward.
\end{proof}
\begin{definition}
For any conceptual T-model $\mathbb{M}$ and any $p\in \mathsf{AtProp}$, the relation $R_p$ induces  {\em similarity relations} on objects and on features defined as follows: for  any $a, b\in A$ and $x, y\in X$,
 \[a{\sim_{R_p}}b\quad\mbox{ iff }\quad R_p^{(1)}[a]\subseteq b^\uparrow
\quad\quad x{\sim_{R_p}}y\quad\mbox{ iff }\quad R_p^{(0)}[x]\subseteq y^\downarrow.\]
\end{definition}
We use the same symbol to denote both relations  and rely on the input arguments ($a, b$ for objects, $x, y$ for features) for disambiguation. Clearly, the reflexivity of $R_p$ implies that each ${\sim_{R_p}}$ is reflexive. However, in general these relations are neither transitive nor symmetric.%\marginnote{find counterexamples. what about symmetry?} 
The next lemma shows that when $R_p$ is the lifting of some (classical) similarity relation ${\sim_p}$,  the similarity relations ${\sim_{R_p}}$ induced by $R_p$ are isomorphic copies of  ${\sim_p}$.
\begin{lemma}
\label{lemma:T-models lifted}
For every T-model $\mathbb{M} = (D, \{\sim_p\mid p\in \mathsf{AtProp}\}, V)$, its associated conceptual T-model  $\mathbb{F}_\mathbb{M} := ((D_A, D_X, I_{\Delta^c}), \{I_{\sim_p^c}\mid p\in \mathsf{AtProp}\}, V)$ (cf.~Section \ref{ssec:Kripke frames}) is such that for every $p\in \mathsf{AtProp}$ and all $d, d'\in D$,
\[d \sim_p d' \quad \mbox{ iff }\quad d \sim_{I_{\sim_p^c}} d'.  \]
\end{lemma}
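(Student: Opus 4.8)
The plan is to unfold the definition of the induced similarity relation $\sim_{I_{\sim_p^c}}$ directly and reduce the biconditional to an inclusion of two explicitly computable subsets of $D_X$. Writing $R_p := I_{\sim_p^c}$, by the definition of the similarity relation induced on objects we have that $d \sim_{I_{\sim_p^c}} d'$ holds iff $R_p^{(1)}[d] \subseteq \up{d'}$, where the closure operator $(\cdot)^\uparrow$ is taken with respect to the incidence relation $I_{\Delta^c}$ of the polarity $(D_A, D_X, I_{\Delta^c})$. So it suffices to compute $R_p^{(1)}[d]$ and $\up{d'}$ and compare them.

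First I would compute the left-hand side. Since $R_p^{(1)}[d] = \{x \in D_X \mid d\, R_p\, x\}$ and, by the definition of the lifting (Definition \ref{def:liftings relations}), $d\, I_{\sim_p^c}\, x$ holds iff $d \sim_p^c x$ iff $d \not\sim_p x$, we get $R_p^{(1)}[d] = \{x \in D_X \mid d \not\sim_p x\}$. Next I would compute the right-hand side: $\up{d'} = I_{\Delta^c}^{(1)}[d'] = \{x \in D_X \mid d'\, I_{\Delta^c}\, x\} = \{x \in D_X \mid d' \neq x\}$, using that $d'\, \Delta^c\, x$ iff $d' \neq x$.

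The final step is to observe that the inclusion $\{x \mid d \not\sim_p x\} \subseteq \{x \mid d' \neq x\}$ is, by contraposition, equivalent to $\{x \mid d' = x\} \subseteq \{x \mid d \sim_p x\}$, i.e.\ to $\{d'\} \subseteq \{x \mid d \sim_p x\}$, which says exactly $d \sim_p d'$. (The instance $x = d'$ already forces the full inclusion, since for $x \neq d'$ the consequent $d' \neq x$ holds vacuously.) This closes the biconditional.

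The only point requiring care --- rather than a genuine obstacle --- is the interplay of the two complementations built into the lifting construction: the accessibility relation is lifted as $I_{\sim_p^c}$ (the complement of $\sim_p$), while membership $(\cdot)^\uparrow$ is computed against $I_{\Delta^c}$ (the complement of the diagonal), and the set inclusion introduces a further contrapositive. These three sign flips conspire to return $\sim_p$ itself, which is the content of the lemma; one must also keep the typing ($D_A$ versus $D_X$ copies) straight when reading $d \sim_p x$ across types. The reflexivity and symmetry of $\sim_p$ play no role in the argument itself, only in guaranteeing that $\mathbb{F}_\mathbb{M}$ is a legitimate conceptual T-model.
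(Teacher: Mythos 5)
Your proof is correct and follows essentially the same route as the paper's: unfold $d \sim_{I_{\sim_p^c}} d'$ as the inclusion $I_{\sim_p^c}^{(1)}[d]\subseteq (d')^{\uparrow}$, compute both sets explicitly, and pass to the contrapositive, using that the complement of $(d')^{\uparrow}$ in $D_X$ is the singleton $\{d'\}$. The paper likewise treats only the object case explicitly and dismisses the $D_X$ case as symmetric, so nothing is missing.
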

\begin{proof} In the displayed equivalence above, $d \sim_{I_{\sim_p^c}} d'$ refers both to $d, d'\in D_A$ and to $d, d'\in D_X$. For $d, d'\in D_A$,
\begin{center}
\begin{tabular}{r c ll}
 $d \sim_{I_{\sim_p^c}} d'$ & iff & $I_{\sim_p^c}^{(1)}[d]\subseteq (d')^\uparrow$\\
 & iff & $\{e\in D_X\mid d I_{\sim_p^c} e\}\subseteq \{e\in D_X\mid  d' I_{\Delta^c} e\}$\\
  & iff & $\{e\in D_X\mid  d' I_{\Delta^c} e\}^c\subseteq \{e\in D_X\mid d I_{\sim_p^c} e\}^c $\\
  & iff & $ d' \in \{e\in D_X\mid d I_{\sim_p^c} e\}^c$ & $\{e\in D_X\mid  d' I_{\Delta^c} e\}^c = \{d'\}$ \\
  & iff & $  (d, d')\notin I_{\sim_p^c} $ &  \\
  & iff & $  (d, d')\notin {\sim_p^c} $ &  \\
  & iff & $  d \sim_p d'$. &  \\
 \end{tabular}
\end{center}
The proof in the case in which $d, d'\in D_X$ is similar, and is omitted.
\end{proof}
\begin{lemma}
\label{lemma:each single step is valid}
For any conceptual T-model  $\mathbb{M}$ and any $p\in \mathsf{AtProp}$,
\begin{enumerate}
% \item If $\mathbb{M}, a\Vdash^s p$ and $b\in R_p^{(0)}[a^{\uparrow}]$ then $\mathbb{M}, b\Vdash^s p$;
%\item If $\mathbb{M}, x\succ^s p$ and $y\in R_p^{(1)}[x^{\downarrow}]$ then $\mathbb{M}, y\succ^s p$.
 %\item If $\mathbb{M}, a\Vdash^s p$ and $R_p^{(1)}[a] = R_p^{(1)}[b]$ then $\mathbb{M}, b\Vdash^s p$;
  \item If $\mathbb{M}, a\Vdash^s p$ and $a{\sim_{R_p}}b$ then $\mathbb{M}, b\Vdash^t p$;
   \item If $\mathbb{M}, x\succ^s p$ and $x{\sim_{R_p}}y$ then $\mathbb{M}, y\succ^t p$.
%\item If $\mathbb{M}, x\succ^s p$ and $R_p^{(0)}[x]\subseteq y^\downarrow$ then $\mathbb{M}, y\succ^t p$.
\end{enumerate}
\end{lemma}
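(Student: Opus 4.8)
The plan is to reduce each implication to a short chain of set-inclusions between Galois images, using only Lemma~\ref{lemma: basic}(1)--(2), the reflexivity assumption $R_p\subseteq I$, and the sandwich $V^s(\phi)\le V(\phi)\le V^t(\phi)$ from Lemma~\ref{lemma:s less than c less than t}. The key translation device is Lemma~\ref{lemma: basic}(2): membership of a single point in a $(\cdot)^{(0)}$-image is equivalent to an inclusion of a $(\cdot)^{(1)}$-image. This lets every clause of the form ``$a\in R_p^{(0)}[\,\cdot\,]$'' or ``$x\in I^{(1)}[\,\cdot\,]$'' be rewritten as an inclusion that can then be composed with the similarity inclusion.

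For item~1, first rewrite the hypothesis $\mathbb{M},a\Vdash^s p$, i.e.\ $a\in R_p^{(0)}[\descr{p}]$, as $\descr{p}\subseteq R_p^{(1)}[a]$ by Lemma~\ref{lemma: basic}(2). The similarity $a\sim_{R_p}b$ is by definition $R_p^{(1)}[a]\subseteq b^{\uparrow}$, so chaining gives $\descr{p}\subseteq b^{\uparrow}=I^{(1)}[b]$, which, again by Lemma~\ref{lemma: basic}(2), is exactly $b\in I^{(0)}[\descr{p}]=\val{p}$. It then remains to pass from classical to tolerant membership: by reflexivity, $\descr{p}^t=R_p^{(1)}[\val{p}]\subseteq I^{(1)}[\val{p}]=\descr{p}\subseteq b^{\uparrow}$, so $b\in I^{(0)}[\descr{p}^t]=\val{p}^t$, i.e.\ $\mathbb{M},b\Vdash^t p$. (Equivalently, one may simply quote $\val{p}\subseteq\val{p}^t$ from Lemma~\ref{lemma:s less than c less than t}.) Thus item~1 uses reflexivity only in this final monotonicity step.

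For item~2 I would attempt the order-dual argument, swapping the roles of objects and features and of $\val{\cdot}$ and $\descr{\cdot}$. Here Lemma~\ref{lemma: basic}(2) turns the hypothesis $\mathbb{M},x\succ^s p$, i.e.\ $x\in I^{(1)}[R_p^{(0)}[\descr{p}]]$, into $R_p^{(0)}[\descr{p}]\subseteq x^{\downarrow}$; the similarity $x\sim_{R_p}y$ reads $R_p^{(0)}[x]\subseteq y^{\downarrow}$; and the desired conclusion $\mathbb{M},y\succ^t p$ becomes $\val{p}\subseteq R_p^{(0)}[y]$. The main obstacle---the step I expect to carry all the weight---is precisely the reconciliation of these three inclusions: the similarity relates $R_p^{(0)}[x]$ to the \emph{incidence} image $y^{\downarrow}=I^{(0)}[y]$ rather than to $R_p^{(0)}[y]$, and reflexivity only yields $R_p^{(0)}[y]\subseteq y^{\downarrow}$, which is the wrong direction for closing the gap. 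Moreover, by Lemma~\ref{lemma:s less than c less than t} the strict and tolerant descriptions are ordered \emph{oppositely} on features ($\descr{p}^t\subseteq\descr{p}\subseteq\descr{p}^s$) to the way membership is ordered on objects, so the easy ``drop from strict to tolerant'' move available in item~1 cannot be reused. I would therefore expect the feature case either to require genuinely dualizing the construction through the companion relation $R_{\blacksquare}$ (so that the feature similarity is matched to an image of the correct type), or to hinge on some structural property of $\sim_{R_p}$ beyond mere reflexivity; pinning down exactly what makes the two inclusions compose is the crux of the argument.
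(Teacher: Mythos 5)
Your treatment of item 1 is correct and is essentially the paper's own argument: the paper chains $R_p^{(1)}[\val{p}]\subseteq I^{(1)}[\val{p}]=\descr{p}\subseteq R_p^{(1)}[a]\subseteq b^{\uparrow}$ and reads off $b\in(R_p^{(1)}[\val{p}])^{\downarrow}=\val{p}^t$; you produce the same chain, merely factoring it through the intermediate observation $b\in\val{p}$, so there is nothing to object to there.

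For item 2 your proposal contains a genuine gap --- you prove nothing --- but your diagnosis of the obstacle is exactly right, and the point worth stressing is that the gap cannot be filled: the statement as printed does not hold. The paper's proof of item 2 is literally ``similar, and is omitted'', and the argument that is actually similar to item 1 proves the implication with the roles of strict and tolerant \emph{interchanged}: from $x\succ^t p$, i.e.\ $\val{p}\subseteq R_p^{(0)}[x]$, and $x\sim_{R_p}y$, i.e.\ $R_p^{(0)}[x]\subseteq y^{\downarrow}$, one gets $\val{p}\subseteq y^{\downarrow}$, hence $y\in\descr{p}\subseteq\descr{p}^s$, i.e.\ $y\succ^s p$. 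The implication as stated ($\succ^s$ to $\succ^t$) is refuted already on a lifted T-model: take $D=\{1,2,3\}$, let $\sim_p$ be the reflexive--symmetric closure of $\{(1,2),(2,3)\}$, and put $V(p)=(\{1,2\},\{3\})$ in $\mathbb{F}_{\mathbb{M}}$. Then $\val{p}^s=R_p^{(0)}[\{3\}]=\{1\}$, so $\descr{p}^s=\{2,3\}$ and $2\succ^s p$; moreover $R_p^{(0)}[2]=\varnothing$, so $2\sim_{R_p}y$ for every $y$; yet $\descr{p}^t=R_p^{(1)}[\{1,2\}]=\varnothing$, so no $y$ satisfies $y\succ^t p$. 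Consequently neither your suggested detour through $R_{\blacksquare}$ nor any additional structural property of $\sim_{R_p}$ will make the two inclusions compose; what is needed is to restate item 2 with $s$ and $t$ swapped (equivalently, to reverse the direction of the feature-side tolerance step), after which the order-dual argument you sketched goes through verbatim.
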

\begin{proof}
1. %The assumption $\mathbb{M}, a\Vdash^s p$ can be rewritten as $a\in \val{p}^s = R_p^{(0)}[\descr{p}]$, i.e.~$\descr{p}\subseteq R_p^{(1)}[a]\subseteq I^{(1)}[a] = a^\uparrow$. The assumption $b\in R_p^{(0)}[a^{\uparrow}]$ can be equivalently rewritten as $a^\uparrow\subseteq R_p^{(1)}[b]$. Hence, $\descr{p}\subseteq R_p^{(1)}[b]$, i.e.~$b\in R_p^{(0)}[\descr{p}]$, i.e.~$\mathbb{M}, b\Vdash^s p$, and hence by Lemma \ref{lemma:s less than c less than t}, $\mathbb{M}, b\Vdash^t p$.
%
%4.
The assumption $\mathbb{M}, a\Vdash^s p$ can be rewritten as $a\in \val{p}^s = R_p^{(0)}[\descr{p}]$, i.e.~$\descr{p}\subseteq R_p^{(1)}[a]$. Hence, $R_p^{(1)}[\val{p}]\subseteq I^{(1)}[\val{p}] = \descr{p}\subseteq R_p^{(1)}[a]$. We need to show that $\mathbb{M}, b\Vdash^t p$, i.e.~$b\in (R_p^{(1)}[\val{p}])^\downarrow$, i.e.~$R_p^{(1)}[\val{p}]\subseteq b^\uparrow$. The latter inclusion immediately follows from $R_p^{(1)}[\val{p}]\subseteq R_p^{(1)}[a]$ and the assumption that $a{\sim_{R_p}}b$, i.e.~$R_p^{(1)}[a]\subseteq b^\uparrow$. The proof of the second item is similar, and is omitted.
\end{proof}
\begin{comment}
\begin{definition}
The following logics (understood as sets of sequents) arise semantically from the class of conceptual T-models:
\[\mathsf{S}^{uv}: = \{\phi\vdash \psi\mid V^u(\phi)\leq V^v(\psi) \mbox{ for every T-model } \mathbb{M}\},\]
where $V^u, V^v\in \{V^s, V, V^t\}.$
\end{definition}
Notice that the fact that $\phi\vdash \psi$ and $\psi\vdash \chi$ belong to $\mathsf{S}^{st}$ does not imply that $\phi\vdash \chi$ does.  Indeed, for $\phi =p,  \psi = q, \chi = r\in \mathsf{Atprop}$, the assumptions translate into $R_p^{(0)}[\descr{p}]\subseteq  (R_q^{(1)}[\val{q}])^\downarrow$ and  $R_q^{(0)}[\descr{q}]\subseteq  (R_r^{(1)}[\val{r}])^\downarrow$, and it is not difficult to find examples of conceptual T-models over $\mathsf{AtProp} = \{p, q, r\}$ in which these inequalities are verified but  $R_p^{(0)}[\descr{p}]\nsubseteq  (R_r^{(1)}[\val{r}])^\downarrow$.\marginnote{Sajad, Krishna, maybe you can find a concrete example?}
\end{comment}
As discussed in \cite[Section 3.6]{cobreros}, the semantics of  T-models successfully handles e.g.~the following version of the sorites paradox, formulated in terms of the strict and the tolerant notions of truth:
\begin{equation}
\label{eq:sorites T-models}\mbox{If }\mathbb{M}, a_1\Vdash^s p\mbox{ and } a_i{\sim_p}a_{i+1}\mbox{ for every } 1\leq i\leq n \mbox{ then }  \mathbb{M}, a_n\Vdash^t p.\end{equation}
Indeed, since $\sim_p$ does not need to be transitive, T-models exist which falsify \eqref{eq:sorites T-models}, while for every T-model $\mathbb{M}$,
 \begin{equation}
\label{eq:classical step by step}\mbox{if }\mathbb{M}, a\Vdash^s p\mbox{ and } a{\sim_p}b \mbox{ then }  \mathbb{M}, b\Vdash^v p.\end{equation}

The following versions of the sorites paradox, formulated both w.r.t.~objects and w.r.t.~features, can be handled equally well by conceptual T-models: %as follows: for a conceptual T-model  $\mathbb{M}$ and $p\in \mathsf{AtProp}$,
\begin{equation}
\label{eq:sorites objects}\mbox{If }\mathbb{M}, a_1\Vdash^s p\mbox{ and } a_i{\sim_{R_p}}a_{i+1}\mbox{ for every } 1\leq i\leq n \mbox{ then }  \mathbb{M}, a_n\Vdash^t p\end{equation}
\begin{equation}\label{eq:sorites features} \mbox{If }\mathbb{M}, x\succ^s p\mbox{ and } x_i{\sim_{R_p}}x_{i+1}\mbox{ for every } 1\leq i\leq n \mbox{ then }  \mathbb{M}, x_n\succ^t p.\end{equation}
Indeed, by Lemma \ref{lemma:each single step is valid}, each single step is valid on every conceptual T-model; however, if $\mathbb{M}$ is a T-model falsifying \eqref{eq:sorites T-models}, then by Lemma \ref{lemma:T-models lifted}, its associated conceptual T-model $\mathbb{F}_{\mathbb{M}}$ will falsify \eqref{eq:sorites objects} and \eqref{eq:sorites features}.
\subsection{A many-valued semantics for the modal logic of concepts}
\label{ssec:manyval}
In \cite{fitting1991many, fitting1992many}  several kinds of many-valued Kripke frame semantics are proposed for the language of normal modal logic.  Based on this, \cite{bou2011minimum} introduces complete axiomatizations for the many-valued counterparts of the classical normal modal logic $\mathsf{K}$ corresponding to some of the semantic settings introduced there, while simultaneously generalizing the choice of truth value space from Heytin algebras to residuated lattices, subject to certain constraints. In the present subsection, we illustrate how  the insights and results developed in the previous sections can be suitably adapted to generalize the logical framework of \cite{fitting1991many, fitting1992many, bou2011minimum} from (many-valued) Kripke frames to (many-valued) enriched formal contexts. Below, we will first provide a brief account of the framework from \cite{fitting1991many, fitting1992many, bou2011minimum} we are going to generalize. Since it is intended only as an illustration, our account focuses on a restricted many-valued setting in which the algebra of truth-values is a Heyting algebra. Moreover, in presenting the classical many-valued modal logic framework, we will cover only the portion that is directly involved in the subsequent generalization, and we leave a more complete and general treatment for a separate paper (cf.~\cite{LogicVagueCategories}).

\paragraph{Many-valued modal logic.} For an arbitrary but fixed complete Heyting algebra $\mathbf{A}$ (understood as the algebra of truth-values), an $\mathbf{A}$-{\em Kripke frame} (cf.~\cite[Definition 4.1]{fitting1992many}) %\marginnote{check reference} 
is a structure $\mathbb{X} = (W, R)$ such that $W$ is a nonempty set and $R$ is an $\mathbf{A}$-{\em valued} relation, i.e.~it is a map $R: W\times W\to \mathbf{A}$ which we will equivalently represent as $R: W\to \mathbf{A}^W$, where $\mathbf{A}^W$ denotes the set of maps $f: W\to \mathbf{A}$, so $R[w]: W\to \mathbf{A}$ for every $w\in W$. As is well known,  the algebraic structure of $\mathbf{A}$ lifts to $\mathbf{A}^W$ by defining the operations pointwise; moreover, any $\mathbf{A}$-valued relation $R$ induces operations $[R], \langle R\rangle: \mathbf{A}^W \to \mathbf{A}^W$ such that, for every $f: W\to \mathbf{A}$,
\begin{center}
\begin{tabular}{rl c rl}
$[R]f:$ & $ W\to \mathbf{A}$ &$\quad$&$\langle R\rangle f:$ &$ W\to \mathbf{A}$\\
& $w\mapsto \bigwedge_{v\in W}(R[w](v)\rightarrow f(v))$ && & $w\mapsto \bigvee_{v\in W}(R[w](v)\wedge f(v))$.\\
\end{tabular}
\end{center}
An $\mathbf{A}$-{\em model} over a  set $\mathsf{AtProp}$ of atomic propositions is a tuple $\mathbb{M} = (\mathbb{F}, V)$ such that $\mathbb{F}$ is an $\mathbf{A}$-Kripke frame and $V: \mathsf{AtProp}\to \mathbf{A}^W$. Every such $V$ has a unique homomorphic extension, also denoted $V: \mathcal{L} \to \mathbf{A}^W$, where $\mathcal{L}$ denotes the $\{\Box, \Diamond\}$ modal language  over $\mathsf{AtProp}$, which in its turn induces  $\alpha$-{\em satisfaction relations} for each $\alpha\in \mathbf{A}$ (in symbols: $\mathbb{M}, w\Vdash^\alpha \phi$), such that for every $\phi\in \mathcal{L}$,
\[\mathbb{M}, w\Vdash^\alpha \phi \quad \mbox{ iff }\quad \alpha\leq (V(\phi))(w).\]
This can be equivalently expressed by means of the following  recursive definition:
\begin{center}
\begin{tabular}{r c l}
$\mathbb{M}, w\Vdash^\alpha p$ & iff & $\alpha\leq (V(p))(w)$;\\
$\mathbb{M}, w\Vdash^\alpha \top$ & iff & $\alpha\leq (V(\top))(w)$ i.e.~always;\\
$\mathbb{M}, w\Vdash^\alpha \bot$ & iff & $\alpha\leq (V(\bot))(w)$ i.e.~iff $\alpha = \bot^{\mathbf{A}}$;\\
$\mathbb{M}, w\Vdash^\alpha \phi\wedge \psi$ & iff & $\mathbb{M}, w\Vdash^\alpha \phi$ and $\mathbb{M}, w\Vdash^\alpha \psi$;\\
$\mathbb{M}, w\Vdash^\alpha \phi\vee \psi$ & iff & $\mathbb{M}, w\Vdash^\alpha \phi$ or $\mathbb{M}, w\Vdash^\alpha \psi$;\\
$\mathbb{M}, w\Vdash^\alpha \phi\rightarrow \psi$ & iff & $(V(\phi))(w)\wedge \alpha\leq (V(\psi))(w)$;\\
$\mathbb{M}, w\Vdash^\alpha \Box \phi$ & iff & $\alpha\leq ([R](V(\phi)))(w)$;\\
$\mathbb{M}, w\Vdash^\alpha \Diamond \phi$ & iff & $\alpha\leq (\langle R\rangle(V(\phi)))(w)$.\\
\end{tabular}
\end{center}
When $\mathbf{A}$ is the Boolean algebra $\mathbf{2}$, these definitions coincide with the usual ones in classical modal logic. 
For every $\alpha\in \mathbf{A}$, 
%%let $\{w\slash \alpha\}: W\to \mathbf{A}$ be defined by $v\mapsto \alpha$ if $v = w$ and $v\mapsto \top^{\mathbf{A}}$ if $v\neq w$, and 
let $\{\alpha\backslash w\}: W\to \mathbf{A}$ be defined by $v\mapsto \alpha$ if $v = w$ and $v\mapsto \bot^{\mathbf{A}}$ if $v\neq w$. Then, for every $f\in \mathbf{A}^W$,
\begin{equation}\label{eq:MV:join:generators}
f = \bigvee_{w\in W}\{f(w)\backslash w\}.
\end{equation}
%= \bigwedge_{w\in W}\{w\slash f(w)\}. \]
%
%
For any set $W$, the $\mathbf{A}$-{\em subsethood} relation between elements of $\mathbf{A}^W$ is the map $S_W:\mathbf{A}^W\times \mathbf{A}^W\to \mathbf{A}$ defined as $S_W(f, g) :=\bigwedge_{w\in W}(f(w)\rightarrow g(w)) $. %and the $\mathbf{A}$-{\em identity} relation between elements of $\mathbf{A}^W$ is the map $\approx_W:\mathbf{A}^W\times \mathbf{A}^W\to \mathbf{A}$ defined as $f\approx_W g :=\bigwedge_{w\in W}(f(w)\leftrightarrow g(w)) $. 
If $S_W(f, g) =1$ we also write $f\subseteq g$.

\paragraph{Many-valued FCA.} Any $\mathbf{A}$-valued relation $R: U \times W \rightarrow \mathbf{A}$ induces  maps $R^{(0)}[-] : \mathbf{A}^W \rightarrow \mathbf{A}^U$ and $R^{(1)}[-] : \mathbf{A}^U \rightarrow \mathbf{A}^W$ given by the following assignments: for every $f: U \to \mathbf{A}$ and every $u: W \to \mathbf{A}$,
\begin{center}
	\begin{tabular}{r l c r l}
		$R^{(1)}[f]:$ & $ W\to \mathbf{A}$& $\quad\quad $& $R^{(0)}[u]: $ & $U\to \mathbf{A} $\\
		& $ x\mapsto \bigwedge_{a\in U}(f(a)\rightarrow R(a, x))$& $\quad\quad $&  & $a\mapsto \bigwedge_{x\in W}(u(x)\rightarrow R(a, x))$\\
	\end{tabular}
\end{center}
These maps are such that, for every $f\in \mathbf{A}^A$ and every $u\in \mathbf{A}^X$,
\begin{equation}\label{eq:adjoint}
S_A(f, R^{(0)}[u]) = S_X(u, R^{(1)}[f]),
\end{equation}
that is, the pair of maps $R^{(1)}[\cdot]$ and $R^{(0)}[\cdot]$ form an $\mathbf{A}$-{\em Galois connection}.

A {\em formal}  $\mathbf{A}$-{\em context} or $\mathbf{A}$-{\em polarity} (cf.~\cite{belohlavek}) is a structure $\mathbb{P} = (A, X, I)$ such that $A$ and $X$ are sets and $I: A\times X\to \mathbf{A}$. Any formal $\mathbf{A}$-context induces  maps $(\cdot)^{\uparrow}: \mathbf{A}^A\to \mathbf{A}^X$ and $(\cdot)^{\downarrow}: \mathbf{A}^X\to \mathbf{A}^A$ given by $(\cdot)^{\uparrow} = I^{(1)}[\cdot]$ and $(\cdot)^{\downarrow} = I^{(0)}[\cdot]$. 
%
%\begin{center}
%\begin{tabular}{r l c r l}
%$f^{\uparrow}:$ & $ X\to \mathbf{A}$& $\quad\quad $& $u^{\downarrow}: $ & $A\to \mathbf{A} $\\
                  %& $ x\mapsto \bigwedge_{a\in A}(f(a)\rightarrow I(a, x))$& $\quad\quad $&  & $a\mapsto \bigwedge_{x\in X}(u(x)\rightarrow I(a, x))$\\
%\end{tabular}
%\end{center}

In \cite{belohlavek}, it is shown that every  $\mathbf{A}$-Galois connection arises from some formal  $\mathbf{A}$-context. A {\em formal}  $\mathbf{A}$-{\em concept} is a pair $(f, u)\in \mathbf{A}^A\times \mathbf{A}^X$ such that $f^{\uparrow} = u$ and $u^{\downarrow} = f$. It follows immediately from this definition that if $(f, u)$ is a formal $\mathbf{A}$-concept, then $f^{\uparrow \downarrow} = f$ and $u^{\downarrow\uparrow} = u$, that is, $f$ and $u$ are {\em stable}. The set of formal $\mathbf{A}$-concepts can be partially ordered as follows:
\[(f, u)\leq (g, v)\quad \mbox{ iff }\quad f\subseteq g \quad \mbox{ iff }\quad v\subseteq u. \]
Ordered in this way, the set of the formal  $\mathbf{A}$-concepts of $\mathbb{P}$ is a complete lattice, which we denote $\mathbb{P}^+$.

\paragraph{Many-valued enriched formal contexts.} Building on \cite{belohlavek}, we can generalize the definition of enriched formal contexts to the many-valued setting as follows:
\begin{definition}
	An {\em enriched formal $\mathbf{A}$-context} is a structure  $\mathbb{F} = (\mathbb{P}, R_\Box, R_\Diamond)$ such that $\mathbb{P} = (A, X, I)$ is a formal  $\mathbf{A}$-context and $R_\Box: A\times X\to \mathbf{A}$ and $R_\Diamond: X\times A\to \mathbf{A}$ are $I$-{\em compatible}, i.e.~$R_{\Box}^{(0)}[\{\alpha \backslash x\}]$, $R_{\Box}^{(1)}[\{\alpha \backslash a\}]$,  $R_{\Diamond}^{(0)}[\{\alpha \backslash a\}]$ and $R_{\Diamond}^{(1)}[\{\alpha \backslash x\}]$ are stable for every $\alpha \in \mathbf{A}$, $a \in A$ and $x \in X$.   
\end{definition}
\begin{definition}
	The {\em complex algebra} of an  enriched formal $\mathbf{A}$-context $\mathbb{F} = (\mathbb{P}, R_\Box, R_\Diamond)$ is the algebra $\mathbb{F}^{+} = (\mathbb{P}^{+}, [R_{\Box}], \langle R_{\Diamond} \rangle )$ where $[R_{\Box}], \langle R_{\Diamond} \rangle : \mathbb{P}^{+} \to \mathbb{P}^{+}$ are defined by the following assignments: for every $c = (\val{c}, \descr{c}) \in \mathbb{P}^{+}$, 
	\[
		[R_{\Box}]c = (R_{\Box}^{(0)}[\descr{c}], (R_{\Box}^{(0)}[\descr{c}])^{\uparrow}) \mbox{ \ \ and \ \ } \langle R_{\Diamond} \rangle c = ((R_{\Diamond}^{(0)}[\val{c}])^{\downarrow}, R_{\Diamond}^{(0)}[\val{c}]).
	\]
\end{definition}
As in the crisp setting, the $I$-compatibility of $R_{\Box}$ and $R_\Diamond$ guarantees that the operations $[R_{\Box}], \langle R_{\Diamond} \rangle $ (and in fact also their adjoints) are well defined:
\begin{lemma}\label{equivalents of I-compatible-mv}
\begin{enumerate}
\item The following are equivalent for every formal $\mathbf{A}$-context $\mathbb{P} = (A, X, I)$ and every $\mathbf{A}$-relation $R: A\times X \to \mathbf{A}$:
	\begin{enumerate}
		\item [(i)] $R^{(0)}[\{\alpha/x\}]$ is Galois-stable for every $x\in X$ and $\alpha\in\mathbf{A}$;
		
		\item [(ii)]  $R^{(0)} [u]$ is Galois-stable for every $u: X\to\mathbf{A}$;
		\item [(iii)] $R^{(1)}[f]=R^{(1)}[f^{\uparrow\downarrow}]$ for every  $f:A\to\mathbf{A}$.
	\end{enumerate}
\item The following are equivalent for every formal $\mathbf{A}$-context $\mathbb{P} = (A, X, I)$ and every $\mathbf{A}$-relation $R: A\times X \to \mathbf{A}$:	
	\begin{enumerate}
		\item [(i)] $R^{(1)}[\{\alpha/a\}]$ is Galois-stable for every $a\in A$ and $\alpha\in\mathbf{A}$;
		
		\item [(ii)]  $R^{(1)} [f]$ is Galois-stable, for every $f: A\to\mathbf{A}$;
		\item[(iii)] $R^{(0)}[u]=R^{(0)}[u^{\downarrow\uparrow}]$ for every $u: X\to\mathbf{A}$.
	\end{enumerate}
\end{enumerate}
\end{lemma}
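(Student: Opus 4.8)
The plan is to mirror, line for line, the proof of the crisp Lemma~\ref{equivalents of I-compatible}, systematically replacing each set-theoretic ingredient by its $\mathbf{A}$-valued analogue: subsets of $A$ (resp.\ $X$) become maps $f\in\mathbf{A}^A$ (resp.\ $u\in\mathbf{A}^X$); union and intersection become join and meet in $\mathbf{A}^A$ and $\mathbf{A}^X$; the containment relations $\subseteq$ are read as the value-$1$ instances of the subsethood degrees $S_A,S_X$; and the Galois-connection bridge Lemma~\ref{lemma: basic}.2 is replaced by the $\mathbf{A}$-Galois connection \eqref{eq:adjoint}, specialised to its top value, i.e.\ $f\subseteq R^{(0)}[u]$ iff $u\subseteq R^{(1)}[f]$. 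I would prove item~1 in full and obtain item~2 by the symmetric argument (interchanging $A\leftrightarrow X$, $(\cdot)^{\uparrow}\leftrightarrow(\cdot)^{\downarrow}$, and $R^{(0)}\leftrightarrow R^{(1)}$). Three preliminary facts, all either standard or immediate from the definitions of $R^{(0)}[-],R^{(1)}[-]$, would be recorded first: (a) both maps are antitone, since implication is antitone in its antecedent; (b) they send arbitrary joins to meets, $R^{(0)}[\bigvee_i u_i]=\bigwedge_i R^{(0)}[u_i]$, which follows pointwise from the frame identity $(\bigvee_i\alpha_i)\to\beta=\bigwedge_i(\alpha_i\to\beta)$ valid in the complete Heyting algebra $\mathbf{A}$; and (c) Galois-stable maps are the fixpoints of the closure operator $(\cdot)^{\uparrow\downarrow}$, hence are closed under arbitrary meets, and $f^{\uparrow\downarrow}$ is the least stable map above $f$.

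For $(i)\Rightarrow(ii)$ I would use the join-generation identity \eqref{eq:MV:join:generators} to write any $u\colon X\to\mathbf{A}$ as $u=\bigvee_{x\in X}\{u(x)/x\}$; then by fact~(b), $R^{(0)}[u]=\bigwedge_{x\in X}R^{(0)}[\{u(x)/x\}]$, which is a meet of stable maps by~(i), hence stable by~(c). The implication $(ii)\Rightarrow(i)$ is immediate, since each generator $\{\alpha/x\}$ is a particular $u$.

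For $(i)\Rightarrow(iii)$ I would argue as follows. Antitonicity~(a) together with extensivity of $(\cdot)^{\uparrow\downarrow}$ gives $R^{(1)}[f^{\uparrow\downarrow}]\subseteq R^{(1)}[f]$. For the converse, the unit of the $\mathbf{A}$-Galois connection yields $f\subseteq R^{(0)}[R^{(1)}[f]]$, and by item~(ii), already established, $R^{(0)}[R^{(1)}[f]]$ is stable; since $f^{\uparrow\downarrow}$ is the least stable map above $f$, we get $f^{\uparrow\downarrow}\subseteq R^{(0)}[R^{(1)}[f]]$, and applying \eqref{eq:adjoint} at value~$1$ turns this into $R^{(1)}[f]\subseteq R^{(1)}[f^{\uparrow\downarrow}]$, as required. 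For $(iii)\Rightarrow(i)$, fix a generator $g=\{\alpha/x\}$ and put $f:=R^{(0)}[g]$. The unit gives $g\subseteq R^{(1)}[f]$; rewriting by~(iii) this reads $g\subseteq R^{(1)}[f^{\uparrow\downarrow}]$, and \eqref{eq:adjoint} at value~$1$ converts it back into $f^{\uparrow\downarrow}\subseteq R^{(0)}[g]=f$, which, with extensivity, gives $f^{\uparrow\downarrow}=f$, i.e.\ $R^{(0)}[\{\alpha/x\}]$ is stable.

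I expect the main obstacle to be purely conceptual rather than computational: the crisp proof is phrased element-wise (membership $x\in R^{(1)}[B]$ and the bi-implication of Lemma~\ref{lemma: basic}.2), and the work lies in verifying that one never needs more than the value $1$ of $\mathbf{A}$, so that the degree-valued adjunction \eqref{eq:adjoint} can take over the role of Lemma~\ref{lemma: basic}.2 without loss. The one genuinely new analytic ingredient is fact~(b): the interaction of $R^{(0)}[-]$ with arbitrary joins, which is precisely where completeness of $\mathbf{A}$ as a Heyting algebra (frame distributivity) is used and which generalises Lemma~\ref{lemma: basic}.5; the remaining steps are formal manipulations with the closure operator and the adjunction. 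The companion result that every $\mathbf{A}$-Galois connection arises from a formal $\mathbf{A}$-context \cite{belohlavek} guarantees that these closure-operator facts are available in the many-valued setting.
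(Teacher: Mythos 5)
Your proposal is correct and follows essentially the same route as the paper's proof: the equivalence of (i) and (ii) via the join-to-meet law $R^{(0)}[\bigvee_j u_j]=\bigwedge_j R^{(0)}[u_j]$ together with closure of stable maps under meets, and the equivalence with (iii) via the top-value instance of the $\mathbf{A}$-Galois adjunction \eqref{eq:adjoint} playing the role of Lemma \ref{lemma: basic}.2. The only cosmetic difference is that you close the cycle through (iii)$\Rightarrow$(i) on generators $\{\alpha/x\}$ whereas the paper proves (iii)$\Rightarrow$(ii) for arbitrary $u$ by the identical adjunction argument, so the two are interchangeable.
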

\begin{proof}1. Since $R^{(0)}[\bigvee_{j\in J}u_j]=\bigwedge_{j\in J}R^{(0)}[u_j]$, it is immediate that (i) and (ii) are equivalent. Therefore let us show that (ii) and (iii) are equivalent. By \eqref{eq:adjoint} it follows that \begin{equation}\label{eq:basicequivalence}
	f\subseteq R^{(0)}[u]\Longleftrightarrow u\subseteq R^{(1)}[f].	\end{equation} 
	Let us assume that $R^{(0)}[u]$ is stable for every $u:X\to\mathbf{A}$ and show that $R^{(1)}[f]\subseteq R^{(1)}[f^{\uparrow\downarrow}]$, the converse inclusion following from the antitonicity of $R^{(1)}$:
	\begin{align*}
	&	\quad R^{(1)}[f]\subseteq  R^{(1)}[f]\\
	\iff & \quad  f\subseteq R^{(0)}[R^{(1)}[f]]\tag{by \ref{eq:basicequivalence}}\\
	\iff & \quad  f^{\uparrow\downarrow}\subseteq R^{(0)}[R^{(1)}[f]]\tag{$R^{(0)}[R^{(1)}[f]]$ is stable by assumption}\\
	\iff & R^{(1)}[f]\subseteq R^{(1)}[f^{\uparrow\downarrow}]\tag{by \ref{eq:basicequivalence}}.
	\end{align*}
	Now assume that $R^{(1)}[f]=R^{(1)}[f^{\uparrow\downarrow}]$ for every $f:A\to\mathbf{A}$. We want to show that $(R^{(0)}[u])^{\uparrow\downarrow}\subseteq R^{(0)}[u]$:
	\begin{align*}
	&	\quad R^{(0)}[u]\subseteq  R^{(0)}[u]\\
	\iff & \quad  u\subseteq R^{(1)}[R^{(0)}[u]]\tag{by \ref{eq:basicequivalence}}\\
	\iff & \quad  u\subseteq R^{(1)}[(R^{(0)}[u])^{\uparrow\downarrow}]\tag{$R^{(1)}[f]=R^{(1)}[f^{\uparrow\downarrow}]$ by assumption}\\
	\iff & (R^{(0)}[u])^{\uparrow\downarrow}\subseteq R^{(0)}[u]\tag{by \ref{eq:basicequivalence}}.
	\end{align*}
	The proof of 2.\ follows verbatim.
\end{proof}
\begin{lemma}
	For any  enriched formal $\mathbf{A}$-context $\mathbb{F} = (\mathbb{P}, R_{\Box}, R_{\Diamond})$, the algebra $\mathbb{F}^+ = (\mathbb{P}^+, [R_{\Box}], \langle R_{\Diamond}\rangle)$ is a complete  normal lattice expansion such that $[R_\Box]$ is completely meet-preserving and $\langle R_\Diamond\rangle$ is completely join-preserving.
\end{lemma}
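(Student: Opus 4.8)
The plan is to establish the statement exactly as in the crisp case (Lemma \ref{lem:3}), namely by exhibiting adjoints for the two operators and then invoking the standard fact that a monotone map between complete lattices which has a left (resp.\ right) adjoint is completely meet-preserving (resp.\ join-preserving) \cite[Proposition 7.31]{davey2002introduction}. Since $\mathbb{P}^+$ is a complete lattice (as recalled in the discussion of many-valued FCA above) and the $I$-compatibility of $R_\Box$ and $R_\Diamond$ guarantees, via Lemma \ref{equivalents of I-compatible-mv}, that $[R_\Box]$ and $\langle R_\Diamond\rangle$ are well defined, it only remains to produce the adjoints; the complete preservation of meets and joins will then in particular make $\mathbb{F}^+$ a complete normal lattice expansion.

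First I would introduce the many-valued analogues of the relations $R_\Diamondblack$ and $R_\blacksquare$, setting $R_\Diamondblack\colon X\times A\to\mathbf{A}$, $R_\Diamondblack(x,a):=R_\Box(a,x)$, and $R_\blacksquare\colon A\times X\to\mathbf{A}$, $R_\blacksquare(a,x):=R_\Diamond(x,a)$, together with the associated operators $\langle R_\Diamondblack\rangle$ and $[R_\blacksquare]$ on $\mathbb{P}^+$, defined by the same assignments used for $\langle R_\Diamond\rangle$ and $[R_\Box]$. Unwinding the definitions of $R^{(0)}[-]$ and $R^{(1)}[-]$ immediately yields the $\mathbf{A}$-valued counterparts of the crisp identities \eqref{eq:zero is 1 is zero}: for every $f\in\mathbf{A}^A$ and $u\in\mathbf{A}^X$,
\[R_\Diamondblack^{(0)}[f]=R_\Box^{(1)}[f],\qquad R_\Diamondblack^{(1)}[u]=R_\Box^{(0)}[u],\qquad R_\blacksquare^{(0)}[u]=R_\Diamond^{(1)}[u],\qquad R_\blacksquare^{(1)}[f]=R_\Diamond^{(0)}[f].\]
Combined with Lemma \ref{equivalents of I-compatible-mv}, these identities also show that $\langle R_\Diamondblack\rangle$ and $[R_\blacksquare]$ are well defined, since $R_\Diamondblack^{(0)}[\val{c}]=R_\Box^{(1)}[\val{c}]$ and $R_\blacksquare^{(0)}[\descr{c}]=R_\Diamond^{(1)}[\descr{c}]$ are stable by the $I$-compatibility of $R_\Box$ and $R_\Diamond$.

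The central step is the adjunction computation, which I would carry out verbatim as in Lemma \ref{lem:3}, the single change being that the crisp Galois connection (Lemma \ref{lemma: basic}.2) is replaced by the $\mathbf{A}$-Galois connection \eqref{eq:adjoint}. Concretely, recalling that $f\subseteq g$ abbreviates $S_W(f,g)=1$, the equality \eqref{eq:adjoint} yields the biconditional $f\subseteq R^{(0)}[u]\iff u\subseteq R^{(1)}[f]$ for any $\mathbf{A}$-relation $R$. Using this together with the ordering of $\mathbf{A}$-concepts and the identities above, for all $c,d\in\mathbb{P}^+$ one obtains $\langle R_\Diamondblack\rangle c\leq d$ iff $\descr{d}\subseteq R_\Diamondblack^{(0)}[\val{c}]=R_\Box^{(1)}[\val{c}]$ iff $\val{c}\subseteq R_\Box^{(0)}[\descr{d}]$ iff $c\leq[R_\Box]d$, so that $\langle R_\Diamondblack\rangle$ is the left adjoint of $[R_\Box]$; dually, $\langle R_\Diamond\rangle c\leq d$ iff $\descr{d}\subseteq R_\Diamond^{(0)}[\val{c}]$ iff $\val{c}\subseteq R_\Diamond^{(1)}[\descr{d}]=R_\blacksquare^{(0)}[\descr{d}]$ iff $c\leq[R_\blacksquare]d$, so that $[R_\blacksquare]$ is the right adjoint of $\langle R_\Diamond\rangle$.

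Applying \cite[Proposition 7.31]{davey2002introduction} then gives that $[R_\Box]$ is completely meet-preserving and $\langle R_\Diamond\rangle$ is completely join-preserving, which is the claim. I expect no serious obstacle here: the argument is structurally identical to the crisp one, and the only points requiring genuine attention are checking that the $\mathbf{A}$-Galois connection \eqref{eq:adjoint} really does license the crisp-style $\subseteq$-biconditional at the level of the value $1\in\mathbf{A}$ (which it does, by the definition of $\subseteq$ via $S$), and the bookkeeping of relation types under the swaps defining $R_\Diamondblack$ and $R_\blacksquare$.
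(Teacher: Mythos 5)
Your proof is correct, but it takes a genuinely different route from the one the paper gives for this many-valued lemma. The paper argues by direct computation: writing the meet of a family of stable concepts as $(\bigvee_{j\in J}(f_j)^{\uparrow})^{\downarrow\uparrow}$ on the intension side, it uses Lemma \ref{equivalents of I-compatible-mv}.2(iii) to strip the closure, and then the fact that $R_{\Box}^{(0)}[-]$ turns arbitrary joins of $\mathbf{A}$-valued maps into meets, so that $R_{\Box}^{(0)}[\bigvee_j (f_j)^\uparrow]=\bigwedge_j R_{\Box}^{(0)}[(f_j)^\uparrow]$; complete meet-preservation falls out in two lines, and $\langle R_\Diamond\rangle$ is handled dually. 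You instead transplant the crisp argument of Lemma \ref{lem:3}: introduce the swapped relations $R_{\Diamondblack}$ and $R_{\blacksquare}$, verify the $\mathbf{A}$-valued counterparts of \eqref{eq:zero is 1 is zero}, run the adjunction chain through the $\mathbf{A}$-Galois connection \eqref{eq:adjoint} (correctly observing that \eqref{eq:adjoint} specialises, at the value $1\in\mathbf{A}$, to the biconditional $f\subseteq R^{(0)}[u]\iff u\subseteq R^{(1)}[f]$), and conclude via \cite[Proposition 7.31]{davey2002introduction}. Both arguments lean on Lemma \ref{equivalents of I-compatible-mv} for well-definedness. The paper's computation is shorter and avoids introducing the auxiliary relations in the many-valued setting; your route costs a little more bookkeeping but is more informative, since it explicitly exhibits $\langle R_{\Diamondblack}\rangle$ and $[R_{\blacksquare}]$ as the left and right adjoints, which is exactly the extra structure the crisp development exploits (e.g.\ in the ALBA-style correspondence arguments) and which the paper's many-valued proof leaves implicit.
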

\begin{proof}Let us show that $[R_\Box]$ is completely meet-preserving. Let $f_j:A\to \mathbf{A}$ be stable concepts. Notice that $$(\bigvee_{j\in J}(f_j)^{\uparrow})^{\downarrow\uparrow}=(\bigwedge_{j\in J}f_j)^{\uparrow}.$$ We have
\[R_{\Box}^{(0)}[(\bigvee_{j\in J}(f_j)^{\uparrow})^{\downarrow\uparrow}]=  R_{\Box}^{(0)}[\bigvee_{j\in J}(f_j)^{\uparrow}]= \bigwedge{j\in J} R_{\Box}^{(0)}[(f_j)^{\uparrow}],\] the first equality following from Lemma \ref{equivalents of I-compatible-mv}.2(iii).  The proof for $\langle R_\Diamond\rangle$ goes similarly using Lemma \ref{equivalents of I-compatible-mv}.1(iii).  
\end{proof}
Applying the methodology developed in the previous sections, in what follows, we embed (and thereby represent) sets into  $\mathbf{A}$-polarities and $\mathbf{A}$-Kripke frames into  enriched formal $\mathbf{A}$-contexts so as to preserve their complex algebras.
\begin{definition}
\label{def:A-lifting of a set}
For any set $W$,  the formal $\mathbf{A}$-context associated with $W$ is
\[\mathbb{P}_W: = (W, \mathbf{A}\times W, I_\Delta),\]
where $I_\Delta: W\times (\mathbf{A}\times W)\to \mathbf{A}$ is defined by $I_\Delta(w, (\alpha, v)) = \Delta(w, v)\rightarrow \alpha$. That is, $I_\Delta(w, (\alpha, v)) = \top$ if $w\neq v$ and $I_\Delta(w, (\alpha, v)) = \alpha$ if $w= v$.
Alternatively, the formal $\mathbf{A}$-context associated with $W$ is
\[\mathbb{P}_W: = (W, \mathbf{A}^W, I_\Delta),\]
where $I_\Delta: W\times \mathbf{A}^W\to \mathbf{A}$ is defined by $I_\Delta(w, f)  = \bigwedge_{z\in Z}(f(z)\wedge \Delta(w, z)\rightarrow 0)=f(w)\rightarrow 0$. 
\end{definition}
\begin{lemma}
\label{lemma: preservation Aw-sets}
Let  $W$ be a set. Then  $\mathbb{P}_W^+ \cong \mathbf{A}^W$.
\end{lemma}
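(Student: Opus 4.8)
The plan is to exhibit an explicit order isomorphism $h\colon \mathbf{A}^W \to \mathbb{P}_W^+$, mirroring the crisp map $P\mapsto (P_A, P_X^c)$ of Proposition \ref{lemma: from sets to polarities}. I would set $h(f) := (f, f^{\uparrow})$ for every $f\colon W\to\mathbf{A}$, prove that this is a well-defined bijection onto the formal $\mathbf{A}$-concepts of $\mathbb{P}_W$, and observe that it preserves and reflects the order; since both $\mathbf{A}^W$ (ordered pointwise) and $\mathbb{P}_W^+$ are complete lattices, an order isomorphism automatically preserves all meets and joins, so this suffices. I will work with the primary context $\mathbb{P}_W = (W, \mathbf{A}\times W, I_\Delta)$ of Definition \ref{def:A-lifting of a set}.

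The crux is well-definedness, i.e.\ that $(f, f^{\uparrow})$ is a concept for every $f$, which amounts to the single identity $f^{\uparrow\downarrow} = f$: every object-side $\mathbf{A}$-set must be stable. To see this I would first unfold the two polarity maps using $I_\Delta(w,(\alpha,v)) = \Delta(w,v)\to\alpha$. Since $\Delta$ is crisp, the meet over $w$ collapses to the single summand $w=v$, giving
\[ f^{\uparrow}(\alpha, v) = f(v)\to \alpha \qquad\text{and}\qquad u^{\downarrow}(w) = \bigwedge_{\alpha\in\mathbf{A}}\big(u(\alpha, w)\to\alpha\big). \]
Composing these yields
\[ f^{\uparrow\downarrow}(w) = \bigwedge_{\alpha\in\mathbf{A}}\big((f(w)\to\alpha)\to\alpha\big), \]
so the whole lemma reduces to the Heyting-algebra identity $\bigwedge_{\alpha\in\mathbf{A}}\big((a\to\alpha)\to\alpha\big) = a$, valid for every $a\in\mathbf{A}$. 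This I would dispatch in two lines: the residuation witness of $a\wedge(a\to\alpha)\le\alpha$ gives $a\le (a\to\alpha)\to\alpha$ for every $\alpha$, hence $\ge a$; and instantiating $\alpha := a$ gives $(a\to a)\to a = \top\to a = a$, hence $\le a$. This is the main (and essentially the only non-bookkeeping) obstacle; everything else is routine.

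With well-definedness established, the remaining steps are immediate. Injectivity of $h$ holds because $f$ is the first coordinate of $h(f)$. Surjectivity follows since any concept $(g,u)$ satisfies $g^{\uparrow}=u$ by definition, so $h(g)=(g,g^{\uparrow})=(g,u)$; note this uses only stability of extents, which we proved for all $f$. Finally, by the definition of the order on $\mathbb{P}_W^+$ we have $h(f)\le h(f')$ iff $f\subseteq f'$, and $f\subseteq f'$ (that is, $S_W(f,f')=1$) is precisely the pointwise order of $\mathbf{A}^W$; thus $h$ is an order isomorphism. For the alternative context $\mathbb{P}_W = (W, \mathbf{A}^W, I_\Delta)$ with $I_\Delta(w,g)=g(w)\to 0$, I would run the same scheme—$h(f)=(f,f^{\uparrow})$ reducing again to a stability computation—this being the $\mathbf{A}$-valued counterpart of the identity $P_A=\{Q_X\mid Q\subseteq P^c\}^{\downarrow}$ from Remark \ref{remark:alternative liftings}. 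The extra care here is that the induced closure is of double-negation type, so one must verify that it collapses to the identity on $\mathbf{A}^W$; I would isolate and check exactly this Heyting point, in parallel to the identity above.
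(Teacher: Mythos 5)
Your proposal is correct and follows essentially the same route as the paper's proof: both reduce the claim to the stability identity $f^{\uparrow\downarrow}=f$, compute $f^{\uparrow}(\alpha,v)=f(v)\to\alpha$ by collapsing the meet over the crisp $\Delta$, and conclude via the complete-Heyting-algebra identity $\bigwedge_{\alpha}\bigl((a\to\alpha)\to\alpha\bigr)=a$, proved by exactly the same two inequalities. The only difference is that you spell out the bijection and order-isomorphism bookkeeping that the paper leaves implicit, which is harmless.
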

\begin{proof}
It is enough to show that $f = f^{\uparrow\downarrow}$ for any $f\in \mathbf{A}^W$. Using the first definition, for any $f: W\to \mathbf{A}$, the map $f^\uparrow: \mathbf{A}\times W \to \mathbf{A}$ is defined as follows: for every $(\beta, v)\in \mathbf{A}\times W $,
\begin{center}
\begin{tabular}{r c ll}
$f^\uparrow(\beta, v)$ &  = & $\bigwedge_{w\in W}(f(w)\to I_\Delta(w, (\beta, v)))$\\
&  = & $f(v)\to \beta$.\\
\end{tabular}
\end{center}
The map $f^{\uparrow\downarrow}: W\to \mathbf{A}$ is defined as follows: for every $w\in W$,
\begin{center}
\begin{tabular}{r c ll}
$f^{\uparrow\downarrow}(w)$ &  = & $\bigwedge_{(\beta, v)\in \mathbf{A}\times W}(f^\uparrow(\beta, v)\to I_\Delta(w, (\beta, v)))$\\
&  = & $\bigwedge_{\beta\in \mathbf{A}}(f^\uparrow(\beta, w)\to \beta)$\\
&  = & $\bigwedge_{\beta\in \mathbf{A}}((f(w)\to \beta)\to \beta)$\\
&  = & $f(w)$. & ($\ast$)\\
\end{tabular}
\end{center}
The identity marked with ($\ast$) is an instance of  $a = \bigwedge_{b\in \mathbb{A}}(a\to b)\to b$ which is valid in any complete Heyting algebra $\mathbb{A}$: indeed, $\bigwedge_{b\in \mathbb{A}}(a\to b)\to b\leq (a\to a)\to a = \top\to a = a$; conversely, $a \leq \bigwedge_{b\in \mathbb{A}}(a\to b)\to b$ iff $a \leq (a\to b)\to b$ for each $b$, iff $a \wedge (a\to b)\leq b$ for each $b$, which is always true.

Using the second definition, for any $f: W\to \mathbf{A}$, the map $f^\uparrow: \mathbf{A}^W \to \mathbf{A}$ is defined as follows: for every $g: W\to \mathbf{A}$,
\begin{center}
\begin{tabular}{r c ll}
$f^\uparrow(g)$ &  = & $\bigwedge_{w\in W}(f(w)\to I_\Delta(w, g))$\\
&  = & $\bigwedge_{w\in W}(f(w)\to (g(w)\to 0)$\\
&  = & $\bigwedge_{w\in W}((f(w)\wedge g(w))\to 0)$\\
&  = & $(\bigvee_{w\in W}f(w)\wedge g(w))\to 0$\\
&  = & $\neg(\bigvee_{w\in W}f(w)\wedge g(w))$.\\
\end{tabular}
\end{center}
Hence, $f^\uparrow(f) = \neg (\bigvee_{w\in W}f(w)\wedge f(w)) =  \neg (\bigvee_{w\in W}f(w))$.
The map $f^{\uparrow\downarrow}: W\to \mathbf{A}$ is defined as follows: for every $w\in W$,
\begin{center}
\begin{tabular}{r c ll}
$f^{\uparrow\downarrow}(w)$ &  = & $\bigwedge_{g\in \mathbf{A}^W}(f^\uparrow(g)\to I_\Delta(w, g))$\\
&  = & $\bigwedge_{g\in \mathbf{A}^W}(f^\uparrow(g)\to  (g(w)\to 0))$\\
&  = & $\bigwedge_{g\in \mathbf{A}^W}(\neg(\bigvee_{v\in W}f(v)\wedge g(v))\wedge g(w))\to 0)$\\
&  = & $\neg (\bigvee_{g\in \mathbf{A}^W}(\neg(\bigvee_{v\in W}f(v)\wedge g(v))\wedge g(w)))$\\

&  = & $f(w)$. & ($\ast$)\\
\end{tabular}
\end{center}
for $g = f$, $\neg(\bigvee_{v\in W}f(v)\wedge f(w))\leq\neg(\bigvee_{v\in W}f(v)\wedge f(v))\wedge f(w))\leq  (\bigvee_{g\in \mathbf{A}^W}(\neg(\bigvee_{v\in W}f(v)\wedge g(v))\wedge g(w)))$
\end{proof}

\begin{definition}
For any $\mathbf{A}$-Kripke frame  $\mathbb{X} = (W, R)$,  the enriched formal $\mathbf{A}$-context associated with $\mathbb{X}$ is
\[\mathbb{F}_\mathbb{X}: = (\mathbb{P}_W, I_{R}, J_R), \]
where $\mathbb{P}_W$ is  as in Definition \ref{def:A-lifting of a set}, and $I_R: W\times (\mathbf{A}\times W)\to \mathbf{A}$ is defined by $I_R(w, (\alpha, v)) = R(w, v)\rightarrow \alpha$ and $J_R:  (\mathbf{A}\times W)\times W\to \mathbf{A}$ is defined by $J_R((\alpha, w), v) = R(w, v)\rightarrow \alpha$. 
\end{definition}
\begin{lemma}
\label{lemma:preservation-Aw-frames}
Let $\mathbb{X} = (W, R)$ be an $\mathbf{A}$-Kripke frame. Then $\mathbb{F}_\mathbb{X}^+ \cong(\mathbf{A}^W, [R], \langle R\rangle)$.
\end{lemma}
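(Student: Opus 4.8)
The plan is to promote the lattice isomorphism $h\colon \mathbf{A}^W\to \mathbb{P}_W^+$ of Lemma \ref{lemma: preservation Aw-sets}, given by $f\mapsto (f, f^{\uparrow})$ with $f^{\uparrow}(\alpha, v) = f(v)\to \alpha$, to an isomorphism of complex algebras. Since $h$ is already known to be a complete lattice isomorphism, two things remain to be checked. First, that $\mathbb{F}_\mathbb{X}$ is a genuine enriched formal $\mathbf{A}$-context, i.e.\ that $I_R$ and $J_R$ are $I_\Delta$-compatible, so that $[I_R]$ and $\langle J_R\rangle$ are well defined. Second, that $h$ intertwines the modal operations, namely $h([R]f) = [I_R]\,h(f)$ and $h(\langle R\rangle f) = \langle J_R\rangle\, h(f)$ for every $f\in \mathbf{A}^W$. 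Because $h(g) = (g, g^{\uparrow})$ and the operations $[I_R]$ and $\langle J_R\rangle$ on $\mathbb{P}_W^+$ are determined by their action on extensions, each intertwining identity reduces to a single equality in $\mathbf{A}^W$: explicitly, $I_R^{(0)}[f^{\uparrow}] = [R]f$ and $(J_R^{(0)}[f])^{\downarrow} = \langle R\rangle f$.

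I would treat the $\Diamond$-side and the compatibility together, since they rest on the same elementary computation. Unwinding the definition of $J_R$ and using residuation in $\mathbf{A}$,
\[ J_R^{(0)}[f](\alpha, v) = \bigwedge_{z\in W}\big(f(z)\to (R(v,z)\to \alpha)\big) = \Big(\bigvee_{z\in W}(f(z)\wedge R(v,z))\Big)\to \alpha = \langle R\rangle f(v)\to \alpha, \]
so that $J_R^{(0)}[f] = (\langle R\rangle f)^{\uparrow}$. Applying $(\cdot)^{\downarrow}$ and using that every element of $\mathbf{A}^W$ is stable (Lemma \ref{lemma: preservation Aw-sets}) gives $(J_R^{(0)}[f])^{\downarrow} = (\langle R\rangle f)^{\uparrow\downarrow} = \langle R\rangle f$, which is exactly the extension of $\langle J_R\rangle\, h(f)$; hence $h(\langle R\rangle f) = \langle J_R\rangle\, h(f)$. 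An analogous computation yields $I_R^{(1)}[f] = g^{\uparrow}$ for $g(v) = \bigvee_{w}(f(w)\wedge R(w,v))$. Now the four $I_\Delta$-compatibility conditions fall out for free: $I_R^{(0)}[u]$ and $J_R^{(1)}[u]$ land in $\mathbf{A}^A = \mathbf{A}^W$ and are therefore automatically stable, while $J_R^{(0)}[f]$ and $I_R^{(1)}[f]$ are of the form $(\cdot)^{\uparrow}$ and hence stable by construction.

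The $\Box$-side carries the only nonroutine ingredient. Expanding,
\[ I_R^{(0)}[f^{\uparrow}](w) = \bigwedge_{(\alpha, v)\in \mathbf{A}\times W}\big((f(v)\to \alpha)\to (R(w,v)\to \alpha)\big) = \bigwedge_{v\in W}\bigwedge_{\alpha\in \mathbf{A}}\big((f(v)\to \alpha)\to (R(w,v)\to \alpha)\big), \]
so the proof hinges on the Heyting-algebra identity $\bigwedge_{\alpha\in \mathbf{A}}\big((p\to \alpha)\to (r\to \alpha)\big) = r\to p$, which I regard as the main obstacle (or rather, the crux). The inequality $\leq$ follows by instantiating $\alpha = p$; the inequality $\geq$ follows because $(r\to p)\wedge(p\to\alpha)\wedge r \leq p\wedge(p\to\alpha)\leq \alpha$, so that $r\to p \leq (p\to\alpha)\to(r\to\alpha)$ for every $\alpha$. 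Taking $p = f(v)$ and $r = R(w,v)$ collapses the inner meet and gives $I_R^{(0)}[f^{\uparrow}](w) = \bigwedge_{v}(R(w,v)\to f(v)) = [R]f(w)$, whence $[I_R]\,h(f) = ([R]f, ([R]f)^{\uparrow}) = h([R]f)$. Combining the two intertwining identities with the fact that $h$ is a lattice isomorphism shows that $h$ is an isomorphism of complex algebras, i.e.\ $\mathbb{F}_\mathbb{X}^+ \cong (\mathbf{A}^W, [R], \langle R\rangle)$, as required.
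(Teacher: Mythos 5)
Your proof is correct and follows essentially the same route as the paper's: both reduce the claim, via the lattice isomorphism of Lemma \ref{lemma: preservation Aw-sets}, to the two identities $I_R^{(0)}[f^{\uparrow}] = [R]f$ and $(J_R^{(0)}[f])^{\downarrow} = \langle R\rangle f$, and both verify the $\Box$-identity by the same Heyting-algebra computation (instantiating $\alpha = f(v)$ for one inequality and using $r\to p\leq (p\to\alpha)\to(r\to\alpha)$ for the other). Your two small additions --- the explicit check of $I_\Delta$-compatibility, and handling the $\Diamond$-side by observing $J_R^{(0)}[f] = (\langle R\rangle f)^{\uparrow}$ and invoking stability instead of recomputing $\bigwedge_{\beta}((a\to\beta)\to\beta)=a$ --- are welcome but do not change the substance of the argument.
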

\begin{proof}
By Lemma \ref{lemma: preservation Aw-sets}, it is enough to show that, for every $f\in \mathbf{A}^W$,
 \begin{equation}
 \label{eq:justification-MVsemantics modal op}
 [R] f = I_R^{(0)}[f^\uparrow] \quad \mbox{ and }\quad \langle R\rangle f = (J_R^{(0)}[f])^\downarrow.\end{equation}  To prove the first identity, recall that $[R]f: W\to \mathbf{A}$ is defined as \[([R] f )(w) = \bigwedge_{v\in W}(R[w](v)\to f(v)),\]  
and $ I_R^{(0)}[f^\uparrow]: W\to \mathbf{A}$ is defined as \[ (I_R^{(0)}[f^\uparrow])(w) = \bigwedge_{(\beta, v)\in \mathbf{A}\times W}  (f^\uparrow(\beta, v)\to I_R(w, (\beta, v))) = \bigwedge_{(\beta, v)\in \mathbf{A}\times W}  ((f(v)\to \beta)\to (R(w, v)\to \beta)). \]
Hence,
\begin{center}
\begin{tabular}{r c ll}
$(I_R^{(0)}[f^\uparrow])(w)$ & =& $ \bigwedge_{(\beta, v)\in \mathbf{A}\times W}  ((f(v)\to \beta)\to (R(w, v)\to \beta))$\\
& $\leq$ & $\bigwedge_{ v\in  W}  ((f(v)\to f(v))\to (R(w, v)\to f(v)))$ & for $\beta: = f(v)$\\
& = &$\bigwedge_{ v\in  W} \top \to (R(w, v)\to f(v)))$\\
& = & $ \bigwedge_{ v\in  W}  R(w, v)\to f(v))$ \\
& = & $ ([R] f )(w)$. \\
\end{tabular}
\end{center}
To show that \[\bigwedge_{ v\in  W} (R(w, v)\to f(v)))\leq \bigwedge_{(\beta, v)\in \mathbf{A}\times W}  ((f(v)\to \beta)\to (R(w, v)\to \beta)),\]
it is enough to show that for each $(\beta, v)\in \mathbf{A}\times W$, \[R(w, v)\to f(v)\leq  (f(v)\to \beta)\to (R(w, v)\to \beta).\]
The inequality above is an instance of $b\to a\leq (a\to c)\to (b\to c)$, which is valid in every Heyting algebra. To see this, 
\begin{center}
\begin{tabular}{r c ll}
&& $b\to a\leq (a\to c)\to (b\to c)$\\
& iff & $(b\to a)\wedge (a\to c)\leq  b\to c$\\
& iff & $b\wedge (b\to a)\wedge (a\to c)\leq  c$\\
\end{tabular}
\end{center}
and indeed, $b\wedge (b\to a)\wedge (a\to c)\leq a\wedge (a\to c)\leq  c$.

To prove that $\langle R\rangle f = (J_R^{(0)}[f])^\downarrow$, recall that $\langle R\rangle f: W\to \mathbf{A}$ is defined as \[
(\langle R\rangle  f )(w) = \bigvee_{v\in W}(R[w](v)\wedge f(v)),\]  
and $ (J_R^{(0)}[f])^\downarrow: W\to \mathbf{A}$ is defined as \[ (J_R^{(0)}[f])^\downarrow(w) = \bigwedge_{(\beta, v)\in \mathbf{A}\times W}  (J_R^{(0)}[f](\beta, v)\to I_\Delta(w, (\beta, v))) = \bigwedge_{\beta\in \mathbf{A}}  (J_R^{(0)}[f](\beta, v)\to \beta). \]
Hence,
\begin{center}
\begin{tabular}{r c ll}
$(J_R^{(0)}[f])^\downarrow(w) $ & = & $ \bigwedge_{(\beta, v)\in \mathbf{A}\times W}  (J_R^{(0)}[f](\beta, v)\to I_\Delta(w, (\beta, v)))$\\
&  =  & $\bigwedge_{\beta\in \mathbf{A}}  (J_R^{(0)}[f](\beta, w)\to \beta)$\\
&  =  & $\bigwedge_{\beta\in \mathbf{A}}  (\bigwedge_{v\in W} (f(v)\to (R(v, w)\to\beta))\to \beta)$\\
&  =  & $\bigwedge_{\beta\in \mathbf{A}}  (\bigwedge_{v\in W} ((f(v)\wedge R(v, w))\to\beta)\to \beta)$\\
&  =  & $\bigwedge_{\beta\in \mathbf{A}}  ( (\bigvee_{v\in W}(f(v)\wedge R(v, w))\to\beta)\to \beta)$\\
&  =  & $\bigwedge_{\beta\in \mathbf{A}}  ( (\langle R\rangle  f )(w)\to\beta)\to \beta)$\\
&  =  & $(\langle R\rangle  f )(w).$\\
\end{tabular}
\end{center}
\end{proof}

\paragraph{Lifting reflexivity.}  In the crisp setting of the previous sections, we discussed how properties of Kripke frames can be lifted to properties of their corresponding enriched formal contexts, and we used these property-lifting results to motivate the definition of conceptual approximation spaces (cf.~Sections \ref{ssec:lifting properties} and \ref{ssec:Conceptual approximation spaces}). Below, we illustrate, by way of an example, that this `lifting method' works also in the many-valued setting. Specifically, we show how the property of reflexivity of $\mathbf{A}$-Kripke frames can be lifted to enriched formal $\mathbf{A}$-contexts.
An $\mathbf{A}$-Kripke frame $\mathbb{X} = (W, R)$ is {\em reflexive} (cf.~\cite{Frank2006,BritzMScThesis}) if $R(w,w) = \top^{\mathbf{A}}$ for all $w \in W$, or equivalently, if $\Delta(w, v)\leq R(w, v)$ for all $w, v \in W$, where $\Delta(w, v) = \top$ if $w = v$ and $\Delta(w, v) = \bot$ if $w\neq v$.

%it is {\em transitive} if $R(w,v) \wedge R(v,u) \leq R(w,u)$ for all $w,v,u \in W$, and it is {\em symmetric} if $R(w,v) = R(v,w)$ for all $w,v \in W$. 

\begin{proposition}
\label{prop:lifting mv}
If  $\mathbb{X} = (W, R)$ is an  $\mathbf{A}$-Kripke frame, $\mathbb{X}$ is reflexive iff $I_R(w,(\alpha, v)) \leq I_{\Delta}(w,(\alpha, v))$ for all $w, v\in W$ and $\alpha \in \mathbf{A}$.
	 \end{proposition}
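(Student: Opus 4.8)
The plan is to unwind the definitions of the two lifted relations and reduce the claimed biconditional to an elementary fact about implication in the Heyting algebra $\mathbf{A}$. Recall from Definition \ref{def:A-lifting of a set} that $I_\Delta(w,(\alpha,v)) = \Delta(w,v)\to\alpha$, and from the definition of $\mathbb{F}_\mathbb{X}$ that $I_R(w,(\alpha,v)) = R(w,v)\to\alpha$. Hence the inequality to be characterized reads
\[
R(w,v)\to\alpha \;\leq\; \Delta(w,v)\to\alpha \qquad \text{for all } w,v\in W \text{ and } \alpha\in\mathbf{A}.
\]
First I would split on whether $w=v$. When $w\neq v$ we have $\Delta(w,v)=\bot^{\mathbf{A}}$, so the right-hand side equals $\bot^{\mathbf{A}}\to\alpha=\top^{\mathbf{A}}$ and the inequality holds vacuously. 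Thus the content of the inequality is concentrated entirely in the diagonal case $w=v$, where $\Delta(w,w)=\top^{\mathbf{A}}$ and the right-hand side becomes $\top^{\mathbf{A}}\to\alpha=\alpha$. Consequently, the displayed family of inequalities is equivalent to
\[
R(w,w)\to\alpha\;\leq\;\alpha\qquad \text{for all } w\in W \text{ and } \alpha\in\mathbf{A}.
\]

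Next I would prove the auxiliary Heyting-algebra fact that, for a fixed $b\in\mathbf{A}$, one has $b\to\alpha\leq\alpha$ for every $\alpha\in\mathbf{A}$ if and only if $b=\top^{\mathbf{A}}$. The direction from right to left is immediate, since $\top^{\mathbf{A}}\to\alpha=\alpha$. For the converse, the key move is to instantiate the universally quantified $\alpha$ at $\alpha:=b$: this yields $\top^{\mathbf{A}}=b\to b\leq b$, whence $b=\top^{\mathbf{A}}$. Applying this with $b:=R(w,w)$ for each $w$ shows that the diagonal inequalities above hold for all $w$ and $\alpha$ exactly when $R(w,w)=\top^{\mathbf{A}}$ for every $w\in W$, which is precisely the definition of reflexivity of $\mathbb{X}$ recalled just before the statement. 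Chaining the two equivalences completes the argument.

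The step I expect to require the most care is the converse of the auxiliary fact, where one must recover $b=\top^{\mathbf{A}}$ from the inequality holding for all $\alpha$. It is tempting but incorrect to argue via $\alpha:=\bot^{\mathbf{A}}$, which only gives $\neg b=\bot^{\mathbf{A}}$: in a general Heyting algebra $\neg b=\bot^{\mathbf{A}}$ does not force $b=\top^{\mathbf{A}}$ (this implication holds only in the Boolean case). The correct and clean witness is $\alpha:=b$ itself. Everything else is a routine case split and substitution, so no further obstacles are anticipated.
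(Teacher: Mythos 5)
Your proof is correct and rests on the same key move as the paper's: instantiating the universally quantified $\alpha$ at the value of $R$ itself to extract $\top^{\mathbf{A}} = R(\cdot,\cdot)\to R(\cdot,\cdot) \leq \cdots$ (the paper does this with $\alpha := R(w,v)$ in the two-variable inequality and finishes by residuation, whereas you first dispose of the off-diagonal case as vacuous and then apply the same instantiation on the diagonal, where the right-hand side is already $\alpha$). Your warning that $\alpha := \bot^{\mathbf{A}}$ would only yield $\neg R(w,w) = \bot^{\mathbf{A}}$ is accurate and well placed, but overall this is essentially the paper's argument in a slightly repackaged form.
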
	  
\begin{proof}
If $R$ is reflexive, i.e.~if $\Delta(w, v)\leq R(w, v)$ for all $w, v \in W$, then  $I_R(w,(\alpha, v)) = R(w,v) \to \alpha \leq \Delta(w,v) \to \alpha  = I_{\Delta}(w,(\alpha, v))$ for any $\alpha \in \mathbf{A}$. Conversely, suppose that $I_R(w,(\alpha, v)) \leq I_{\Delta}(w,(\alpha, v))$ for all $w, v\in W$ and $\alpha \in \mathbf{A}$, i.e.~$R(w,v) \to \alpha \leq \Delta(w,v) \to \alpha$ for all $w, v$ and $\alpha$. Then in particular, setting $\alpha = R(w,v)$, we have $\top^{\mathbf{A}} = R(w,v) \rightarrow R(w,v) \leq \Delta(w,v) \to  R(w,v)$, and hence, by residuation, $\Delta(w, v)\leq R(w, v)$.
\end{proof}

Analogously to Proposition \ref{prop:lifting of properties} eliciting Definition \ref{def:terminology}, Proposition \ref{prop:lifting mv} elicits the following
\begin{definition}
Let $\mathbb{P} = (A, X, I)$ be an $\mathbf{A}$-polarity. An $\mathbf{A}$-relation $R:A\times X\to \mathbf{A}$ is
{\em reflexive}  iff  $R(a,x) \leq I(a, x)$ for all $a\in A$ and $x \in X$.
\end{definition}

\paragraph{Many-valued semantics for the logic of concepts.} The discussion and results above justify the introduction of the following many-valued semantic framework for the modal logic of concepts:
\begin{definition}
A {\em conceptual}  $\mathbf{A}$-{\em model} over a  set $\mathsf{AtProp}$ of atomic propositions is a tuple $\mathbb{M} = (\mathbb{F}, V)$ such that $\mathbb{F} = (A, X, I, R_\Box, R_\Diamond)$ is an enriched formal $\mathbf{A}$-context and $V: \mathsf{AtProp}\to \mathbb{F}^+$. For every $p\in \mathsf{AtProp}$, let $V(p): = (\val{p}, \descr{p})$, where $\val{p}: A\to \mathbf{A}$ and $\descr{p}: X\to\mathbf{A}$, and $\val{p}^\uparrow = \descr{p}$ and $\descr{p}^\downarrow = \val{p}$.
Letting $\mathcal{L}$ denote the $\{\Box, \Diamond\}$ modal language  over $\mathsf{AtProp}$, 
every $V$ as above has a unique homomorphic extension, also denoted $V: \mathcal{L} \to \mathbb{F}^+$, defined as follows:
\begin{center}
\begin{tabular}{r c l}
$V(p)$ & = & $(\val{p}, \descr{p})$\\
$V(\top)$ & = & $(\top^{\mathbf{A}^A}, (\top^{\mathbf{A}^A})^\uparrow)$\\
$V(\bot)$ & = & $((\top^{\mathbf{A}^X})^\downarrow, \top^{\mathbf{A}^X})$\\
$V(\phi\wedge \psi)$ & = & $(\val{\phi}\wedge\val{\psi}, (\val{\phi}\wedge\val{\psi})^\uparrow)$\\
$V(\phi\vee \psi)$ & = & $((\descr{\phi}\wedge\descr{\psi})^\downarrow, \descr{\phi}\wedge\descr{\psi})$\\
$V(\Box\phi)$ & = & $(R^{(0)}_\Box[\descr{\phi}], (R^{(0)}_\Box[\descr{\phi}])^\uparrow)$\\
$V(\Diamond\phi)$ & = & $((R^{(0)}_\Diamond[\val{\phi}])^\downarrow, R^{(0)}_\Diamond[\val{\phi}])$\\
\end{tabular}
\end{center}
which in its turn induces  $\alpha$-{\em membership relations} for each $\alpha\in \mathbf{A}$ (in symbols: $\mathbb{M}, a\Vdash^\alpha \phi$), and $\alpha$-{\em description relations} for each $\alpha\in \mathbf{A}$ (in symbols: $\mathbb{M}, x\succ^\alpha \phi$)---cf.~discussion in Section \ref{sec:logics}---such that for every $\phi\in \mathcal{L}$,
\[\mathbb{M}, a\Vdash^\alpha \phi \quad \mbox{ iff }\quad \alpha\leq \val{\phi}(a),\]
\[\mathbb{M}, x\succ^\alpha \phi \quad \mbox{ iff }\quad \alpha\leq \descr{\phi}(x).\]
This can be equivalently expressed by means of the following  recursive definition:
\begin{center}
\begin{tabular}{r c l}
$\mathbb{M}, a\Vdash^\alpha p$ & iff & $\alpha\leq \val{\phi}(a)$;\\
$\mathbb{M}, a\Vdash^\alpha \top$ & iff & $\alpha\leq (\top^{\mathbf{A}^A})(a)$ i.e.~always;\\
$\mathbb{M}, a\Vdash^\alpha \bot$ & iff & $\alpha\leq (\top^{\mathbf{A}^X})^\downarrow (a) = \bigwedge_{x\in X}(\top^{\mathbf{A}^X}(x)\to I(a, x)) =  \bigwedge_{x\in X} I(a, x)$;\\
$\mathbb{M}, a\Vdash^\alpha \phi\wedge \psi$ & iff & $\mathbb{M}, a\Vdash^\alpha \phi$ and $\mathbb{M}, a\Vdash^\alpha \psi$;\\
$\mathbb{M}, a\Vdash^\alpha \phi\vee \psi$ & iff & $\alpha\leq (\descr{\phi}\wedge\descr{\psi})^\downarrow(a) = \bigwedge_{x\in X}(\descr{\phi}(x)\wedge\descr{\psi}(x)\to I(a, x))$;\\
$\mathbb{M}, a\Vdash^\alpha \Box \phi$ & iff & $\alpha\leq (R^{(0)}_\Box[\descr{\phi}])(a) = \bigwedge_{x\in X}(\descr{\phi}(x)\to R_\Box(a, x))$;\\
$\mathbb{M}, a\Vdash^\alpha \Diamond \phi$ & iff & $\alpha\leq ((R^{(0)}_\Diamond[\val{\phi}])^\downarrow)(a) = \bigwedge_{x\in X}((R^{(0)}_\Diamond[\val{\phi}])(x)\to I(a, x))$\\
\end{tabular}
\end{center}

\begin{center}
\begin{tabular}{r c l}
$\mathbb{M}, x\succ^\alpha p$ & iff & $\alpha\leq \descr{\phi}(x)$;\\
$\mathbb{M}, x\succ^\alpha \bot$ & iff & $\alpha\leq (\top^{\mathbf{A}^X})(x)$ i.e.~always;\\
$\mathbb{M}, x\succ^\alpha \top$ & iff & $\alpha\leq (\top^{\mathbf{A}^A})^\uparrow (x) = \bigwedge_{a\in A}(\top^{\mathbf{A}^A}(a)\to I(a, x)) =  \bigwedge_{a\in A} I(a, x)$;\\
$\mathbb{M}, x\succ^\alpha \phi\vee \psi$ & iff & $\mathbb{M}, x\succ^\alpha \phi$ and $\mathbb{M}, x\succ^\alpha \psi$;\\
$\mathbb{M}, x\succ^\alpha \phi\wedge \psi$ & iff & $\alpha\leq (\val{\phi}\wedge\val{\psi})^\uparrow(x) = \bigwedge_{a\in A}(\val{\phi}(a)\wedge\val{\psi}(a)\to I(a, x))$;\\
$\mathbb{M}, x\succ^\alpha \Diamond \phi$ & iff & $\alpha\leq (R^{(0)}_\Diamond[\val{\phi}])(x) = \bigwedge_{a\in A}(\val{\phi}(a)\to R_\Diamond(x, a))$;\\
$\mathbb{M}, x\succ^\alpha \Box \phi$ & iff & $\alpha\leq ((R^{(0)}_\Box[\descr{\phi}])^\uparrow)(x) = \bigwedge_{a\in A}((R^{(0)}_\Box[\descr{\phi}])(a)\to I(a, x))$\\
\end{tabular}
\end{center}
\end{definition}

\paragraph{Axiomatic characterization of reflexive enriched formal $\mathbf{A}$-contexts.} With the definition above in place, we are now in a position to show, as an illustration,  that the characterization of reflexivity of Proposition \ref{lemma:correspondences} extends to the many-valued setting.
\begin{proposition}
\label{lemma:correspondence-mv}
For any enriched formal $\mathbf{A}$-context $\mathbb{F} = (\mathbb{P}, R_\Box, R_\Diamond)$,
%\begin{enumerate}
%\item 
	\[
	\mathbb{F}\models \Box\phi\vdash \phi\quad \mbox{ \ \ iff \ \ } \quad R_\Box\leq I.
	\]
	%\item $\mathbb{F}\models \phi\vdash \Diamond\phi\quad $ iff $\quad R_\blacksquare\leq I$.
%\end{enumerate}
\end{proposition}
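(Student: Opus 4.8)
The plan is to unfold the semantic validity $\mathbb{F}\models\Box\phi\vdash\phi$ into an order-theoretic condition on the complex algebra $\mathbb{F}^{+}$, and then to run the many-valued analogue of the minimal-valuation (ALBA) argument used for the crisp item in Proposition \ref{lemma:correspondences}(2). Reading $\phi$ as a propositional variable, $\mathbb{F}\models\Box\phi\vdash\phi$ holds iff $V(\Box\phi)\leq V(\phi)$ for every conceptual $\mathbf{A}$-model $(\mathbb{F},V)$; since $V(\phi)$ ranges over all concepts of $\mathbb{P}^{+}$ as $V$ varies, this is equivalent to $[R_\Box]c\leq c$ for every $c\in\mathbb{P}^{+}$. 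Spelling out $[R_\Box]c$ and recalling $\val{c}=\descr{c}^{\downarrow}=I^{(0)}[\descr{c}]$, the concept inequality $[R_\Box]c\leq c$ amounts, via the ordering of concepts, to $R_\Box^{(0)}[\descr{c}]\subseteq I^{(0)}[\descr{c}]$. Hence the whole condition reads: for every stable $u\in\mathbf{A}^{X}$, $R_\Box^{(0)}[u]\subseteq I^{(0)}[u]$.

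For the soundness direction ($\Leftarrow$), I would assume $R_\Box\leq I$ pointwise. Since $\to$ is monotone in its second coordinate in any Heyting algebra, for every $u\in\mathbf{A}^{X}$ and every $a\in A$ one has $R_\Box^{(0)}[u](a)=\bigwedge_{x}(u(x)\to R_\Box(a,x))\leq\bigwedge_{x}(u(x)\to I(a,x))=I^{(0)}[u](a)$, so $R_\Box^{(0)}[u]\subseteq I^{(0)}[u]$ for all $u$, in particular all stable ones. Thus $[R_\Box]c\leq c$ for every $c$, giving $\mathbb{F}\models\Box\phi\vdash\phi$.

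For the completeness direction ($\Rightarrow$), the plan is to test validity on the singleton-like maps $\{\alpha\backslash x\}$. First I would upgrade the condition from stable $u$ to arbitrary $u$: by the $I$-compatibility of $R_\Box$ and Lemma \ref{equivalents of I-compatible-mv}.2(iii) we have $R_\Box^{(0)}[u]=R_\Box^{(0)}[u^{\downarrow\uparrow}]$, while the Galois identity gives $I^{(0)}[u]=u^{\downarrow}=u^{\downarrow\uparrow\downarrow}=I^{(0)}[u^{\downarrow\uparrow}]$ for every $u$; applying the stable instance of validity to $u^{\downarrow\uparrow}$ therefore yields $R_\Box^{(0)}[u]\subseteq I^{(0)}[u]$ for every $u\in\mathbf{A}^{X}$. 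Next, fixing $x\in X$ and $\alpha\in\mathbf{A}$ and instantiating $u=\{\alpha\backslash x\}$, a direct computation gives $R_\Box^{(0)}[\{\alpha\backslash x\}](a)=\alpha\to R_\Box(a,x)$ and $I^{(0)}[\{\alpha\backslash x\}](a)=\alpha\to I(a,x)$ for every $a\in A$, all other meetands being $\bot\to(-)=\top$. Hence $\alpha\to R_\Box(a,x)\leq\alpha\to I(a,x)$ for all $a$ and all $\alpha$. Finally, instantiating $\alpha=R_\Box(a,x)$ makes the left-hand side $R_\Box(a,x)\to R_\Box(a,x)=\top$, so $R_\Box(a,x)\to I(a,x)=\top$, i.e.\ $R_\Box(a,x)\leq I(a,x)$ by residuation; as $a,x$ were arbitrary, $R_\Box\leq I$.

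The main obstacle is the completeness direction, and specifically the two moves that replace the crisp ALBA steps: the passage from stable to arbitrary arguments (which is exactly where $I$-compatibility, via Lemma \ref{equivalents of I-compatible-mv}, is indispensable, since $\{\alpha\backslash x\}$ need not be stable), and the choice of the minimal valuation $\alpha=R_\Box(a,x)$, which is the many-valued incarnation of the Ackermann substitution and mirrors the instantiation used in the proof of Proposition \ref{prop:lifting mv}. Beyond these, the only ingredients needed are the pointwise formula for $R^{(0)}[-]$ together with the standard Heyting identities $\bot\to b=\top$ and $a\to a=\top$ and residuation, so I expect no further difficulty once the generator $\{\alpha\backslash x\}$ and the substitution are in place.
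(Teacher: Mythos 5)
Your proposal is correct and follows essentially the same route as the paper's proof: both reduce validity to the inequality $[R_\Box]c\leq c$ on (generators of) $\mathbb{P}^+$, use the $I$-compatibility of $R_\Box$ (Lemma \ref{equivalents of I-compatible-mv}) to pass from stable arguments to the non-stable test functions $\{\alpha\backslash x\}$, compute pointwise to obtain $\alpha\to aR_\Box x\leq \alpha\to aIx$, and conclude via the Ackermann-style instantiation $\alpha:=aR_\Box x$ and residuation, with the converse by monotonicity of $\to$ in its second coordinate. The only difference is presentational: the paper phrases the reduction as a run of ALBA over meet-generators $\cnomm$, whereas you argue directly over all stable $u$ before specializing, which amounts to the same computation.
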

\begin{proof}
	It is sufficient to establish that the following are equivalent in $\mathbb{F}^{+}$:
	
	\begin{center}
		\begin{tabular}{cll}
			&$\forall p [\Box p \leq p]$\\
			iff &$\forall \cnomm [\Box \cnomm \leq \cnomm]$ &(ALBA algorithm \cite{CoPa:non-dist})\\
			iff &$\forall \alpha \forall x [R^{(0)}_{\Box}[\{\alpha/x\}^{\downarrow\uparrow}]  \leq \{\alpha/x\}^{\downarrow}]$ & ($\cnomm := \{\alpha/x\}$) \\
			iff &$\forall \alpha \forall x [R^{(0)}_{\Box}[\{\alpha/x\}]  \leq \{\alpha/x\}^{\downarrow}]$ &($I$-compatibility of $R_{\Box}$)\\
			iff &$\forall \alpha \forall x \forall a[\alpha \to a R_{\Box} x  \leq \alpha \to a I x]$ &($\ast$)\\
			iff &$R_{\Box} \leq I$ &($\ast\ast$)
		\end{tabular}
	\end{center} 
	
	To justify the equivalence to ($\ast$) we note that $R^{(0)}_{\Box}[\{\alpha/x\}](a) = \bigwedge_{y \in X}(\{\alpha/x\}(y) \to a R_{\Box} y) = \{\alpha/x\}(x) \to a R_{\Box} x = \alpha \to a R_{\Box} x$ and that $\{\alpha/x\}^{\downarrow}(a) = \bigwedge_{y \in X}(\{\alpha/x\}(y) \to a I y) = \{\alpha/x\}(x) \to a I x = \alpha \to a I x$. For the equivalence to ($\ast\ast$) note that choosing $\alpha = a R_{\Box} x$ in ($\ast$) yields $\top^{\mathbf{A}} \leq a R_{\Box} x \to a I x$ which, by residuation, is equivalent to ($\ast\ast$). The converse direction is immediate by the monotonicity of $\to$ in the second coordinate.
\end{proof}

Lifting other relevant properties such as transitivity, in the way just illustrated by the example of reflexivity, leads to the definition of many-valued conceptual approximation spaces. We will develop this theory in a dedicated paper. 

\subsection{Dempster-Shafer theory on conceptual approximation spaces}
\label{ssec:DS}
\newcommand{\bel}{\mathsf{bel}}
\newcommand{\pl}{\mathsf{pl}}
\newcommand{\A}{\mathbb{A}}
\newcommand{\mass}{\mathsf{m}}
Dempster-Shafer theory \cite{dempster1968generalization, shafer1976mathematical} is a mathematical framework for decision-making under uncertainty in situations in which some predicates cannot be assigned subjective probabilities. 
In such cases,  Dempster-Shafer theory proposes to replace the missing value with a range of values, the lower and upper bounds of which are  assigned by {\em belief} and {\em plausibility} functions (the definitions of which are reported below). %respectively.%
The affinity between Dempster-Shafer theory and rough set theory has been noticed very early on \cite{pawlak1984rough}; systematic connections are established in \cite{Skowron89,PolkowskiS94,yao1998interpretations}, and a logical account of Dempster-Shafer's theory was developed in \cite{godo2003belieffunctions}. In this section, we lay the groundwork for the possibility of using conceptual approximation spaces as the basic structures for developing a Dempster-Shafer theory of concepts. Towards this goal, we first show how (probabilistic) approximation spaces arise in connection with finite probability spaces, and then use a suitably modified version of the lifting construction on probabilistic approximation spaces to propose a polarity-based generalization of probability spaces in which % based on   give an algebraic and order-theoretic presentation of the connection between finite approximation spaces and finite probability spaces, and then apply these insights to % apply a suitable adaptation of the lifting construction discussed earlier on to generalize 
the basic notions of Dempster-Shafer's theory can be generalized from predicates to concepts.

\paragraph{Belief and plausibility functions.} Recall that a {\em belief function} (cf.~\cite[Chapter 1, page 5]{shafer1976mathematical}) on a set $S$ is a map $\bel: \mathcal{P}(S)\to [0,1]$ such that 
$\bel(S)=1$,  and for all $n\in \mathbb{N}$,
\begin{center} 
$\bel (  A_1 \cup \dots \cup A_n) \ \geq  \
\sum_{\emptyset \neq I \subseteq {1, \dots , n}}
(-1)^{|I|+1} \bel \left( \bigcap_{i \in I} A_i \right),$
\end{center}
%If in addition  $\bel(\emptyset)=0$, then $\bel$ is {\em regular}.
and a {\em plausibility function on} $S$ is a map $\pl: \mathcal{P}(S)\to [0,1]$ such that 
$\pl(S)=1$,  and for every $n\in \mathbb{N}$,
%\marginnote{complete definitions}
\begin{center} 
$\pl (A_1 \cup A_2 \cup ... \cup A_n) \ \leq \ \sum_{\emptyset \neq I \subseteq 1,2,...,n} (-1)^{|I| +1}\bel \left( \bigcap_{i \in I} A_i \right).$
\end{center}
For every belief function $\bel$ as above, the assignment  $X\mapsto 1- \bel(X)$ defines a plausibility function on $S$, and for every plausibility function $\pl$ as above, the assignment  $X\mapsto 1- \pl(X)$ defines a belief function on $S$.
\begin{comment}
A {\em mass function} on a set $S$ is a map $\mass: \mathcal{P}(S)\to [0,1]$ such that 
%\begin{equation} 
$\sum_{X \subseteq S} \mass (X) = 1.$
%\end{equation}
On finite sets, %belief (resp.~plausibility) functions and mass functions are interchangeable notions:  
any mass function $\mass$ as above induces the belief function   $\bel_{\mass}: \mathcal{P}(S)\to [0,1]$ defined as 
\begin{center} 
$\bel_\mass(X) := \sum_{Y \subseteq X} \mass(Y) \qquad \text{ for every } X \subseteq S,$
\end{center}
and conversely, any belief function $\bel$ as above induces the mass function   $\mass_{\bel}: \mathcal{P}(S)\to [0,1]$ defined as 
\begin{center} 
$\mass_\bel (X) := \bel(X) - \sum_{Y \subseteq X} (-1)^{|X \smallsetminus Y|} \bel(Y) \qquad \text{ for every } X \subseteq S.$
%m(A) = \sum_{B \subseteq A) (-1)^|B-A| Bel(B).
\end{center}
\end{comment}
%\end{definition}
%\begin{definition}
%\label{def:prob:space}
\paragraph{Probability spaces.} Prime examples of belief and plausibility functions arise from probability spaces. A {\em probability space} (cf.~\cite[Section 2, page 3]{fagin1991uncertainty})
%\marginnote{complete reference: there is no definition number, so I put the page} 
is a structure $\mathbb{X} = (S, \mathbb{A}, \mu)$ where $S$ is a nonempty (finite) set, $\mathbb{A}$ is a $\sigma$-algebra of subsets of $S$, and $\mu: \mathbb{A}\to [0, 1]$ is a countably additive probability measure.  Let  $e: \mathbb{A}\hookrightarrow \mathcal{P}(S)$ denote the natural embedding of $\mathbb{A}$ into the powerset algebra of $S$. 
Any  $\mu$ as above induces the {\em inner} and {\em outer measures} (cf.~\cite[Section 2, page 4]{fagin1991uncertainty}) %\marginnote{complete reference}  
$\mu_\ast, \mu^\ast: \mathcal{P}(S)\to [0,1]$, respectively defined as \[\mu_\ast(Z): = \text{sup}\{\mu (b)\mid b\in \mathbb{A} \text{ and } e(b)\subseteq Z\} \quad\text{ and }\quad \mu^\ast(Z): = \text{inf}\{\mu (b)\mid b\in \mathbb{A}\text{ and } Z\subseteq e(b)\}.\] 
%\end{definition}
By construction, $\mu_\ast(e(b)) = \mu(b) = \mu^\ast(e(b))$ for every $b\in \mathbb{A}$ and $\mu^\ast (Z) = 1-\mu_\ast(Z)$ for every $Z\subseteq S$. 
Moreover, for every  probability space $\mathbb{X} = (S, \mathbb{A}, \mu)$, the inner (resp.~outer) measure induced by $\mu$ is a  belief (resp.~plausibility) function on $S$ (cf.~\cite[Proposition 3.1]{fagin1991uncertainty}).

Notice that in a finite probability space $\mathbb{X} = (S, \mathbb{A}, \mu)$, the natural embedding $e: \mathbb{A}\hookrightarrow \mathcal{P}(S)$ is a {\em complete} lattice homomorphism (in fact it is a complete Boolean algebra homomorphism, but in the context of Boolean algebras, these two notions collapse). Hence, the right and left adjoints of $e$ exist, denoted $\iota, \gamma:  \mathcal{P}(S)\twoheadrightarrow \mathbb{A}$ respectively, and defined as  $\iota( Y) := \bigcup\{a\in \mathbb{A}\mid e(a)\subseteq Y\}$ and $\gamma (Y): = \bigcap\{a\in \mathbb{A}\mid Y\subseteq e(a)\}$. Notice that by construction and the fact that $e$ is injective, $\iota(\varnothing) := \bigcup\{a\in \mathbb{A}\mid e(a)\subseteq \varnothing\} =  \bigcup\{\bot\} = \bot$, and $\gamma (S): = \bigcap\{a\in \mathbb{A}\mid S\subseteq e(a)\} = \bigcap\{\top\} = \top$.
\begin{lemma}
\label{lemma: from partial probability to modal operators}
%(cf.~\cite[Lemma 2.1]{TarkDS}) 
For every finite probability space $\mathbb{X} = (S, \mathbb{A}, \mu)$, and every  $Y\in \mathcal{P}(S)$,
 \[\mu_\ast(Y) = \mu(\iota(Y))\; \mbox{  and } \;\mu^\ast(Y) = \mu(\gamma(Y)).\] 
\end{lemma}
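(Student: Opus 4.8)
The plan is to reduce both identities to the adjunction properties of the embedding $e$ together with the monotonicity of $\mu$. First I would record the two adjunction laws characterizing the adjoints: since $\iota$ is the right adjoint of $e$, for every $a\in\mathbb{A}$ and every $Y\subseteq S$ we have $e(a)\subseteq Y$ iff $a\leq \iota(Y)$; dually, since $\gamma$ is the left adjoint of $e$, we have $Y\subseteq e(a)$ iff $\gamma(Y)\leq a$. I would also use that $e$, being a complete lattice homomorphism, preserves arbitrary joins and meets, so that $e(\iota(Y))=\bigcup\{e(a)\mid e(a)\subseteq Y\}\subseteq Y$ and $Y\subseteq \bigcap\{e(a)\mid Y\subseteq e(a)\}=e(\gamma(Y))$. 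In particular $\iota(Y)$ is an element of $\mathbb{A}$ satisfying $e(\iota(Y))\subseteq Y$, and $\gamma(Y)$ is an element of $\mathbb{A}$ satisfying $Y\subseteq e(\gamma(Y))$.

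For the identity $\mu_\ast(Y)=\mu(\iota(Y))$ I would argue the two inequalities separately. Since $\iota(Y)\in\mathbb{A}$ and $e(\iota(Y))\subseteq Y$, the value $\mu(\iota(Y))$ is one of the quantities over which the supremum defining $\mu_\ast(Y)$ is taken, whence $\mu(\iota(Y))\leq \mu_\ast(Y)$. Conversely, for any $b\in\mathbb{A}$ with $e(b)\subseteq Y$ the adjunction gives $b\leq \iota(Y)$, and since $\mu$ is a monotone probability measure this yields $\mu(b)\leq\mu(\iota(Y))$; taking the supremum over all such $b$ gives $\mu_\ast(Y)\leq\mu(\iota(Y))$. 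The two inequalities combine to the desired equality. The identity $\mu^\ast(Y)=\mu(\gamma(Y))$ is then proved by the mirror-image argument: $\gamma(Y)$ is a feasible witness for the infimum because $Y\subseteq e(\gamma(Y))$, giving $\mu^\ast(Y)\leq\mu(\gamma(Y))$; and for any $b\in\mathbb{A}$ with $Y\subseteq e(b)$ the adjunction gives $\gamma(Y)\leq b$, hence $\mu(\gamma(Y))\leq\mu(b)$, and taking the infimum yields $\mu(\gamma(Y))\leq\mu^\ast(Y)$.

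There is essentially no deep obstacle here; the content is entirely the correct bookkeeping of the adjunction directions. The one point requiring care is to match $\iota$ with the right adjoint (so that $\iota(Y)$ is the \emph{largest} $\mathbb{A}$-element whose image sits below $Y$, i.e.\ the natural witness for a supremum of inner approximations) and $\gamma$ with the left adjoint (the \emph{smallest} $\mathbb{A}$-element whose image dominates $Y$, the witness for an infimum of outer approximations), and to invoke monotonicity of $\mu$ in the correct direction each time. Finiteness of $\mathbb{A}$ is not strictly needed for the argument, since the supremum and infimum are attained at $\iota(Y)$ and $\gamma(Y)$ respectively precisely by the adjunction; it merely guarantees that $e$ has both adjoints, which is already part of the set-up.
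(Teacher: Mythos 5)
Your proof is correct and matches the paper's argument in substance: both rest on the observation that $\iota(Y)$ (resp.\ $\gamma(Y)$) is itself a member of, and the greatest (resp.\ least) element of, the family over which the supremum (resp.\ infimum) defining $\mu_\ast(Y)$ (resp.\ $\mu^\ast(Y)$) is taken. If anything, your two-inequality argument via the adjunction and monotonicity of $\mu$ is a more careful justification of the key step, which the paper dispatches with a terse appeal to additivity when it rewrites $\bigvee\{\mu(a)\mid a\in\mathbb{A},\ e(a)\subseteq Y\}$ as $\mu(\bigcup\{a\in\mathbb{A}\mid e(a)\subseteq Y\})$.
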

\begin{proof}
We only show the first identity. By definition, 
\begin{center}
  \begin{tabular} {llll}
$\mu_\ast(Y)$ &$ = $& $\bigvee\{\mu(a)\mid a\in \mathbb{A} \text{ and } e(a)\subseteq Y\}$\\
&$ = $& $\mu (\bigcup\{a\mid a\in \mathbb{A} \text{ and } e(a)\subseteq Y\})$ & ($\mu$ is additive)\\
&$ = $& $\mu (\iota( Y)).$ & (def.~of $\iota$)\\
 \end{tabular}
   \end{center}
\end{proof}
\paragraph{From probability spaces to approximation spaces.} Consider  the operations $\Box, \Diamond:  \mathcal{P}(S)\to \mathcal{P}(S)$,  defined as  $\Box Y: = e(\iota(Y))$ and $\Diamond Y: = e(\gamma(Y))$. The next lemma shows how finite approximation spaces arise from finite probability spaces.
\begin{lemma}
\label{lemma:modal op and relations}
For every finite probability space $\mathbb{X} = (S, \mathbb{A}, \mu)$, the operations $\Box$ and $\Diamond$ defined above:
\begin{enumerate}
\item are complete normal modal operators;
\item are S4 operators (i.e.~$\Box$ is an interior operator and $\Diamond$ is a closure operator on $\mathcal{P}(S)$);
\item are adjoint to each other, i.e.~$\Diamond Y\subseteq Z$ iff $Y\subseteq \Box Z$ for all $Y, Z\in \mathcal{P}(S)$;
\item are S5 operators;
\item are  dual to each other, i.e.~$\Box Y = \neg \Diamond \neg Y$ for all $Y\in \mathcal{P}(X)$;
\item can be respectively identified with the semantic box and diamond operators arising from the relation $R\subseteq S\times S$ defined as $R(x, y)$ iff $x\in \Diamond \{y\}$;\footnote{Equivalently, $y$ is an $R$-successor of $x$ iff $x$ is in the closure of $y$.}
\item the relation $R$ defined above is an equivalence relation.
\end{enumerate}
\end{lemma}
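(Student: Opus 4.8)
The plan is to derive all seven items as formal consequences of the \emph{adjoint triple} $\gamma \dashv e \dashv \iota$, together with the fact (recorded just before the lemma) that $e\colon \mathbb{A}\hookrightarrow \mathcal{P}(S)$ is an injective complete Boolean homomorphism. First I would set up the elementary calculus of this triple, which will do almost all the work. Because $e$ is an order-embedding, the unit of $\gamma\dashv e$ and the counit of $e\dashv \iota$ are isomorphisms, giving $\gamma e = \iota e = \mathrm{id}_{\mathbb{A}}$; the remaining unit/counit give $Y\subseteq e\gamma(Y) = \Diamond Y$ and $\Box Y = e\iota(Y)\subseteq Y$ for all $Y\in\mathcal{P}(S)$. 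As a right (resp.\ left) adjoint, $\iota$ preserves all meets and $\gamma$ all joins, while $e$ preserves both and commutes with complementation. From the last fact I would extract the identity $\iota(Y) = \neg\,\gamma(\neg Y)$: indeed $a\le \iota Y$ iff $e(a)\subseteq Y$ iff $\neg Y\subseteq e(\neg a)$ iff $\gamma(\neg Y)\le \neg a$ iff $a\le \neg\gamma(\neg Y)$.

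With this calculus in hand, items 1--5 are short. For item 1, $\Box = e\iota$ is completely meet-preserving and $\Diamond = e\gamma$ completely join-preserving, and $\Box S = e\iota e(\top) = S$, $\Diamond\varnothing = e\gamma e(\bot)=\varnothing$, so both are complete normal operators. For item 2, $\Box Y\subseteq Y$ and $Y\subseteq \Diamond Y$ are the counit/unit, and idempotency follows from $\iota e=\gamma e=\mathrm{id}$: $\Box\Box = e(\iota e)\iota = e\iota = \Box$ and dually $\Diamond\Diamond=\Diamond$; hence $\Box$ is an interior and $\Diamond$ a closure operator. Item 3 is obtained by chaining the two adjunctions: $e\gamma Y\subseteq Z$ iff $\gamma Y\le \iota Z$ iff $Y\subseteq e\iota Z$, i.e.\ $\Diamond Y\subseteq Z$ iff $Y\subseteq\Box Z$. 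For item 4, instantiating this adjunction at $Z=\Diamond Y$ and at $Y=\Box Y$ yields $Y\subseteq\Box\Diamond Y$ and $\Diamond\Box Y\subseteq Y$ (the symmetry/$B$-conditions) from the two trivially true inequalities $\Diamond Y\subseteq\Diamond Y$ and $\Box Y\subseteq\Box Y$; combined with the S4 content of item 2 this gives S5. Item 5 is immediate from the complementation identity above: $\Box Y = e\iota Y = e(\neg\gamma\neg Y) = \neg\, e\gamma(\neg Y) = \neg\Diamond\neg Y$.

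For item 6 I would argue that $\Diamond$ and $\Box$ are the complex-algebra operators of the frame $(S,R)$. By definition of $R$, $\langle R\rangle\{y\} = \{x\mid xRy\} = \Diamond\{y\}$; since $\Diamond$ is completely join-preserving (item 1) and $Z=\bigcup_{z\in Z}\{z\}$, it follows that $\langle R\rangle Z = \bigcup_{z\in Z}\Diamond\{z\} = \Diamond Z$ for every $Z$. Then $\Box = [R]$ follows from the duality of item 5 together with the classical identity $[R]Z = \neg\langle R\rangle\neg Z$ from \eqref{eq:semantic box}. Finally, for item 7 I would identify $\Diamond\{y\} = e(\gamma\{y\})$ concretely: $\gamma\{y\}=\bigcap\{a\in\mathbb{A}\mid y\in e(a)\}$ is the least element of $\mathbb{A}$ whose image contains $y$, and since $\mathbb{A}$ is a finite (hence atomic) Boolean algebra embedded completely in $\mathcal{P}(S)$, the images of its atoms form a partition of $S$; thus $\Diamond\{y\}$ is exactly the block $[y]$ of this partition containing $y$. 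Consequently $xRy$ iff $x\in[y]$ iff $[x]=[y]$, which is manifestly reflexive, symmetric and transitive. (Alternatively, one may invoke items 4 and 6 to note that $\mathbb{X}^+=(\mathcal{P}(S),[R],\langle R\rangle)$ validates the S5 axioms and then appeal to the classical Sahlqvist correspondences recalled in Section~\ref{ssec: approximation spaces}.)

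The arguments are largely bookkeeping on the adjoint triple, so I do not expect a genuine obstacle; the step requiring the most care is item 7, where one must translate the algebraic S5 content back into a first-order property of $R$. The cleanest route is the structural one---recognising $\Diamond\{y\}$ as the block of $y$ in the partition induced by the atoms of the finite Boolean algebra $\mathbb{A}$---which also makes transparent that $R$ is precisely the indiscernibility relation of the measurable partition, thereby closing the circle back to approximation spaces.
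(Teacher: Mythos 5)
Your proposal is correct, and for items 1--6 it follows essentially the same route as the paper: everything is extracted from the adjoint triple $\gamma \dashv e \dashv \iota$ together with $e$ being an injective complete Boolean homomorphism, with item 5 reduced to the identity $\iota(Y) = \neg\gamma(\neg Y)$ (you prove this purely by chaining adjunctions and complementation, the paper by an explicit De Morgan computation on the defining unions and intersections --- the same content) and item 6 obtained exactly as in the paper from $\langle R\rangle\{y\} = \Diamond\{y\}$ plus complete join-preservation and duality. (Minor terminological slip: the isomorphisms are the \emph{counit} of $\gamma\dashv e$ and the \emph{unit} of $e\dashv\iota$, not the other way round; your subsequent use of the inequalities $Y\subseteq\Diamond Y$ and $\Box Y\subseteq Y$ is consistent with the correct assignment, so nothing breaks.) Where you genuinely diverge is item 7. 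The paper stays inside the modal calculus: it reads reflexivity and transitivity of $R$ off the S4 inequalities $\{x\}\subseteq\Diamond\{x\}$ and $\Diamond\Diamond\{x\}\subseteq\Diamond\{x\}$, and gets symmetry from the adjunction $\Diamond\dashv\Box$ together with $\Box=[R]$, which forces $\Diamond=\langle R^{-1}\rangle$ and hence $R^{-1}[x]=\Diamond\{x\}=R[x]$. You instead compute $\Diamond\{y\}=e(\gamma\{y\})$ concretely as the block of $y$ in the partition of $S$ by the images of the atoms of the finite Boolean algebra $\mathbb{A}$, so that $R$ is visibly the indiscernibility relation of a partition. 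Your argument uses atomicity (hence finiteness) in an essential way but buys an explicit description of the equivalence classes, which dovetails with the paper's subsequent remark that $\mathbb{A}$ is recovered from unions of $R$-blocks; the paper's argument is purely order-theoretic and would survive in any setting where the adjoints exist. Both are sound.
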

\begin{proof}
1. immediately follows from the definitions, and $e$ preserving both complete joins and meets; 2. $\Box Y\subseteq Y$ (resp.~$Y\subseteq \Diamond Y$) readily follows from the definition of $\iota$ (resp.~$\gamma$); $\Box Y\subseteq \Box\Box Y$ (resp.~$\Diamond\Diamond Y\subseteq \Diamond Y$) readily follows from the definitions of $\Box$ and $\Diamond$, and $e\iota e = e$ (resp.~$e\gamma e = e$); 3. follows from the definitions of $\Box$ and $\Diamond$, and  $\gamma\dashv e\dashv \iota$; 4. by adjunction, $\Diamond Y\subseteq \Diamond Y$ iff $Y \subseteq \Box\Diamond Y$ and  $\Box Y\subseteq \Box Y$ iff $\Diamond \Box Y \subseteq Y$, as required;
5. since $e$ is an injective Boolean homomorphism,  to prove  $e\iota(Y) = \neg e\gamma(\neg Y) $  it is enough to show $\iota(Y) = \neg\gamma(\neg Y)$.
\begin{center}
  \begin{tabular} {llll}
          $\neg\gamma(\neg Y)$ 
  &=&$\neg\bigcap\{a\in \mathbb{A}\mid \neg Y\subseteq e(a)\}$& (def.~of $\gamma$)\\
  &=&$\bigcup\{\neg a\in \mathbb{A}\mid \neg Y\subseteq e(a)\}$&(De Morgan)\\
  &=&$\bigcup\{\neg a\in \mathbb{A}\mid \neg e(a) \subseteq Y\}$& ($\neg$ self adj.)\\
  &=&$\bigcup\{\neg a\in \mathbb{A}\mid  e(\neg a) \subseteq Y\}$& ($e$ BA-hom.)\\
  &=&$\bigcup\{b \in \mathbb{A}\mid e(b) \subseteq Y\}$& ($\neg$ surjective)\\
  &=&$\iota( Y)$.&(def.~of $\iota$)\\
   \end{tabular}
   \end{center}
6. By 5. it is enough to show that $\langle R\rangle Y = \Diamond Y$  for every $Y\in \mathcal{P}(S)$. Recall that  $\Diamond = e\gamma$ is completely $\bigcup$-preserving, cf.~item 1 of this lemma. Hence:
   \begin{center}
  \begin{tabular} {llll}
  $\langle R\rangle Y$ & = & $R^{-1}[Y]$\\
   & = & $\{x\in X\mid x R y \text{ for some } y\in Y\}$\\
   & = & $\{x\in X\mid x\in \Diamond \{y\} \text{ for some } y\in Y\}$\\
     & = & $\bigcup \{\Diamond \{y\}\mid   y\in Y\}$\\
        & = & $\Diamond(\bigcup \{ \{y\}\mid   y\in Y\})$ &\\
     & = & $\Diamond  Y$.\\
  \end{tabular}
   \end{center}
7. By item 2, the inclusions $Y\subseteq \Diamond Y$ and $\Diamond \Diamond  Y\subseteq \Diamond Y$ hold for every $Y\in \mathcal{P}(S)$, and by item 5, $\Diamond = \langle R\rangle$ and $\Box = [R]$. Hence, the reflexivity and transitivity of $R$, i.e.~$x\in R^{-1}[x]$ and $R^{-1}[R^{-1}[x]]\subseteq R^{-1}[x]$ for every $x\in S$, can be rewritten as $\{x\}\subseteq \Diamond \{x\}$ and $\Diamond\Diamond \{x\}\subseteq \Diamond \{x\}$ respectively, and  immediately  follow  by instantiating the inclusions above to $Y: =\{x\}$; symmetry, i.e.~$R^{-1}[x] = R[x]$ for every $x\in S$, follows from $\Box$ and $\Diamond$ being adjoint to each other (cf.~item 3) and $\Box = [R]$ (cf.~item 6), which imply that $\Diamond = \langle R^{-1}\rangle$. Hence, $R^{-1}[x] =\Diamond \{x\} = \langle R^{-1}\rangle\{x\} = R[x]$, as required.
 \end{proof}
\begin{definition} (cf.~e.g.~\cite[Section 2, page 131]{godo2003belieffunctions})
A (finite) {\em probabilistic S5-Kripke frame} is a triple $\mathbb{K} = (S, \mathbb{A}, R, \mu)$ such that $(S, \mathbb{A}, \mu)$ is a (finite) probability space, and $R\subseteq S\times S$ is an equivalence relation which is {\em compatible} with $\mathbb{A}$, i.e.~letting $\langle R\rangle: \mathcal{P}(S) \to \mathcal{P}(S)$ denote the semantic diamond induced by $R$, then $\langle R\rangle = e \circ \gamma'$ for some   $\gamma': \mathcal{P}(S) \to \mathbb{A}$. A (finite) {\em probabilistic S5-Kripke model}\footnote{Probabilistic S5-Kripke models are also known in the literature as total $\Box$-probabilistic Kripke models (cf.~\cite[Section 2]{godo2003belieffunctions}). In order to be coherent with the literature in modal logic, we refer to semantic structures without valuations of propositional formulas as `frames' and to those with valuations as `models'. Hence, we will also refer to the {\em probability structures} of \cite{fagin1991uncertainty} (i.e.~tuples $\mathbb{M}: = (\mathbb{X}, V)$ such that $\mathbb{X}$ is a probability space and $V:\Prop\to \mathcal{P}(S)$ a valuation) as {\em probability models}.} is a tuple $\mathbb{M} = (\mathbb{K}, V)$ such that $\mathbb{K}$ is a (finite) probabilistic S5-Kripke frame, and $V: \Prop \to \mathcal{P}(S)$ is a valuation.
\end{definition} 
\begin{definition}
If $\mathbb{X} = (S, \mathbb{A}, \mu)$ is a finite probability space, let $\mathbb{K_X}: = (S, \mathbb{A}, R, \mu)$, where $R\subseteq S\times S$ is as in Lemma \ref{lemma:modal op and relations}(6).
\end{definition}
Lemma \ref{lemma:modal op and relations} immediately implies that  $\mathbb{K_X}$ is a finite probabilistic S5-Kripke frame.
\begin{lemma}
For any finite probability space $\mathbb{X} = (S, \mathbb{A}, \mu)$, the relation $R$ defined as in Lemma \ref{lemma:modal op and relations}(6) is the finest equivalence relation compatible with $\mathbb{A}$.
\end{lemma}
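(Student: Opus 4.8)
The plan is to establish two things: that $R$ is itself compatible with $\mathbb{A}$, and that it refines every other compatible equivalence relation. The first is essentially already in hand: by Lemma \ref{lemma:modal op and relations}(6) we have $\langle R\rangle = \Diamond = e\circ \gamma$, so $R$ is compatible with $\mathbb{A}$ as witnessed by $\gamma' := \gamma$, and by item 7 of the same lemma $R$ is an equivalence relation. Thus $R$ does belong to the family of equivalence relations compatible with $\mathbb{A}$, and the substance of the lemma is the minimality (finest) claim.

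The heart of the argument is a concrete description of the $R$-classes. First I would observe that for every $y\in S$, the class $[y]_R = \{x \mid (x,y)\in R\}$ equals $\Diamond\{y\} = e(\gamma(\{y\}))$, directly from the definition of $R$ in Lemma \ref{lemma:modal op and relations}(6). Since $e$ preserves arbitrary meets (it is a complete lattice homomorphism) and $\gamma(\{y\}) = \bigcap\{a\in \mathbb{A}\mid \{y\}\subseteq e(a)\}$, this yields $[y]_R = \bigcap\{e(a)\mid a\in \mathbb{A},\ y\in e(a)\}$; that is, $[y]_R$ is the smallest measurable set (element of the image of $e$) containing $y$.

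Next I would take an arbitrary equivalence relation $R'$ compatible with $\mathbb{A}$ and show $R\subseteq R'$. Compatibility means $\langle R'\rangle Y$ lies in the image of $e$ for every $Y\subseteq S$; instantiating $Y := \{y\}$ and using the symmetry of $R'$, the set $\langle R'\rangle\{y\} = \{x\mid (x,y)\in R'\} = [y]_{R'}$ is measurable, and it contains $y$ by reflexivity of $R'$. Since $[y]_R$ is the smallest measurable set containing $y$, we obtain $[y]_R\subseteq [y]_{R'}$ for every $y$, which is exactly the statement that $R$ refines $R'$, i.e.~$R\subseteq R'$. Together with the compatibility of $R$ shown above, this gives that $R$ is the finest equivalence relation compatible with $\mathbb{A}$.

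The only genuinely delicate point is the identification of $[y]_R$ with the smallest measurable set containing $y$: it relies on $e$ being a complete lattice homomorphism, so that it commutes with the defining intersection of $\gamma$, which holds precisely because the probability space is finite. Everything else is a routine unwinding of the adjunction $\gamma \dashv e \dashv \iota$ together with the semantic clause for $\langle R\rangle$ on singletons, so I would keep those verifications brief.
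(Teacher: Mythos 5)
Your proof is correct and takes essentially the same route as the paper's: both arguments come down to the fact that $\Diamond\{y\}=e(\gamma(\{y\}))$ is the smallest measurable set containing $y$, while reflexivity and compatibility of $R'$ make $\langle R'\rangle\{y\}=e(\gamma'(\{y\}))$ a measurable set containing $y$, whence $R\subseteq R'$. The only difference is presentational --- the paper derives $\gamma(Y)\leq \gamma'(Y)$ abstractly from $Y\subseteq e\gamma'(Y)$ via the adjunction $\gamma\dashv e$, whereas you unwind $\gamma(\{y\})$ as an explicit intersection of measurable supersets --- and your added check that $R$ itself is a compatible equivalence relation is a completeness point the paper delegates to the surrounding text.
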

\begin{proof}
Let $R'\subseteq S\times S$ be compatible with $\mathbb{A}$. By assumption, $\Diamond ' = \langle R'\rangle = e\circ \gamma '$ for some   $\gamma': \mathcal{P}(S) \to \mathbb{A}$. Since $R'$ is reflexive, $Y\subseteq \Diamond' Y = e\circ \gamma' (Y)$. Hence by adjunction, $\gamma (Y)\subseteq \gamma'(Y)$ for every $Y\in \mathbb{A}$. Hence, if $Y: = \{y\}$, this yields $x R y$ iff $x\in \Diamond \{y\} = e\circ \gamma (\{y\})\subseteq e\circ\gamma'(\{y\}) = \Diamond'\{y\}$ iff $x R'y$  for every $x\in S$, which shows that $R\subseteq R'$, as required. \end{proof}
%The next proposition is an algebraic reformulation of \cite[Theorem 2.1]{Godo-Hajek-Esteva}.

Summing up,  every finite probability space $\mathbb{X} = (S, \mathbb{A}, \mu)$ can be endowed with a structure of approximation space  by means of an indiscernibility relation on $S$ which is `canonical',  in the sense that %gives rise to `rough set theory'-style definable approximations $\Box$ and $\Diamond$, together with their associated indiscernibility relation, which 
it is  %characterized as 
the most informative equivalence relation which is compatible with $\mathbb{A}$.  
Conversely, as is well known, for every (finite) approximation space $\mathbb{X} = (S, R)$, the Boolean subalgebra $\mathbb{A}$ of $\mathcal{P}(S)$ generated by taking unions of the equivalence blocks of the equivalence relation $R$ is a $\sigma$-algebra  of subsets of $S$, and hence approximation spaces can be regarded as the (purely qualitative) bases of finite probability spaces. 

\paragraph{Generalizing to conceptual probability spaces.} The lifting methodology discussed in the previous sections provides the motivation for the following
\begin{definition} 
\label{def:concept:PS} %(cf.~\cite[Definition 3.3]{TarkDS}) 
A {\em conceptual probability space} is a structure $\mathbb{S} = (\mathbb{P}, \mathbb{A}, \mu)$ where $\mathbb{P} = (A, X, I)$ is a  finite polarity\footnote{A polarity $\mathbb{P}$ is {\em finite} if its associated concept lattice $\mathbb{P}^+$ is finite.}, $\mathbb{A}$ is a $\sigma$-{\em algebra of formal concepts of} $\mathbb{P}$, i.e.~a  lattice embedding $e: \mathbb{A}\hookrightarrow \mathbb{P}^+$ exists, %of $\mathbb{A}$ into the concept lattice of $\mathbb{P}$, 
and $\mu: \mathbb{A}\to [0, 1]$ is a (countably additive) probability measure.   \end{definition} 
Similarly to the case of probability spaces, which can be endowed with the structure of approximation spaces, conceptual probability spaces can be endowed with the structure of conceptual approximation spaces. Since this section is only intended as a showcase of possibilities, we do not provide an explicit definition of it, but a precise definition can be straightforwardly extracted from the following discussion  and Lemma.
Let  $\mathbb{S} = (\mathbb{P}, \mathbb{A}, \mu)$ be a conceptual probability space. Since $\mathbb{A}$ and $\mathbb{P}^+$ are finite lattices, the lattice embedding $e: \mathbb{A}\hookrightarrow \mathbb{P}^+$ has both its right and left adjoint $\iota, \gamma: \mathbb{P}^+ \twoheadrightarrow \mathbb{A}$ which are defined as $\iota(c): = \bigvee \{a\in \mathbb{A}\mid e(a)\leq c\}$ and $\gamma(c): = \bigwedge \{a\in \mathbb{A}\mid c\leq e(a)\}$.

Notice that by construction and the fact that $e$ is injective, $\iota(\bot^{\mathbb{P}^+}) := \bigvee\{a\in \mathbb{A}\mid e(a)\leq \bot^{\mathbb{P}^+}\} =  \bigvee\{\bot^{\mathbb{A}}\} = \bot^{\mathbb{A}}$, and $\gamma (\top^{\mathbb{P}^+}): = \bigwedge\{a\in \mathbb{A}\mid \top^{\mathbb{P}^+}\leq e(a)\} = \bigwedge\{\top^{\mathbb{A}}\} = \top^{\mathbb{A}}$.
Let  $\Box, \Diamond:  \mathbb{P}^+\to \mathbb{P}^+$ be defined as  $\Box c: = e(\iota(c))$ and $\Diamond c: = e(\gamma(c))$. In what follows, we let  $\mathbf{a} = (a^{\uparrow\downarrow}, a^{\uparrow})$ and $\mathbf{x} = (x^{\downarrow}, x^{\downarrow\uparrow})$ for any $a\in A$ and $x\in X$.
\begin{lemma}
\label{lemma:modal op and relations conceptual}
For any finite conceptual probability space $\mathbb{S} = (\mathbb{P}, \mathbb{A}, \mu)$, the operations $\Box$ and $\Diamond$ defined above:
\begin{enumerate}
\item are complete normal modal operators;
\item are S4 operators (i.e.~$\Box$ is an interior operator and $\Diamond$ is a closure operator on $\mathbb{P}^+$);
\item are adjoint to each other, i.e.~$\Diamond c\leq d$ iff $c\leq \Box d$ for all $c, d\in \mathbb{P}^+$;
\item are S5 operators;
\item can be respectively identified with $[R]$ and $\langle R^{-1}\rangle$ for the $I$-compatible relation $R\subseteq A\times X$ defined as $R(a, x)$ iff $\mathbf{a}\leq  \Box \mathbf{x}$ ; %\footnote{Equivalently, $y$ is an $R$-successor of $x$ iff $x$ is in the closure of $y$.}
\item the relation $R$ defined above is such that $R\subseteq I$ and $R\subseteq R\,;\,R$.
\end{enumerate}
\end{lemma}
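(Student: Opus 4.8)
The plan is to follow the blueprint of Lemma~\ref{lemma:modal op and relations}, transporting its argument from the Boolean setting $\mathcal{P}(S)$ to the general concept lattice $\mathbb{P}^+$. The engine of the whole proof is the adjoint triple $\gamma\dashv e\dashv \iota$ together with two facts that are immediate from the definitions: since $e$ is both a left and a right adjoint it preserves arbitrary joins and meets, and since $e$ is an order-embedding one has $\iota e=\mathrm{id}_{\mathbb{A}}=\gamma e$ (indeed $\iota e(a)=\bigvee\{b\mid e(b)\leq e(a)\}=\bigvee\{b\mid b\leq a\}=a$, and dually for $\gamma e$). Writing $\Box=e\iota$ and $\Diamond=e\gamma$, essentially everything in items 1--4 is then a formal consequence of these two facts.

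First I would dispatch items 1--4. For item 1, $\Box=e\iota$ is a composite of right adjoints, hence completely meet-preserving, and $\Diamond=e\gamma$ is a composite of left adjoints, hence completely join-preserving. For item 2, the counit $e\iota(c)\leq c$ and the unit $c\leq e\gamma(c)$ give contractivity of $\Box$ and extensivity of $\Diamond$, while $\iota e=\gamma e=\mathrm{id}_{\mathbb{A}}$ yields $\Box\Box=e(\iota e)\iota=e\iota=\Box$ and $\Diamond\Diamond=\Diamond$, so $\Box$ is an interior and $\Diamond$ a closure operator. Item 3 is the adjunction $\Diamond\dashv\Box$: both $\Diamond c\leq d$ and $c\leq \Box d$ collapse, via $e\dashv\iota$ and $\gamma\dashv e$ respectively, to the single inequality $\gamma(c)\leq\iota(d)$. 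Item 4 then follows by instantiating item 3: $c\leq\Box\Diamond c$ is equivalent to $\Diamond c\leq\Diamond c$, and $\Diamond\Box c\leq c$ to $\Box c\leq\Box c$, which together with items 2--3 give the S5 conditions of Definition~\ref{def:conceptual rough algebra}.

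The substantive step is item 5, and this is where I expect the main obstacle to lie, namely in keeping the type-correct bookkeeping of the polarity-based operators straight. Setting $R_\Box:=R$ and $R_\Diamond:=R^{-1}$, I would first verify $I$-compatibility by computing, for every $a\in A$ and $x\in X$, that $R^{(0)}[x]=\{a\mid \mathbf{a}\leq\Box\mathbf{x}\}=\val{\Box\mathbf{x}}$ and, using the adjunction of item 3, that $R^{(1)}[a]=\{x\mid\mathbf{a}\leq\Box\mathbf{x}\}=\{x\mid\Diamond\mathbf{a}\leq\mathbf{x}\}=\descr{\Diamond\mathbf{a}}$; both are Galois-stable, so $R$ is $I$-compatible and $[R],\langle R^{-1}\rangle$ are well-defined. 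Then $\val{[R]\mathbf{x}}=R^{(0)}[x^{\downarrow\uparrow}]=R^{(0)}[x]=\val{\Box\mathbf{x}}$ (the middle equality by Lemma~\ref{equivalents of I-compatible}), so $[R]\mathbf{x}=\Box\mathbf{x}$; since both operators are completely meet-preserving and $\{\mathbf{x}\mid x\in X\}$ meet-generates $\mathbb{P}^+$, this forces $[R]=\Box$. For the diamond, observing that $R_\Diamond=R^{-1}$ gives $R_\blacksquare=R$, equation~\eqref{eq:zero is 1 is zero} yields $(R^{-1})^{(0)}[\val{\mathbf{a}}]=R^{(1)}[a^{\uparrow\downarrow}]=R^{(1)}[a]=\descr{\Diamond\mathbf{a}}$, whence $\langle R^{-1}\rangle\mathbf{a}=\Diamond\mathbf{a}$ and, by complete join-preservation and the fact that $\{\mathbf{a}\mid a\in A\}$ join-generates $\mathbb{P}^+$, $\langle R^{-1}\rangle=\Diamond$.

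Finally, item 6 I would read off from the correspondence results of Proposition~\ref{lemma:correspondences} combined with what has already been proved: since $\Box$ is contractive, $\mathbb{F}\models\Box\phi\vdash\phi$, so by Proposition~\ref{lemma:correspondences}(2) we get $R=R_\Box\subseteq I$; and since $\Box$ is idempotent, $\mathbb{F}\models\Box\phi\vdash\Box\Box\phi$, so by Proposition~\ref{lemma:correspondences}(4) we get $R\subseteq R\,;\,R$. (Alternatively, $R\subseteq I$ is direct: $aRx$ means $\mathbf{a}\leq\Box\mathbf{x}\leq\mathbf{x}$, i.e.\ $a\in x^{\downarrow}$, i.e.\ $aIx$.) This completes the plan; note that no step requires distributivity or complementation of $\mathbb{P}^+$, which is exactly why the duality clause present in Lemma~\ref{lemma:modal op and relations} is dropped in the conceptual version.
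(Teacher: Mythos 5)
Your proof is correct and, for items 1--5, follows essentially the same route as the paper: items 1--4 are exactly the adjoint-triple argument ($\gamma\dashv e\dashv\iota$, $\iota e=\gamma e=\mathrm{id}_{\mathbb{A}}$) that the paper imports wholesale from Lemma \ref{lemma:modal op and relations}, and your item 5 reproduces the paper's computation of $R^{(0)}[x]=\val{\Box\mathbf{x}}$ and $R^{(1)}[a]=\descr{\Diamond\mathbf{a}}$ together with the meet-generation argument for $[R]=\Box$. (For $\langle R^{-1}\rangle=\Diamond$ the paper simply invokes uniqueness of adjoints given item 3, whereas you run the dual join-generation computation; this is a cosmetic difference.) The one genuine divergence is item 6: the paper proves $R\subseteq R\,;\,R$ by unwinding the definition of $I$-composition by hand, computing $(R^{(0)}[x])^{\uparrow}=\descr{\Box\mathbf{x}}$ and using $\Box\mathbf{x}=\Box\Box\mathbf{x}$, whereas you deduce both inclusions of item 6 from Proposition \ref{lemma:correspondences}(2) and (4) applied to the validities $\Box\phi\vdash\phi$ and $\Box\phi\vdash\Box\Box\phi$, which hold by items 2 and 5. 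Your route is shorter and leverages the correspondence machinery already established earlier in the paper; the paper's direct computation is self-contained and exhibits the identity $(R^{(0)}[x])^{\uparrow}=\descr{\Box\mathbf{x}}$ explicitly, which is of some independent interest. Both arguments are valid.
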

\begin{proof}
Items 1-4 are verified analogously to the corresponding items of Lemma \ref{lemma:modal op and relations}; their proofs are omitted. 

5. By construction, $R^{(0)}[x] = \{a\mid aRx\} = \{a\mid \mathbf{a}\leq  \Box \mathbf{x}\}  = \val{\Box \mathbf{x}}$ which is Galois-stable for every $x\in X$. Likewise, by item 3, $R^{(1)}[a] =  \{x\mid aRx\} = \{x\mid \mathbf{a}\leq  \Box \mathbf{x}\} = \{x\mid \Diamond \mathbf{a}\leq  \mathbf{x}\}  =\descr{\Diamond \mathbf{a}}$ which is Galois-stable for every $a\in A$. This finishes the proof that  $R$ is $I$-compatible. As to the remaining part of the statement, by item 3, and since adjoints uniquely determine each other, it is enough to show that $\Box c = [R]c$ for every $c\in \mathbb{P}^+$, and since they are both finitely (i.e.~in this case, completely) meet-preserving and $\mathbb{P}^+$ is meet-generated by the set $\{\mathbf{x}\mid x\in X\}$, it is enough to show that $\val{\Box \mathbf{x}} = \val{[R]\mathbf{x}}$ for any $x\in X$.
\begin{center}
\begin{tabular}{rcl}
$ \val{[R]\mathbf{x}}$ & = & $R^{(0)}[x^{\downarrow\uparrow}]$\\
& = & $\{a\in A\mid \forall y(y\in x^{\downarrow\uparrow}\Rightarrow a R y)\}$\\
& = & $\{a\in A\mid \forall y(\descr{\mathbf{y}}\subseteq \descr{\mathbf{x}}\Rightarrow \mathbf{a}\leq  \Box \mathbf{y})\}$\\
& = & $\{a\in A\mid \forall y(\mathbf{x}\leq \mathbf{y}\Rightarrow \mathbf{a}\leq  \Box \mathbf{y})\}$\\
& = & $\{a\in A\mid  \mathbf{a}\leq  \Box \mathbf{x}\}$\\
& = & $\val{\Box \mathbf{x}}$.\\
\end{tabular}
\end{center}
6. By adjunction, $\iota (c)\leq \iota( c)$ implies that  $\Box c = e\iota (c)\leq c$ for any $c\in \mathbb{P}^+$, and hence if $a Rx$ then $\mathbf{a}\leq  \Box \mathbf{x}\leq \mathbf{x}$, i.e.~$aIx$, which proves that $R\subseteq I$. By adjunction, $e\iota e(a) = e(a)$ for every $a\in \mathbb{A}$, hence  $\Box c = e\iota (c) = e\iota e\iota (c) = \Box\Box c$ for any $c\in \mathbb{P}^+$, therefore  $a Rx$ iff $\mathbf{a}\leq  \Box \mathbf{x} = \Box \Box \mathbf{x}$, i.e.~$\mathbf{a}\leq  \Box \mathbf{y} $ for every $y\in X$ such that $\Box \mathbf{x} \leq \mathbf{y} $, i.e.~$a R  y$ for every $y\in \descr{\Box \mathbf{x}}$. By definition, $a (R\, ;\, R)x$ iff $a R y$ for every $y\in (R^{(0)}[x])^{\uparrow} = \descr{\Box \mathbf{x}}$. Indeed,
\begin{center}
\begin{tabular}{rcl}
$(R^{(0)}[x])^{\uparrow}$&  = &$\{y\mid \forall b(b\in R^{(0)}[x]\Rightarrow bIy)\}$\\
&  = &$\{y\mid \forall b(bRx\Rightarrow bIy)\}$\\
&  = &$\{y\mid \forall b(\mathbf{b}\leq \Box \mathbf{x}\Rightarrow \mathbf{b}\leq \mathbf{y})\}$\\
&  = &$\{y\mid \Box \mathbf{x}\leq \mathbf{y}\}$\\
&  = &$\descr{\Box \mathbf{x}}$.\\
\end{tabular}
\end{center}
\end{proof}
The facts shown in this section motivate the idea that the systematic connections between rough set theory and Dempster-Shafer theory can be extended to the formal environment of rough concepts. Some steps in this direction have already been taken in \cite{TarkDS}, but even more can be done (see discussion in Section \ref{sec:Conclusions}).

\section{Conclusions and further directions}
\label{sec:Conclusions}

\paragraph{The lifting methodology as a defining tool.} In the present paper, we have considered a construction (the lifting of Kripke frames to enriched formal contexts) and formulated it as the solution to a {\em category-theoretic} requirement (i.e.~the commutativity of the diagram of Section \ref{sec:intro}) which naturally arises from {\em duality-theoretic} insights (namely,  that each Kripke frame and enriched formal context has an algebraic representation in the form of its associated complex algebra). We have shown that the lifting construction `restricts well' (i.e.~{\em modularly}) to subclasses of Kripke frames and enriched formal contexts defined by a number of conditions---including reflexivity and transitivity---which are both natural and very relevant to Rough Set Theory; in particular, each of these conditions can be considered independently of the others, and the commutativity of the diagram is preserved for each resulting restricted class. The modularity of these restrictions is not only mathematically elegant but has also a  {\em logical} import: via an application of  {\em unified correspondence theory}, the conditions we have considered can be characterized in terms of the validity of certain logical axioms (cf.~Proposition \ref{lemma:correspondences}), and the commutativity of the diagram in each restricted subclass implies that the logical axioms are preserved by the lifting construction from Kripke frames to enriched formal contexts. 

The lifting construction and its mathematical and logical properties have provided us with a principled ground to propose {\em conceptual approximation spaces} (cf.~Definition \ref{def:conceptual approx space}),
as the FCA counterparts of approximation spaces, thus obtaining a unifying formal environment for RST and FCA in which to formalize and reason about {\em rough concepts}. 

Being more general, this environment allows for a wider range of variations. In particular, the notions of conceptual {\em co-approximation spaces} and {\em bi-approximation spaces} naturally arise from the lifting methodology, and interestingly subsume established approaches in the literature on the unification of RST and FCA, such as Kent's Rough Concept Analysis \cite{kent1996rough}.

Finally, in Section \ref{sec:applications}, we have applied the lifting methodology to three theories which, although from various motivations and starting points, are all related to rough set theory in that they aim at  accounting for the {\em vagueness} of predicates, their {\em gradedness}, and reasoning under {\em uncertainty} in situations in which not all events can be assigned a probability. Precisely the fact that these theories are so different and differently motivated is significant evidence of the robustness of the lifting methodology.

From the viewpoint of logic, these results contribute a novel intuitive understanding of the modal logic of categories and concepts discussed in Section \ref{sec:logics}, and specifically show that the epistemic interpretation proposed in \cite{conradie2016categories,TarkPaper} is one among many possible interpretations. Hence, these results witness the potential of this logical framework as a flexible and powerful tool to address categorical reasoning in a wide range of fields intersecting and involving information sciences. Mathematically, these results pave the way to several research directions, including those we discuss below.  

\paragraph{Lifting construction and Sahlqvist theory.} Proposition \ref{lemma:correspondences} characterizes the validity of well known modal axioms, such as those classically corresponding to reflexivity and transitivity, on enriched formal contexts in terms of the `lifted versions' of these conditions (cf.~Proposition \ref{prop:lifting of properties} and Definition \ref{def:terminology}). We think of this phenomenon as {\em stability under lifting}. 
%\marginnote{How about giving a name to this phenomenon, such as a modal axiom is {\em stable under lifting} if its fo correspondent is the lifted version of its fo corr on kripke frames?} 
This suggests that this phenomenon might apply to wider classes of modal axioms, such as the Sahlqvist and inductive inequalities (cf.~\cite[Definitions 3.4 and 3.5]{CoPa:non-dist}). Formulating and proving such a general result would provide a more systematic perspective on this phenomenon, and would illuminate the relationship between Sahlqvist correspondence theory in the classical (i.e.~Boolean) setting and Sahlqvist correspondence theory in the lattice-based setting from yet another angle.  
%\marginnote{two questions: we lifted three Sahlqvist conditions; is there a systematic way to understand the link between the classical and the non-distributive version of the same sahlqvist/inductive axiom? non modal definability: how does e.g.~irreflexivity lift? does the argument proving its non-definability lift as well? this is connected with the Golblatt-Thomason thm. Many valued: use this framework to develop dempster shafer theory of evidence for categorical/conceptual reasoning}

\paragraph{Lifting construction and the Goldblatt-Thomason theorem on enriched  formal contexts.} Closely related to the previous question is the question concerning the nature of the relation between stability under lifting  and the characterization of modal definability on enriched formal contexts given in \cite{GT-paper}. This is the object of ongoing investigation for some of the authors. 

\paragraph{Complete axiomatization for the logic of rough formal contexts.} An open issue concerning the class of Kent's rough formal contexts  (cf.~Sections \ref{ssec:bi-approx sp} and \ref{ssec:modal expansion kent}) is to find a complete axiomatization for its naturally associated modal logic. A key step towards this goal is to verify, by means of the characterization of \cite{GT-paper}, whether the first-order conditions expressing the relationship between $E$ and the approximations $R$ and $S$ arising from $E$ are modally definable. This issue is addressed in \cite{GT-paper}.

\paragraph{Conceptual T-models.} In Section \ref{ssec:vague concepts}, we have used the lifting methodology to illustrate how the semantic framework of \cite{cobreros} can be generalized from vague predicates to vague concepts, and have shown how conceptual counterparts of the sorites paradox can be accounted for in this framework. These results set the stage for a systematic investigation of vague concepts, making use of insights and results from formal philosophy and semantics of natural language. 

\paragraph{Many-valued conceptual approximation spaces.} In Section \ref{ssec:manyval}, we have adapted the lifting methodology to the setting of many-valued Kripke frames, and used it to define a semantic framework for the many-valued logic of categories and concepts, which is suitable to describe and reason about e.g.~`to which extent' a given object is a member of a certain category and `to which extent' a given feature describes a category. 
This setting is especially suited to provide formal models of core situations and phenomena  studied in areas of business science such as strategy, marketing and entrepreneurship and innovation. One such phenomenon is  {\em category-spanning} (i.e.~the extent to which a market-product  belongs to more than one product-category), the impact of which on the perception and evaluation of consumers has been recognized as  a key predictor of the success  of products in markets \cite{hsu2009multiple,negro2013actual,paolella2016category}.  Another key predictor is the extent to which certain core features can be attributed to market-products, which function as a value-anchor for objects evaluated as members of new emerging categories \cite{kuijken2017categorization}. Future research also concerns using  the many-valued setting for understanding the nature and sources of graded membership, for instance by modelling  situations in which graded membership can be considered to result from the presence of multiple competing classification systems in one domain \cite{wijnberg2011classification} or different audiences attaching different category labels to the same objects \cite{kuijken2016producer}.

\paragraph{Towards a Dempster-Shafer theory of concepts.} In Section \ref{ssec:DS}, we show that the systematic connections between the mathematical environments of Dempster-Shafer theory and of rough set theory can be extended, via lifting, from set-based structures to polarity-based structures. This paves the way for the development of a Dempster-Shafer theory of concepts. Preliminary steps in this direction have been taken in \cite{TarkDS}, where, however, the connection with a modal/epistemic logic approach to Dempster-Shafer theory is not yet developed. The results of Section \ref{ssec:DS} set the stage for the development of an epistemic-logical approach to the Dempster-Shafer theory of concepts generalizing the approach of \cite{godo2003belieffunctions, ruspini1986logical}.

\paragraph{Conceptual rough algebras and proof calculi.} The links between rough set theory and proof theory are mainly mediated by the theory of the varieties of algebras introduced in connection with  approximation spaces. In \cite{saha2016algebraic}, sequent calculi which  are sound and complete but without cut elimination are defined for the logics naturally associated with the classes of algebras discussed in \cite{saha2014algebraic}. Sequent calculi with cut elimination and a non-standard version of subformula property have been introduced in \cite{ma2018sequent} for some but not all  of these logics. These difficulties are overcome  in  \cite{greco2018proper}, where a family of calculi with standard cut elimination and subformula property are introduced for the logics of the classes of  algebras discussed in \cite{saha2014algebraic}, and with the same methodology, in \cite{ICLA2019}, a family of calculi endowed with comparable properties is introduced for varieties of lattice-based modal algebras which can be thought of as the counterparts of the algebras of \cite{saha2014algebraic} for the setting of Kent's rough formal concepts \cite{kent1996rough}. Further directions of this line of research will be developing complete axiomatizations and proof calculi for the logics discussed in the previous paragraphs and for the dynamic and many-valued  logics discussed below.

\paragraph{Polarity-based semantics and graph-based semantics.} In the first paragraph of this section we remarked that the formal environment of rough concepts provides new sets of interpretations for the lattice-based modal logic of Section \ref{sec:logics}. Several other interpretations for the same logic are proposed in \cite{graph-based-wollic,graph-based-MV,Tark1}, based on a different but related semantics (referred to as {\em graph-based semantics}) which is grounded on Plo\v{s}\v{c}ica's duality and representation theory for bounded lattices \cite{ploscica1994}, (see also \cite{craig-priestley,craig-gouveia-haviar}). Inspired by the rough set theory approach,  the relation $E$  in the graphs $(Z, E)$ on which the graph-based models  are based is interpreted as an {\em indiscernibility} relation. However, instead of using $E$ to generate modal operators, %which associate any subset of $Z$ with its definable $E$-approximations,  
$E$ is used to generate a complete lattice via the polarity $(Z, Z, E^c)$ (in the many-valued setting, via the $\mathbf{A}$-polarity $(Z, \mathbf{A}\times Z, I_{E})$).  An interesting feature of $E$ is that it is assumed to be reflexive but not transitive nor symmetric. This agrees with what many  researchers in the rough set theory community have advocated \cite{wybraniec1989generalization,yao1996-Sahlqvist, yao1998interpretations,vakarelov2005modal}.

\paragraph{Towards modelling categorical dynamics.}
Conceptual approximation spaces are a natural starting point to model how categorization systems change. This is a core issue in a wide range of disciplines which include computational linguistics and information retrieval (for tracking the change in the meaning of lexical terms \cite{liebscher2003lexical}), AI (e.g.~for modelling how clusters are generated by machine learning clustering algorithms \cite{gibson1998clustering}), social and cognitive sciences (e.g.~for modelling cognitive and social development  of individuals \cite{hirschfeld1988acquiring}), and management science (to analyze market dynamics in terms of the evolution of categories of products and producers \cite{piazzai2019product}). This is presently ongoing work, based on a suitable generalization of the techniques and results on dynamic updates on algebras developed in \cite{ma2014algebraic, kurz2013epistemic, conradie2015probabilistic}.

\bibliography{ref}
\bibliographystyle{plain}

\appendix
\section{Proofs}
\label{sec:correspondence}
\subsection{Proof of Proposition \ref{lemma:correspondences}}
\noindent 2.
		\begin{center}
			\begin{tabular}{r l l l}
				&$\forall p$  [$\Box p \leq p $]\\
				iff& $\forall p \forall j \forall m  [(j\le \Box p \ \&\  p\le m )\Rightarrow j\le m]$
				& first approximation\\
				iff & $ \forall j \forall m  [ j\le \Box m \Rightarrow j\le m]$
				&Ackermann's Lemma \\
				iff& $ \forall m  [\Box m\le m]$
				&$J$ c.~join-generates $\mathbb{F}^{+}$\\
				i.e &$\forall x\in X$~~~~ $R_{\square}^{(0)}[x^{\downarrow\uparrow}]\subseteq I^{(0)}[x]$& translation\\
                iff &$\forall x\in X$~~~~ $R_{\square}^{(0)}[x]\subseteq I^{(0)}[x]$& Lemma \ref{equivalents of I-compatible} since $R_{\square}$ is $I$-compatible\\
                iff& $R_{\square}\subseteq I$.& By definition
			\end{tabular}
		\end{center}
		
\noindent 3.
		\begin{center}
			\begin{tabular}{r l l l}
				&$\forall p$  [$ p \leq \Diamond p $]
				\\
				iff& $\forall p \forall j \forall m  [(j\le p \ \&\  \Diamond p\le m )\Rightarrow j\le m]$
				& first approximation\\
                iff& $\forall p \forall j \forall m  [(j\le p \ \&\   p\le\blacksquare m )\Rightarrow j\le m]$
				& adjunction\\
				iff & $ \forall j \forall m  [ j\le\blacksquare m  \Rightarrow j\le m]$
				&Ackermann's Lemma \\
				iff& $ \forall m  [\blacksquare m\le  m]$
				&$J$ c.~join-generates $\mathbb{F}^{+}$\\
				i.e &$\forall x\in X$~~~~ $R_{\blacksquare}^{(0)}[x^{\downarrow\uparrow}]\subseteq I^{(0)}[x]$& translation\\
                iff&$\forall x\in X$~~~~ $R_{\blacksquare}^{(0)}[x]\subseteq I^{(0)}[x]$& Lemma \ref{equivalents of I-compatible} since $R_{\blacksquare}$ is $I$-compatible\\
                iff& $R_{\blacksquare}\subseteq I$.& By definition
			\end{tabular}
		\end{center}
	
\noindent 4.
		\begin{center}
			\begin{tabular}{r l l l}
				&$\forall p$  [$\Box p \leq \Box \Box p $]\\ iff& $\forall p \forall j \forall m  [(j\le \Box p \ \&\   p\le m )\Rightarrow j\le \Box \Box m]$
				&first approximation\\
				iff & $\forall j \forall m  [ j\le  \Box m\Rightarrow j\le \Box\Box m]$
				&  Ackermann's Lemma \\
				iff& $ \forall m  [\Box m\le \Box \Box m]$
				&$J$ c.~join-generates $\mathbb{F}^{+}$\\
				i.e &$\forall x\in X$~~~~ $R_{\Box}^{(0)}[x^{\downarrow\uparrow}] \subseteq R_{\Box}^{(0)}[I^{(1)}[R_{\square}^{(0)}[x^{\downarrow\uparrow}]]]$ & translation\\
                iff&$\forall x\in X$~~~~ $R_{\Box}^{(0)}[x] \subseteq R_{\Box}^{(0)}[I^{(1)}[R_{\square}^{(0)}[x]]]$ & Lemma \ref{equivalents of I-compatible} since $R_{\square}$ is $I$-compatible\\
                iff& $R_{\Box} \subseteq R_{\Box};R_{\square} $.& By definition
			\end{tabular}
		\end{center}
		
\noindent 5.
		\begin{center}
			\begin{tabular}{r l l l}
				&$\forall p$  [$\Diamond\Diamond p \leq  \Diamond p $]\\
				iff& $\forall p \forall j \forall m  [(j\le  p \ \&\  \Diamond p\le m )\Rightarrow \Diamond\Diamond j\le  m]$
				&first approximation\\
				iff & $\forall j \forall m  [ \Diamond j\le   m\Rightarrow \Diamond\Diamond j\le  m]$
				& Ackermann's Lemma \\
				iff& $ \forall j  [\Diamond\Diamond j\le  \Diamond j]$
				&$M$ c.~meet-generates $\mathbb{F}^{+}$\\
				i.e &$\forall a\in A$~~~~ $R_{\Diamond}^{(0)}[a^{\uparrow\downarrow}] \subseteq R_{\Diamond}^{(0)}[I^{(0)}[R_{\Diamond}^{(0)}[a^{\uparrow\downarrow}]]]$& translation\\
                iff &$\forall a\in A$~~~~ $R_{\Diamond}^{(0)}[a] \subseteq R_{\Diamond}^{(0)}[I^{(0)}[R_{\Diamond}^{(0)}[a]]]$ & Lemma \ref{equivalents of I-compatible} since $R_{\Diamond}$ is $I$-compatible\\
                iff& $R_{\Diamond} \subseteq R_{\Diamond};R_{\Diamond} $.& By definition
			\end{tabular}
		\end{center}
		
\noindent 6.
		\begin{center}
			\begin{tabular}{r l l l}
				&$\forall p$  [$ p \leq \Box\Diamond p $]\\
				iff& $\forall p \forall j \forall m  [(j\le  p \ \&\   \Diamond p\le m )\Rightarrow j\le \Box m]$
				&first approximation\\
				iff & $\forall j \forall m  [\Diamond j\le   m\Rightarrow j\le \Box m]$
				&  Ackermann's Lemma \\
				iff & $\forall j \forall m  [ \Diamond j\le   m\Rightarrow \Diamondblack j\le  m]$
				&  adjunction \\
				iff& $ \forall j  [\Diamondblack j\le \Diamond j]$
				&$M$ c.~meet-generates $\mathbb{F}^{+}$\\
				
				i.e &$\forall a\in A$~~~~ $R_{\Diamond}^{(0)}[a^{\uparrow\downarrow}] \subseteq R_{\Diamondblack}^{(0)}[a^{\uparrow\downarrow}]$& translation\\
                iff &$\forall a\in A$~~~~ $R_{\Diamond}^{(0)}[a] \subseteq R_{\Diamondblack}^{(0)}[a]$
                & Lemma \ref{equivalents of I-compatible} since $R_{\Diamond}$ and $R_{\Diamondblack}$ are $I$-compatible\\
				
				iff& $R_{\Diamond} \subseteq R_{\Diamondblack} $.& By definition
			\end{tabular}
		\end{center}
		
\noindent 7.
		\begin{center}
			\begin{tabular}{r l l l}
				&$\forall p$  [$\Diamond\Box p \leq  p $]\\
				iff& $\forall p \forall j \forall m  [(j\le  \Box p \ \&\    p\le m )\Rightarrow \Diamond j\le m]$
				&first approximation\\
				iff & $\forall j \forall m  [ j\le \Box   m\Rightarrow \Diamond j\le m]$
				& Ackermann's Lemma \\
				iff & $\forall j \forall m  [\Diamondblack j\le  m   \Rightarrow \Diamond j\le   m ]$
				& adjunction \\
				iff& $ \forall j  [ \Diamond j \le \Diamondblack j]$
				&$M$ c.~meet-generates $\mathbb{F}^{+}$\\
				
				i.e &$\forall a\in A$~~~~ $R_{\Diamondblack}^{(0)}[a^{\uparrow\downarrow}] \subseteq R_{\Diamond}^{(0)}[a^{\uparrow\downarrow}]$ & translation\\
                iff &$\forall a\in A$~~~~ $R_{\Diamondblack}^{(0)}[a] \subseteq R_{\Diamond}^{(0)}[a]$
                & Lemma \ref{equivalents of I-compatible} since $R_{\Diamond}$ and $R_{\Diamondblack}$ are $I$-compatible\\

				iff& $R_{\Diamondblack} \subseteq R_{\Diamond} $.& By definition
			\end{tabular}
		\end{center}
\noindent 8.
		\begin{center}
			\begin{tabular}{r l l l}
				&$\forall p  [p\leq \Box p ]$\\
				iff& $\forall p \forall j \forall m  [(j\le  p \ \&\  p\le m )\Rightarrow j\le \Box m]$
				& first approximation\\
				iff & $ \forall j \forall m  [ j\le  m \Rightarrow j\le \Box m]$
				&Ackermann's Lemma \\
				iff& $ \forall m  [ m\le \Box m]$
				&$J$ c.~join-generates $\mathbb{F}^{+}$\\
				i.e &$\forall x\in X$~~~~ $ I^{(0)}[x]\subseteq R_{\square}^{(0)}[x^{\downarrow\uparrow}]$& translation\\
                iff &$\forall x\in X$~~~~ $ I^{(0)}[x]\subseteq R_{\square}^{(0)}[x]$& Lemma \ref{equivalents of I-compatible} since $R_{\square}$ is $I$-compatible\\
                iff& $I \subseteq R_{\square}$.& By definition
			\end{tabular}
		\end{center}
		
\noindent 9.
		\begin{center}
			\begin{tabular}{r l l l}
				&$\forall p$  [$\Diamond  p \leq  p $]
				\\
				iff& $\forall p \forall j \forall m  [(j\le p \ \&\   p\le m )\Rightarrow \Diamond j\le m]$
				& first approximation\\
				iff & $ \forall j \forall m  [ j\le m  \Rightarrow \Diamond j\le m]$
				&Ackermann's Lemma \\
                iff & $ \forall j \forall m  [ j\le m  \Rightarrow j\le \blacksquare m]$
				& adjunction \\
				iff& $ \forall m  [ m\le\blacksquare  m]$
				&$J$ c.~join-generates $\mathbb{F}^{+}$\\
				i.e &$\forall x\in X$~~~~ $ I^{(0)}[x]\subseteq R_{\blacksquare}^{(0)}[x^{\downarrow\uparrow}]$& translation\\
                iff&$\forall x\in X$~~~~ $I^{(0)}[x]\subseteq R_{\blacksquare}^{(0)}[x]$& Lemma \ref{equivalents of I-compatible} since $R_{\blacksquare}$ is $I$-compatible\\
                iff& $I\subseteq R_{\blacksquare}$.& By definition
			\end{tabular}
		\end{center}
\noindent 10.
		\begin{center}
			\begin{tabular}{r l l l}
				&$\forall p$  [$\Box \Box p\leq \Box p  $]\\ iff& $\forall p \forall j \forall m  [(j\le \Box\Box p \ \&\   p\le m )\Rightarrow j\le  \Box m]$
				&first approximation\\
				iff & $\forall j \forall m  [ j\le \Box \Box m\Rightarrow j\le \Box m]$
				&  Ackermann's Lemma \\
				iff& $ \forall m  [\Box \Box m\le \Box m]$
				&$J$ c.~join-generates $\mathbb{F}^{+}$\\
				i.e &$\forall x\in X$~~~~ $R_{\Box}^{(0)}[I^{(1)}[R_{\square}^{(0)}[x^{\downarrow\uparrow}]]]\subseteq R_{\Box}^{(0)}[x^{\downarrow\uparrow}]$ & translation\\
                iff&$\forall x\in X$~~~~ $R_{\Box}^{(0)}[I^{(1)}[R_{\square}^{(0)}[x]]]\subseteq R_{\Box}^{(0)}[x] $ & Lemma \ref{equivalents of I-compatible} since $R_{\square}$ is $I$-compatible\\
                iff& $R_{\Box};R_{\square} \subseteq R_{\Box}$.& By definition
			\end{tabular}
		\end{center}
		
\noindent 11.
		\begin{center}
			\begin{tabular}{r l l l}
				&$\forall p$  [$\Diamond p \leq \Diamond \Diamond p $]\\
				iff& $\forall p \forall j \forall m  [(j\le  p \ \&\  \Diamond\Diamond p\le m )\Rightarrow \Diamond j\le  m]$
				&first approximation\\
				iff & $\forall j \forall m  [ \Diamond\Diamond j\le   m\Rightarrow \Diamond j\le  m]$
				& Ackermann's Lemma \\
				iff& $ \forall j  [\Diamond j\le \Diamond \Diamond j]$
				&$M$ c.~meet-generates $\mathbb{F}^{+}$\\
				i.e &$\forall a\in A$~~~~ $ R_{\Diamond}^{(0)}[I^{(1)}[R_{\Diamond}^{(0)}[a^{\uparrow\downarrow}]]]\subseteq R_{\Diamond}^{(0)}[a^{\uparrow\downarrow}]$& translation\\
                iff &$\forall a\in A$~~~~ $R_{\Diamond}^{(0)}[I^{(0)}[R_{\Diamond}^{(0)}[a]]]\subseteq R_{\Diamond}^{(0)}[a]$ & Lemma \ref{equivalents of I-compatible} since $R_{\Diamond}$ is $I$-compatible\\
                iff& $ R_{\Diamond};R_{\Diamond} \subseteq R_{\Diamond}$.& By definition
			\end{tabular}
		\end{center}

\noindent 12.		
\begin{center}
			\begin{tabular}{r l l l}
				&$\forall p$  [$\Diamond p \leq \Box p $]\\
                iff& $\forall p \forall j \forall m  [(j\le  p \ \&\  p\le m )\Rightarrow \Diamond j\le \Box m]$
				& first approximation\\
				iff& $\forall j \forall m  [ j\le m \Rightarrow \Diamond j\le \Box m]$
				& Ackermann's Lemma\\
				iff& $ \forall j \forall m  [ j\le  m \Rightarrow j\le \blacksquare\Box m]$
				& adjunction \\
				iff& $ \forall m  [m\leq \blacksquare \Box m ]$
				& $J$ c.~join-generates $\mathbb{F}^{+}$\\		
				i.e. &$\forall x\in X$~~~~ $ I^{(0)}[x] \subseteq R_{\blacksquare}^{(0)}[I^{(1)}[R_{\Box}^{(0)}[x^{\downarrow \uparrow}]]]$\\
	iff&$\forall x\in X$~~~~ $ I^{(0)}[x] \subseteq R_{\blacksquare}^{(0)}[I^{(1)}[R_{\Box}^{(0)}[x]]]$ & Lemma \ref{equivalents of I-compatible} since $R_{\Box}$ is $I$-compatible\\
iff& $I\subseteq R_{\blacksquare};R_{\Box}$.& By definition
			\end{tabular}
		\end{center}
		
\subsection{Proof of Proposition \ref{lemma:correspondence right triangle}}

%Notice that the modal principles in all items of the lemma are Sahlqvist (cf.~\cite[Definition??]{CoPa:non-dist}). Hence, they all have first order correspondents, {\em both} on Kripke frames {\em and} on enriched formal contexts, which can be effectively computed e.g.~by running the algorithm ALBA (cf.~\cite[Section ??]{CoPa:non-dist}) on each of them. In the remainder of this proof, we will use ALBA to compute these first-order correspondents on enriched formal contexts. In what follows, the variables $j$ are interpreted as elements of the set $J := \{(a^{\uparrow\downarrow}, a^{\uparrow})\mid a\in A\}$ which completely join-generates $\mathbb{F}^{+}$, and the variables $m$ as elements of $M: = \{(x^{\downarrow}, x^{\downarrow\uparrow}) \mid x\in X\}$ which completely meet-generates $\mathbb{F}^{+}$.  %We refer to this join- generation and meet-generation property as $\vee.\wedge.g.p$.

\noindent 1.		
\begin{center}
			\begin{tabular}{r l l l}
				&$\forall p$  [$p\leq {\rhd}{\rhd}p $]\\
                iff& $\forall p \forall j \forall i  [(j\le  p \ \&\ i\le {\rhd}p )\Rightarrow j\le {\rhd}i]$
				& first approximation\\
				iff& $\forall p \forall j \forall i  [( j\le p \ \&\  p\le {\blacktriangleright}i )\Rightarrow j\le {\rhd}i]$
				& adjunction \\
				iff& $ \forall j \forall i  [ j\le {\blacktriangleright}i \Rightarrow j\le {\rhd}i]$
				& Ackermann's Lemma \\
				iff& $ \forall i  [{\blacktriangleright}i \le {\rhd}i].$
				& $J$ c.~join-generates $\mathbb{F}^{+}$\\
                iff&$\forall a\in A$~~~~ $R_{\blacktriangleright}^{(0)}[a^{\uparrow\downarrow}]\subseteq R_{\rhd}^{(0)}[a^{\uparrow\downarrow}]$ & translation\\
				iff&$\forall a\in A$~~~~ $R_{\blacktriangleright}^{(0)}[a]\subseteq R_{\rhd}^{(0)}[a]$ & Lemma \ref{equivalents of I-compatible} since $R_{\rhd}$ is $I$-compatible\\
                iff& $R_{\rhd} = R_{\blacktriangleright}$.& By definition
			\end{tabular}
		\end{center}
\noindent 2.

\begin{center}
			\begin{tabular}{r l l l}
				&$R_{\rhd} = A\times A$\\
iff & $\forall a\forall b[aR_{\rhd}b]$\\
iff & $\forall a\forall b[a\in R_{\rhd}^{(0)}[b]]$\\
iff & $\forall a\forall b[a^{\uparrow\downarrow}\subseteq R_{\rhd}^{(0)}[b^{\uparrow\downarrow}]]$\\
iff & $\forall i\forall j[i\leq {\rhd}j]$\\
iff & $\bigvee_{i\in J} i\leq \bigwedge_{j\in I}{\rhd}j$& \\
iff & $\bigvee_{i\in J} i\leq {\rhd}(\bigvee_{j\in I}j)$& ${\rhd}$ completely join-reversing\\
iff & $\top\leq {\rhd}\top$ & \\
		\end{tabular}
		\end{center}
\noindent 3.		
\begin{center}
			\begin{tabular}{r l l l}
				&$\forall p$  [$p\wedge {\rhd}p\leq \bot $]\\
                iff& $\forall p \forall j  [j\leq p\wedge {\rhd}p \Rightarrow j\leq \bot]$
				& first approximation\\
				iff& $\forall p \forall j [( j\leq p \ \&\  j\le {\rhd}p )\Rightarrow j\leq \bot]$
				& adjunction \\
				iff& $ \forall j  [ j\le {\rhd}j  \Rightarrow j\leq \bot]$
				& Ackermann's Lemma \\
                iff&$\forall a\in A$~~~~ $[a^{\uparrow\downarrow}\subseteq R_{\rhd}^{(0)}[a^{\uparrow\downarrow}]\ \Rightarrow \ a^{\uparrow\downarrow}\subseteq  X^{\downarrow}]$ & translation\\
                iff&$\forall a\in A$~~~~ $[a\in  R_{\rhd}^{(0)}[a]\ \Rightarrow \ a\in  X^{\downarrow}]$& Lemma \ref{equivalents of I-compatible} since $R_{\rhd}$ is $I$-compatible\\
			\end{tabular}
		\end{center}

\noindent 4.		
\begin{center}
			\begin{tabular}{r l l l}
				&$\forall p$  [$\top\leq {\rhd}(p\wedge {\rhd}p)$]\\
                iff& $\forall p \forall j  [j\leq p\wedge {\rhd}p \Rightarrow \top\leq {\rhd} j]$
				& first approximation\\
				iff& $\forall p \forall j [( j\leq p \ \&\  j\le {\rhd}p )\Rightarrow \top\leq {\rhd} j]$
				& adjunction \\
				iff& $ \forall j  [ j\le {\rhd}j  \Rightarrow \top\leq {\rhd}j]$
				& Ackermann's Lemma \\
                iff&$\forall a\in A$~~~~ $[a^{\uparrow\downarrow}\subseteq R_{\rhd}^{(0)}[a^{\uparrow\downarrow}]\ \Rightarrow \ A\subseteq R_{\rhd}^{(0)}[a^{\uparrow\downarrow}]\subseteq  X^{\downarrow}]$ & translation\\
                iff&$\forall a\in A$~~~~ $[a\in  R_{\rhd}^{(0)}[a]\ \Rightarrow \ A\subseteq R_{\rhd}^{(0)}[a]]$& Lemma \ref{equivalents of I-compatible} since $R_{\rhd}$ is $I$-compatible\\
			\end{tabular}
		\end{center}
\subsection{Proof of Lemma \ref{lemma:kent structures correpondence}}

\noindent 1.
\begin{center}
			\begin{tabular}{r l l l}
				&$S\subseteq  I$\\
 iff& $\forall x\in X$~~~~ $[S^{(0)}[x]\subseteq  I^{(0)}[x]]$\\
   iff& $\forall a\in A\forall x\in X$~~~~ $[a\in S^{(0)}[x]\ \Rightarrow\ a\in I^{(0)}[x]]$ \\
     iff& $\forall a\in A\forall x\in X$~~~~ $[R^{(1)}_{\rhd}[a]\subseteq I^{(0)}[x]\ \Rightarrow\ a\in I^{(0)}[x]]$ & definition of $S$\\
  iff& $\forall a\in A\forall x\in X$~~~~ $[R^{(1)}_{\rhd}[a^{\uparrow\downarrow}]\subseteq I^{(0)}[x]\ \Rightarrow\ a^{\uparrow\downarrow}\subseteq I^{(0)}[x]]$ $R_{\rhd}$  $I$-compatible\\
   iff& $\forall j\forall m$~~~~ $[{\blacktriangleright}j\leq m \ \Rightarrow\ j\leq m]$\\
    iff& $\forall j$~~~~ $[j\leq {\blacktriangleright}j]$\\
   iff& $\forall j$~~~~ $[j\leq {\rhd}j]$\\
    iff& $\forall a\in A$~~~~ $[a^{\uparrow\downarrow}\subseteq R_{\rhd}^{(0)}[a^{\uparrow\downarrow}]]$\\
 iff& $\forall a\in A$~~~~ $[a\in  R_{\rhd}^{(0)}[a]]$\\
iff  &$\Delta\subseteq  R_{\rhd}$.\\
  		\end{tabular}
		\end{center}
\noindent 2.
\begin{center}
			\begin{tabular}{r l l l}
				&$R_{\rhd}\circ R_{\rhd}\subseteq R_{\rhd}$\\
 iff& $\forall c\in A$~~~~ $[(R_{\rhd}\circ R_{\rhd})^{(0)}[c]\subseteq R_{\rhd}^{(0)}[c]]$\\
 iff& $\forall a\forall c\in A$~~~~ $[a\in \bigcup_{b\in R_{\rhd}^{(0)}[c]}R_{\rhd}^{(0)}[b]\ \Rightarrow \ a\in R_{\rhd}^{(0)}[c]]$\\
  iff& $\forall a\forall b\forall c\in A$~~~~ $[(a\in R_{\rhd}^{(0)}[b]\ \&\  b\in R_{\rhd}^{(0)}[c])\ \Rightarrow \ a\in R_{\rhd}^{(0)}[c]]$\\
  iff& $\forall a\forall b\forall c\in A$~~~~ $[(a^{\uparrow\downarrow}\subseteq R_{\rhd}^{(0)}[b^{\uparrow\downarrow}]\ \&\  b^{\uparrow\downarrow}\subseteq R_{\rhd}^{(0)}[c^{\uparrow\downarrow}])\ \Rightarrow \ a^{\uparrow\downarrow}\subseteq R_{\rhd}^{(0)}[c^{\uparrow\downarrow}]]$\\
    iff& $\forall i\forall j\forall k$~~~~ $[(i\leq {\rhd}j\ \&\  j\leq {\rhd} k)\ \Rightarrow \ i\leq {\rhd} k]$\\
       iff& $\forall i\forall j$~~~~ $[i\leq {\rhd}j\ \Rightarrow \ \forall k( j\leq {\rhd} k\ \Rightarrow \ i\leq {\rhd} k)]$\\
       iff& $\forall i\forall j$~~~~ $[i\leq {\rhd}j\ \Rightarrow \ \forall k( k\leq {\blacktriangleright}j\ \Rightarrow \ k\leq {\blacktriangleright} i)]$\\
       iff& $\forall i\forall j$~~~~ $[i\leq {\rhd}j\ \Rightarrow \  {\blacktriangleright}j\leq {\blacktriangleright} i]$\\
         iff& $\forall i\forall j$~~~~ $[i\leq {\rhd}j\ \Rightarrow \  i\leq {\rhd}{\blacktriangleright} j]$\\
          iff& $\forall j$~~~~ $[{\rhd}j\leq {\rhd}{\blacktriangleright} j]$. & \\
			\end{tabular}
		\end{center}

\begin{center}
			\begin{tabular}{r l l l}
				&$S\subseteq S\, ;\,S$\\
 iff& $\forall x\in X$~~~~ $[S^{(0)}[x]\subseteq S^{(0)}[I^{(1)}[S^{(0)}[x]]]]$\\
 iff& $\forall a\in A\forall x\in X$~~~~ $[a\in S^{(0)}[x]\ \Rightarrow \ a\in S^{(0)}[I^{(1)}[S^{(0)}[x]]]]$\\
  iff& $\forall a\in A\forall x\in X$~~~~ $[R_{\rhd}^{(1)}[a]\subseteq I^{(0)}[x]\ \Rightarrow \ R_{\rhd}^{(1)}[a]\subseteq I^{(0)}[I^{(1)}[S^{(0)}[x]]]]$ & definition of $S$\\
   iff& $\forall a\in A\forall x\in X$~~~~ $[R_{\rhd}^{(1)}[a]\subseteq I^{(0)}[x]\ \Rightarrow \ R_{\rhd}^{(1)}[a]\subseteq S^{(0)}[x]]$ & $S^{(0)}[x]$  Galois-stable \\
     iff& $\forall a\in A\forall x\in X$~~~~ $[R_{\rhd}^{(1)}[a]\subseteq I^{(0)}[x]\ \Rightarrow \ \forall b (b\in R_{\rhd}^{(1)}[a]\ \Rightarrow \  b\in S^{(0)}[x])]$ &  \\
     iff& $\forall a\in A\forall x\in X$~~~~ $[R_{\rhd}^{(1)}[a]\subseteq I^{(0)}[x]\ \Rightarrow \ \forall b (b\in R_{\rhd}^{(1)}[a]\ \Rightarrow \  R_{\rhd}^{(1)}[b]\subseteq I^{(0)}[x])]$ &  \\
     iff& $\forall a\in A\forall x\in X$~~~~ $[R_{\rhd}^{(1)}[a^{\uparrow\downarrow}]\subseteq I^{(0)}[x]\ \Rightarrow \ \forall b (b^{\uparrow\downarrow}\subseteq R_{\rhd}^{(1)}[a^{\uparrow\downarrow}]\ \Rightarrow \ R_{\rhd}^{(1)}[b^{\uparrow\downarrow}]\subseteq I^{(0)}[x])]$ & $R_{\rhd}$ $I$-compatible \\

    iff& $\forall i\forall m$~~~~ $[{\blacktriangleright}i\leq m\ \Rightarrow \forall j( j\leq {\blacktriangleright}i \ \Rightarrow \ {\blacktriangleright} j\leq m)]$\\
    iff& $\forall i\forall j\forall m$~~~~ $[({\blacktriangleright}i\leq m\ \& j\leq {\blacktriangleright}i) \ \Rightarrow \ {\blacktriangleright} j\leq m]$\\
    iff& $\forall i\forall j$~~~~ $[j\leq {\blacktriangleright}i\ \Rightarrow\ \forall m({\blacktriangleright}i\leq m \ \Rightarrow \ {\blacktriangleright} j\leq m)]$\\
    iff& $\forall i\forall j$~~~~ $[j\leq {\blacktriangleright}i\ \Rightarrow\ {\blacktriangleright} j\leq{\blacktriangleright}i]$\\
    iff& $\forall i\forall j$~~~~ $[i\leq {\rhd}j\ \Rightarrow\ i\leq{\rhd}{\blacktriangleright}j]$\\
          iff& $\forall j$~~~~ $[{\rhd}j\leq {\rhd}{\blacktriangleright} j]$. & \\
			\end{tabular}
		\end{center}

\end{document}